\def\emptyset{\varnothing}
\def\cA{{\cal A}}
\def\cB{{\cal B}}
\def\cD{{\cal D}}
\def\cF{{\cal F}}
\def\cH{{\cal H}}
\def\cK{{\cal K}}
\def\sA{{\mathsf A}}
\def\sH{{\mathsf H}}
\def\bC{{\mathbb C}}           
\def\bH{{\mathbb H}}
\def\bN{{\mathbb N}}
\def\bR{{\mathbb R}}
\def\bS{{\mathbb S}}
\def\bT{{\mathbb T}}
\def\bZ{{\mathbb Z}}
\def\gA{{\mathfrak A}}       
\def\gB{{\mathfrak B}}
\def\gg{{\mathfrak g}}
\def\gL{{\mathfrak L}}
\def\gM{{\mathfrak M}}
\def\gN{{\mathfrak N}}
\def\gQ{{\mathfrak Q}}
\def\gR{{\mathfrak R}}
\def\gS{{\mathfrak S}}
\def\beq{\begin{eqnarray}}
\def\eeq{\end{eqnarray}}
\begin{document}


%
%

\title{Mathematical Foundations of  Quantum Mechanics: An Advanced Short Course}

\author{Valter Moretti}

\address{Department of Mathematics of the University of Trento and INFN-TIFPA,\\
via Sommarive 14, I-38122 Povo (Trento), Italy\,\\
\email{valter.moretti@unitn.it} }

\maketitle
%

\begin{abstract}
{\bf Abstract}. This paper  collects and extends the lectures I gave at the ``XXIV International Fall Workshop on Geometry and Physics''  held in Zaragoza (Spain) during September 2015.
Within these lectures I  review the formulation of Quantum Mechanics, and quantum theories in general, from a mathematically advanced viewpoint, essentially based on the orthomodular lattice of elementary propositions, discussing some fundamental ideas, mathematical tools and theorems also related to the representation of physical symmetries. The final step consists of an elementary introduction the so-called (C*-) algebraic formulation of quantum theories. 
\end{abstract}

\keywords{Mathematical Formulation of Quantum Mechanics; Spectral Theory; von Neumann algebra, $C^*$-algebra.}

\tableofcontents

\section{Introduction: Summary of elementary facts of QM}	 
This paper  collects and technically extends the lectures given by the author at the ``XXIV International Fall Workshop on Geometry and Physics''  held in Zaragoza, August 31 - September 4, 2015. These lecture notes contain much more written material than the lectures themselves.

A concise account of the basic structure of quantum mechanics and {\em quantization procedures} has already been presented  in \cite{ercolessi} with several crucial examples. In the rest of Section 1, we quickly review again some elementary facts and properties, either of physical or mathematical nature,   related to Quantum Mechanics, without fully entering into the mathematical details. 

Section 2 is instead devoted to present some technical definitions and results of spectral analysis in complex Hilbert spaces, especially the basic elements of spectral theory, including the classic theorem about spectral decomposition of (generally unbounded) selfadjoint operators and the so called measurable functional calculus. A brief presentation of the three most important operator topologies  for applications in QM closes Section 2.
 
Within Section 3, the {\em corpus} of the lectures, we pass to analyse the mathematical structure of QM from a finer and advanced viewpoint, adopting the  
framework based on orthomodular lattices' theory. This approach permits one to justify some basic assumptions of QM,
like the mathematical nature of the observables represented by selfadjoint operators and the quantum states viewed as trace class operators. QM is essentially a probability measure on the  non-Boolean lattice ${\cal L}(\cH)$ of elementary observables.  A key tool of that analysis is the theorem by Gleason characterising   the notion probability measure  on ${\cal L}(\cH)$ in terms of certain trace class operators. We also discuss the structure of the algebra of observables in the presence of superselection rules after having introduced the mathematical notion  of von Neumann algebra. The subsequent part of the third section is devoted to present  the idea of quantum symmetry, illustrated in terms of Wigner and Kadison theorems.
Some basic mathematical facts  about groups of quantum symmetries are introduced and discussed, especially in relation with the problem of their unitarisation. Bargmann's condition is stated.  The particular case of a strongly continuous one-parameter unitary group will be analysed  in some more detail,  mentioning von Neumann's theorem and the celebrated Stone theorem, remarking its use  to describe the time evolution of quantum systems. A quantum formulation of Noether theorem ends this part. 
The last part of Section 3 aims to introduce some elementary results about continuous unitary representations of Lie groups, discussing in particular  a theorem by Nelson  which proposes sufficient conditions for lifting a (anti)selfadjoint  representation of a Lie algebra to a unitary representation of the unique simply connected Lie group associated to that Lie algebra.

 The last section closes the paper focussing on elementary ideas and results of the so called algebraic formulation of quantum theories. Many examples and exercises (with solutions) accompany the theoretical text at every step.

\subsection{Physical facts about Quantum Mechanics}\label{subsec1}
Let us quickly review the most relevant and common features of quantum systems. Next we will present a first elementary mathematical formulation which will be improved in the rest of the lectures, introducing a suitable mathematical technology.

\subsubsection{When a physical system is quantum}
Loosely speaking, Quantum Mechanics is the physics of microscopic world (elementary particles, atoms, molecules). That realm is characterized by a universal physical constant denoted by $h$ and   called  {\bf Planck constant}. A related constant -- nowadays much  more used -- is the {\bf reduced Planck constant}, pronounced ``h-bar'',
$$\hbar  :=  \frac{h}{2\pi}= 1.054571726 \times 10^{-34}  J\cdot s\:.$$
The physical dimensions of $h$ (or $\hbar)$ are those of an {\em action}, i.e. {\em energy} $\times$ {\em time}.
A rough check on the appropriateness  of a quantum physical description  for a given physical system is obtained by comparying the value of some characteristic action of the system with $\hbar$. For a macroscopic pendulum (say, length $\sim 1m$, mass $\sim 1 kg$ maximal speed $\sim 1 ms^{-1}$), multiplying the period of oscillations and the maximal kinetic energy,   we obtain a typical action of $\sim 2 Js >\!\!> h$. In this case quantum physics is expected to be  largely inappropriate, exactly as we actually know from our  experience of every days. Conversely, referring to a hydrogen  electron orbiting around its proton, the first {\em ionization energy} multiplied with the orbital period of the electron (computed using the classical formula with a value of the radius of the order of  $1$ \r{A}) produces  a typical action of the order of $h$.  Here quantum mechanics is necessary.

\subsubsection{General properties of quantum systems} \label{GP1}
Quantum Mechanics (QM) enjoys a triple of features which seem to be very far from properties of Classical Mechanics (CM).  These remarkable general properties concern  the physical quantities of physical systems. In QM physical quantities are called {\em observables}.

 {\bf (1) Randomness}.   When we perform a measurement of an observable of a quantum system, 
the outcomes turn out to be {\em stochastic}: Performing measurements of the same observable $A$ on completely identical systems prepared in the {\em same} physical state, one  generally  obtains different results $a,a', a'' \ldots$. \\ Referring to  the standard interpretation of the formalism of QM (see \cite{stanford} for a nice up-to-date account on the various interpretations),  the   randomness of measurement outcomes should not be considered as  due to an incomplete knowledge of the state of the system as it happens, for instance, in Classical Statistical  Mechanics. Randomness is not {\em epistemic}, but it is {\em ontological}. It is a fundamental property of quantum systems.\\
On the other hand,  {\em QM permits one to compute the {\em probability distribution} of all the outcomes of a given observable, once the state of the system is known}. \\ 
Moreover, it is always  possible to prepare a state $\psi_a$  where a certain observable $A$ is {\em defined} and takes its value $a$. That is, repeated measurements of $A$ give rise to the same value $a$ with probability $1$. 
(Notice that we can perform simultaneous measurements on identical systems all prepared in the state $\psi_a$, or we can perform different subsequent measurements on the same system in the state $\psi_a$. In the second case, these measurements have to be performed very close to each other in time to prevent the state of the system from  evolving in view of Schr\"odinger evolution as said in (3) below.) Such states, where observable take definite values, cannot be prepared for {\em all} observables simultaneously as discussed in (2) below.

{\bf (2) Compatible and  Incompatible Observables}. The second  noticeable feature of  QM is the existence of {\em incompatible observables}. Differently from CM, there are physical quantities which cannot be measured simultaneously. There is no physical instrument capable to do it.   If an  observable $A$ is {\em defined}  in a given state $\psi$ -- i.e. it attains a precise value $a$ with probability $1$ in case of a measurement -- an  observable $B$ {\em incompatible} with $A$ turns out to be {\em not defined} in the state $\psi$ --  i.e.,  it generally  attains several different values $b,b',b''\ldots$, {\em none with probability $1$},  in case of measurement. So, if we perform a measurement of $B$,  we generally obtain  a spectrum of values described by  a probabilistic distribution as preannounced in (1) above. \\ Incompatibility  is a {\em symmetric} property: $A$ is incompatible with $B$ if and only if  $B$ is incompatible with $A$. However it is not {\em transitive}.\\
There are also  {\em compatible observables}  which, by definition, can be measured simultaneously. An example is the component $x$ of the position of a particle and the component $y$ of the momentum of that particle, referring to a given inertial reference frame.   
A popular case of  incompatible observables is a pair of {\em canonically conjugated observables} \cite{ercolessi} like the position $X$ and the momentum $P$ of a particle both  along the same  fixed axis of a reference frame. In this case there is a lower  bound for the product of the standard deviations, resp. $\Delta X_\psi$, $\Delta P_\psi$, of the outcomes of the  measurements  of these observables in a given state $\psi$ (these measurement has to be performed on different identical systems all prepared in the same state $\psi$). This lower bound does not depend on the state and is encoded in the celebrated mathematical formula of the {\em Heisenberg principle} (a theorem in the modern formulations):
\begin{equation}\Delta X _\psi\Delta P_\psi \geq \hbar/2\:,\label{Hp}\end{equation}
where Planck constant shows up.

{\bf (3) Post measurement Collapse of the State}.  In QM, measurements {\em generally change the state of the system} and produce a post-measurement state from the state on which the measurement is performed. (We are here referring to idealized measurement procedures, since   measurement procedures are  very often destructive.) If the measured state is $\psi$, immediately after the measurement of an observable $A$ obtaining the value $a$ among a plethora of possible values $a,a',a'',\ldots$, the state changes to $\psi'$ generally different form $\psi$. In the new state $\psi'$,  the distribution of probabilities  of the outcomes of $A$ changes to $1$ for the outcome $a$ and $0$ for all other possible outcomes.  $A$ is therefore {\em defined} in $\psi'$. \\ When 
we perform repeated and alternated measurements of a pair of incompatible observables, $A$, $B$,  the outcomes  disturb each other: If the first outcome of $A$ is $a$, after a measurement of $B$, a subsequent measurement of $A$ produces $a'\neq a$ in general. 
Conversely, if $A$ and $B$ are compatible, the outcomes of their subsequent measurements do not disturb each other. \\
In CM there are  measurements that, in practice, disturb and change the state of the system. It is however  possible to  decrease the disturbance arbitrarily, and nullify it in ideal measurements.
 In QM it is not always possible as for instance  witnessed by (\ref{Hp}). \\

\noindent In QM, there are two types of time evolution of the state of a system. One is the usual one due to the dynamics and encoded in the famous {\em Schr\"odinger equation}  we shall see shortly. It is nothing but a quantum version of classical {\em Hamiltonian evolution} \cite{ercolessi}.
The other  is the sudden change of the state  due to measurement procedure of an observable, outlined  in (3): The {\em collapse of the state} (or {\em wavefunction}) of the system. \\ The nature of the  second type of evolution  is still source of an animated debate in the scientific community of physicists and philosophers of Science. 
There are many attempts  to reduce the collapse of the state to the standard time evolution  referring to the quantum evolution of the whole physical  system, also including  the measurement apparatus and the environment ({\em de-coherence processes}) \cite{stanford,decoherence}. None of these approaches seem to be completely satisfactory up to now. 

\remark{\em Unless explicitly stated,  we henceforth adopt a physical unit system such that $\hbar =1$.}

\subsection{Elementary formalism for the finite dimensional case}\label{subsecFD}  To go on with this introduction,  let us add some further technical details to the presented picture to show how practically (1)-(3) have to be mathematically  interpreted (reversing the order of (2) and (3) for our convenience). The rest of the paper is devoted to  make technically precise, justify and widely develop these ideas from a mathematically more advanced viewpoint than the one of \cite{ercolessi}. \\
To mathematically simplify this introductory discussion, throughout this section, except for Sect \ref{subsecID}, we assume  that $\cal H$ denotes a {\em finite dimensional} complex vector space equipped with a Hermitian scalar product, denoted by $\langle \cdot| \cdot \rangle$, where the linear entry is the second one.
 With $\cal H$ as  above,   $L({\cal H})$ will denote the complex algebra of operators ${A}: \cal H \to \cal H$.
We remind the reader that, if $A \in L({\cal H})$ with $\cal H$ finite dimensional, the {\em adjoint operator},  $A^* \in L({\cal H})$, is the unique linear operator such that  \begin{equation}\langle A^*x|y \rangle = \langle x| Ay\rangle \quad\mbox{for all $x,y \in \cal H$.} \label{agg}\end{equation}
$A$ is said to be {\em selfadjoint} if  $A=A^*$, so that, in particular
\begin{equation}\langle Ax|y \rangle = \langle x| Ay\rangle \quad\mbox{for all $x,y \in \cal H$.} \label{agg2}\end{equation}
Since $\langle \cdot | \cdot\rangle$ is linear in the second entry and antilinear in the first entry, we immediately  have that all eigenvalues of a selfadjoint operator $A$
are real.\\

\noindent Our assumptions on the mathematical description of quantum systems are the following ones.

\begin{enumerate}
\item   A quantum mechanical system $S$ is always associated to a complex vector space ${\cal H}$ (here finite dimensional) equipped with a Hermitian scalar product  $\langle \cdot | \cdot \rangle$;

\item  observables are  pictured in terms of {\em selfadjoint} operators $A$ on $\cal H $;

\item states are   equivalence classes of {\em unit} vectors $\psi \in {\cal H}$, where $\psi \sim \psi'$ iff $\psi = e^{ia} \psi'$ for some $a\in \mathbb R$.

\end{enumerate}

\remark $\null$

{\bf (a)} It is clear that states are therefore one-to-one  represented by all of the elements of the  complex projective space $P\cal H$. The states we are considering within this introductory  section are called {\em pure} states. A more general notion of state, already introduced in \cite{ercolessi}, will be discussed later.

{\bf (b)} $\cal H$ is an elementary version of complex Hilbert space since it is automatically complete it being finite dimensional. 

{\bf (c)} Since $\dim(\cal H)< +\infty$,  every self-adjoint operator $A\in L({\cal H})$ admits a spectral decomposition 
\begin{equation}
A = \sum_{a \in \sigma(A)}aP^{(A)}_a \label{sectradec0}\:,
\end{equation}
where $\sigma(A)$ is the {\em finite}  set of eigenvalues -- which must be {\em real} as $A$ is self-adjoint -- and $P^{(A)}_a$ is the 
orthogonal projector onto the eigenspace associated to $a$. Notice that $P_aP_{a'}=0$ if $a \neq a'$ as eigenvectors with different eigenvalue are orthogonal. 
\hfill $\blacksquare$\\

\noindent Let us show how the mathematical assumptions (1)-(3)  permit us to set the physical properties 
of quantum systems  (1)-(3) into a mathematically nice form.\\

{\bf (1) Randomness}:  
The eigenvalues of an observable  $A$ are physically interpreted as the possible values of the outcomes of a measurement of $A$.\\
Given a state, represented by the unit vector  $\psi \in \cal H$, the probability to obtain $a \in \sigma(A)$ as an outcome when measuring $A$  is
$$\mu^{(A)}_\psi(a) := || P^{(A)}_a \psi||^2\:.$$
Going along with this interpretation, the expectation value of $A$ ,when the state is represented by $\psi$, turns out to be
$$\langle A \rangle_\psi := \sum_{a\in \sigma(A)} a \mu^{(A)}_\psi(a)  = \langle \psi | A \psi \rangle\:.$$
So that the identity holds
\begin{equation}
\langle A \rangle_\psi = \langle \psi | A \psi \rangle \label{E}\:.
\end{equation}
Finally, the standard deviation $\Delta A_\psi$ results to be
\begin{equation}\Delta A_\psi^2 := \sum_{a\in \sigma(A)} (a-\langle A \rangle_\psi)^2  \mu^{(A)}_\psi(a)  = \langle \psi | A^2  \psi \rangle - \langle \psi |A \psi \rangle^2 \:.\label{Delta} \end{equation}

\remark $\null$

{\bf (a)}  Notice that the arbitrary phase affecting the unit vector $\psi \in \cal H$ ($e^{ia}\psi$ and $\psi$ represent the same quantum state for every $a\in \mathbb R$) is armless here.

{\bf (b)}  If $A$ is an observable and  $f: \mathbb R \to \mathbb R$ is given,  $f(A)$ is interpreted as an observable whose values are $f(a)$ 
if $a \in \sigma(a)$: Taking  (\ref{sectradec0}) into account,
\begin{equation}
f(A) := \sum_{a \in \sigma(A)} f(a) P^{(A)}_a \label{sectradec1}\:.
\end{equation}
For polynomials  $f(x) = \sum_{k=0}^n a_kx^k$, it results  $f(A)=  \sum_{k=0}^n a_kA^k$  as expected.
The  selfadjoint operator  $A^2$ can naturally be interpreted this way as the natural observable whose values are $a^2$ when $a \in \sigma(A)$. For this reason, looking at the last term
in (\ref{Delta}) and taking (\ref{E}) into account,
\begin{equation}\Delta A_\psi^2  =\langle A^2 \rangle_\psi - \langle A \rangle_\psi^2 \:.\label{Delta2}\end{equation} \hfill $\blacksquare$

 {\bf (3) Collapse of the state}: If $a$ is the outcome of the (idealized) measurement of $A$ when the state is represented by $\psi$, the new state immediately after the measurement is represented by the unit vector
\begin{equation}\psi' := \frac{P_a^{(A)}\psi}{||P_a^{(A)}\psi ||}\label{LvN}\:.\end{equation}

\remark  Obviously this formula does not make sense if $\mu^{(A)}_\psi(a)=0$ as expected. Moreover 
the arbitrary phase affecting $\psi$ does not lead  to troubles, due to the linearity of $P^{(A)}_a$ .\\

 {\bf (2) Compatible and  Incompatible Observables}: Two observables are compatible -- i.e. they can be simultaneously measured -- if and only if the associated operators  {\em commute}, that is
$$AB-BA =0\:.$$
Using the fact that $\cal H$ has finite dimension, one easily proves that the observables $A$ and $B$ are compatible if and only if the associated spectral projectors commute as well 
$$P^{(A)}_a P^{(B)}_b = P^{(B)}_b P^{(A)}_a \quad a \in \sigma(A)\:, b \in \sigma(B)\:.$$
In this case 
$$||P^{(A)}_a P_b^{(B)} \psi||^2 = || P_b^{(B)} P^{(A)}_a \psi||^2$$
has the natural interpretation of the probability to obtain the outcomes $a$ and $b$ for a simultaneous measurement of $A$ and $B$.  If instead $A$ and $B$ are incompatible,  it may happen that
$$||P^{(A)}_a P_b^{(B)} \psi||^2 \neq  || P_b^{(B)} P^{(A)}_a \psi||^2\:.$$
Sticking  to the case of  $A$ and $B$ incompatible, exploiting  (\ref{LvN}),  
\begin{equation}||P^{(A)}_a P_b^{(B)} \psi||^2 = \left|\left|P^{(A)}_a  \frac{P_b^{(B)} \psi}{|| P_b^{(B)} \psi ||}\right|\right|^2  || P_b^{(B)} \psi ||^2\label{measureincomp}\end{equation} 
has the natural meaning of {\em the probability of obtaining first $b$ and next  $a$ in a subsequent measurement of $B$ and $A$}. 

\remark $\null$

{\bf (a)} Notice that, in general,  we  cannot interchange the r\^ole of $A$ and $B$ in (\ref{measureincomp}) because, in general,  $P^{(A)}_a P^{(B)}_b \neq  P^{(B)}_b P^{(A)}_a$ if $A$ and $B$ are incompatible. The measurement procedures ``disturb each other'' as already said. 

{\bf (b)} The interpretation of (\ref{measureincomp}) as probability of subsequent measurements can be given also if $A$ and $B$ are compatible. In this case,  the probability of obtaining first $b$ and next  $a$ in a subsequent measurement of $B$ and $A$ is identical to the probability of measuring $a$ and $b$ simultaneously and, in turn, it coincides with  the probability of obtaining first $a$ and next  $b$ in a subsequent measurement of $A$ and $B$ 

{\bf (c)} $A$ is always compatible with itself. Moreover $P_a^{(A)}P_{a}^{(A)} = P_a^{(A)}$ just due to the definition of projector.  This fact has the immediate consequence that if we obtain $a$ measuring $A$ so that the state immediately after the measurement is represented by $\psi_a = ||P_a^{(A)}\psi||^{-1}\psi$, it will remain
$\psi_a$ even after other subsequent measurements of $A$ and the outcome will result to be always $a$. Versions of this  phenomenon, especially regarding the decay of unstable particles, are   experimentally confirmed and it is called the {\em quantum Zeno effect}. \hfill $\blacksquare$\\

\example  An electron admits  a triple of observables, $S_x$, $S_y$, $S_z$, known as the  components of the {\em spin}. Very roughly speaking, the spin can be viewed as  the  angular momentum of the particle referred to  a reference frame always at rest with  the centre of the  particle and  carrying its  axes parallelly  to the ones of the reference frame of the laboratory, where the electron moves.  In view of its peculiar  properties,  the spin cannot actually  have a  complete classical corresponding and thus that interpretation is untenable. For instance, one cannot ``stop'' the spin of a particle or change the constant value of $S^2 = S_x^2+S_y^2+S_z^2$: It is a given  property of the particle like the mass.
 The electron spin is described within  an {\em internal} Hilbert space ${\cal H}_{s}$, which has dimension $2$. Identifying ${\cal H}_s$ with $\mathbb C^2$, the three spin observables are defined in terms of the three Hermitian matrices (occasionally re-introducing the constant $\hbar$)
\beq S_x = \frac{\hbar}{2} \sigma_x\:, \qquad S_y = \frac{\hbar}{2} \sigma_y \:, \qquad S_z = \frac{\hbar}{2} \sigma_z\:, \label{spin}\eeq 
where we have introduced the well known  {\em Pauli matrices},
\beq \sigma_x =  \left[\begin{matrix}0\:  & \: 1 \\ 1 \:& \: \:\: 0\end{matrix}\right] \:, \quad 
\sigma_y =  \left[\begin{matrix}0 & -i \\ i &\: \:0\end{matrix}\right] \:, \quad 
\sigma_z =  \left[\begin{matrix}1 & \:\:0\\ 0 & -1\end{matrix}\right]\:.\label{pauli}\eeq
Notice that $[S_a, S_b] \neq 0$ if $a\neq b$ so that the components of the spin are incompatible observables. In fact one has
$$[S_x,S_y] = i \hbar S_z $$
and this identity holds also cyclically permuting  the three indices. These commutation relations are the same as for the observables $L_x$,$L_y$,$L_z$ describing the angular momentum referred to the laboratory system which have classical corresponding (we shall return on these observables in example \ref{exL}). So, differently from CM, the observables describing the components of the angular momentum are incompatible, they cannot be measured simultaneously. However the failure of the compatibility is related to the appearance of $\hbar$ on the right-hand side of 
$$[L_x,L_y] = i \hbar L_z\:. $$
That number is extremely small if compared with macroscopic scales. This is the ultimate reason why the incompatibility of $L_x$ and $L_z$ is negligible for macroscopic systems. \\
Direct inspection proves that $\sigma(S_a) = \{\pm \hbar/2\}$. Similarly $\sigma(L_a) = \{n \hbar \:|\: n \in \mathbb Z \}$. Therefore, differently from CM, the values of angular momentum components form  a discrete set of reals in QM.
Again  notice that the difference of two closest values is extremely small if compared with typical values of the angular momentum of macroscopic systems. This is the practical  reason why this discreteness disappears at macroscopic level. \hfill  $\blacksquare$\\

\noindent Just a few words about the time evolution and  composite systems   \cite{ercolessi} are necessary now, a wider discussion on the time evolution  will take place later in this paper.

\subsection{Time evolution}
Among the class of observables of a  quantum system described  in a given inertial reference frame, an observable $H$ called the (quantum) {\em Hamiltonian} plays a fundamental r\^ole. We are assuming here that 
the system interacts with a stationary environment. 
The one-parameter group of unitary operators associated to $H$ (exploiting (\ref{sectradec1}) to explain the notation)
\begin{equation} U_t := e^{-itH}  := \sum_{h \in \sigma(H)}e^{-ith} P^{(H)}_h\:,\quad t \in \mathbb R \label{Ham}\end{equation}
describes the {\em time evolution of quantum states} as follows. If the state at time $t=0$ is represented by the unit vector $\psi \in \cal H$, the state at the generic time $t$ is represented by the vector
$$\psi_t = U_t \psi\:.$$ 

\remark Notice that $\psi_t$ has norm $1$ as necessary to describe states,  since $U_t$ is norm preserving it being unitary. \hfill $\blacksquare$\\

\noindent Taking (\ref{Ham}) into account,  this identity is equivalent to 
\begin{equation}i\frac{d \psi_t}{dt} = H \psi_t\:.\label{Sc}\end{equation}
Equation (\ref{Sc}) is nothing but a form of the celebrated {\em Schr\"odinger equation}. If the environment is not stationary,  a more complicated description can be given where $H$ is replaced by a class of Hamiltonian (selfadjoint)  operators parametrized in time, $H(t)$, with $t \in \mathbb R$. This time dependence accounts for  the time evolution of   the external system interacting with our quantum system.
 In that case, it is simply assumed that the time evolution of states is again described by the equation above where $H$ is replaced by $H(t)$:
$$i\frac{d \psi_t}{dt} = H(t) \psi_t\:.$$
Again,  this equation permits one  to define a two-parameter {\em groupoid} of unitary operators $U(t_2,t_1)$, where $t_2,t_1 \in \mathbb R$, such that
$$\psi_{t_2} = U(t_2,t_1)\psi_{t_1}\:, \quad t_2, t_1 \in \mathbb R\:.$$
The  groupoid structure arises from the following identities: $U(t,t)=I$ and $U(t_3,t_2)U(t_2,t_1)= U(t_3,t_2)$ and $U(t_2,t_1)^{-1}= U(t_2,t_1)^* = U(t_1,t_2)$.

\remark\label{rem8} In our elementary case where $\cal H$ is finite dimensional,  {\em Dyson's formula} holds with the simple hypothesis that  the map $\mathbb R \ni t \mapsto H_t \in L({\cal H})$ is continuous (adopting any topology compatible with the vector space structure of $L({\cal H})$) \cite{moretti} 
$$U(t_2,t_1) = \sum_{n=0}^{+\infty} \frac{(-i)^n}{n!} \int_{t_1}^{t_2} \cdots \int_{t_1}^{t_2}  T[H(\tau_1)\cdots H(\tau_n)]  \: d\tau_1 \cdots d\tau_n\:.$$
Above, we define $T[H(\tau_1)\cdots H(\tau_n)]  = H(\tau_{\pi(1)})\cdots H(\tau_\pi(n))$, where the bijective function  $\pi :\{1,\ldots, n\}
\to \{ 1,\ldots, n\}$ is any permutation with $\tau_{\pi(1)} \geq \cdots \geq  \tau_{\pi(n)}$.  \hfill $\blacksquare$

\subsection{Composite systems} If a quantum system $S$ is made of two parts, $S_1$ and $S_2$,  respectively described in the Hilbert spaces ${\cal H}_1$ and ${\cal H}_2$, it is assumed that the whole system is described in the  space ${\cal H}_1 \otimes {\cal H}_2$ equipped with the  unique Hermitian scalar product $\langle \cdot | \cdot \rangle $  such that $\langle \psi_1\otimes \psi_2|  \phi_1\otimes \phi_2 \rangle = \langle \psi_1 | \phi_1 \rangle_1 \langle \psi_2 | \phi_2 \rangle_2$
 (in the infinite dimensional case ${\cal H}_1 \otimes {\cal H}_2$ is the Hilbert completion of the afore-mentioned algebraic tensor product).\\
 If ${\cal H}_1 \otimes {\cal H}_2$ is the space of a composite system $S$  as before  and $A_1$ represents an observable for  the part $S_1$, it is naturally identified with the selfadjoint operator  $A_1 \otimes I_2$
defined in ${\cal H}_1 \otimes {\cal H}_2$. A similar statement holds swapping $1$ and $2$.  Notice that 
$\sigma(A_1 \otimes I_2)= \sigma (A_1)$ as one easily proves. (The result survives the extension to the infinite dimensional case.)

\remark $\null$

{\bf (a)}  Composite systems are in particular systems made of many (either identical or not) particles.
If we have a pair of particles respectively described in  the Hilbert space ${\cal H}_1$ and ${\cal H}_2$, the full system is described in  ${\cal H}_1 \otimes {\cal H}_2$. Notice that the dimension of the final space is the {\em product} 
of the dimension of the component spaces. In CM the system would instead be described in a space of phases which is the Cartesian product of the two spaces of phases. In that case the dimension would be the {\em sum}, rather than the product,  of the dimensions of the component spaces. 

{\bf (b)} ${\cal H}_1 \otimes {\cal H}_2$ contains the  so-called {\em entangled states}. They are  states  represented by vectors   {\em not} 
factorized as $\psi_1 \otimes \psi_2$,  but they are {\em linear combinations} of such vectors. Suppose the whole  state is represented by the entangled state $$\Psi =\frac{1}{\sqrt{2}}\left(\psi_a \otimes \phi  + \psi_{a'}\otimes \phi'\right)\:,$$ where  $A_1\psi_ a = a \psi_a$ and 
 $A_1\psi_ {a'} = a' \psi_{a'}$ for a certain observable $A_1$ of the part $S_1$ of the total system. 
Performing a measurement of $A_1$  on $S_1$, due to the collapse of state phenomenon, we automatically act one the whole state and on the part describing $S_2$. As a matter of fact, up to normalization,  the state of the full system after the measurement of $A_1$ will be $\psi_a \otimes \phi$ if the outcome of $A_1$ is $a$, or it will  be $\psi_{a'} \otimes \phi'$ if the outcome of $A_1$ is $a'$. It could happen that the two measurement apparatuses, respectively measuring $S_1$ and $S_2$, are localized very far in the physical space. Therefore acting on $S_1$ by measuring $A_1$, we ``instantaneously'' produce a change of $S_2$ which can be seen performing mesurements on it,  even if the measurement apparatus of $S_2$ is very far from the one of $S_1$. 
This seems to contradict the fundamental relativistic postulate, the  {\em locality} postulate,  that  there is a maximal speed, the one of light,  for propagating  physical information. After the famous analysis of Bell, improving the original one by Einstein, Podolsky and Rosen,  the phenomenon  has been experimentally observed. Locality is truly violated, but in a such subtle way which does not allows   superluminal propagation of physical information.
Non-locality of QM is nowadays widely accepted as a real and fundamental feature of Nature \cite{ghirardi, stanford}.  \hfill  $\blacksquare$

\example  An electron also possesses an  {\em electric charge}. That is another {\em internal} quantum observable, $Q$, with two values $\pm e$, where $e= −1.602176565 \times 10^{-19} C$ is the value elementary electrical charge.
So there are two types of electrons. {\em Proper electrons}, whose internal state of charge is an  eigenvector of $Q$ with eigenvalue $-e$ and {\em positrons}, whose   internal state of charge is a  eigenvector of $Q$ with eigenvalue $e$. The simplest version of  the internal Hilbert space of the electrical charge is therefore ${\cal H}_c$ which\footnote{As we shall say later, in view of a {\em superselection rule} not all normalized vectors of ${\cal H}_c$ represent (pure) states.}, again, is  isomorphic to $\mathbb C^2$. With this representation $Q = e \sigma_3$.
The full Hilbert space of an electron must contain a factor ${\cal H}_s\otimes {\cal H}_c$. Obviously 
this is by no means sufficient to describe an electron, since we must introduce at least the observables describing the position of the particle in the physical space at rest with a reference (inertial) frame. \hfill $\blacksquare$

\subsection{A first look to the infinite dimensional case, CCR and quantization procedures}\label{subsecID} 
All the described formalism, barring technicalities we shall examine in the rest of the paper, holds also for quantum systems whose complex vector space of the states is {\em infinite} dimensional. \\
To extend the ideas treated in  Sect. \ref{subsecFD}  to the general case, dropping the hypothesis that ${\cal H}$ is finite dimensional, it seems to be natural to assume that ${\cal H}$ is complete with respect to the norm associated to $\langle \cdot | \cdot \rangle$.  In particular,  completeness assures the existence of spectral decompositions, generalizing (\ref{sectradec0}) for instance when referring to {\em compact} selfadjoint operators (e.g., see \cite{moretti}).  In other words, ${\cal H}$  is a {\em complex Hilbert space}. \\
The most elementary example of a quantum system described in an infinite dimensional Hilbert space 
is a quantum particle whose position is along the axis $\mathbb R$. In this case \cite{ercolessi}, the Hilbert  space is ${\cal H} := L^2(\mathbb R, dx)$, $dx$ denoting the standard Lebesgue measure on $\mathbb R$. 
States are still represented by elements of  $P\cal H$, namely equivalence classes $[\psi]$ of  measurable functions $\psi : \mathbb R \to \mathbb C$ with unit norm, $||[\psi]|| = \int_{\mathbb R} |\psi(x)|^2 dx =1$. 

\remark  We therefore have here {\em two} quotient procedures. $\psi$ and $\psi'$ define the same element $[\psi]$ of $L^2(\mathbb R, dx)$ iff $\psi(x)-\psi'(x) \neq 0$ on a zero Lebesgue measure set.  Two unit vectors $[\psi]$ 
and $[\phi]$ define the same state if $[\psi] = e^{ia}[\phi]$ for some $a\in \mathbb R$.  \hfill $\blacksquare$

\notation
{\em In the rest of the paper we adopt the standard convention of many textbooks on functional analysis denoting by $\psi$, instead of  $[\psi]$, the elements of spaces $L^2$ and tacitly identifying pair of functions which are different on a zero measure set.} 
\hfill $\blacksquare$\\

\noindent The functions $\psi$ defining (up to zero-measure set and phases) states,  are called {\em wavefunctions}. There is  a pair of fundamental observables describing our quantum particle moving in $\bR$. One is the {\em position observable}. The corresponding selfadjoint operator, denoted by  $X$, is defined as follows
$$(X\psi)(x):= x\psi(x)\:, \quad x \in \mathbb R \:,  \quad \psi \in L^2(\mathbb R, dx)\:.$$ 
The other observable is the one associated to the momentum and indicated by $P$. Restoring $\hbar$ for the occasion, the {\em momentum operator} is
$$(P\psi)(x) := -i \hbar \frac{d\psi(x)}{dx}\:,  \quad x \in \mathbb R \:, \quad \psi \in L^2(\mathbb R, dx)\:.$$

\noindent We immediately face  several mathematical problems with these, actually quite naive, definitions. Let us first focus on $X$. First of all, generally $X\psi \not \in L^2(\mathbb R, dx )$
even if $\psi \in L^2(\mathbb R ,dx)$.  To fix the problem, we can simply restrict the domain of $X$ to the  linear subspace of $L^2(\mathbb R, dx)$
\begin{equation} D(X) := \left\{ \psi \in L^2(\mathbb R, dx) \:\left|\: \int_{\mathbb R}  |x\psi(x)|^2 dx < +\infty\right.\right\}\:.\label{DX}\end{equation}
Though it holds 
\begin{equation} \langle X\psi |\phi \rangle = \langle \psi | X \phi \rangle \quad \mbox{for all $\psi, \phi \in D(X)$,}\label{hermX}\end{equation}
we cannot say that $X$ is selfadjoint simply because we have not yet given the definition of adjoint operator of an operator defined in a non-maximal domain in an infinite dimensional Hilbert space. In this general case, the identity  (\ref{agg}) does not define a (unique) operator $X^*$ without further technical requirements. We just say here, to comfort the reader, that 
$X$ is truly selfadjoint with respect to a general definition we shall give in the next section, when its domain is (\ref{DX}).\\ 
Like (\ref{agg2}) in the finite dimensional case,  the identity (\ref{hermX}) implies that all eigenvalues of $X$ must be real if any. Unfortunately, for every fixed $x_0 \in \mathbb R$ there is no $\psi \in L^2(\mathbb R, dx)$ with $X\psi = x_0 \psi$ and $\psi \neq 0$. (A function $\psi$ satisfying $X\psi = x_0 \psi$ must also satisfy $\psi(x)=0$ if $x\neq x_0$, due to the definition of $X$. Hence  $\psi=0$, as an element of $L^2(\mathbb R, dx)$
just because  $\{x_0\}$ has zero Lebesgue measure!)
All that seems to prevent the existence of a spectral decomposition of $X$ like the one in  (\ref{sectradec0}), since $X$ does not admit eigenvectors in $L^2(\mathbb R, dx )$ (and {\em a fortiori} in $D(X)$). \\
The definition of $P$ suffers from  similar troubles. The domain of $P$ cannot be the whole $L^2(\mathbb R,dx)$ but should be restricted to a subset of (weakly) differentiable functions with derivative in $L^2(\mathbb R, dx)$. The simplest definition is
\begin{equation} D(P) := \left\{ \psi \in L^2(\mathbb R, dx) \:\left|\: \exists \:\mbox{w-}\frac{d\psi(x)}{dx}\:, \: \int_{\mathbb R}  \left| \mbox{w-}\frac{d\psi(x)}{dx}\right|^2 dx < +\infty\right.\right\}\:.\label{DP}\end{equation}
 Above $\mbox{w-}\frac{d\psi(x)}{dx}$ denotes the {\em weak derivative} of $\psi$\footnote{$f : \mathbb R \to \mathbb C$, defined up to zero-measure set, is the weak derivative of $g \in L^2(\mathbb R, dx)$ if it holds $\int_{\mathbb R} g \frac{dh}{dx} dx = -\int_{\mathbb R} fh dx$ for every $h \in C_0^\infty(\mathbb R)$.  If $g$ is differentiable,
its standard derivative coincide  with the weak one.}. As a matter of fact $D(P)$ coincides with the {\em Sobolev space} $H^1(\mathbb R)$.\\
Again, without a precise definition of adjoint operator in an infinite dimensional Hilbert space (with non-maximal domain) we cannot say anything  more precise about the selfadjointness of $P$ with that domain.  We  say however  that 
$P$ turns out to be selfadjoint with respect to the general definition we shall give in the next section provided  its domain is (\ref{DP}).\\ 
From the definition of the domain of $P$ and passing to the Fourier-Plancherel transform, one  finds again (it is not so easy to see it)
\begin{equation} \langle P\psi |\phi \rangle = \langle \psi | P \phi \rangle \quad \mbox{for all $\psi, \phi \in D(P)$,}\label{hermP}\end{equation}
so that, eigenvalues are real if exist.
However  $P$ does not admit eigenvectors. 
The naive eigenvectors with eigenvalue $p \in \mathbb R$ are functions proportional to the map $\mathbb R \ni x \mapsto e^{ipx/\hbar}$, which does not belong to $L^2(\mathbb R, dx)$ nor  $D(P)$.
We will tackle all these issues  in the next section in a very general fashion.\\  
We observe that the space of {\em Schwartz functions}, ${\cal S}(\mathbb R)$
\footnote{${\cal S}(\mathbb R^n)$ is the vector space of  the  $C^\infty$ complex valued functions on $\mathbb R^n$ which, together with their derivatives of all orders in every set of  coordinate,  decay  faster than every negative integer  power of $|x|$ for $|x|\to +\infty$.}
 satisfies 
$${\cal S}(\mathbb R) \subset D(X) \cap D(P)$$
and furthermore ${\cal S}(\mathbb R)$ is dense in $L^2(\mathbb R, dx)$ and {\em invariant} under $X$ and $P$: $X({\cal S}(\mathbb R))\subset {\cal S}(\mathbb R)$ and $P({\cal S}(\mathbb R)) \subset {\cal S}(\mathbb R)$.

\remark  Though we shall not pursue this approach within these notes, we stress that $X$ admits a set of eigenvectors
 if we extend the domain of $X$   to 
 the space  ${\cal S}'(\mathbb R)$ of {\em Schwartz distributions} in a standard way: If $T \in {\cal S}'(\mathbb R)$,
$$\langle X(T), f \rangle := \langle T, X(f) \rangle \quad \mbox{for every  $f \in {\cal S}(\mathbb R)$.}$$
With this extension, the eigenvectors in ${\cal S}'(\mathbb R)$  of $X$ with eigenvalues $x_0 \in \mathbb R$ are the distributions $c\delta(x-x_0)$ \cite{ercolessi}.   This class of eigenvectors can be exploited  to build  a spectral decomposition of $X$  similar to  that  in   (\ref{sectradec0}).\\
Similarly, $P$ admits eigenvectors in ${\cal S}'(\mathbb R)$ with the same procedure. They are just  the above exponential functions. Again, this calss of eigenvectors can be used to construct a spectral decomposition of $P$ like the one in   (\ref{sectradec0}). 
 The idea of this procedure can be traced back to Dirac \cite{Dirac11} and, in fact, something like ten years later it gave rise to the rigorous   {\em theory of distributions} by L. Schwartz. The modern formulation of this approach to construct spectral decompositions of selfadjoint operators was developed by Gelfand 
in terms of the so called {\em rigged Hilbert spaces} \cite{gelfand}.  \hfill $\blacksquare$ \\

\noindent Referring to a quantum particle moving in $\mathbb R^n$, whose Hilbert space is $L^2(\mathbb R^n, dx^n)$, one can introduce observables $X_k$ and $P_k$ representing position and momentum with respect to the $k$-th axis, $k=1,2,\ldots, n$.
These operators, which are defined analogously to the case $n=1$, have domains 
smaller than the full Hilbert space. We do not write the form of these domain (where the operators turn out to be properly selfadjoint referring to the general definition we shall state  in the next section). We just mention the fact  that all these operators admit ${\cal S}(\mathbb R^n)$
as common invariant subspace included in their domains. Thereon
\begin{equation}
(X_k\psi)(x)= x_k \psi(x) \:, \qquad (P_k\psi)(x)= -i \hbar \frac{\partial \psi(x)}{\partial x_k}\:, \quad \psi \in {\cal S}(\mathbb R^n)\label{XP}
\end{equation}
and so
\begin{equation} \langle X_k\psi |\phi \rangle = \langle \psi | X_k \phi \rangle\:, \quad
\langle P_k\psi |\phi \rangle = \langle \psi | P_k \phi \rangle\quad 
 \mbox{for all $\psi, \phi \in {\cal S}(\mathbb R^n)$,}\label{herm}\end{equation}

\noindent By direct inspection one easily proves that the {\em canonical commutation relations} (CCR) hold when all the operators in the subsequent formulas are supposed to be restricted to ${\cal S}(\mathbb R^n)$
\begin{equation}
[X_h, P_k] = i\hbar \delta_{hk} I\:,\quad [X_h, X_k] =0\:, \quad  [P_h, P_k] = 0\:. \label{CCR}
\end{equation}
We have introduced the {\em commutator} $[A,B]:= AB-BA$ of the operators $A$ and $B$ generally with different domains, defined on a subspace where both compositions $AB$ and $BA$ makes sense,  ${\cal S}(\mathbb R^n)$ in the considered case. Assuming that (\ref{E}) and  (\ref{Delta2}) are still valid for $X_k$ and $P_k$ referring to  $\psi \in {\cal S}(\mathbb R^n)$,  (\ref{CCR}) easily leads to the {\em Heisenberg uncertainty relations},
\begin{equation}
\Delta X_{k\psi} \Delta P_{k\psi} \geq \frac{\hbar}{2}\:, \quad \mbox{for}\:\: \psi \in {\cal S}(\mathbb R^n)\:,\quad  ||\psi||=1 \label{Heisenberg}\:.
\end{equation}

{\bf \exercise\label{eH}}   {\em Prove inequality  (\ref{Heisenberg}) assuming  (\ref{E})  and (\ref{Delta2}).} 

{\bf Solution}.  Using (\ref{E}), (\ref{Delta2}) and the Cauchy-Schwarz  inequality, one easily finds (we omit the index $_k$ for simplicity),
$$\Delta X_\psi \Delta P_\psi = ||X'\psi||  ||P'\psi||
\geq |\langle X'\psi | P'\psi\rangle|\:.$$
where $X' := X - \langle X\rangle_\psi I$ and $P' := X - \langle X\rangle_\psi I$.
 Next notice that
$$ |\langle X'\psi | P'\psi\rangle| \geq 
|Im \langle X' \psi | P'\psi\rangle | = \frac{1}{2} | \langle X' \psi | P'\psi\rangle - \langle P' \psi | X'\psi\rangle|$$
Taking advantage from (\ref{herm}) and the definitions of $X'$ and $P'$ and exploiting (\ref{CCR}),
$$ | \langle X' \psi | P'\psi\rangle - \langle P' \psi | X'\psi\rangle| = 
 | \langle  \psi | (X'P'- P'X')\psi\rangle|  = | \langle  \psi | (XP- PX)\psi\rangle| =\hbar |\langle \psi|\psi \rangle|$$
Since $\langle \psi|\psi \rangle= ||\psi||^2=1$ by hypotheses, (\ref{Heisenberg}) is proved.
Obviously the open problem is to justify the validity of (\ref{E})  and (\ref{Delta2})  also in the infinite dimensional case. \hfill $\Box$ \\

\noindent Another philosophically  important consequence of the CCR (\ref{CCR}) is that they resemble the {\em classical canonical commutation relations} of the  Hamiltonian variables $q^h$ and $p_k$, referring to the standard {\em Poisson brackets} $\{\cdot, \cdot \}_P$,
\begin{equation}
\{q^h, p_k\}_P = \delta^h_k \:,\quad \{q^h, q^k\}_P =0\:, \quad  \{p_h, p_k\}_P = 0\:. \label{CCRc}
\end{equation}
 as soon as one identifies   $(i\hbar)^{-1}[\cdot, \cdot]$ with $\{\cdot, \cdot\}_P$. This fact, initially noticed by Dirac \cite{Dirac11}, leads to the idea of ``quantization'' of a classical Hamiltonian theory \cite{ercolessi}. \\
One starts from a classical system described on a symplectic manifold $(\Gamma, \omega)$, for instance $\Gamma =\mathbb R^{2n}$ equipped with the standard symplectic form as $\omega$ and 
considers  the (real) Lie algebra $(C^\infty(\Gamma, \mathbb R), \{\cdot, \cdot\}_P)$.
To ``quantize'' the system one  looks for a map associating classical observables $f \in C^\infty(\Gamma, \mathbb R)$
to quantum observables $O_f$, i.e.  selfadjoint operators restricted\footnote{The restriction should be such that it admits a unique selfadjoint extension. A sufficient requirement on ${\cal S}$ is that every $O_f$ is {\em essentially selfadjoint} thereon, notion we shall discuss in the next section.} to a common invariant domain ${\cal S}$ of a certain Hilbert space $\cal H$. (In case $\Gamma = T^*Q$, $\cal H$ can be chosen as $L^2(Q, d\mu)$ where $\mu$ is some natural measure.) The map $f \mapsto O_f$ is expected to satisfy a set of constraints. The most important are listed here
\begin{enumerate}
\item $\mathbb R$-linearity;
\item $O_{id} = I|_{\cal S}$;
\item $O_{\{f,g\}_P} = -i\hbar[O_f,O_g]$ 
\item If $(\Gamma, \omega)$ is $\mathbb R^{2n}$ equipped with the standard symplectic form, they must hold 
$O_{x_k}= X_k|_{\cal S}$ and $O_{p_k}= P_k|_{\cal S}$, $k=1,2,\ldots,n$.
\end{enumerate}
The penultimate requirement says that the map $f \mapsto O_f$ transforms the real Lie algebra 
$(C^\infty(\Gamma, \mathbb R), \{\cdot, \cdot\}_P)$ into a real Lie algebra of operators whose  Lie bracket is $i\hbar[O_f,O_g]$.
A map fulfilling these constraints, in particular the third one, is possible if $f$, $g$ are both functions of only the $q$ or the $p$ coordinates  separately or if
they are linear in them. But it is false  already if we consider elementary physical systems \cite{ercolessi}. The ultimate reason of this obstructions due to the fact that
the operators $P_k$, $X_k$ do not commute, contrary to the functions $p_k$, $q^k$ which do.
The problem can be solved, in the paradigm of the so-called {\em Geometric Quantization}\cite{ercolessi}, replacing $(C^\infty(\Gamma, \mathbb R), \{\cdot, \cdot\}_P)$ with a 
sub-Lie algebra (as large as possible). There are other remarkable procedures of ``quantization'' in the literature, we shall not insist on them any further here \cite{ercolessi}.

\example $\null$\\
{\bf (a)} The full Hilbert space of an electron is therefore given by the  
tensor product $L^2(\mathbb R^3, d^3x)\otimes {\cal H}_s \otimes {\cal H}_c$.\\
{\bf (b)} Consider a particle in $3D$ with mass $m$, whose potential energy is a bounded-below real function $U \in C^\infty(\mathbb R^3)$ with polynomial growth. Classically, its Hamiltonian function reads
$$h:= \sum_{k=1}^3\frac{p_k^2}{2m} + U(x)\:.$$
A brute force quantization procedure in $L^2(\mathbb R^3, d^3x)$  consists of replacing every classical object with corresponding operators. It may make sense at most  when there are no ordering ambiguities in translating functions like $p^2x$, since classically $p^2x = pxp = xp^2$, but these identities are false at quantum level. In our case these problems do not arise so that
\beq H:= \sum_{k=1}^3\frac{P_k^2}{2m} + U\:,\label{ham}\eeq
where $(U\psi)(x) := U(x) \psi(x)$, could be accepted as first quantum model of the Hamiltonian function of our system. The written operator is at least defined on ${\cal S}(\mathbb R^3)$, where it satisfies
$\langle H \psi | \phi \rangle = \langle \psi | H\phi \rangle $. The existence of selfadjoint extensions is a delicate issue \cite{moretti} we shall not address here. Taking 
(\ref{XP}) into account, always on ${\cal S}(\mathbb R^3)$, one immediately finds
$$H:= -\frac{\hbar^2}{2m} \Delta + U\:,$$
where $\Delta$ is the standard Laplace operator in $\mathbb R^3$. If we assume that the equation describing the evolution of the quantum system is again\footnote{A factor $\hbar$ has to be added in front of the left-hand side of (\ref{Sc}) if we deal with a unit system where $\hbar \neq 1$.} (\ref{Sc}), in our case we find
the known form of the Schr\"odinger equation,
$$i\hbar \frac{d\psi_t}{dt} = -\frac{\hbar^2}{2m} \Delta \psi_t + U \psi_t\:,$$
when $\psi_\tau \in {\cal S}(\mathbb R^3)$ for $\tau$ varying in a neighborhood of $t$ (this requirement may be relaxed).
Actually the meaning of the derivative on the left-hand side should be specified. We only say here that it is computed with respect to the natural topology of $L^2(\mathbb R^3, d^3x)$.  \hfill $\blacksquare$

\section{Observables in infinite dimensional Hilbert spaces: Spectral Theory}\label{secstatic}
The main goal of this section is to present a suitable mathematical  theory, sufficient  to extend to the infinite dimensional case the mathematical formalism of QM introduced in the previous section. As seen in Sect. \ref{subsecID},  the main issue  concerns the fact that, in the infinite dimensional case,  there are operators representing observables which do not have proper eigenvalues and eigenvectors, like $X$ and $P$. So,  naive expansions as (\ref{sectradec0})   cannot be literally extended to the general case. These expansions, together with the interpretation of the eigenvalues as values attained by the observable associated with a selfadjoint operator,  play a crucial r\^ole in the mathematical interpretation of the quantum phenomenology introduced in Sect. \ref{subsec1} and mathematically discussed in Sect. \ref{subsecFD}. In particular we need a precise definition of selfadjoint operator and a result regarding a spectral decomposition in the infinite dimensional case.  These tools are basic elements of the so called {\em spectral theory in Hilbert spaces}, literally invented  by von Neumann in his famous book  \cite{vonNeumann}
 to give a rigorous form to Quantum Mechanics and successively developed by various authors towards  many different directions of pure and applied mathematics. 
 The same notion of abstract Hilbert space, as  nowadays  known,  was born in the second chapter of  that book, joining and elaborating  previous  mathematical constructions  by Hilbert and Riesz.
 The remaining part of this section is devoted to introduce the reader to  some basic  elements of that formalism. Reference books are, e.g.,  \cite{R,moretti,S,analysisnow}

\subsection{Classes of  (especially unbounded) operators in  Hilbert spaces} As is well known,  a  {\bf complex  Hilbert space}  is a complex vector space, $\cH$, equipped with a Hermitian  scalar product $\langle\cdot|\cdot\rangle$ 
-- for us the anti-linear entry being  the left one -- 
and $\cH$ is complete with respect to the norm  $||x|| := \sqrt{\langle x|x \rangle}$, $x \in \cH$.\\
In particular, just in view of positivity of the scalar product and regardless
the completeness property, the {\bf Cauchy-Schwarz inequality} holds 
$$|\langle x|y\rangle| \leq ||x||\:||y||\:,\quad x,y \in \cH\:.$$
\noindent Another elementary purely algebraic fact is 
the {\bf polar decomposition} of the Hermitian  scalar product (here, $\cH$ is not necessarily complete)
\beq 4 \langle x| y \rangle = || x+y ||^2 - ||x-y||^2 -i||x+i y||^2 +i|| x-iy||^2\label{polardec}\quad \mbox{for of $x,y \in \cH$,}\eeq
 which immediately implies the following elementary result.
\begin{proposition}  If $\cH$ is a complex vector space with Hermintian scalar product $\langle\:\:|\:\:\rangle$, a linear map $L: \cH \to \cH$
which is an isometry -- $||Lx||=||x||$ if $x\in \cH$ --
 also preserves the scalar product -- $\langle Lx|Ly\rangle = \langle x|y\rangle$ for $x,y\in \cH$.
\end{proposition}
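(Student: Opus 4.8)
The plan is to reduce everything to the polarization identity (\ref{polardec}), which expresses the Hermitian scalar product purely through the norm. First I would apply (\ref{polardec}) to the pair $(Lx, Ly)$ in place of $(x,y)$, obtaining
$$4\langle Lx|Ly\rangle = || Lx+Ly ||^2 - ||Lx-Ly||^2 -i||Lx+iLy||^2 +i|| Lx-iLy||^2\:.$$

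Next I would invoke the linearity of $L$ --- crucially its $\bC$-linearity, so that $L(iy) = iLy$ --- in order to rewrite each argument as the image under $L$ of a single vector: $Lx \pm Ly = L(x\pm y)$ and $Lx \pm iLy = L(x \pm iy)$. Then the isometry hypothesis $||Lz||=||z||$, applied successively with $z$ equal to $x+y$, $x-y$, $x+iy$, and $x-iy$, converts each of the four norms on the right-hand side into the norm of the corresponding combination of $x$ and $y$ alone.

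At this stage the right-hand side has become precisely the right-hand side of (\ref{polardec}) evaluated on the pair $(x,y)$, which equals $4\langle x|y\rangle$. Dividing by $4$ then yields $\langle Lx|Ly\rangle = \langle x|y\rangle$, as claimed.

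The computation is entirely routine; the only step that genuinely deserves attention --- and the sole place where a hypothesis stronger than mere norm-preservation enters --- is the appeal to the $\bC$-linearity of $L$, needed to pull the scalar $i$ through $L$ in the two imaginary terms. Were $L$ only $\bR$-linear, an isometry would in general preserve just the real part of the scalar product, and the conclusion could fail; thus the complex linearity built into the statement is exactly what makes the imaginary terms transform correctly.
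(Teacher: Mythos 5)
Your proof is correct and is exactly the argument the paper intends: the proposition is stated as an immediate consequence of the polar decomposition (\ref{polardec}), and you have simply spelled out the routine computation, correctly flagging the use of $\bC$-linearity to handle the terms involving $i$.
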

\noindent  The converse proposition is obviously true.\\
We henceforth assume that the reader be familiar with the basic theory of normed, Banach and Hilbert spaces and notions like {\em Hilbertian basis}  (also called  {\em complete orthonormal systems}) and that their properties and use be well known \cite{R,moretti}. 
We only remind the reader  the validity of an elementary though  fundamental tecnical result (e.g., see \cite{R,moretti}):
\begin{theorem}[Riesz' lemma]\label{RL} Let $\cH$ be a complex Hilbert space.  $\phi: \cH \to \bC$ is linear and continuous if and only if has the form $\phi = \langle x|\:\:  \rangle$ for some  $x \in \cH$. The vector $x$  is uniquely determined by $\phi$.
\end{theorem}

\noindent Our  goal is to present some basic results of {\em spectral analysis}, useful in QM.\\
From now on, an {\bf operator $A$ in ${\cH}$} always means a {\em linear} map  $A: D(A) \to \cH$, whose {\bf domain}, $D(A)\subset \cH$, is a {\em subspace} of $\cH$. In particular,
 $I$ always  denotes the {\bf identity operator} defined on the {\em whole} space  ($D(I)= \cH$) $$I : \cH \ni x \mapsto x \in \cH\:.$$
\noindent If $A$ is an operator in $\cH$, $Ran(A):= \{Ax\:|\: x \in D(A)\}$ is the {\bf image} or {\bf range} of $A$.

\notation If $A$ and $B$ are operators in $\cH$  $$A\subset B \mbox{  means that $D(A) \subset D(B)$  and $B|_{D(A)}=A$,}$$  
where $|_S$ is the standard ``restriction to $S$'' symbol. 
We also adopt usual  conventions regarding {\bf standard domains}  of combinations of operators $A,B$:

 (i) $D(AB) := \{x \in D(B) \:|\: Bx \in D(A)\}$

(ii) $D(A+B) := D(A) \cap D(B)$,

(ii)  $D(\alpha A)= D(A)$ for $\alpha \neq 0$.
 \hfill $\blacksquare$\\

\noindent To go on, we define some abstract algebraic structures naturally arising in the space of operators on a Hilbert space.

\begin{definition} {\em 
Let $\gA$ be an associative complex algebra $\gA$.\\
 {\bf (1)} $\gA$ is a {\bf Banach algebra} if  it is a Banach space such that $||ab|| \leq ||a||\:||b||$ for $a,b \in \gA$. An {\bf unital} Banach algebra is a Banach algebra with unit multiplicative element $1\!\!1$, satisfying $||1\!\!1||=1$.\\
{\bf (2)}  $\gA$ is an (unital)  {\bf $^*$-algebra} if it is an (unital) algebra equipped with an anti linear map $\gA \ni a \mapsto a^* \in \gA$, called {\bf involution}, such that $(a^*)^*=a$ and $(ab)^*=b^*a^*$ for $a,b \in \gA$.\\
{\bf (3)} $\gA$ is a (unital) {\bf $C^*$-algebra} if it is a (unital) Banach algebra $\gA$
which is also a $^*$-algebra and $||a^*a||= ||a||^2$ for $a \in \gA$.\\
A $^*$-{\bf homomorphism} from the $^*$-algebra $\cA$ to the the $^*$-algebra $\cB$ is an algebra homomorphism preserving the involutions (and the unities if both present). A bijective $^*$-homomorphism is called $^*$-{\bf isomorphism}.}  \hfill $\blacksquare$

\end{definition} 
{\bf \exercise}   {\em Prove that $1\!\!1^* = 1\!\!1$ in a unital $^*$-algebra
and that  $||a^*||=||a||$ if $a\in \gA$ when $\gA$ is a $C^*$-algebra.}

{\bf Solution}. From $ 1\!\!1 a = a  1\!\!1 =a$ and the definition of $^*$, we immediately have
  $a^* 1\!\!1^* = 1\!\!1^*a^* =a^*$. Since $(b^*)^* =b$, we have found that $b 1\!\!1^* = 1\!\!1^*b =b$ for every $b \in \gA$. Uniqueness of the unit implies $ 1\!\!1^*= 1\!\!1$.
Regarding the second property, $||a||^2 = ||a^*a|| \leq ||a^*||\: ||a||$ so that $||a|| \leq ||a^*||$. Everywhere replacing $a$ for $a^*$ and using $(a^*)^*$, we also obtain $||a^*|| \leq ||a||$, so that $||a^*||=||a||$. $\Box$\\

\noindent We remind the reader that a linear map $A : X \to Y$, where $X$ and $Y$ are normed complex vector spaces 
with resp. norms $||\cdot||_X$ and $||\cdot||_Y$, is said to be {\bf bounded} if 
\beq
||Ax||_Y \leq  b||x||_X \quad \mbox{for some $b \in [0,+\infty)$ and all $x\in X$.}\label{bounded}
\eeq
 As is well known \cite{R,moretti}, it turns out that: {\em  $A$ is continuous if and only	 if it is bounded}.  \\
From now on $\gB(\cH)$ denotes the set of  bounded operators  $A: \cH \to \cH$.
This set acquires the structure of a {\em unital Banach algebra}: The complex vector space structure is the standard  one of operators, the associative algebra product is the composition of operators with  unit given by  $I$, and the norm being the usual {\bf operator norm}, $$||A|| := \sup_{0 \neq x \in {\cal H}} \frac{||Ax||}{||x||}\:.$$
This definition of $||A||$ can be given also for an operator $A: D(A) \to \cH$, if $A$
is bounded and $D(A) \subset \cH$ but $D(A) \neq \cH$. It immediately arises that
$$||Ax|| \leq ||A||\:||x|| \quad \mbox{if $x\in D(A)$.}$$ 
$\gB(\cH)$ is also an unital  $C^*$-algebra if we introduce the notion of adjoint of an operator. To this end we have the following general definition concerning also unbounded operators defined on non-maximal domains.

\begin{definition} {\em  Let $A$ be a densely defined operator in the complex Hilbert space $\cH$.
Define the subspace of $\cH$,
$$D(A^*) := \left\{ y \in \cH \:|\: \exists z_y \in \cH \mbox{ s.t. }
\langle y|Ax\rangle   = \langle  z_y|x\rangle \: \forall x \in D(A)  \right\}\:.$$ 
The linear map $A^*: D(A^*)\ni y \mapsto z_y$ is called the {\bf adjoint} operator of $A$.}  \hfill $\blacksquare$
\end{definition} 

\remark$\null$

{\bf (a)} Above, $z_y$ is uniquely determined by $y$, since $D(A^*)$ is dense. If both $z_y,z_y'$ satisfy $\langle y|Ax\rangle   = \langle  z_y|x\rangle$ and $\langle y|Ax\rangle   = \langle  z'_y|x\rangle$, then $\langle  z_y-z_y'|x\rangle=0$ for every $x\in D(A)$. Taking a sequence $D(A)\ni x_n \to z_y-z_y'$, we conclude that $||z_y-z_y'||=0$. Thus $z_y=z_y'$.
The fact that $y \mapsto z_y$ is linear can immediately be checked.

 {\bf (b)} By construction, we immediately  have that 
 $$\langle A^*y|x\rangle = \langle y|Ax\rangle\quad \mbox{for $x\in D(A)$ and  $y\in D(A^*)$}$$ and also  $$\langle x| A^*y\rangle = \langle Ax|y \rangle\quad  \mbox{for $x\in D(A)$ and $y\in D(A^*)$}\:,$$
if taking the complex conjugation of the former identity. $\hfill \blacksquare$

{\bf \exercise}   {\em Prove that  $D(A^*)$ can equivalently be defined as the set (subspace) of $y \in \cal H$ such that the linear  functional $D(A) \ni x \mapsto \langle y|Ax\rangle$ is continuous.}

{\bf Solution}.  It is a  simple application of Riesz' lemma, after having uniquely extended 
$D(A) \ni x \mapsto \langle y|Ax\rangle$  to a continuous linear functional defined on  $\overline{D(A)}= \cH$ by continuity. \hfill $\Box$

\begin{remark}\label{remagg}$\null$

{\bf (a)} If $A$ is densely defined and $A\subset B$ then $B^* \subset A^*$. The proof is elementary.

{\bf (b)}  If $A \in \gB(\cH)$ then $A^* \in \gB(\cH)$ and $(A^*)^*=A$. Moreover $$||A^*||^2=||A||^2=||A^*A|| = ||AA^*||\:.$$

{\bf (c)} Directly from given definition of adjoint one has, for densely defined operators $A,B$ on $\cH$,
$$A^*+B^* \subset (A+B)^*\quad \mbox{and} \quad A^*B^* \subset (BA)^*\:.$$
Furthermore 
\beq A^*+B^*= (A+B)^* \quad \mbox{and} \quad A^*B^* = (BA)^*\:,\label{ON}\eeq
 whenever $B \in \gB(\cH)$ and $A$ is densely defined.

{\bf (d)} From (b) and the last statement in (c) in particular, it is clear that  $\gB(\cH)$ is a unital $C^*$-algebra  with  involution $\gB(\cH) \ni A \mapsto A^* \in \gB(\cH)$.    \hfill $\blacksquare$
\end{remark}

\begin{definition}[$^*$-{\bf representation}]\label{defrap}
{\em If $\gA$ is a (unital) $^*$-algebra and $\cH$ a Hilbert space, a $^*$-{\bf representation}
on $\cH$ is a $^*$-homomorphism  $\pi : \gA \to \gB(\cH)$ referring to the natural (unital) $^*$-algebra structure of $\gB(\cH)$.}   \hfill $\blacksquare$
\end{definition}

{\bf \exercise}  {\em Prove that $A^* \in \gB(\cH)$ if $A \in \gB(\cH)$ and that,
in this case $(A^*)^*=A$,   $||A||=||A^*||$ and $||A^*A||=||AA^*||=||A||^2$.}

{\bf  Solution}. If $A\in \gB(\cH)$, for every $y\in \cH$,
 the linear map $\cH \ni x \mapsto \langle y | Ax\rangle$ is continuous ($|\langle y | Ax\rangle| \leq ||y|| \: ||Ax|| \leq ||y||\:||A||\:||x||$) therefore Theorem \ref{RL} proves that there exists a unique 
$z_{y,A}\in \cH$ with $ \langle y | Ax\rangle=  \langle z_{y,A} | x\rangle$ for all $x,y \in \cH$. The map $\cH \ni y \mapsto z_{y,A}$ is linear as consequence of the said uniqueness and the antilinearity of the left entry of scalar product. The map   $\cH \ni y \mapsto z_{y,A}$ fits the definition of $A^*$, so it coincides with $A^*$ and $D(A^*)=\cH$.  Since $\langle A^*x| y\rangle = \langle x| A y \rangle$ for $x,y \in \cH$ implies (taking the complex conjugation)  $\langle y|A^*x \rangle = \langle  A y |x\rangle$ for $x,y \in \cH$, we have  $(A^*)^*=A$.
To prove that $A^*$ is bounded observe that
$||A^*x||^2 = \langle A^*x|A^*x \rangle = \langle x |AA^*x\rangle \leq ||x|| \: ||A||\:||A^*x||$ so that 
$||A^*x|| \leq ||A||\: ||x||$ and $||A^*||\leq ||A||$. Using $(A^*)^*=A$ we have $||A^*||=||A||$. Regarding the last 
identity, it is evidently enough to prove that $||A^*A||=||A||^2$. First of all $||A^*A||\leq ||A^*||\:||A|| = ||A||^2$, so that
$||A^*A|| \leq ||A||^2$. On the other hand $||A||^2 = (\sup_{||x||=1} ||Ax||)^2 = \sup_{||x||=1} ||Ax||^2 =
\sup_{||x||=1} \langle Ax|Ax\rangle = \sup_{||x||=1} \langle x|A^*Ax\rangle \leq \sup_{||x||=1}  ||x|| ||A^*Ax|| =
  \sup_{||x||=1}   ||A^*Ax|| = ||A^*A||$. We have found that $||A^*A|| \leq ||A||^2\leq ||A^*A||$ so that  $||A^*A||=||A||^2$. \hfill $\Box$

\begin{definition}\label{defcore}{\em Let $A$ be an operator in the complex Hilbert space $\cH$.\\
{\bf (1)} $A$  is said to be {\bf closed} if  the {\bf graph}   of $A$, that is the set pairs $(x, Ax) \subset \cH \times \cH$ with $x\in D(A)$,  is closed in the product topology of $\cH\times \cH$. \\
{\bf (2)} $A$ is {\bf closable}
if it admits extensions in terms of closed operators. This is equivalent to say that  the closure of the  graph of $A$ is the graph of an operator, denoted by $\overline{A}$, and called the {\bf closure} of $A$.\\
{\bf (3)} If $A$ is closable, a subspace $S \subset D(A)$ is called {\bf core} for $A$ if $\overline{A|_S} = \overline{A}$.} \hfill $\blacksquare$
\end{definition} 

\begin{remark}\label{remarkclosure}$\null$

{\bf (a)} Directly from the definition,  $A$ is closable if and only if  there are no sequences of elements $x_n \in D(A)$ such that $x_n \to 0$ and $Ax_n \not \to 0$ as $n \to +\infty$. In this case $D(\overline{A})$ is made of the elements $x\in \cH$ such that $x_n \to x$ and $Ax_n \to y_x$ for some sequences $\{x_n\}_{n \in \bN} \subset D(A)$
and some $y_x \in D(A)$. In this case $\overline{A}x = y_x$.

{\bf (b)} As a consequence of (a) one has that,
if $A$ is closable, then $aA+bI$ is closable and   $\overline{aA+bI} = a \overline{A} + b I$ for every $a,b\in \bC$.

{\bf (c)}  Directly from the definition,  $A$ is closed if and only if 
$D(A) \ni x_n \to x \in \cH$ and $Ax_n \to y \in \cH$ imply both $x \in D(A)$ and $y=Ax$.

{\bf (d)} If $A$ is densely defined,
$A^*$ is closed from the definition of adjoint operator and (c) above. Moreover, a densely defined operator $A$ is closable if and oly if $D(A^*)$ is dense. In this case $\overline{A}= (A^*)^*$. For the proof see, e.g., \cite{moretti}.\end{remark} 
\noindent The Hilbert space version of the {\bf closed graph theorem} holds (e.g., see \cite{moretti}).  \hfill $\blacksquare$

\begin{theorem}[Closed graph Theorem]\label{cgt}
Let  $A: \cH \to \cH$ be an operator, $\cH$ being a complex Hilbert space. $A$  is closed if and only if $A \in \gB(\cH)$.
\end{theorem}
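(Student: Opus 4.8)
The plan is to prove the two implications separately, starting with the easy one. For the ``if'' part, suppose $A \in \gB(\cH)$, so that $A$ is bounded and hence continuous. I would invoke the characterization of closedness in Remark~\ref{remarkclosure}(c): if $D(A) \ni x_n \to x$ and $Ax_n \to y$, then continuity of $A$ forces $Ax_n \to Ax$, whence $y = Ax$; and since here $D(A) = \cH$, the condition $x \in D(A)$ is automatic. Thus $A$ is closed, and this direction uses nothing beyond continuity.

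The substance lies in the converse: assuming $A$ closed with $D(A) = \cH$ (as the notation $A : \cH \to \cH$ prescribes), I must show $A$ is bounded. First I would equip $\cH \times \cH$ with its natural Hilbert space structure, $\langle (x_1,y_1) | (x_2,y_2)\rangle := \langle x_1 | x_2\rangle + \langle y_1 | y_2 \rangle$, under which it is complete. By hypothesis the graph $\Gamma(A) := \{(x, Ax) \:|\: x \in \cH\}$ is a closed subspace of this Hilbert space, hence itself a Hilbert space. Next I would introduce the two coordinate projections restricted to the graph, $\pi_1 : \Gamma(A) \to \cH$ with $\pi_1(x, Ax) = x$ and $\pi_2 : \Gamma(A) \to \cH$ with $\pi_2(x, Ax) = Ax$; both are linear and bounded, since $||\pi_j(x,Ax)|| \leq ||(x,Ax)||$ for $j=1,2$.

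The key observation is that $\pi_1$ is a bijection: it is surjective because $D(A) = \cH$, and injective because the first coordinate of a point of the graph already determines the second. At this point I would invoke the bounded inverse theorem — the corollary of the open mapping theorem asserting that a continuous linear bijection between Banach spaces has continuous inverse — to conclude that $\pi_1^{-1} : \cH \to \Gamma(A)$ is bounded. Then the identity $A = \pi_2 \circ \pi_1^{-1}$ exhibits $A$ as a composition of bounded operators, so $A \in \gB(\cH)$, as required.

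The main obstacle is precisely this appeal to the open mapping (equivalently, bounded inverse) theorem: it is the only genuinely nontrivial analytic input, and it rests on Baire's category theorem, whereas everything else is routine bookkeeping about the graph and its projections. A more self-contained route would reprove the needed instance of the open mapping theorem directly for the bijection $\pi_1$ by a Baire-category argument on $\cH$, but modulo that standard result the proof is immediate.
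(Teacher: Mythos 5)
Your proof is correct: both directions are sound, and the appeal to the bounded inverse theorem applied to the first coordinate projection $\pi_1:\Gamma(A)\to\cH$ is exactly the standard argument. The paper itself does not prove Theorem~\ref{cgt} but defers to \cite{moretti}, where essentially this same open-mapping/Baire-category route is taken, so there is nothing to reconcile.
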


{\bf \exercise}  {\em Prove that, if $B \in \gB(\cH)$ and $A$ is  a closed operator in $\cH$ such that $Ran(B) \subset D(A)$, then  $AB \in \gB(\cH)$.}

{\bf Solution}. $AB$ is well defined by hypothesis and $D(AB)=\cH$.  Exploiting (c) in remark \ref{remarkclosure} and continuity of $B$, one easily finds that $AB$ is closed as well. Theorem \ref{cgt} finally proves that 
$AB \in \gB(\cH)$. \hfill $\Box$

\begin{definition}\label{defop} {\em    An operator $A$ in the complex Hilbert space $\cH$ is said to be

(1) {\bf  symmetric} if it is densely defined and   $\langle Ax|y\rangle = \langle x|Ay\rangle$ for  $x,y \in D(A)$,

 which is equivalent to say  that $A \subset A^*$.

(2) {\bf selfadjoint}  if it is symmetric and  $A=A^*$,

(3) {\bf essentially self-adjoint} if it is symmetric and  $(A^*)^* = A^*$. 

(4) {\bf unitary} if  $A^{*}A= AA^* = I$,

(5) {\bf normal} if it is closed, densely defined and $AA^*=A^*A$.}  \hfill $\blacksquare$
\end{definition}

\begin{remark}\label{remagg2}$\null$

{\bf (a)} If $A$ is unitary then $A, A^* \in \gB(\cH)$. Furthermore $A: \cH \to \cH$
is unitary if and only if it is surjective and norm preserving. (See the exercises
\ref{vex} below).

{\bf (b)} A selfdjoint operator  $A$ does not admit proper symmetric extensions. 
 (See the exercises \ref{esopagg} below).

{\bf (c)} A symmetric operator $A$ is always closable because $A \subset A^*$ and $A^*$ is closed ((d) remark \ref{remarkclosure}), moreover for that operator the following conditions are equivalent:
 
(i) $(A^*)^* = A^*$ ($A$ is essentially self adjoint), 

(ii) $\overline{A}= A^*$, 

(iii) $\overline{A}= (\overline{A})^*$. \\ If these conditions are valid, $\overline{A}= (A^*)^* =A^*$ is the unique selfadjoint extension of $A$ (e.g., see \cite{moretti} and the exercises \ref{esopagg} below).

{\bf (d)} Unitary and selfadjoint operators are cases of normal operators.  \hfill $\blacksquare$
\end{remark}

{\bf \exercise\label{esinvarianceU}}  {\em Let $U: \cH \to \cH$ be a unitary operator in the complex  Hilbert space $\cH$ and $A$
another  operator in $\cH$. Prove that $UAU^*$ with  domain $U(D(A))$ (resp. $U^*AU$
 with  domain $U^*(D(A))$)
is symmetric, selfadjoint, essentially selfadjoint, unitary, normal 
if $A$ is respectively   symmetric, selfadjoint, essentially selfadjoint, unitary, normal.}

{\bf Solution}. Since $U^*$ is unitary when $U$ is and $(U^*)^*=U$, it is enough to establish the thesis for  $UAU^*$.
First of all notice that $D(UAU^*) = U(D(A))$ is dense if $D(A)$ is dense  since $U$ is bijective and isometric and $U(D(A)) = \cH$ if $D(A)=\cH$ because $U$ is bijective. By direct inspection, applying the definition of adjoint operator, one sees that
$(UAU^*)^* = UA^*U^*$ and $D((UAU^*)^*)= U(D(A^*))$.
 Now, if $A$ is symmetric $A\subset A^*$ which implies $UAU^* \subset UA^*U^*= (UAU^*)^*$ so that $UAU^*$ is symmetric as well. If $A$ is selfadjoint $A=A^*$ which implies $UAU^* = UA^*U^* = (UAU^*)^*$ so that 
$UAU^*$ is self adjoint as well. If $A$ is essentially self adjoint it is symmetric and  $(A^*)^* = A^*$, so that
$UAU^*$ is symmetric and $U(A^*)^*U^* = UA^*U^*$ that is $(UA^*U^*)^* = UA^*U^*$ which means
$((UAU^*)^*)^* = (UAU^*)^*$ so that $ UA^*U^*$ is essentially selfadjoint.
If $A$ is unitary, we have $A^*A=AA^*=I$ so that $UA^*AU^*=UAA^*U^*=UU^*$ which, since $U^*U=I = UU^*$, is equivalent to $UA^*U^*UAU^*=UAU^*UA^*U^*=U^*U=I$, that is 
$(UA^*U^*)UAU^*=(UAU^*)UA^*U^*=I$ and thus $UAU^*$ is unitary as well. If $A$ is normal $UAU^*$ is normal too,  with the same reasoning as in the unitary case. \hfill $\Box$\\

\noindent An elementary though important result, helping understand  why in QM observables are very often described by selfadjoint operators 
which are unbounded and defined in proper subspaces,  is the following proposition  (see (c) in remark \ref{remst}).
\begin{theorem}[Hellinger-Toepliz theorem]\label{HT} Let $A$ be a self-adjoint operator in the complex Hilbert space $\cH$.
$A$ is bounded if and only if $D(A)=\cH$ (thus $A\in \gB(\cH)$).
\end{theorem}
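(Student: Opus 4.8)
The plan is to treat the stated equivalence as two implications, both of which reduce to a single structural observation: a self-adjoint operator is automatically closed. Indeed, since $A$ is self-adjoint it is in particular densely defined, so by Remark \ref{remarkclosure}(d) its adjoint $A^*$ is closed; and because $A=A^*$, the operator $A$ itself is closed. Once this is in hand, the whole theorem is essentially a repackaging of the Closed Graph Theorem (Theorem \ref{cgt}), so no hard analytic estimate is needed beyond what that theorem already encapsulates.

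For the substantive direction, assuming $D(A)=\cH$ I would argue that $A$ is an operator defined on the whole of $\cH$ which, by the observation above, is closed. Theorem \ref{cgt} then applies verbatim and yields $A\in\gB(\cH)$; in particular $A$ is bounded. This is the genuine Hellinger--Toeplitz content — that a self-adjoint (indeed even merely symmetric, via $A\subset A^*$) operator defined everywhere cannot escape boundedness — and in this framework it is immediate once closedness has been noted.

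For the converse, assuming $A$ bounded I would show $D(A)=\cH$. Being bounded and densely defined, $A$ is uniformly continuous on $D(A)$ and so extends uniquely by continuity to a bounded operator $\tilde A$ defined on $\overline{D(A)}=\cH$. The graph of $\tilde A$ is closed and contains the graph $\Gamma(A)$ as a dense subset, hence equals $\overline{\Gamma(A)}$; but $A$ is closed, so $\Gamma(A)$ is already closed and therefore $\Gamma(A)=\Gamma(\tilde A)$, giving $D(A)=\cH$. The only point requiring care — and the place where self-adjointness is genuinely used rather than mere boundedness — is precisely this last step: one must rule out that $A$ is a proper (closed) restriction of its continuous extension, which is exactly what closedness of $A$ forbids. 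Thus the conceptual weight of the theorem lies entirely in recognizing that self-adjointness supplies closedness for free, after which both implications are short consequences of Theorem \ref{cgt}.
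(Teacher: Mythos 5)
Your proof is correct. The forward direction ($D(A)=\cH$ implies boundedness) is exactly the paper's argument: $A=A^*$ is closed because adjoints are closed, and the Closed Graph Theorem (Theorem \ref{cgt}) does the rest. The converse is where you diverge. The paper extends $A$ by continuity to a bounded $A_1$ on all of $\cH$, notes that $A_1$ is symmetric by continuity of the scalar product, and then invokes the maximality property of self-adjoint operators --- a self-adjoint operator admits no proper symmetric extensions ((b) in Remark \ref{remagg2}) --- to conclude $A=A_1$. You instead argue at the level of graphs: $\Gamma(A)$ is dense in the closed set $\Gamma(\tilde A)$, so $\overline{\Gamma(A)}=\Gamma(\tilde A)$, and closedness of $A$ forces $\Gamma(A)=\Gamma(\tilde A)$, hence $D(A)=\cH$. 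Both are valid; yours is marginally more economical in its hypotheses, since it uses only that $A$ is closed, densely defined and bounded (indeed, any such operator has closed, hence full, domain), whereas the paper's route leans on the specifically self-adjoint fact about symmetric extensions. The trade-off is that the paper's version recycles a lemma already established in the text, while yours keeps the argument self-contained within the graph formalism of Definition \ref{defcore} and Remark \ref{remarkclosure}.
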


\begin{proof} As $A=A^*$ we have $D(A^*)=\cH$. Since $A^*$ is closed, Theorem \ref{cgt} implies the $A^*(=A)$ is bounded. Conversely, if $A=A^*$ is bounded, since $D(A)$ is dense,  we can continuously extend it to a bounded operator $A_1 :\cH \to \cH$. That extension, by continuity,  trivially satisfies
$\langle A_1x|y \rangle = \langle x| A_1 y \rangle$ for all $x,y \in \cH$ thus $A_1$ is symmetric. (b) in remark \ref{remagg2} implies $A=A_1$.
\end{proof}

{\bf \exercise \label{vex}} $\null$\\
{\bf (1)} {\em Prove that if $A$ is unitary then $A, A^* \in \gB(\cH)$.}

{\bf Solution}. 
 It holds $D(A)=D(A^*)=D(I)=\cal H$ and $||Ax||^2 = \langle Ax|Ax\rangle = \langle x|A^*Ax\rangle
||x||^2$ if $x \in \cH$, so that $||A||=1$. Due to (b) in remark \ref{remagg}, $A^* \in \gB(\cH)$. $\Box$\\

\noindent {\bf (2)} {\em Prove that $A: \cH \to \cH$ is unitary if and only if is surjective and norm preserving.}

{\bf Solution}. If $A$ is unitary ((3) Def \ref{defop}), it  is evidently bijective, moreover  as $D(A^*)=\cH$
$||Ax||^2 = \langle Ax|Ax\rangle = \langle x| A^*A x\rangle = \langle x|x \rangle = ||x||^2$, so $A$ is isometric
If $A : \cH \to \cH$ is isometric its norm is $1$ and thus $A\in \gB(\cH)$. Therefore $A^* \in \gB(\cH)$.
 The condition $||Ax||^2= ||x||^2$ can be re-written
$ \langle Ax|Ax\rangle = \langle x|A^*A x \rangle = \langle x|x\rangle$ and thus $\langle x| (A^*A-I)x\rangle =0$
for $x\in \cH$. Using $x= y\pm z$ and $x= y \pm i z$, the found indentity implies  $\langle z| (A^*A-I)y\rangle =0$
for all $y,z \in \cH$. Taking $z= (A^*A-I)y$, we finally have $||(A^*A-I)y||=0$ for all $y\in \cH$ and thus 
$A^*A=I$. In particular $A$ is injective as it admits the left inverse $A^*$.
 Since $A$ is also surjective, it is bijective and thus its left inverse $(A^*)$ is also a right inverse, that is   $AA^*=I$. \\

\noindent {\bf (3)} {\em Prove that, if $A: \cH \to \cH$ satisfies $\langle x|Ax \rangle \in \bR$ for all $x\in \cH$ (and in particular if $A\geq 0$, which means  $\langle x|Ax \rangle \geq 0$
 for all $x\in \cH$), then $A^*=A$ and $A\in \gB(\cH)$.}

{\bf Solution}. We have $\langle x|Ax \rangle = \overline{\langle x|Ax \rangle} = \langle Ax|x \rangle =\langle x|A^*x \rangle$ where,  as $D(A)=\cH$, the adjoint $A^*$ is well defined everywhere on $\cH$. Thus $\langle x |(T-T^*)x \rangle =0$ for every $x\in \cH$. Using there  $x = y\pm z$ and $x= y \pm i z$ we obtain
$\langle y |(T-T^*)z \rangle =0$ for all $y,z\in \cH$. Choosing $y =(T-T^*)z$, we conclude that $T=T^*$. Theorem \ref{HT} concludes the proof.
\hfill $\Box$

\begin{example} \label{FPop} The {\bf Fourier transform}, ${\cal F} : {\cal S}(\bR^n) \to {\cal S}(\bR^n)$,
defined as\footnote{In QM, adopting units with $\hbar \neq 1$, $k\cdot x$ has to be replaced for $\frac{k\cdot x}{\hbar}$ and $(2\pi)^{n/2}$ for $(2\pi \hbar)^{n/2}$.} \beq({\cal F}f)(k) := \frac{1}{(2\pi)^{n/2}}\int_{\bR^n} e^{-ik\cdot x} f(x) d^nx\label{ft}\eeq
($k\cdot x$ being the standard $\bR^n$ scalar product of $k$ and $x$) is a bijective linear  map with inverse
\beq ({\cal F}_-g)(x) := \frac{1}{(2\pi)^{n/2}}\int_{\bR^n} e^{ik\cdot x} g(k) d^nk\:. \label{fti}\eeq
Both ${\cal F}$ and ${\cal F}_-$ preserve the scalar product (and thus the norm) of $L^2(\bR^n, d^nx)$. As a consequence (exercise), using the fact that ${\cal S}(\bR^n)$ is dense in $ L^2(\bR^n, d^nx)$, one easily proves that ${\cal F}$ and ${\cal F}_-$  uniquely continuously extend to unitary operators, respectively,
$\hat{\cal F} : L^2(\bR^n, d^nx) \to L^2(\bR^n, d^nk)$ and $\hat{\cal F}_- : L^2(\bR^n, d^nk) \to L^2(\bR^n, d^nx)$ such that $\hat{\cal F}^* = \hat{\cal F}^{-1}= 
\hat{\cal F}_-$. $\hat{\cal F}$ is the {\bf Fourier-Plancherel} (unitary) {\bf operator}. \hfill $\blacksquare$

\end{example}

{\bf \exercise \label{esopagg}} $\null$\\
\noindent {\bf (1)} {\em Prove that a selfadjoint operator  $A$ does not admit proper symmetric extensions.}

{\bf Solution}. Let $B$ be a symmetric extension of $A$.
$A\subset B$ then  $B^*\subset A^*$ for (a) in remark \ref{remagg}. As $A=A^*$ we have $B^* \subset A \subset  B$. Since $B\subset B^*$, we conclude that $A=B$. $\null$
\hfill $\Box$

\noindent {\bf (2)} {\em Prove that an essentially selfdjoint operator  $A$ admits a unique selfadjoint extension, and that this extension is $A^*$. }

{\bf Solution}. Let $B$ be a selfadjoint extension of the essentially selfadjoint operator $A$, so that $A\subset B$. Therefore
$A^* \supset B^* = B$ and $(A^*)^* \subset B^* =B$. Since $A$ is essentially selfadjoint, we have found $A^* \subset B$. Here $A^*$ is selfadjoint and $B$ is symmetric because selfadjoint. The previous exercise implies $A^*=B$. That is, every selfadjoint extension of $A$ coincides with $A^*$. \hfill $\Box$\\

\noindent
 If $A$ is a densely defined symmetric operator in the complex Hilbert space $\cH$,  define  the {\bf deficiency indices}$,  n_\pm := dim \cH_\pm$ (cardinal numbers in general) where $\cH_\pm$ are the (closed) subspaces of the solutions 
of $(A^* \pm iI)x_\pm =0$
\cite{R,moretti, S} .
\begin{proposition}\label{propindex}
If $A$ is a densely defined symmetric operator in the complex Hilbert space $\cH$ the following holds.\\
{\bf (a)}  $A$ is essentially selfadjoint (thus it admits an unique selfadjoint extension) if  $n_\pm =0$, that is
$\cH_\pm = \{0\}.$\\
{\bf (b)} $A$ admits selfadjoint extensions if and only if  $n_+=n_-$ and these extension are labelled by means of $n_+$ parameters.
\end{proposition}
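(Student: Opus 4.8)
The backbone of both parts is the \emph{basic isometry estimate}: for symmetric $A$ and every $x \in D(A)$,
$$\|(A \pm iI)x\|^2 = \|Ax\|^2 + \|x\|^2,$$
obtained by expanding the scalar product and using that $\langle Ax|x\rangle \in \bR$. This shows $A \pm iI$ is injective with bounded inverse on its range, and that $Ran(\overline{A}\pm iI)$ is closed whenever $\overline A$ is (Remark \ref{remarkclosure}(c)). Since $iI \in \gB(\cH)$, identity (\ref{ON}) gives $(A \pm iI)^* = A^* \mp iI$, and the elementary fact $Ran(T)^\perp = \ker(T^*)$ (immediate from Riesz' lemma, Theorem \ref{RL}) yields
$$\cH = \overline{Ran(A + iI)} \oplus \cH_-, \qquad \cH = \overline{Ran(A - iI)} \oplus \cH_+ .$$
Thus $n_\mp = 0$ is equivalent to density of $Ran(A \pm iI)$. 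These are the only analytic inputs; the rest is bookkeeping with adjoints and closures from Remarks \ref{remagg} and \ref{remarkclosure}.

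To prove (a), first note that a self-adjoint $B$ satisfies $Ran(B \pm iI) = \cH$: injectivity and closedness of the range come from the estimate and closedness of $B$, and density holds because $\ker(B^* \mp iI) = \ker(B \mp iI) = \{0\}$, again by the estimate. Hence if $A$ is essentially self-adjoint, $\overline{A} = A^*$ is self-adjoint and both ranges are dense, giving $n_\pm = 0$. Conversely, assume $n_\pm = 0$. Then $Ran(\overline{A} \pm iI) = \cH$, and I claim $\overline{A} = (A^*)^*$ is self-adjoint. Given $y \in D(\overline{A}^*) = D(A^*)$, surjectivity of $\overline{A} + iI$ produces $x \in D(\overline{A})$ with $(\overline{A} + iI)x = (A^* + iI)y$; since $\overline{A} \subset A^*$ this forces $(A^* + iI)(y - x) = 0$, i.e. $y - x \in \cH_+ = \{0\}$, so $y = x \in D(\overline{A})$. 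Thus $\overline{A}^* \subset \overline{A}$, and with symmetry $\overline{A} = \overline{A}^* = A^*$; by Remark \ref{remagg2}(c) this is precisely essential self-adjointness of $A$ (the uniqueness of the extension then follows as in Exercise \ref{esopagg}).

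For (b) I would invoke the Cayley transform of the closed symmetric operator $\overline{A}$, namely the isometry $V := (\overline{A} - iI)(\overline{A} + iI)^{-1}$, a bijection from $Ran(\overline{A} + iI)$ onto $Ran(\overline{A} - iI)$ whose initial and final spaces have orthogonal complements $\cH_+$ and $\cH_-$. The key correspondence to establish is that self-adjoint extensions of $\overline{A}$ are in bijection with unitary operators $U : \cH \to \cH$ extending $V$, the inverse assignment being $B = i(I + U)(I - U)^{-1}$. A unitary extension of $V$ is exactly a choice of unitary map from the defect space $\cH_+$ onto $\cH_-$, and such a map exists if and only if $\dim \cH_+ = \dim \cH_-$, i.e. $n_+ = n_-$; when it does, the extensions are parametrised by the unitary group of $\cH_+$, whence ``$n_+$ parameters''. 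The main obstacle is the careful verification of this correspondence: that isometric, resp. unitary, extensions of $V$ pull back to symmetric, resp. self-adjoint, extensions of $\overline A$ and conversely, and above all that the inverse transform $i(I+U)(I-U)^{-1}$ returns a \emph{densely defined self-adjoint} operator — the delicate point being that $1$ is not an eigenvalue of $U$ (so $I-U$ is injective with dense range), which reflects the density of $D(A)$ and guarantees that the reconstructed operator coincides with its own adjoint.
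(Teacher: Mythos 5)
The paper does not actually prove Proposition \ref{propindex}; it quotes it from the literature (\cite{R,moretti,S}), so there is no internal proof to compare against and your attempt must be judged as a reconstruction of von Neumann's classical argument.

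Your proof of part (a) is complete and correct. The identity $\|(A\pm iI)x\|^2=\|Ax\|^2+\|x\|^2$, the resulting closedness of $Ran(\overline{A}\pm iI)$, the identification $[Ran(A\pm iI)]^\perp=\ker(A^*\mp iI)=\cH_\mp$, and the surjectivity argument forcing $D(A^*)\subset D(\overline{A})$ are exactly the standard route, and each step is backed by facts the paper does supply (Remark \ref{remarkclosure}, identity (\ref{ON}), Remark \ref{remagg2}(c), Exercise \ref{esopagg}). You even prove the converse implication, which the statement does not ask for.

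Part (b) correctly identifies the Cayley-transform strategy and, to your credit, the genuinely delicate point (that $1$ cannot be an eigenvalue of a unitary extension $U$ of $V$, because $Ran(I-V)=D(\overline{A})$ is dense). But as written it is a programme rather than a proof: the two-way correspondence between unitary extensions of $V$ and self-adjoint extensions of $\overline{A}$ is the entire content of the statement and is left unverified, as you yourself acknowledge. Two smaller points. First, you have swapped the defect spaces in the Cayley-transform paragraph relative to your own opening display: with the paper's convention $\cH_\pm=\ker(A^*\pm iI)$, the initial space $Ran(\overline{A}+iI)$ has orthogonal complement $\cH_-$ and the final space $Ran(\overline{A}-iI)$ has complement $\cH_+$, so the extra datum is a unitary map $\cH_-\to\cH_+$; this is harmless for the conclusion $n_+=n_-$ but should be made consistent. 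Second, ``labelled by $n_+$ parameters'' really means ``parametrised by the unitary group of an $n_+$-dimensional space''; the proposition's own wording is equally loose, so your honesty about this is appropriate.
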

\remark \label{remvon} An easy  sufficient condition, due to von Neumann, for $n_+=n_-$ is that $CA \subset AC$ where $C: \cH \to \cH$ is a {\bf conjugation} that  
is an isometric  surjective {\em antilinear}\footnote{In other words $C(\alpha x + \beta y)= \overline{\alpha}Cx + \overline{\beta}Cy$ if $\alpha,\beta \in \bC$ and $x,y \in \cH$.} map with $CC=I$ \cite{moretti}.\\
Taking $C$ as the standard conjugation of functions in $L^2(\mathbb R^n, d^nx)$, this result proves in particular that all  operators in QM of the {\em Sch\"ordinger} form  as (\ref{ham}) admit selfadjoint extensions when defined on dense domains. \hfill $\blacksquare$

{\bf \exercise} {\em Prove that a symmetric operator that admits a unique self-adjoint extension is necessarily essentially selfadjoint.}

{\bf Solution}. By (b) of Proposition \ref{propindex}, $n_+=n_-$. If $n_\pm \neq 0$ there are many selfadjoint extension. The only possibility for the uniqueness of the selfadjoint extension is $n_\pm =0$. (a)  of Proposition \ref{propindex} implies that $A$ is  essentially selfadjoint. 
$\null$ \hfill $\Box$\\

\noindent A very useful criterion to establish the essentially selfadjointness of a symmetric operator is due to Nelson. It relies upon an important definition.
\begin{definition}\label{defanalitic} {\em Let  $A$ be  an operator in the complex  Hilbert space  $\cH$. \\If  $\psi \in \cap_{n \in \bN} D(A^n)$ satisfies
$$\sum_{n=0}^{+\infty} \frac{t^n}{n!}||A^n\psi||< +\infty\quad \mbox{for some $t>0$,} $$
then $\psi$ is said to be an {\bf analytic vector} of $A$.} \hfill $\blacksquare$
\end{definition}

\noindent We can state Nelson's criterion here \cite{moretti}.
\begin{theorem}[Nelson's essentially selfadjointness criterium] Let $A$ be a symmetric operator in the complex Hilbert space $\cH$, $A$ is essentially selfadjoint if $D(A)$ contains a dense set  $D$ of analytic vectors (or -- which  is equivalent --a set $D$ of analytic vectors whose finite  span dense in $\cH$).
\end{theorem}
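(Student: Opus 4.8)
The plan is to reduce the claim to the vanishing of the deficiency indices and then, for each analytic vector, to build a norm-preserving ``evolution'' that forces orthogonality to the deficiency subspaces. By Proposition \ref{propindex}(a) it suffices to prove $\cH_\pm = \{0\}$, that is, that $A^*\phi = \pm i\phi$ forces $\phi = 0$; I treat $A^*\phi = i\phi$, the other case being identical up to signs. Since the finite span of $D$ is dense, it is enough to show $\langle\phi|\psi\rangle = 0$ for every analytic vector $\psi \in D$.

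Fix such a $\psi$, with $\sum_n \frac{t^n}{n!}||A^n\psi|| < +\infty$ for $t < t_0$, and define the $\cH$-valued series $\psi(t) := \sum_{n\ge 0}\frac{(it)^n}{n!}A^n\psi$, which converges in norm for $|t| < t_0$. Termwise estimates show that $\psi(t)$ is real-analytic, that $\psi(t) \in \bigcap_m D(\overline{A}^m)$, and that $\psi'(t) = i\overline{A}\psi(t)$. Using the symmetry of $A$ one then computes $\frac{d}{dt}||\psi(t)||^2 = 0$, so that $||\psi(t)|| = ||\psi||$ throughout the interval of convergence.

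The key and hardest step is to extend $\psi(t)$ to a norm-preserving $\cH$-valued function on all of $\mathbb R$. For this I would show that the radius of analyticity of $\psi(t)$ is bounded below by $t_0$ uniformly in $t$. Each $A^m\psi$ is itself analytic with radius at least $t_0$, since multiplying the coefficients by a polynomial in $n$ does not shrink the radius of convergence; moreover, applying the closed operator $\overline{A}$ repeatedly to the convergent defining series yields $A^m\psi(t) = \sum_n \frac{(it)^n}{n!}A^{n+m}\psi$, and the norm preservation applied to the analytic vector $A^m\psi$ gives $||A^m\psi(t)|| = ||A^m\psi||$. Hence $\sum_m \frac{s^m}{m!}||A^m\psi(t)|| = \sum_m \frac{s^m}{m!}||A^m\psi|| < +\infty$ for every $s < t_0$, independently of $t$. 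This uniform radius allows one to re-expand $\psi$ about any base point and patch the pieces together in steps of fixed length, extending it --- still with constant norm --- to every $t \in \mathbb R$.

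Finally I would exploit the eigenvalue relation. From $A^*\phi = i\phi$ and symmetry one gets $\langle\phi|A^n\psi\rangle = (-i)^n\langle\phi|\psi\rangle$, so $f(t) := \langle\phi|\psi(t)\rangle$ obeys $f'(t) = f(t)$ with $f(0) = \langle\phi|\psi\rangle$, whence $f(t) = e^t\langle\phi|\psi\rangle$ for all $t \in \mathbb R$. But $|f(t)| \le ||\phi||\,||\psi(t)|| = ||\phi||\,||\psi||$ is bounded in $t$, while $e^t \to +\infty$; this forces $\langle\phi|\psi\rangle = 0$. As $\psi \in D$ is arbitrary and $\mathrm{span}\,D$ is dense, $\phi = 0$, and $\cH_- = \{0\}$ follows symmetrically. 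The whole difficulty is concentrated in the extension step: without the uniform radius of analyticity one controls $f$ only on a bounded interval, where $e^t$ yields no contradiction.
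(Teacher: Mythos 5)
The paper does not actually prove this theorem: it states it and defers to \cite{moretti}, so there is no internal proof to compare against. Your argument is the standard proof of Nelson's criterion (essentially the one in \cite{moretti} and in Reed--Simon), and it is correct in its essentials: reduce to the vanishing of the deficiency indices via Proposition \ref{propindex}(a); for an analytic vector $\psi$ build the local ``flow'' $\psi(t)=\sum_n \frac{(it)^n}{n!}A^n\psi$, which lies in $\bigcap_m D(\overline{A}^m)$ and is norm-preserving by symmetry; observe that $\overline{A}^m\psi(t)$ is the flow of the analytic vector $A^m\psi$, so $\|\overline{A}^m\psi(t)\|=\|A^m\psi\|$ and the radius of analyticity does not shrink along the flow; patch to all of $\bR$; and finally play the bounded function $t\mapsto\langle\phi|\psi(t)\rangle$ against $e^t\langle\phi|\psi\rangle$. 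You have correctly identified the uniform lower bound on the radius as the crux. Two small points deserve attention. First, in the patching step you should make explicit that the relation $\langle\phi|\overline{A}^{\,n}\chi\rangle=(-i)^n\langle\phi|\chi\rangle$ holds for every $\chi\in D(\overline{A}^{\,n})$, not just for $\chi\in D(A^n)$: this follows from $\overline{A}\subset A^*$ (take a sequence in $D(A)$ converging in graph norm), and it is exactly what lets the identity $f(t)=e^{\,t-t_k}f(t_k)$ chain across the successive base points $t_k$ without any separate ``overlap consistency'' check. Second, with the paper's convention $\cH_\pm=\mathrm{Ker}(A^*\pm iI)$, the case $A^*\phi=i\phi$ that you treat is $\cH_-$, and the symmetric case (where the extension is carried out towards $t\to-\infty$, or equivalently $f(t)=e^{-t}f(0)$) gives $\cH_+=\{0\}$; your final sentence swaps the labels, but this is purely cosmetic.
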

\noindent The above equivalence is due to the fact that a finite linear combination of analytic vector is an analytic vector as well, the proof being elementary. We have the following evident corollary.
\begin{corollary}\label{CN}
If $A$ is a symmetric operator admitting a Hilbertian basis of eigenvectors in $D(A)$, then $A$ is essentially selfadjoint.
\end{corollary}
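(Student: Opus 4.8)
The plan is to verify the hypothesis of Nelson's criterion directly, namely that $D(A)$ contains a set of analytic vectors whose finite span is dense in $\cH$; the natural candidate is the given Hilbertian basis itself, so essentially no work beyond checking Definition \ref{defanalitic} is required.

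First I would fix a Hilbertian basis $\{e_k\}_{k\in K}\subset D(A)$ of eigenvectors of $A$, writing $Ae_k=\lambda_k e_k$. Since $A$ is symmetric each eigenvalue is real, as follows from $\lambda_k\langle e_k|e_k\rangle=\langle e_k|Ae_k\rangle=\langle Ae_k|e_k\rangle=\overline{\lambda_k}\langle e_k|e_k\rangle$, although reality of the $\lambda_k$ is not actually used below.

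Next I would show that each $e_k$ lies in $\cap_{n\in\bN}D(A^n)$ and is an analytic vector. This is a trivial induction: $e_k\in D(A)$ by assumption, and if $e_k\in D(A^n)$ then $A^n e_k=\lambda_k^n e_k\in D(A)$, whence $e_k\in D(A^{n+1})$ and $A^{n+1}e_k=\lambda_k^{n+1}e_k$. Consequently $\|A^n e_k\|=|\lambda_k|^n$, so that for every $t>0$
\begin{equation}
\sum_{n=0}^{+\infty}\frac{t^n}{n!}\|A^n e_k\|=\sum_{n=0}^{+\infty}\frac{(t|\lambda_k|)^n}{n!}=e^{t|\lambda_k|}<+\infty\,,
\end{equation}
which shows that $e_k$ is an analytic vector of $A$ in the sense of Definition \ref{defanalitic}.

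Finally, the finite linear span of a Hilbertian basis is dense in $\cH$ by definition of complete orthonormal system, so $D:=\{e_k\}_{k\in K}$ is a set of analytic vectors contained in $D(A)$ whose finite span is dense. Nelson's criterion then immediately gives that $A$ is essentially selfadjoint. I do not expect any genuine obstacle here; the only point deserving a word of care is the inductive verification that every basis vector belongs to all the domains $D(A^n)$, which is precisely what legitimises the clean bound $\|A^n e_k\|=|\lambda_k|^n$ and hence the geometric series estimate. Everything else is a direct invocation of the quoted theorem, which is why the statement is recorded as an \emph{evident} corollary.
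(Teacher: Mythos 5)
Your proof is correct and is precisely the argument the paper has in mind when it calls the corollary ``evident'': each eigenvector is an analytic vector since $\|A^n e_k\|=|\lambda_k|^n$ makes the series an exponential, and the finite span of a Hilbertian basis is dense, so Nelson's criterion applies directly. The inductive check that $e_k\in D(A^n)$ is the right small point to make explicit; nothing is missing.
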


\example \label{esXP} $\null$\\
{\bf (1)} For $m \in \{1,2,\ldots, n\}$, consider  the operators  $X'_m$ and $X''_m$ in  $L^2(\mathbb R^n, d^nx)$ with dense domains
$D(X'_m)  = C_0^\infty(\mathbb R^n ; \bC)$, $D(X''_m)  = {\cal S}(\mathbb R^n)$ for $x \in \bR^n$
and, for $\psi, \phi$ in the respective domains,
$$(X'_m\psi)(x) := x_m\psi(x)\:, \quad  (X''_m\phi)(x) := x_m\phi(x)\:,$$
where $x_m$is the $m$-th component of $x\in \bR^n$.
Both operators are symmetric but not selfadjoint. They admit selfadjoint extensions 
because they commute with the standard complex conjugation of functions (see remark 
\ref{remvon}).
 It is furthermore possible to prove that both operators are 
essentially selfadjoint as follows.  First define the {\bf $k$-axis position operator} 
  $X_m$ in  $L^2(\mathbb R^n, d^nx)$ with domain
$$D(X_m) := \left\{\psi \in L^2(\mathbb R^n, d^nx) \:\left|\: \int_{\bR^n} |x_m\psi(x)|^2 d^kn\right. \right\}$$
and
\beq (X_m\psi)(x) := x_m\psi(x)\:,\quad x \in \bR^n\:. \label{Xm}\eeq
Just by applying the definition of adjoint one sees that $X_m^*=X_m$ so that 
$X_m$ is selfdjoint \cite{moretti}. Again applying the definition of adjoint, one sees 
that ${X'_m}^*={X_m''}^* = X_m^*$ \cite{moretti} where we know that  the last one is selfadjoint: $(X_m^*)^*= (X_m)^*= X_m^*$.
By definition, $X'_m$ and $X_m''$ are therefore essentially selfadjoint.
By (c) in remark \ref{remagg2} $X'_m$ and $X_m''$  admit a unique selfadjoint extension which must coincide with  $X_m$ itself.
We conclude that $ C_0^\infty(\mathbb R^n ; \bC)$ and ${\cal S}(\mathbb R^n)$ are {\em cores} (Def. \ref{defcore})
for the $m$-axis position operator.\\
{\bf (2)} For $m \in \{1,2,\ldots, n\}$,   the {\bf $k$-axis momentum  operator}, $P_m$, is obtained from the position operator using the Fourier-Plancherel unitary operator $\hat{\cal F}$ introduced in example \ref{FPop}.
$$D(P_m) := \left\{\psi \in L^2(\mathbb R^n, d^nx) \:\left|\: \int_{\bR^n} |k_m(\hat{\cal F}\psi)(k)|^2 d^nk\right. \right\}$$ and
\beq (P_m\psi)(x) := (\hat{\cal F}^* K_m \hat{\cal F}\psi)(x)\:,\quad x \in \bR^n\:.\label{Pm}\eeq
Above $K_m$ is the $m$-axis {\em position operator} just written for functions (in $L^2(\bR^n, d^nk)$) whose variable, for pure convenience, is denoted by $k$ instead of $x$.
Since $K_m$ is selfadjoint,  $P_m$  is selfadjoint as well, as established in exercise \ref{esinvarianceU} as a consequence of the fact that $\hat{\cal F}$ is unitary.\\
It is possible to give a more explicit form to $P_m$ if restricting its domain.
Taking $\psi \in C_0^\infty(\mathbb R^n ; \bC)\subset {\cal S}(\mathbb R^n)$ or directly  $\psi \in {\cal S}(\mathbb R^n)$, 
$\hat{\cal F}$ reduces to the standard integral Fourier transform (\ref{ft}) with inverse (\ref{fti}). Using these integral expressions we easily obtain
\beq (P_m\psi)(x) = (\hat{\cal F}^* K_m \hat{\cal F}\psi)(x) = - i \frac{\partial}{\partial x_m}\psi(x)\eeq
because in $ {\cal S}(\mathbb R^n)$, which is invariant under the Fourier (and inverse Fourier) integral transformation,
$$\int_{\bR^n} e^{ik\cdot x} k_m (\cF \psi)(k) d^nk = -i\frac{\partial}{\partial x_m} \int_{\bR^n} e^{-ik\cdot x} (\cF\psi)(k) d^nk\:.$$
This way leads us to consider the operators  $P'_m$ and $P''_m$ in  $L^2(\mathbb R^n, d^nx)$ with 
$$D(P'_m)  = C_0^\infty(\mathbb R^n ; \bC)\:, \quad D(P''_m)  = {\cal S}(\mathbb R^n)$$
and, for $x\in \bR^n$ and $\psi, \phi$ in the respective domains,
$$(P'_m\psi)(x) :=- i \frac{\partial}{\partial x_m}\psi(x)\:, \quad  (P''_m\phi)(x) :=- i \frac{\partial}{\partial x_m}\phi(x)\:.$$
Both operators are symmetric as one can easily prove by integrating by parts, but not selfadjoint. They admit selfadjoint extensions 
because they commute with the conjugation $(C\psi)(x) = \overline{\psi(-x)}$
(see remark 
\ref{remvon}). It is furthermore possible to prove that both operators are 
essentially self-adjoint by direct use of Proposition \ref{propindex} \cite{moretti}.  
However we already know that $P_m''$ is essentially selfadjoint as it coincides with 
the essentially selfadjoint operator $\hat{\cal F}^* K''_m \hat{\cal F}$ beacause ${\cal S}(\mathbb R^n)$ is invariant under  $\hat{\cal F}$.\\
The unique selfadjoint extension of both operators turns out to be $P_m$.
We conclude that $C_0^\infty(\mathbb R^n ; \bC)$ and ${\cal S}(\mathbb R^n)$ are {\em cores}
for the $m$-axis momentum operator. \\
With the given definitions of selfadjoint operators $X_k$ and $P_k$, ${\cal S}(\mathbb R^n)$ turns out to be an invariant domain and thereon the CCR (\ref{CCR}) hold rigorously.\\
As a final remark to conclude, we say that, if $n=1$, $D(P)$ coincides to the already introduced domain (\ref{DP}). In that domain $P$ is nothing but the weak derivative times the factor $-i$. \\
{\bf (3)} The most elementary example of application of Nelson's criterion is in $L^2([0,1],dx)$. Consider $A= -\frac{d^2}{dx^2}$ with dense  domain $D(A)$ given by the functions in $C^\infty([0,1]; \bC)$ such that $\psi(0)=\psi(1)$ and $\frac{d\psi}{dx}(0)=\frac{d\psi}{dx}(1)$.   $A$ is symmetric thereon as it arises immediately by integration by parts, in particular its domain is dense  since it includes the Hilbert basis of
exponentials $e^{i2\pi n x}$, $n\in \bZ$, which are eigenvectors of $A$. Thus $A$ is also essentially selfadjoint on the above  domain.\\
A more interesting case is the {\bf Hamiltonian operator of the harmonic oscillator}, $H$ \cite{ercolessi} obtained as follows. One starts by  $$H_0 = -\frac{1}{2m}\frac{d^2}{dx^2}+ \frac{m\omega^2}{2} x^2$$
 with $D(H_0) := {\cal S}(\bR)$. Above, $x^2$ is the multiplicative operator and  $m,\omega >0$ are constants. This operator is evidently symmetric on $D(H_0)$ and admits a Hilbert basis of the {\em Hermite functions}  $\psi_n(x)$ \cite{moretti} with corresponding eigenvalues $\omega(n + \frac{1}{2})$. So $H_0$ is essentially selfadjoint on $D(H_0)$ and thus $H:= \overline{H_0} = H_0^*$.
\hfill $\blacksquare$

\subsection{Spectrum of an operator}
Our goal is to extend (\ref{sectradec0}) to a formula valid in the infinite dimensional case. As we shall see shortly, passing to the infinite dimensional case,  the sum is replaced by an integral and $\sigma(A)$ must be enlarged with respect to the pure set of eigenvalues of $A$. This is because, as already noticed in the first section, there are operators which should be decomposed with the prescription (\ref{sectradec0})  but they do not have  eigenvalues, though they play a crucial r\^ole in QM.

\begin{notation} If $A:D(A) \to \cH$ is injective, $A^{-1}$ indicates its inverse when the co-domain of $A$ is restricted to $Ran(A)$. In other words,
$A^{-1} : Ran(A) \to D(A)$. \hfill $\blacksquare$
\end{notation}

\noindent The definition of {\em spectrum} of the operator $A: D(A) \to \cH$  extends  the notion  of set of eigenvalues. The eigenvalues of $A$ are the numbers $\lambda \in \bC$ such that
$(A-\lambda I)^{-1}$ does not exist. When passing to infinite dimensions, topological issues take place. As a matter of fact, even if $(A-\lambda I)^{-1}$ exists, it may be bounded or unbounded and its domain $Ran(A-\lambda I)$ may or may not be dense. These features permit us to define a suitable extension of the notion of set of eigenvalues.

\begin{definition} {\em Let  $A$ be an operator in the complex Hilbert space $\cH$. 
The {\bf resolvent set} of $A$ is the subset of $\bC$,
 $$\rho(A) := \{\lambda \in \bC \:|\:  (A-\lambda I) \mbox{ is injective, } \overline{Ran(A-\lambda I)}=\cH\:, (A- \lambda I)^{-1} \mbox{is bounded} \}$$
The {\bf spectrum} of $A$ is the complement $\sigma(A):=\mathbb{C}\setminus\rho(A)$ and it is given by the union of the following pairwise disjoint three parts:

(i) the  {\bf point-spectrum}, $\sigma_p(A)$,  where $A-\lambda I$ not injective ($\sigma_p(A)$ is the set of {\em eigenvalues} of $A$),

(ii) the  {\bf continuous spectrum},  $\sigma_c(A)$, where $A-\lambda I$  injective, $\overline{Ran(A-\lambda I)} = \cH$ and $(A-\lambda I)^{-1}$ not bounded, 

(iii) the {\bf residual spectrum}, $\sigma_r(A)$, where $A-\lambda I$ injective and $\overline{Ran(A-\lambda I)} \neq \cH$.}  \hfill $\blacksquare$
\end{definition}

\begin{remark}\label{remspectra}$\null$

{\bf (a)} It turns out that $\rho(A)$ is always {\em open}, so that $\sigma(A)$ is always {\em closed} \cite{R,moretti,S}.

 {\bf (b)} If $A$ is closed and normal, in particular,  if $A$  is either selfadjoint or unitary), $\sigma_r(A)= \emptyset$ (e.g., see \cite{moretti}).
Furthermore, if $A$ is closed (if $A \in \gB(\cH)$ in particular), $\lambda \in \rho(A)$
if and only if $A-\lambda I$ admits inverse in $\gB(\cH)$ (see (2) in exercise \ref{enr}).

{\bf (c)} If $A$ is selfadjoint, one finds  $\sigma(A)\subset \bR$ (see (1) in exercise \ref{enr}).

{\bf (d)} If $A$ is unitary  one finds $\sigma(A)\subset  \bT :=\{ e^{ia}\:|\: a \in \bR\}$ (e.g., see \cite{moretti}).

{\bf (e)} If $U: \cH \to \cH$ is unitary and  $A$ is any operator in the complex Hilbert space $\cH$, just by applying the definition one finds  $\sigma(UAU^*)= \sigma(A)$ and in particular,
\beq
\quad  \sigma_p(UAU^*)= \sigma_p(A)\:, \quad
\sigma_c(UAU^*)= \sigma_c(A)\:, \quad \sigma_r(UAU^*)= \sigma_r(A)\:. \label{unitinvariances}
\eeq 
The same result holds replacing $U:\cH \to \cH$ for $U :\cH \to \cH'$ and $U^*$ for $U^{-1}$, where $U$ is now a Hilbert space isomorphism (an isometric surjective linear map) and $\cH'$ another complex Hilbert space. \hfill $\blacksquare$
\end{remark} 

{\bf \exercise\label{enr}} $\null$\\
{\bf (1)} {\em Prove that if $A$ is a selfadjoint operator in the complex Hilbert space $\cH$ then

 (i) $\sigma(A)\subset \bR$,

 (ii) $\sigma_r(A)=\emptyset$, 

(iii) eigenvectors with different eigenvalues are orthogonal.}

{\bf Solution}.
Let us begin with (i). Suppose $\lambda =\mu + i\nu$, $\nu\neq 0$ and let us prove $\lambda \in \rho(A)$. If $x\in D(A)$, 
$$\langle (A -\lambda I)x | (A -\lambda I)x\rangle = \langle(A -\mu I)x | (A -\mu I)x \rangle +
\nu^2 \langle x|x\rangle + i\nu [\langle Ax|x\rangle  - \langle x|Ax\rangle]\:. $$
The last summand vanishes for $A$ is selfadjoint. Hence 
$$||(A-\lambda I)x|| \geq |\nu|\: ||x||\:.$$
With a similar argument we obtain
$$||(A-\overline{\lambda} I)x|| \geq |\nu|\: ||x||\:.$$
The operators $A-\lambda I$ and $A-\overline{\lambda}I$  are injective, and $||(A-\lambda I)^{-1}|| \leq |\nu|^{-1}$, where $(A-\lambda I)^{-1} : Ran(A-\lambda I) \to D(A)$.
Notice 
$$\overline{Ran(A-\lambda I)}^\perp = [Ran(A-\lambda I)]^\perp =
Ker(A^* - \overline{\lambda}I) =  Ker(A- \overline{\lambda}I) =\{0\}\:,$$
where the last equality makes use of the injectivity of $A-\overline{\lambda}I$. Summarising: 
$A -\lambda I$ in injective, $(A-\lambda I)^{-1}$ bounded and $\overline{Ran(A-\lambda I)}^\perp = \{0\}$,
i.e. $Ran(A-\lambda I)$ is dense in $\cH$;  therefore $\lambda \in\rho(A)$, by definition of resolvent set. Let us pass to (ii).  Suppose $\lambda\in\sigma(A)$, but $\lambda \not \in \sigma_p(A)$. Then $A-\lambda I$ must be one-to-one and $Ker(A-{\lambda} I) =\{0\}$. Since $A=A^*$ and  $\lambda\in \bR$ by (i), we have $Ker(A^*-\overline{\lambda} I) =\{0\}$, so  $[Ran(A -\lambda I)]^\perp = Ker(A^*-\overline{\lambda} I)= \{0\}$ and $\overline{Ran(A -{\lambda} I)} =\cH$. 
Consequently $\lambda \in \sigma_c(A)$.
Proving (iii) is easy: if $\lambda\neq \mu$ and $Au= \lambda u$, $Av=\mu v$, then 
$$(\lambda-\mu) \langle u|v\rangle= \langle Au|v\rangle-\langle u|Av\rangle = \langle u|Av\rangle- \langle u|Av\rangle  =0\:;$$
from $\lambda,\mu \in \bR$ and $A=A^*$. But $\lambda-\mu\neq 0$, so $\langle u|v\rangle=0$. \hfill $\Box$\\

\noindent {\bf (2)} {\em Let $A :D(A) \to \cH$ be a closed operator in $\cH$ (in particular $A \in \gB(\cH)$). Prove that $\lambda \in \rho(A)$ if and only if $A-\lambda I$ admits an inverse which belongs to $\gB(\cH)$.}

{\bf Solution}. If $(A-\lambda I)^{-1} \in \gB(\cH)$, it must be $\overline{Ran(A-\lambda I)}= Ran(A-\lambda I) 
= \cH$
and $(A-\lambda I)^{-1}$ is bounded, so that $\lambda \in \rho(A)$ by definition. Let us prove the converse. Suppose that $\lambda \in \rho(A)$. 
We know that $(A-\lambda I)^{-1}$ is defined on the dense domain $Ran(A-\lambda I)$ and is bounded.
To conclude, it is therefore enough proving that $y \in \cH$ implies $y \in Ran(A-\lambda I)$. To this end, notice that
if $y \in \cH = \overline{Ran(A-\lambda I)}$, then $y = \lim_{n\to +\infty}(A-\lambda I)x_n$ for some $x_n \in D(A-\lambda I)$. The sequence of $x_n$ converges. Indeed   $\cH$ is complete and $\{x_n\}_{n\in \bN}$ is Cauchy as (1)
$x_n = (A-\lambda I)^{-1}y_n$, (2)
$||x_n-x_m|| \leq ||(A-\lambda I)^{-1}||\: ||y_n-y_m||$, and (3) $y_n \to y$. To end the proof, we observe that,  $A-\lambda I$ is closed since $A$ is such ((b) in remark \ref{remarkclosure}).  It must consequently be ((c) in remark \ref{remarkclosure}) $x= \lim_{n\to +\infty}x_n \in D(A-\lambda I)$ and $y= (A-\lambda I)x \in Ran(A-\lambda I)$.  \hfill $\Box$

\example\label{specXP} The $m$-axis position operator $X_m$ in $L^2(\bR^n, d^nx)$ introduced in (1) of example \ref{esXP} satisfies
\beq
\sigma(X_m)= \sigma_c(X_m) = \bR \label{sX}\:.
\eeq
The proof can be obtained as follows. First  observe that $\sigma (X_m) \subset \bR$ since the operator is selfadjoint. However
$\sigma_p(X_m) = \emptyset$ as observed in the first section and $\sigma_r(X_m)= \emptyset$
because $X_m$ is self-adjoint ((1) in exercise \ref{enr}).  Suppose that, for some $r\in \bR$,
$(X_m-rI)^{-1}$ is bounded. If $\psi \in D(X_m-rI)= D(X_m)$ with $||\psi||=1$ we have
$||\psi|| = ||(X_m-rI)^{-1} (X_m-rI)\psi||$ and thus
$||\psi|| \leq ||(X_m-rI)^{-1}||\: || (X_m-rI)\psi||$. Therefore
$$||(X_m-rI)^{-1}|| \geq \frac{1}{||(X_m-rI)\psi||}$$
 For every fixed $\epsilon>0$, it is simply constructed $\psi \in D(X_m)$ with $||\psi||=1$  and $||(X_m-rI)\psi|| < \epsilon$. Therefore $(X_m-rI)^{-1}$ cannot be bounded and thus $r \in \sigma_c(X_m)$.
In view of (e) in remark \ref{remspectra}, we also conclude that
\beq
\sigma(P_m)= \sigma_c(P_m) = \bR \label{sP}\:,
\eeq
just because the momentum operator $P_m$ is related to the position one by means of a unitary operator given by the Fourier-Plancherel operator $\hat{\cal F}$ as discussed in (2) of example \ref{esXP}. \hfill $\blacksquare$
\subsection{Spectral measures} Let us pass the the notion of  {\em orthogonal projector} which will be later exploited to state the spectral decomposition theorem.
\notation If $M\subset \cH$, $M^\perp := \{ y \in \cH\:|\: \langle y|x\rangle =0 \quad \forall x \in M \}$ denotes the  {\bf orthogonal} of $M$.
 \hfill $\blacksquare$\\

\noindent Evidently $M^\perp$ is a closed subspace of $\cH$. $^\perp$ enjoys several nice properties (e.g. see \cite{R,moretti}), in particular, \beq \overline{\mbox{span}(M)} = (M^\perp)^\perp\quad \mbox{  and  }\quad \cH = \overline{\mbox{span}(M)} \oplus M^\perp\label{propperp}\eeq
where the bar denotes the topological closure and  $\oplus$  the direct orthogonal sum. From the definition of adjoint, one easily has for $A: D(A) \to \cH$ densely defined,
$$Ker(A^*-\overline{\lambda} I)= [Ran(A-\lambda I)]^\perp \quad \mbox{and}\quad 
Ker(A -\lambda I)\subset  [Ran(A^*-\overline{\lambda} I)]^\perp \quad \forall \lambda \in \bC$$
where the inclusion becomes an identity if $A \in \gB(\cH)$.

\begin{definition} {\em Let $\cH$ be a complex Hilbert space.
 $P \in \gB(\cH)$ is called {\bf orthogonal projector} when $PP=P$ and $P^*=P$.  ${\cal L}(\cH)$ denotes the set  of orthogonal projectors of $\cH$.}  \hfill $\blacksquare$
\end{definition}
\noindent We have the well known relation between orthogonal projectors and closed subspaces \cite{R,moretti}
\begin{proposition}\label{propproj} {If $P \in {\cal L}(\cH)$, then $P(\cH)$ is a closed subspace. If $\cH_0 \subset \cH$ is a closed subspace, there exists exactly one  $P \in  {\cal L}(\cH)$
such that $P(\cH) = \cH_0$. Finally, $I-P \in  {\cal L}(\cH)$ and it projects onto $\cH_0^\perp$ (e.g., see \cite{moretti}). } 
\end{proposition}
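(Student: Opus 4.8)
The plan is to treat the three assertions in turn, with the orthogonal decomposition (\ref{propperp}) as the main tool. For the first assertion I would identify the range of $P$ with a kernel. Since $PP=P$, any $x\in P(\cH)$ satisfies $Px=x$ (write $x=Py$ and apply $P$), and conversely every fixed point of $P$ lies in $P(\cH)$; hence $P(\cH)=\mathrm{Ker}(I-P)$. Because $I-P\in\gB(\cH)$ is continuous, its kernel is closed, so $P(\cH)$ is a closed subspace.

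For existence, given a closed subspace $\cH_0$, I would invoke (\ref{propperp}), which yields $\cH=\cH_0\oplus\cH_0^\perp$ precisely because $\cH_0$ is closed. Thus each $x\in\cH$ decomposes uniquely as $x=x_0+x_1$ with $x_0\in\cH_0$ and $x_1\in\cH_0^\perp$, and I would define $Px:=x_0$. Linearity follows from uniqueness of the decomposition. The estimate $\|Px\|^2=\|x_0\|^2\le\|x_0\|^2+\|x_1\|^2=\|x\|^2$ shows $P\in\gB(\cH)$ with $\|P\|\le 1$; the relation $PP=P$ is immediate since $Px_0=x_0$; and for $x=x_0+x_1$, $y=y_0+y_1$ the cross terms vanish by orthogonality, giving $\langle Px|y\rangle=\langle x_0|y_0\rangle=\langle x|Py\rangle$, so $P$ is symmetric. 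Being symmetric and everywhere defined forces $P=P^*$ (one has $P\subset P^*$ and $D(P^*)\subset\cH=D(P)$, whence the two domains and operators coincide), so $P\in{\cal L}(\cH)$; finally $P(\cH)=\cH_0$ since $P$ fixes $\cH_0$ and maps into it.

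For uniqueness I would show that any $P\in{\cal L}(\cH)$ with $P(\cH)=\cH_0$ must coincide with the map just built, by establishing $\mathrm{Ker}(P)=\cH_0^\perp$. Using $P=P^*$: if $Px=0$ then for every $z\in\cH_0$ one has $\langle x|z\rangle=\langle x|Pz\rangle=\langle Px|z\rangle=0$, so $x\in\cH_0^\perp$; conversely, if $x\in\cH_0^\perp$ then $Px\in\cH_0$ gives $\|Px\|^2=\langle Px|Px\rangle=\langle x|PPx\rangle=\langle x|Px\rangle=0$, hence $Px=0$. Therefore $P$ acts as the identity on $\cH_0$ and as zero on $\cH_0^\perp$, and the decomposition $\cH=\cH_0\oplus\cH_0^\perp$ pins $P$ down completely. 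The last claim is then routine: $(I-P)^*=I-P$ and $(I-P)(I-P)=I-2P+P=I-P$ show $I-P\in{\cal L}(\cH)$, while $(I-P)(\cH)=\mathrm{Ker}(P)=\cH_0^\perp$ by the first paragraph applied to $I-P$.

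I expect the only genuinely delicate ingredient to be the orthogonal decomposition $\cH=\cH_0\oplus\cH_0^\perp$ for a closed subspace; but this is exactly (\ref{propperp}) and may be quoted, after which every remaining step is a direct verification. The point demanding care is the scalar-product convention (antilinear in the left entry) when checking symmetry of $P$ and when computing $\langle x|Px\rangle$, though here no scalars intervene so the convention causes no sign issues.
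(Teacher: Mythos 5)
Your proof is correct. Note that the paper does not actually prove this proposition — it states it as a well-known fact and defers to \cite{moretti} — so there is no in-text argument to compare against; your write-up is the standard one (range of an idempotent as $\mathrm{Ker}(I-P)$, existence via the orthogonal decomposition $\cH=\cH_0\oplus\cH_0^\perp$ from (\ref{propperp}), uniqueness via $\mathrm{Ker}(P)=\cH_0^\perp$ using selfadjointness), and every step checks out, including the observation that a symmetric everywhere-defined operator is automatically selfadjoint.
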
 
\noindent We can now state one of the most important definitions in spectral theory.
\begin{definition}\label{defPVM} {\em Let $\cH$ be a complex Hilbert space and $\Sigma(X)$ a $\sigma$-algebra over $X$. A {\bf projector-valued measure (PVM)} on $X$, $P$, is a map $\Sigma(X) \ni E \mapsto P_E \in {\cal L}(\cH)$ sucht that

 (i)  $P_X=I$, 

(ii) $P_EP_F = P_{E\cap F}$,

 (iii) If $N \subset \bN$  and $\{E_k\}_{k\in N} \subset \Sigma(X)$
satisfies  $E_j \cap E_k = \emptyset$ for $k\neq j$, then
$$\sum_{j \in N} P_{E_j}x= P_{\cup_{j\in N}E_j}x \quad \mbox{for every $x\in \cH$.}$$
(If $N$ is infinite,  the sum  on the left hand side of (iii) is computed referring  to the topology of $\cH$)} \hfill $\blacksquare$
\end{definition}

\begin{remark}\label{remD}$\null$

{\bf (a)}  (i) and  (iii) with $N=\{1,2\}$ imply that $P_\emptyset =0$ using $E_1=X$ and $E_2 = \emptyset$.
Next (ii) entails that $P_EP_F=0$ if $E\cap F=\emptyset$. An important consequence is that 
for  $N$ infinite,  the vector given by the sum  on the left hand side of (iii)  is independent  from the chosen order because that vector is a sum  of pairwise orthogonal vectors $P_{E_j}x$.

{\bf (b)} If $x,y \in \cH$, 
$\Sigma(X) \ni E \mapsto \langle  x|P_Ey\rangle =: \mu^{(P)}_{xy}(E)$ is a {\em complex  measure}   whose (finite) {\em total variation} \cite{R} will be denoted by $|\mu^{(P)}_{xy}|$.  From the definition of $\mu_{xy}$, we immediately have:
 
(i) $\mu^{(P)}_{xy}(X)= \langle x| y \rangle$,

(ii) $\mu^{(P)}_{xx}$ is always positive and finite and $\mu^{(P)}_{xx}(X)= ||x||^2$;

(iii) if $s = \sum_{k=1}^n s_k \chi_{E_k}$ is a {\em simple function}  \cite{R}, $\int_X s d\mu_{xy} = \langle x| \sum_{k=1}^n s_k P_{E_k} y\rangle$. \hfill $\blacksquare$
\end{remark}

\example \label{exPVM}$\null$\\
{\bf (1)} The simplest example of PVM is related to a countable Hilbertian basis $N$ in a separable Hilbert space $\cH$.  We can define $\Sigma(N)$ as the class of all subsets of $N$ itself. Next, for $E\in \Sigma(N)$ and $z\in \cH$ we define
$$P_{E}z := \sum_{x\in E} \langle x| z\rangle x$$
and $P_{\emptyset}:=0$. It is easy to prove that the class of all $P_E$ defined this way form a PVM on $N$. (This definition can be also given if $\cH$ is non-separable and $N$ is uncountable, since for every $y\in \cH$ only an at most countable subset of elements $x\in E$ satisfy $\langle x|y\rangle \neq 0$). In particular $\mu_{xy}(E) = \langle x|P_Ey \rangle = \sum_{z\in E} \langle x|z\rangle \langle z | y \rangle$
and
$\mu_{xx}(E) = \sum_{z\in E}|\langle x| z \rangle |^2$.\\
{\bf (2)} A more complicated version of (1) consists of a PVM  constructed out 
of a orthogonal Hilbertian decomposition of a separable Hilbert space, 
$\cH = \oplus_{n \in \bN} \cH_n$, where $\cH_n\subset \cH$ is a closed subspace and $\cH_n \perp \cH_m$ if $n\neq m$. Again defining $\Sigma(\bN)$ as the set of subsets of $\bN$, for $E\in \Sigma(N)$ and $z\in \cH$ we define
$$P_{E}z := \sum_{x\in E} Q_nz$$
where $Q_n$ is the orthogonal projector onto $\cH_n$ (the reader can easily check that the sum always converges using Bessel's inequality).
It is easy to prove that the class of $P_E$s defined this way form a PVM on $\bN$.
In particular $\mu_{xy}(E) = \langle x|P_Ey \rangle = \sum_{n\in E} \langle x|Q_n y \rangle$
and
$\mu_{xx}(E) = \sum_{n\in E}||Q_n x||^2$.\\
{\bf (3)} In $L^2(\mathbb R, dx)$ a simple PVM, not related with a Hilbertian basis, is made as follows. To every $E \in {\cal B}(\bR)$, the Borel $\sigma$-algebra, associate
the orthonormal projector $P_E$ such that, if $\chi_E$ is the {\bf characteristic function  of $E$} -- $\chi_E(x)=0$ if $x\not \in E$ and $\chi_E(x)=1$ if $x\in E$ --
$$(P_E\psi)(x) := \chi_E(x)\psi(x)\quad \forall \psi \in L^2(\bR, dx)\:.$$
Moreover $P_{\emptyset}:=0$. It is easy to prove that the collection of the $P_E$ is a PVM.
In particular $\mu_{fg}(E) = \langle f|P_Eg \rangle = \int_{E}\overline{f(x)}g(x) dx$
and
$\mu_{ff}(E) = \int_E |f(x)|^2 dx$. \hfill $\blacksquare$\\

\noindent We have the following fundamental result \cite{R,moretti,S}.
\begin{proposition}\label{propint}
Let $\cH$ be a complex Hilbert space  and $P: \Sigma(X) \to {\cal L}(\cH)$ a PVM. 
If $f: X \to \bC$ is measurable, define 
$$\Delta_f := \left\{x \in \cH \:\left|\:  \int_{X} |f(\lambda)|^2 \mu^{(P)}_{xx}(\lambda)< +\infty \right.\right\}\:.$$
$\Delta_f$ is a dense subspace of $\cH$ and there is a unique operator  
\beq \int_X f(\lambda) dP(\lambda) : \Delta_f \to \cH \label{intop}\eeq
such that 
\beq \left\langle x  \left| \int_X f(\lambda) dP(\lambda) y \right.\right\rangle = 
\int_{X} f(\lambda) \mu^{(P)}_{xy}(\lambda)\quad \forall x \in \cH \:, \forall y \in \Delta_f \label{intop2}\eeq
The operator in (\ref{intop}) turns out to be closed and normal. It finally satisfies
\beq 
\left(\int_X  f(\lambda)\:  d P(\lambda)\right)^* = \int_X  \overline{f(\lambda)}\:  d P(\lambda) \label{aggf}
\eeq
and
\beq \left|\left| \int_X  f(\lambda)\:  d P(\lambda) x\right|\right|^2 = \int_X |f(\lambda)|^2 d \mu^{(P)}_{xx}(\lambda) \quad \forall x \in \Delta_f \:.\label{strongbound}\eeq
\end{proposition}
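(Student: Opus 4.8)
The plan is to build the operator first for \emph{bounded} $f$, where everything is explicit, and then reach the general case by truncation, controlling all limits with the \emph{finite} measures $\mu_{yy}$. Write $T_f:=\int_X f\,dP$. The two computations I would record at the outset are, for a simple $s=\sum_k s_k\chi_{E_k}$ with the $E_k$ pairwise disjoint and using $P_{E_k}^2=P_{E_k}=P_{E_k}^*$ together with $P_{E_j}P_{E_k}=0$ (remark \ref{remD}(a)): first $\langle x|\sum_k s_kP_{E_k}y\rangle=\sum_k s_k\langle P_{E_k}x|P_{E_k}y\rangle=\int_X s\,d\mu_{xy}$, and second $||\sum_k s_kP_{E_k}y||^2=\sum_k|s_k|^2||P_{E_k}y||^2=\int_X|s|^2\,d\mu_{yy}$. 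For bounded $f$, approximating uniformly by simple $s_n$, the second identity shows $T_{s_n}$ is Cauchy in $\gB(\cH)$ (operator norm $\le||\cdot||_\infty$), so it defines $T_f\in\gB(\cH)$ obeying (\ref{intop2}), the norm identity (\ref{strongbound}), and $(T_f)^*=T_{\bar f}$; the last holds because $\overline{\mu_{yx}}=\mu_{xy}$, as $P_E$ is selfadjoint.

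For general $f$ I would first verify $\Delta_f$ is a subspace: scalars are trivial and $\mu_{x+y,x+y}(E)=||P_E(x+y)||^2\le 2\mu_{xx}(E)+2\mu_{yy}(E)$ gives $\int|f|^2\,d\mu_{x+y,x+y}<\infty$. Density comes from the cutoffs $X_n:=\{|f|\le n\}$: since $\mu_{P_{X_n}x,P_{X_n}x}(E)=\mu_{xx}(E\cap X_n)$ one has $P_{X_n}x\in\Delta_f$, while $X_n\uparrow X$ and axiom (iii) of Definition \ref{defPVM} force $P_{X_n}x\to P_Xx=x$. Putting $f_n:=f\chi_{X_n}$ (bounded, $|f_n|\le|f|$, $f_n\to f$), for $y\in\Delta_f$ the bounded-case identity gives $||T_{f_n}y-T_{f_m}y||^2=\int|f_n-f_m|^2\,d\mu_{yy}\to0$, so $T_{f_n}y$ converges; I define $T_fy$ as the limit. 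Passing to the limit in $\langle x|T_{f_n}y\rangle=\int f_n\,d\mu_{xy}$ yields (\ref{intop2}) and in $||T_{f_n}y||^2=\int|f_n|^2\,d\mu_{yy}$ yields (\ref{strongbound}). The dominated convergence needed for the first limit is legitimate once $f\in L^1(|\mu_{xy}|)$, which I would obtain from the duality bound $\int h\,d|\mu_{xy}|=\sup_{|\psi|\le1}|\langle x|T_{h\psi}y\rangle|\le||x||\,(\int h^2\,d\mu_{yy})^{1/2}$ for bounded $h\ge0$, applied with $h=|f|\chi_{X_n}\uparrow|f|$. Uniqueness of $T_f$ is immediate, since (\ref{intop2}) holding for \emph{all} $x\in\cH$ determines $T_fy$.

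The adjoint formula (\ref{aggf}) is where the real work lies. Note $\Delta_f=\Delta_{\bar f}$ because $|f|=|\bar f|$. The inclusion $T_{\bar f}\subset(T_f)^*$ is the short computation $\langle T_{\bar f}x|y\rangle=\int f\,d\mu_{xy}=\langle x|T_fy\rangle$ for $x,y\in\Delta_f$. \textbf{The main obstacle is the reverse inclusion}, namely that every $w\in D((T_f)^*)$ already lies in $\Delta_f$. I would fix such $w$, set $u:=(T_f)^*w$, and test $\langle T_fy|w\rangle=\langle y|u\rangle$ against $y=P_{X_n}x$. A direct check with (\ref{intop2}) gives $T_fP_{X_n}=T_{f_n}$, which is bounded with adjoint $T_{\bar f_n}$ by the bounded case, so the relation becomes $\langle x|T_{\bar f_n}w\rangle=\langle x|P_{X_n}u\rangle$ for all $x$, i.e. $T_{\bar f_n}w=P_{X_n}u$. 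Hence $\int_{X_n}|f|^2\,d\mu_{ww}=||T_{\bar f_n}w||^2=||P_{X_n}u||^2\le||u||^2$, and monotone convergence as $n\to\infty$ gives $\int_X|f|^2\,d\mu_{ww}\le||u||^2<\infty$, so $w\in\Delta_f$. This proves $(T_f)^*=T_{\bar f}$.

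Closedness is then free: applying (\ref{aggf}) to $\bar f$ gives $(T_{\bar f})^*=T_f$, and an adjoint is always closed (remark \ref{remarkclosure}(d)). For normality I would invoke the composition rule $T_gT_h\subset T_{gh}$ together with the measure identity $d\mu_{T_fx,T_fx}=|f|^2\,d\mu_{xx}$, which follows from $||P_ET_fx||^2=||T_{f\chi_E}x||^2=\int_E|f|^2\,d\mu_{xx}$. These identify $D((T_f)^*T_f)=\{x\in\Delta_f:T_fx\in\Delta_f\}=\Delta_{|f|^2}$, and symmetrically $D(T_f(T_f)^*)=\Delta_{|f|^2}$, with both products acting as $T_{|f|^2}$; thus $(T_f)^*T_f=T_f(T_f)^*$ and $T_f$ is normal in the sense of Definition \ref{defop}. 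Every step besides the reverse adjoint inclusion is either finite additivity of $P$ or a dominated/monotone convergence argument governed by the finite total mass $\mu_{yy}(X)=||y||^2$.
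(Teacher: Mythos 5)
Your proof is correct, and it is considerably more complete than what the paper actually provides: the paper only sketches the \emph{existence} part, proving the key inequality (\ref{ieq}) (in exercise \ref{exPVM2}) and then invoking Riesz' lemma to define $\int_X f\,dP(\lambda)y$ as the vector representing the continuous antilinear functional $x \mapsto \int_X f\,d\mu^{(P)}_{xy}$, deferring closedness, normality, (\ref{aggf}) and (\ref{strongbound}) to the references. Your route to the construction is different: instead of a one-shot Riesz argument you build $T_f$ as the strong limit of the bounded truncations $T_{f\chi_{X_n}}$, which are themselves operator-norm limits of the explicit simple-function operators. The two constructions rest on the same estimate — your duality bound $\int h\,d|\mu_{xy}| \le \|x\|\left(\int h^2 d\mu_{yy}\right)^{1/2}$ is precisely (\ref{ieq}), obtained via the polar decomposition of $\mu_{xy}$ just as in the paper's exercise — but the truncation scheme pays off downstream: the norm identity (\ref{strongbound}) passes to the limit for free, and the identity $T_f P_{X_n} = T_{f_n}$ gives a clean proof of the hard inclusion $D((T_f)^*) \subset \Delta_f$ via $T_{\bar f_n}w = P_{X_n}u$ and monotone convergence, after which closedness ($T_f$ is an adjoint) and normality (both products equal $T_{|f|^2}$ on $\Delta_{|f|^2}$, using $\mu_{z,T_fx} = f\,\mu_{zx}$ and the finiteness of $\mu_{xx}$ to get $\Delta_{|f|^2}\subset\Delta_f$) follow as you indicate. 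The only price is that you must check the limit defining $T_fy$ is consistent with (\ref{intop2}), which you do correctly via dominated convergence once $f\in L^1(|\mu_{xy}|)$ is secured. Every step checks out.
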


\noindent {\bf Idea of the existence part of  the Proof}.  The idea of the proof of existence of the operator in (\ref{intop}) relies upon the validity of the inequality  ((1) in exercises \ref{exPVM2} below)
\beq  \int_X |f(\lambda)|\:  d |\mu^{(P)}_{xy}|(\lambda) \leq ||x|| \sqrt{\int_X |f(\lambda)|^2 d \mu^{(P)}_{yy}(\lambda)}\qquad  \forall y \in \Delta_f \:,  \forall x \in \cH\:. \label{ieq}\eeq 
This inequality also proves that $f \in L^2(X, d\mu^{(P)}_{yy})$ implies $f \in L^1(X, d|\mu^{(P)}_{xy}|)$ for $x\in \cH$, so that (\ref{intop2}) makes sense.
Since from the general measure theory
$$ \left|\int_X f(\lambda)\:  d \mu^{(P)}_{xy}(\lambda)\right| \leq   \int_X |f(\lambda)|\:  d |\mu^{(P)}_{xy}|(\lambda)\:,$$ 
(\ref{ieq}) implies that $\cH \ni x \mapsto \int_X f(\lambda)\:  d \mu^{(P)}_{xy}(\lambda)$ is continuous at  $x=0$. This map is also anti-linear as follows from the definition of $\mu_{x,y}$. An elementary use of Riesz' lemma proves that 
there exists a vector, indicated by  $\int_X f(\lambda) dP(\lambda) y$, satisfying (\ref{intop2}). That is the action of an operator on a vector $y\in \Delta_f$ because 
$\Delta_f\ni y \mapsto \int_X f(\lambda)\:  d \mu^{(P)}_{xy}(\lambda)$ is linear. \hfill $\Box$

\begin{remark}\label{rembound}
Identity (\ref{strongbound}) gives $\Delta_f$ a direct meaning  in terms of boundedness  of $\int_X  f(\lambda)\:  d P(\lambda)$. Since $\mu_{xx}(X)= ||x||^2 <+\infty$, (\ref{strongbound}) together with the definition of $\Delta_f$
immediately implies that: if  $f$ is bounded or, more weakly {\em $P$-essentially bounded}\footnote{As usual, $||f||_\infty^{(P)}$ is the infimum of  positive reals $r$ such that $P(\{x \in X \:|\: |f(x)| > r\})=0$.} on $X$, then  $$\int_X  f(\lambda)\:  d P(\lambda) \in \gB(\cH)$$
and $$\left|\left|\int_X  f(\lambda)\:  d P(\lambda)\right|\right| \leq ||f||^{(P)}_\infty\leq ||f||_\infty\:.$$ The $P$-essentially boundedness
is also a {\em necessary} (not only sufficient) condition for $\int_X  f(\lambda)\:  d P(\lambda) \in \gB(\cH)$ \cite{R,moretti}.  \hfill $\blacksquare$
 \end{remark}

{\bf \exercise \label{exPVM2}} $\null$\\
{\bf (1)} {\em Prove inequality (\ref{ieq})}.

{\bf Solution}. Let $x\in \cH$ and $y \in \Delta_f$. If $s: X \to \bC$ is a {\em simple function} and $h: X \to \bC$ is the {\em Radon-Nikodym derivative} of $\mu_{xy}$ with respect to $|\mu_{xy}|$ so that $|h(x)|=1$ and  $\mu_{xy}(E)= \int_E h d|\mu_{xy}|$ (see, e.g., \cite{moretti}), we have for an increasing sequence of simple functions $z_n \to h$
pointwise, with $|z_n|\leq |h^{-1}|=1$, due to the dominate convergence theorem,
$$\int_X |s| d|\mu_{xy}| = \int_X |s| h^{-1}d\mu_{xy} = \lim_{n \to +\infty}  \int_X|s| z_n d\mu_{xy} = \lim_{n \to +\infty} \left\langle x\left| \sum_{k=1}^{N_n} z_{n,k} P_{E_{n,k}}\right. y \right\rangle\:.$$
In the last step we have made use of (iii)(b) in remark \ref{remD} for the simple function
$|s|z_n = \sum_{k=1}^{N_n} z_{n,k} \chi_{E_{n,k}}$. Cauchy Schwartz inequality immediately yields
$$\int_X |s| d|\mu_{xy}| \leq ||x|| \lim_{n \to +\infty}\left|\left|\sum_{k=1}^{N_n} z_{n,k} P_{E_{n,k}} y\right|\right| =  ||x|| \lim_{n \to +\infty}\sqrt{\int_X |s z_n|^2 d\mu_{yy}} \:,$$
where we have used $P_{E_{n,k}}^*P_{E_{n,k'}} = P_{E_{n,k}}P_{E_{n,k'}}= \delta_{kk'}P_{E_{n,k}}$ since $E_{n,k} \cap E_{n,k'}= \emptyset$ for $k\neq k'$.
Next observe that, as $|s z_n|^2  \to |sh^{-1}|^2=|s|^2$, dominate convergence theorem leads to $$\int_X |s| d|\mu_{xy}| \leq ||x||\sqrt{\int_X |s|^2 d\mu_{yy}}\:.$$ Finally, replace $s$ above for  a sequence of simple functions $|s_n| \to f \in L^2(X, d\mu_{yy})$ pointwise, with $s_{n}\leq |{s_{n+1}}| \leq |f|$. Monotone convergence theorem and dominate convergence theorem, respectively applied to the left and right-hand side of the found inequality, produce inequality (\ref{ieq}).

{\bf (2)} {\em Prove that, with the hypotheses of Proposition \ref{propint}, it holds
\beq 
\int_X  \chi_E(\lambda)\:  d P(\lambda) = P_E  \:, \quad \mbox{if $E \in \Sigma(X)$}\label{chi}
\eeq
and in particular
\beq 
\int_X 1 \:  d P(\lambda) = I\:.
\eeq}

{\bf Solution}. It is sufficient to prove (\ref{chi}) since we know that $P_X=I$.
To this end, notice that, by direct inspection
$$\left\langle x  \left| P_E y \right.\right\rangle = 
\int_{X} \chi_E(\lambda) \mu^{(P)}_{xy}(\lambda)\quad \forall x \in \cH \:, \forall y \in \Delta_{\chi_E}= \cH\:.$$
By the uniqueness property stated in Proposition \ref{propint} (\ref{chi}) holds. \hfill $\Box$\\

\noindent {\bf (3)} {\em Prove that if $P$ a PVM on $\cH$ and $T$ is an operator in $\cH$ with $D(T)= \Delta_f$   such that
\beq  \left\langle x  \left|T x \right.\right\rangle = 
\int_{X} f(\lambda) \mu^{(P)}_{xx}(\lambda)\quad \forall x \in \Delta_f  \label{xTx}\eeq
then $$T= \int_{X} f(\lambda) dP(\lambda)\:.$$}

{\bf Solution}.  From the definition of $\mu_{xy}$ we easily have (everywhere  omitting $^{(P)}$ for semplicity)
$$4\mu_{xy}(E) = \mu_{x+y,x+y}(E) -  \mu_{x-y,x-y}(E)  -i \mu_{x+iy,x+iy}(E)  +i \mu_{x-iy,x-iy}(E) $$
This identity implies that, if $x,y \in \Delta_f$, 
$$ 4\int_X f d\mu_{xy} = \int_X f d\mu_{x+y,x+y} -\int_X f d\mu_{x-y,x-y} -i \int_X f d\mu_{x+iy.x+iy} +i\int_X f d\mu_{x-iy,x-iy}$$
Similarly, from the elementary properties of the scalar product, when $x,y \in D(T)$
$$4\langle x|T y\rangle  =
\langle x+y|T(x+y)\rangle  -  \langle x-y|T(x-y)\rangle   -i \langle x+iy|T(x+iy)\rangle   +i \langle x-iy|T(x-iy)\rangle\:.$$
It is then obvious that (\ref{xTx}) implies  
$$  \left\langle x  \left|T y \right.\right\rangle = 
\int_{X} f(\lambda) \mu^{(P)}_{xy}(\lambda)\quad \forall x, y \in \Delta_f\:, $$
so that
$$\left\langle x \left| \left( T-  \int_{X} f(\lambda) dP(\lambda)\right)\right. y\right\rangle = 0 \quad \forall x, y \in \Delta_f$$
Since $x$ varies in a dense set $\Delta_f$, $Ty- \int_{X} f(\lambda) dP(\lambda)y=0$ for every $y \in \Delta_f$
which is the thesis. \hfill $\Box$

\example $\null$\\
{\bf (1)}  Referring to the PVM in (2) of example \ref{exPVM}, directly from the definition of $\int_X f(\lambda) dP(\lambda)$ or exploiting (3) in exercises \ref{exPVM2} we have that
$$\int_\bN f(\lambda) dP(\lambda)z = \sum_{n\in \bN}  f(n) Q_nz $$
for every $f: \bN \to \bC$ (which is necessarily measurable with our definition of $\Sigma(\bN)$). Correspondingly, the domain of $\int_N f(\lambda) dP(\lambda)$ results to be
$$\Delta_f := \left\{ z \in \cH \:\left|\: \sum_{n\in \bN} |f(n)|^2 ||Q_nz||^2  < +\infty \right.\right\}$$
We stress that we have found a direct generalization of the expansion (\ref{sectradec0}) if the operator $A$ is now hopefully written as
$$Az = \sum_{n\in \bN} n Q_nz\:. $$
We shall see below that it is the case.\\
{\bf (3)}   Referring to the PVM in (3) of example \ref{exPVM}, 
directly from the definition of $\int_X f(\lambda) dP(\lambda)$
or exploiting (3) in exercises \ref{exPVM2}
 we have that
$$\left(\int_\bR f(\lambda) dP(\lambda) \psi\right)(x)  = f(x) \psi(x)\:, \quad x \in \bR$$
 Correspondingly, the domain of $\int_\bR f(\lambda) dP(\lambda)$ results to be
$$\Delta_f := \left\{ \psi \in L^2(\bR, dx) \:\left|\: \int_\bR |f(x)|^2 |\psi(x)|^2 dx  < +\infty \right.\right\}$$ \hfill $\blacksquare$

\subsection{Spectral Decomposition and Representation Theorems} We are in a position to state  the fundamental result of  the spectral theory of selfadjoint operators, which extend the expansion (\ref{sectradec0})  to an integral formula valid also in the infinite dimensional case, and where the set of eigenvalues is replaced by the full spectrum of the selfadjoint operator.\\
To state the theorem, we preventively notice that (\ref{aggf}) implies that $\int f(\lambda) dP(\lambda)$ is 
selfadjoint 
if $f$ is real: The idea of the theorem is to prove that every selfadjoint operator can be written this way for a specific $f$ and with respect to a PVM on $\bR$ associated with the operator itself. 

\notation From now on ${\cal B}(T)$ denotes the Borel $\sigma$-algebra on the topological space $T$.  \hfill $\blacksquare$

\begin{theorem}[Spectral Decomposition Theorem for Selfadjoint Operators]\label{st}
Let $A$ be a selfadjoint operator  in the complex  Hilbert space $\cH$.

{\bf (a)}  There is a unique PVM, 
$P^{(A)} : {\cal B}(\bR) \to {\cal L}(\cH)$, such that
$$A = \int_{\bR} \lambda dP^{(A)}(\lambda)\:.$$
In particular $D(A)= \Delta_{id}$, where $id : \bR \ni \lambda \mapsto \lambda$.

{\bf (b)} Defining  the {\bf support} of $P^{(A)}$,  $supp(P^{(A)})$, as the complement in $\bR$ of  the union of all open sets  $O \subset \bC$ with $P_O^{(A)}=0$ it results 
 $$supp(P^{(A)}) = \sigma(A)$$
so that
\beq P^{(A)}(E) = P^{(A)}(E \cap \sigma(A))\:, \quad \forall E \in  {\cal B}(\bR) \label{Prest}\:.\eeq

{\bf (c)} $\lambda \in \sigma_p(A)$ if and only if $P^{(A)}(\{\lambda\}) \neq 0$, this happens in particular if $\lambda$ is an isolated point of $\sigma(A)$.

{\bf (d)}  $\lambda \in \sigma_c(A)$ if and only if $P^{(A)}(\{\lambda\}) = 0$ but $P^{(A)}(E)\neq 0$ if
$E \ni \lambda$ is an open set of $\bR$. 
\end{theorem}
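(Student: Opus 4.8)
The plan is to reduce the general, possibly unbounded, selfadjoint case to a spectral theorem for a single unitary operator via the Cayley transform, and to build the latter from a bounded Borel functional calculus. Since $A=A^*$, exercise \ref{enr} gives $\sigma(A)\subset\bR$, hence $\pm i\in\rho(A)$ and the maps $A\pm iI:D(A)\to\cH$ are bijections with bounded inverse. I would therefore introduce the Cayley transform $U:=(A-iI)(A+iI)^{-1}\in\gB(\cH)$ and first check, from the identity $||(A\pm iI)x||^2=||Ax||^2+||x||^2$ valid for $x\in D(A)$, that $U$ is an isometry onto $\cH$, i.e. unitary, and that $I-U$ is injective with $Ran(I-U)=D(A)$ dense (this last fact encodes the \emph{selfadjointness}, as opposed to mere symmetry, of $A$, and is equivalent to $1\notin\sigma_p(U)$). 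A direct computation then recovers $A=i(I+U)(I-U)^{-1}$ on $D(A)$.

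The engine of the argument is the spectral theorem for the unitary $U$, whose spectrum lies in $\bT$ by remark \ref{remspectra}(d). I would build the continuous functional calculus $C(\sigma(U))\ni f\mapsto f(U)\in\gB(\cH)$ starting from Laurent polynomials in $U$ and $U^*=U^{-1}$: the estimate $||p(U,U^{-1})||=\sup_{\sigma(U)}|p|$ (spectral radius for normal operators) extends $f\mapsto f(U)$ to an isometric $^*$-homomorphism on $C(\sigma(U))$ by Stone--Weierstrass. For each $x\in\cH$ the functional $f\mapsto\langle x|f(U)x\rangle$ is positive, since $\langle x|(\overline{g}g)(U)x\rangle=||g(U)x||^2\geq0$, so Riesz--Markov yields a finite positive Borel measure on $\bT$; polarisation produces complex measures $\nu_{xy}$. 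Extending $f\mapsto f(U):=\int f\,d\nu$ to bounded Borel $f$ and setting $Q_E:=\chi_E(U)$, I would verify the PVM axioms of Definition \ref{defPVM}, the only delicate points being multiplicativity $Q_EQ_F=Q_{E\cap F}$ and strong $\sigma$-additivity, the latter via dominated convergence applied to $||Q_{\cup E_j}x-\sum_j Q_{E_j}x||^2$. Because $Q(\{1\})=0$, the scalar Cayley map $g(\lambda):=(\lambda-i)(\lambda+i)^{-1}$ is a Borel isomorphism of $\bR$ onto $\bT\setminus\{1\}$, so I define the PVM $P^{(A)}(E):=Q(g(E))$ on $\bR$; the change-of-variables rule for integration against a PVM then turns $A=i(I+U)(I-U)^{-1}=\int_{\bT}\frac{i(1+\mu)}{1-\mu}\,dQ(\mu)$ into $A=\int_\bR\lambda\,dP^{(A)}(\lambda)$ with $D(A)=\Delta_{id}$, proving existence in (a).

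For uniqueness I would exploit the resolvent, which is determined by $A$ alone. If $P'$ is any PVM on $\bR$ with $A=\int\lambda\,dP'(\lambda)$, then for every nonreal $z$ Proposition \ref{propint} gives $(A-zI)^{-1}=\int(\lambda-z)^{-1}\,dP'(\lambda)$, whence $\int(\lambda-z)^{-1}\,d\mu'_{xx}=\langle x|(A-zI)^{-1}x\rangle=\int(\lambda-z)^{-1}\,d\mu^{(P^{(A)})}_{xx}$ for all $x$. Since the family $\{(\lambda-z)^{-1}\}_{z\notin\bR}$ together with the constants generates, by Stone--Weierstrass, a uniformly dense subalgebra of $C_0(\bR)\oplus\bC$, the finite regular measures $\mu'_{xx}$ and $\mu^{(P^{(A)})}_{xx}$ coincide; polarisation then gives $\mu'_{xy}=\mu^{(P^{(A)})}_{xy}$ for all $x,y$, i.e. $\langle x|P'_E y\rangle=\langle x|P^{(A)}_E y\rangle$ on every Borel $E$, so $P'=P^{(A)}$.

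Parts (b)--(d) follow from the integral representation. For (b), if an open $O\ni\lambda_0$ has $P^{(A)}_O=0$ then $\lambda\mapsto(\lambda-\lambda_0)^{-1}$ is bounded off $O$, so $\int(\lambda-\lambda_0)^{-1}\,dP^{(A)}$ is a bounded two-sided inverse of $A-\lambda_0 I$ and $\lambda_0\in\rho(A)$; conversely, if $\lambda_0\in supp(P^{(A)})$ then $P^{(A)}_{(\lambda_0-\epsilon,\lambda_0+\epsilon)}\neq0$ for all $\epsilon>0$, and unit vectors $x_\epsilon$ in its range satisfy $||(A-\lambda_0 I)x_\epsilon||\leq\epsilon$ by (\ref{strongbound}), so $A-\lambda_0 I$ has no bounded inverse and $\lambda_0\in\sigma(A)$; then (\ref{Prest}) is immediate. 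For (c), (\ref{strongbound}) gives $||(A-\lambda I)x||^2=\int|\mu-\lambda|^2\,d\mu^{(P)}_{xx}(\mu)$, which vanishes exactly when $\mu^{(P)}_{xx}$ is concentrated on $\{\lambda\}$, i.e. when $x\in Ran\,P^{(A)}(\{\lambda\})$; hence $\lambda\in\sigma_p(A)$ iff $P^{(A)}(\{\lambda\})\neq0$, and the isolated-point case follows from (b). Finally (d) is formal: since $\sigma_r(A)=\emptyset$ by exercise \ref{enr}, $\sigma(A)=\sigma_p(A)\sqcup\sigma_c(A)$, and combining the support criterion of (b) with (c) yields exactly the stated description of $\sigma_c(A)$. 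The main obstacle is the existence part of (a): assembling the scalar measures $\nu_{xy}$ into a genuine projection-valued measure -- in particular the multiplicativity and the strong countable additivity of $E\mapsto\chi_E(U)$ -- is the technical heart, everything else being a transcription through the Cayley map or a consequence of (\ref{strongbound}) and (\ref{aggf}).
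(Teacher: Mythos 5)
The paper does not actually prove Theorem \ref{st}; it states it and defers to \cite{R,moretti,S}, so there is no internal argument to measure yours against. Your route via the Cayley transform is the classical one (von Neumann's original strategy, followed essentially verbatim in Rudin \cite{R}): reduce to the spectral theorem for the unitary $U=(A-iI)(A+iI)^{-1}$, build the bounded Borel calculus for $U$ out of Stone--Weierstrass, the spectral-radius formula for normal elements and Riesz--Markov, and pull the resulting PVM on $\bT$ back along $g(\lambda)=(\lambda-i)(\lambda+i)^{-1}$, which is legitimate precisely because selfadjointness forces $Q(\{1\})=0$. The outline is sound, the uniqueness argument through the resolvent and the density of the span of $\{(\lambda-z)^{-1}\}_{z\notin\bR}$ in $C_0(\bR)$ is correct, and parts (b)--(d) are handled exactly as they should be, via (\ref{strongbound}), (\ref{aggf}) and the vanishing of $\sigma_r(A)$ from exercise \ref{enr}. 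One step deserves to be made explicit rather than absorbed into ``the change-of-variables rule'': setting $h(\mu)=i(1+\mu)/(1-\mu)$ and $T=\int_\bT h\, dQ$, the product rule of the functional calculus applied to the \emph{bounded} function $\mu\mapsto i(1+\mu)=h(\mu)(1-\mu)$ gives $T(I-U)=i(I+U)$ on all of $\cH$, hence $Ran(I-U)=D(A)\subset\Delta_{id}$ and $T\supset A$; to get the reverse inclusion of domains, i.e. $D(A)=\Delta_{id}$, observe that $T$ is selfadjoint by (\ref{aggf}) since $h$ is real, and a selfadjoint operator admits no proper symmetric extension (exercise \ref{esopagg}), so $T=A$. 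With that remark inserted the existence part closes cleanly. For contrast, Schm\"udgen \cite{S} reaches the same theorem through the bounded transform $A(I+A^2)^{-1/2}$ rather than the Cayley transform; your choice buys the most economical uniqueness proof, since the resolvent identity you invoke is available for any candidate PVM without further work.
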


\noindent The proof  can be found, e.g., in \cite{R,moretti,S}. 
\begin{remark}  Theorem \ref{st}  is a particular case of a more general theorem (see \cite{R,moretti} and especially \cite{S}) valid when $A$ is a (densely defined  closed) normal operator.  The general statement is identical, it is sufficient to replace everywhere $\bR$ for 
$\bC$. A particular case  is the one of  $A$ unitary. In this case the statement can be rephrased replacing everywhere $\bR$ for $\bT$ since it includes the spectrum of $A$ in this case ((d) remark \ref{remspectra}).  \hfill $\blacksquare$
\end{remark}

\notation In view of the said theorem, and (b) in particular, if $f: \sigma(A)  \to \bC$ is measurable (with respect to the $\sigma$-algebra 
obtained by restricting ${\cal B}(\bR)$ to $\sigma(A)$), we use the notation
\beq
f(A)  : =  \int_{\sigma(A)}  f(\lambda) dP^{(A)}(\lambda) :=  \int_\bR  g(\lambda) dP^{(A)}(\lambda) =: g(A) \:.\label{fA}
\eeq
where $g : \bR \to \bC$ is the extension of $f$ to the zero function outside $\sigma(A)$ or any other measurable function which coincides with $f$ on $supp(P^{(A)})= \sigma(A)$. Obviously $g(A)=g'(A)$ if $g,g' : \bR \to \bC$ coincide in $supp(P^{(A)})= \sigma(A)$. \hfill $\blacksquare$

{\bf \exercise\label{espos}} {\em Prove that if $A$ is a selfdjoint operator in the complex Hilbert space $\cH$, it holds $A \geq 0$ -- that is $\langle x|Ax \rangle \geq 0$ for every $x \in D(A)$ -- if and only if $\sigma(A)\subset [0, +\infty)$.}

{\bf Solution}. Suppose that $\sigma(A)\subset [0, +\infty)$. If $x \in D(A)$ we have $\langle x| A x \rangle = \int_{\sigma(A)} \lambda d \mu_{x,x} \geq 0$ in view of  (\ref{intop2}), the spectral decomposition theorem, since $\mu_{x,x}$ is a positive measure ad $\sigma(A)\in [0, +\infty)$. To conclude, we  prove that $A\geq 0$ is false if $\sigma(A)$ includes negative elements.
To this end  assume that, conversely,  $\sigma(A) \ni \lambda_0 <0$. Using (c) and (d) of Theorem \ref{st}, one finds an interval $[a,b] \subset \sigma(A)$ with $[a,b] \subset (-\infty, 0)$ and $P^{(A)}_{[a,b]} \neq 0$ (possibly $a=b=\lambda_0$). If $x \in P^{(A)}_{[a,b]} (\cH)$ with $x\neq 0$, it holds $\mu_{xx}(E) = \langle x|P_E x\rangle =
\langle x|P^*_{[a,b]}P_E x P_{[a,b]}\rangle = \langle x|P_{[a,b]}P_E P_{[a,b]}x \rangle =
\langle x|P_{[a,b]\cap E}x\rangle =0$ if $[a,b]\cap E= \emptyset$. Therefore,
$\langle x| A x \rangle = \int_{\sigma(A)} \lambda d \mu_{x,x} = \int_{[a,b]}  \lambda d \mu_{x,x} \leq   \int_{[a,b]}  b \mu_{x,x} < b ||x||^2 <0$. \hfill $\blacksquare$

\example  $\null$\\
{\bf (1)} Let us focus on the $m$-axis position operator $X_m$ in $L^2(\bR^n, d^nx)$ introduced in (1) of example \ref{esXP}. 
We know that $\sigma(X_m)= \sigma_c(X_m)= \bR$ from example \ref{specXP}. 
We are interested in the PVM $P^{(X_m)}$ of $X_m$ defined on $\bR= \sigma(X_m)$.
Let us fix $m=1$
the other cases are analogous.
The PVM associated to $X_1$ is
\beq (P^{(X_1)}_E \psi)(x) = \chi_{E \times \bR^{n-1}}(x) \psi(x) \quad \psi \in L^2(\bR^n , d^nx)\:,\label{cand}\eeq
where $E \in {\cal B}(\bR)$ is here identified with a subset of the first factor of 
$\bR \times \bR^{n-1} = \bR^n$. Indeed, indicating by $P$ the right-hand side of (\ref{cand}), one easily verifies that $\Delta_{x_1} = D(X_1)$ and\footnote{More generally 
$\int_{\bR}\int_{\bR^{n-1}} g(x_1) |\psi(x)|^2dx d^{n-1}x= \int_{\bR} g(x_1) d\mu^{(P)}_{\psi,\psi}(x_1)$ is evidently valid for simple functions and then it extends to generic measurable functions when both sides make sense 
in view of, for instance, Lebesgue's dominate convergence theorem for positive measures.}
$$\langle \psi| X_1 \psi \rangle = \int_{\bR} \lambda \mu^{(P)}_{\psi,\psi}(\lambda)\quad \forall \psi \in D(X_1)= \Delta_{x_1}$$
where 
$\mu^{(P)}_{\psi,\psi}(E) = \langle \psi| P_E \psi \rangle = \int_{E \times \bR^{n-1}}
 |\psi(x)|^2 d^nx$. (2) in exercise \ref{exPVM} proves that $X_1 = \int_\bR \lambda dP(\lambda)$ and thus (\ref{cand}) holds true.\\
{\bf (2)} Considering  the $m$-axis momentum  operator $P_m$ in $L^2(\bR^n, d^nx)$ introduced in (2) of example \ref{esXP}, taking (\ref{Pm}) into account where $\hat{\cal F}$ (and thus $\hat{\cal F}^*$) is unitary, in view of (i) in Proposition \ref{propint2} we immediately have that the PVM of $P_m$ is 
$$Q^{(P_m)}_E := \hat{\cal F}^* P^{(K_m)}_E \hat{\cal F}\:.$$
Above $K_m$ is the operator $X_m$ represented in $L^2(\bR^n, d^nk)$ as in  (1) of example \ref{esXP}. \\
{\bf (3)} More complicated cases exist. Considering an operator
 of the form
$$H := \frac{1}{2m}P^2 + U$$
where $P$ is the momentum operator in $L^2(\bR, dx)$, $m>0$ is a constant and $U$ is a real valued function on $\bR$ used as multiplicative operator. 
If $U =U_1+U_2$ with $U_1 \in L^2(\bR, dx)$
and $U_2 \in L^\infty(\bR, dx)$ real valued, and  $D(H) = C^\infty(\bR; \bC)$,  $H$ turns out to be (trivially) symmetric but also essentially selfadjoint \cite{moretti} as a consequence of a well known result ({\em Kato-Rellich's theorem}).  The unique selfadjoint extension $\overline{H}=(H^*)^*$ of $H$ physically represent the Hamiltonian operator of a quantum particle living along $\bR$ with a potential energy described by $U$. In this case, generally speaking, $\sigma(\overline{H})$ has both point and continuous part. $\int_{\sigma_p(\overline{H})} \lambda dP^{(\overline{H})}(\lambda)$ has a form 
like this
$$\int_{\sigma_p(\overline{H})} \lambda dP^{(\overline{H})}(\lambda) = \sum_{\lambda \in \sigma_p(\overline{H})} \lambda P_{\lambda}$$
where $P_\lambda$ is the orthogonal projector onto the eigenspace of $\overline{H}$ with eigenvalue $\lambda$. Conversely, $\int_{\sigma_c(\overline{H})} \lambda dP^{(\overline{H})}(\lambda)$ has an expression much more complicated and, under a unitary transform is similar to the integral decomposition of $X$.
$\null$ \hfill $\blacksquare$

\begin{remark}\label{remst}$\null$

{\bf (a)}  It is worth stressing that the notion (\ref{fA}) of a function  of a selfadjoint operator is just an extension of  in the analogous notion introduced for the  finite dimensional case (\ref{sectradec1}) and thus may be used in QM applications.\\
 It is possible to prove that if $f: \sigma(A) \to \bR$ is continuous, then
\beq
\sigma(f(A)) = \overline{f(\sigma(A))} \label{spectf}
\eeq
where the bar denotes the closure and, if $f:  \sigma(A) \to \bR$ is measurable, 
\beq
\sigma_p(f(A)) \supset f(\sigma_p(A)) \:.
\eeq
More precise statements based on the notion of {\em essential range} can be found in \cite{moretti}. It turns out that, for $A$ selfadjoint and $f: \sigma(A) \to \bC$ measurable,  $z\in \sigma(f(A))$ if and only if  $P^{(A)}(E_z)\neq 0$ for some open set $E_z\ni z$. Now $z \in  \sigma(f(A))$ is in $\sigma_p(f(A))$ iff $P^{(A)}(f^{-1}(z))\neq 0$
or  it  is in $\sigma_c(f(A))$ iff $P^{(A)}(f^{-1}(z))= 0$.

{\bf (b)}  It is fundamental to stress that in, QM,  (\ref{spectf}) permits us to adopt the standard operational approach on observables $f(A)$ as  the observable whose set of possible  values is (the closure of)  the set of reals  $f(a)$ where $a$ is a possible value of $A$.

{\bf (c)} The following important  fact holds
\begin{proposition}\label{propboundob}  A selfadjoint operator is bounded (and its domain coincide to the whole $\cH$)  if and only if $\sigma(A)$ is bounded.
\end{proposition}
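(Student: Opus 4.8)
The plan is to reduce everything to the spectral decomposition $A=\int_\bR\lambda\,dP^{(A)}(\lambda)$ furnished by Theorem \ref{st}, and to translate boundedness of $A$ into $P^{(A)}$-essential boundedness of the identity function $id:\lambda\mapsto\lambda$. The two bridges I would use are Remark \ref{rembound} (which relates $P$-essential boundedness of the integrand to membership of the integral in $\gB(\cH)$) and part (b) of Theorem \ref{st}, namely $supp(P^{(A)})=\sigma(A)$. I would also keep the Hellinger--Toeplitz theorem (Theorem \ref{HT}) in mind to make explicit that, for a \emph{selfadjoint} operator, ``bounded'' and ``$D(A)=\cH$'' are equivalent, so the parenthetical claim about the domain comes for free.

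For the direction ``$\sigma(A)$ bounded $\Rightarrow$ $A$ bounded'', suppose $\sigma(A)\subset[-M,M]$. By (\ref{Prest}) the measure $P^{(A)}$ is concentrated on $\sigma(A)$, so $id$ is $P^{(A)}$-essentially bounded with $||id||_\infty^{(P^{(A)})}\le M$. Remark \ref{rembound} then yields $A=\int_\bR\lambda\,dP^{(A)}(\lambda)\in\gB(\cH)$; in particular $\Delta_{id}=D(A)=\cH$ and $||A||\le M$. This direction is essentially a direct citation of Remark \ref{rembound} and I expect no difficulty there.

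The converse is where I expect the real work to lie, and it is the main obstacle. I would argue by contradiction: assuming $\sigma(A)$ unbounded, for every $r>0$ choose $\lambda_r\in\sigma(A)=supp(P^{(A)})$ with $|\lambda_r|>r$. Because $\lambda_r$ lies in the support, any small open interval $I_r\ni\lambda_r$ with $I_r\subset\{\lambda:|\lambda|>r\}$ satisfies $P^{(A)}(I_r)\ne0$, so I may pick a unit vector $x_r\in P^{(A)}(I_r)(\cH)$. Exactly the computation used in exercise \ref{espos} shows that $\mu^{(P^{(A)})}_{x_r x_r}$ is concentrated on $I_r$, whence by (\ref{strongbound})
$$||Ax_r||^2=\int_\bR\lambda^2\,d\mu^{(P^{(A)})}_{x_r x_r}(\lambda)\ge r^2\int_{I_r}d\mu^{(P^{(A)})}_{x_r x_r}=r^2||x_r||^2=r^2\,.$$
Thus $||A||\ge r$ for every $r>0$, contradicting the boundedness of $A$; hence $\sigma(A)$ is bounded. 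Alternatively, and more quickly, one may invoke the \emph{necessity} half of Remark \ref{rembound}: since $A\in\gB(\cH)$ (using Theorem \ref{HT} to get $D(A)=\cH$), the integrand $id$ must be $P^{(A)}$-essentially bounded, so $P^{(A)}(\{|\lambda|>r\})=0$ for some $r$; as $\{|\lambda|>r\}$ is open, it belongs to the union defining the support, giving $\sigma(A)=supp(P^{(A)})\subset[-r,r]$. I would likely present the self-contained version via (\ref{strongbound}) as the main argument and mention the Remark \ref{rembound} shortcut as a remark.
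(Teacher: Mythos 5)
Your proof is correct and follows essentially the same route as the paper: the forward direction is the boundedness estimate (\ref{strongbound}) (equivalently, Remark \ref{rembound}) applied to $id$ on the compact set $\sigma(A)$, and the converse constructs unit vectors in the ranges of spectral projectors over windows escaping to infinity, exactly as in the paper's proof (your use of the support definition in place of the paper's appeal to (c) and (d) of Theorem \ref{st} is only a cosmetic difference). The shortcut via the necessity half of Remark \ref{rembound}, and the observation that Theorem \ref{HT} settles the parenthetical claim about the domain, are both consistent with what the paper does.
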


\begin{proof} It essentially  follows from  (\ref{strongbound}) restricting the integration space to $X= \sigma(A)$.
In fact, if $\sigma(A)$ is bounded and thus compact it being closed,  the continuous function $id: \sigma(A) \ni \lambda \to \lambda$ is bounded and (\ref{strongbound}) implies that $A= \int_{\sigma(A)} id dP^{(A)}$ is bounded and the inequality holds
\beq
||A|| \leq \sup\{|\lambda| \:|\: \lambda \in \sigma(A)\}\:.\label{spectralradius}
\eeq 
 In this case it also hold $D(A)=\Delta_{id} = \cH$.\\
 If, conversely,  $\sigma(A)$ is not bounded, we can find a sequence $\lambda_n\in \sigma(A)$ with $|\lambda_n| \to \infty$ as $n\to +\infty$. With the help of (c) and (d)
in Theorem \ref{st}, it is easy to construct vectors  $x_n$ with $||x_n|| \neq 0$ and  $x_n \in P^{(A)}_{B(\lambda_n)}(\cH)$ where 
$B(\lambda_n):= [\lambda_n-1, \lambda_n+1]$. (\ref{strongbound}) implies
$$||A x_n ||^2 \geq ||x_n||^2 \inf_{z \in B(\lambda_n)} |id(z)|^2$$
Since $\inf_{z \in B(\lambda_n)} |id(z)|^2 \to +\infty$, we have that  $||Ax_n||/||x_n||$ is not bounded and $A$, 
in turn, cannot be  bounded. In this case, since $A=A^*$,  Theorem \ref{HT} entails that  $D(A)$ is {\em strictly} included in $\cH$. 
\end{proof}
\noindent {\em It is possible to prove \cite{moretti} that (\ref{spectralradius}) can be turned into an identity when $A \in \gB(\cH)$ also if $A$ is not selfadjoint but only normal}
\beq
||A|| = \sup\{|\lambda| \:|\: \lambda \in \sigma(A)\}\:,\label{spectralradius2}
\eeq
This is the well known {\em spectral radius formula}, the {\bf spectral radius} of $A \in \gB(\cH)$ being, by definition, the number in the right hand side.

 {\bf (d)}  The  result stated in (c) explains the reason   why observables $A$ in QM are very often represented by unbounded selfadjoint operators. $\sigma(A)$ is the set of values of the observable $A$. When, as it happens very often,  that observable is allowed to take arbitrarily large values (think of  $X$ or $P$), it cannot be represented by a bounded selfadjoint operator just because its spectrum is not bounded.

{\bf (e)} If $P$ is a PVM on $\bR$ and $f: \bR \to \bC$ is measurable, we can always write
$$\int_{\bR} f(\lambda) dP(\lambda) = f(A)$$
where we have introduced the  selfadjoint operator $A$ obtained as 
\beq
A= \int_{\bR} id(\lambda) dP(\lambda)\:,
\eeq
due to (\ref{aggf}) and 
where $id : \bR \ni \lambda \to \lambda$. Evidently $P^{(A)}=P$ due to the uniqueness part of the spectral theorem. This fact leads to the conclusion that, {\em in a complex Hilbert space $\cH$, all the PVM over $\bR$
are one-to-one associated to all  selfadjoint operators in $\cH$.}

{\bf (f)} An element $\lambda \in \sigma_c(A)$ is not an eigenvalue of $A$. However there is the following known  result arising from (d) in Theorem \ref{st} \cite{moretti} which proves that we can 
have approximated eigenvalues with arbitrary precision: With the said hypotheses, for every $\epsilon >0$ there is $x_\epsilon \in D(A)$
 such that $$||Ax_\epsilon - \lambda x_\epsilon|| < \epsilon\:, \quad \mbox{but  $||x_\epsilon|| =1$.}$$

{\bf (g)} If $A$ is selfadjoint and $U$ unitary, $UAU^*$, with  $D(UAU^*)= U(D(A))$,  is selfadjoint as well (exercise \ref{esinvarianceU}).
It is very simple to prove that the PVM of $UAU^*$ is noting but $UP^{(A)}U^*$.  \hfill $\blacksquare$
\end{remark}

\noindent The next theorem we state here \label{moretti} concerns a general explicit form of the integral decomposition $f(A) = \int_{\sigma(A)} f(\lambda) dP^{(A)}(\lambda)$. As a matter of facts, up to multiplicity, one can always reduce to a multiplicative operator in a $L^2$ space, as it happens for the position operator $X$. Again, this theorem can be restated for generally normal operators.

\begin{theorem}[Spectral Representation Theorem for Selfadjoint Operators]
Let $A$ be a selfadjoint operator in the complex Hilbert space $\cH$. The following facts hold.\\
{\bf (a)} $\cH$ may be decomposed a  Hilbert sum\footnote{$S$ is countable, at most, if  $\cH$ is separable.}
 $\cH= \oplus_{a\in S} \cH_a$,
whose summands $\cH_a$ are closed and orthogonal. 
Moreover:

(i) for any $a \in S$,  $$A(\cH_a\cap D(A)) \subset \cH_a$$ and, more generally,  for any measurable $f : \sigma(A) \to \bC$,
$$f(A)(\cH_a\cap D(f(A))) \subset \cH_a$$

(ii) for any $a \in S$ there exist a unique finite positive  Borel measure $\mu_a$ on $\sigma(A)\subset \bR$, and a surjective isometric operator  $U_a : \cH_a \to L^2(\sigma(A), \mu_a)$, such that:
$$U_a f(A)|_{\cH_a}U_a^{-1} = f\cdot \:$$
for any measurable $f : \sigma(A) \to \bC$,
where $f\cdot$ is the point-wise multiplication by $f$ on $L^2(\sigma(A),\mu_a)$.\\
{\bf(b)} If $supp\{\mu_a\}_{a\in S}$ is the complementary set to the numbers $\lambda \in \bR$ for which there exists an open set  $O_\lambda \subset \bR$ with  $O_\lambda \ni \lambda$, $\mu_{a}(O_\lambda)=0$ for any $a \in S$, then 
 $$\sigma(A) = supp\{\mu_a\}_{a\in S}\:.$$
\end{theorem}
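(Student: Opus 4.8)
The plan is to build the decomposition out of the \emph{cyclic subspaces} generated by the PVM $P^{(A)}$ supplied by Theorem \ref{st}, and then to read off the measures $\mu_a$ and the isometries $U_a$ directly from the scalar measures $\mu^{(P)}_{xx}$. For a fixed nonzero $\psi \in \cH$ I would first define the cyclic subspace
$$\cH_\psi := \overline{\{g(A)\psi \:|\: g : \sigma(A) \to \bC \mbox{ bounded measurable}\}}\:,$$
noting that each such $g(A) \in \gB(\cH)$ (remark \ref{rembound}). Using $f(A)g(A) = (fg)(A)$ on the relevant domains and $g(A)^* = \overline{g}(A)$ (from (\ref{aggf})), one checks that $\cH_\psi$ is closed and invariant under every $f(A)$. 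The natural candidate for the measure is $\mu_\psi := \mu^{(P)}_{\psi\psi}$, which is positive and finite with $\mu_\psi(\sigma(A)) = ||\psi||^2$ by (b) of remark \ref{remD}.

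The key computation is
$$\langle g(A)\psi | h(A)\psi\rangle = \left\langle \psi \left| \overline{g}(A)h(A)\psi\right.\right\rangle = \int_{\sigma(A)} \overline{g}\,h \, d\mu_\psi\:,$$
so the assignment $g(A)\psi \mapsto g$ is isometric from a dense subspace of $\cH_\psi$ into $L^2(\sigma(A),\mu_\psi)$; since simple functions are dense in $L^2(\mu_\psi)$, it extends to a surjective isometry $U_\psi$. The intertwining relation $U_\psi f(A)|_{\cH_\psi} U_\psi^{-1} = f\cdot$ then holds on the dense set of $g$'s because $f(A)g(A)\psi = (fg)(A)\psi \mapsto fg = f\cdot g$, and extends by taking closures. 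Next I would assemble the global decomposition: by Zorn's lemma choose a maximal family $\{\psi_a\}_{a \in S}$ of nonzero vectors whose cyclic subspaces $\cH_a := \cH_{\psi_a}$ are pairwise orthogonal, and set $U_a := U_{\psi_a}$, $\mu_a := \mu_{\psi_a}$. Maximality forces $\oplus_a \cH_a = \cH$: if some $\psi$ were orthogonal to $\cH_0 := \oplus_a \cH_a$, then since each $\cH_a$, and hence $\cH_0$, is invariant under every $P^{(A)}_E$, one gets $\langle P^{(A)}_E \psi | \phi\rangle = \langle \psi | P^{(A)}_E \phi\rangle = 0$ for all $\phi \in \cH_0$, so $\cH_\psi \perp \cH_0$, contradicting maximality. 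This yields (a), with (i) already recorded as the invariance of each $\cH_a$.

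For (b) I would use that, under $U_a$, the operator $P^{(A)}_O|_{\cH_a}$ is multiplication by $\chi_O$, so $\langle x_a | P^{(A)}_O x_a\rangle = \int_O |U_a x_a|^2 d\mu_a$. Decomposing an arbitrary $x = \sum_a x_a$ and using orthogonality together with the $P^{(A)}_O$-invariance of the summands gives $||P^{(A)}_O x||^2 = \langle x|P^{(A)}_O x\rangle = \sum_a \int_O |U_a x_a|^2 d\mu_a$. Hence $P^{(A)}_O = 0$ if and only if $\mu_a(O) = 0$ for every $a$, which identifies the complement of $supp(P^{(A)})$ with the complement of $supp\{\mu_a\}_{a\in S}$; part (b) of Theorem \ref{st} then delivers $\sigma(A) = supp\{\mu_a\}_{a\in S}$.

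The main obstacle I expect is the bookkeeping around \emph{unbounded} $f$: the cyclic subspace and the isometry are defined most cleanly using bounded $g$, so one must verify that $U_a$ intertwines $f(A)$ with multiplication by $f$ also when $f$ is merely measurable, matching the domain $\cH_a \cap D(f(A))$ with $\{h \in L^2(\mu_a) \:|\: f h \in L^2(\mu_a)\}$, and that $\cH_\psi$ is genuinely invariant under the unbounded $f(A)$. This is handled by approximating $f$ with its truncations $f_n := f\cdot \chi_{\{|f|\le n\}}$ and passing to the limit in $L^2(\mu_a)$ using (\ref{strongbound}); these are the steps that require care, whereas the orthogonal-decomposition argument is routine once the cyclic picture is in place.
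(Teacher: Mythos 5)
Your argument is essentially correct and follows the canonical route: the paper itself states this theorem without proof, deferring to \cite{moretti}, and the proof given there is precisely the cyclic-subspace decomposition you describe — build $\cH_\psi$ from bounded Borel functions of $A$ applied to $\psi$, identify it with $L^2(\sigma(A),\mu^{(P)}_{\psi\psi})$ via $g(A)\psi\mapsto g$ (well defined and isometric because $\|g(A)\psi\|^2=\int|g|^2d\mu^{(P)}_{\psi\psi}$), exhaust $\cH$ by a Zorn-maximal orthogonal family, and recover $\sigma(A)=supp(P^{(A)})$ from part (b) of Theorem \ref{st}. The points you flag as delicate (invariance under unbounded $f(A)$ and the identification of $\cH_a\cap D(f(A))$ with $\{h\in L^2(\mu_a)\,|\,fh\in L^2(\mu_a)\}$) are handled exactly as you propose, via the truncations $f\chi_{\{|f|\le n\}}$ and the identity $\mu_{xx}=|U_ax|^2\mu_a$ for $x\in\cH_a$. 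The only part of the statement your proof does not address is the uniqueness claim for $\mu_a$ in (ii); as stated it in fact only holds up to mutual absolute continuity (a unitary $h\mapsto uh$ with $|u|$ non-constant intertwines the multiplication operators while changing the measure), so if you want to prove uniqueness in a strict sense you must normalise $U_a$, e.g.\ by requiring that it send the cyclic vector $\psi_a$ to the constant function $1$.
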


\remark Notice that the theorem encompasses the case of an operator $A$ in $\cH$ with $\sigma(A)=\sigma_p(A)$. Suppose in particular that every eigenspace is one-dimensional and the whole Hilbert space is separable. Let $\sigma(A)= \sigma_p(A)= \{\lambda_n \:|\: n \in \bN\}$. In this case $$A = \sum_{n \in \bN} \lambda_n \langle x_n |\:\: \rangle x_n\:, $$
where $x_\lambda$ is a unit eigenvector with eigenvalue $\lambda_n$. Consider the $\sigma$-algebra on $\sigma(A)$ made of all subsets and define
$\mu(E) :=$ number of elements of $E\subset \sigma(E)$. In this case 
$\cH$ is isomorphic to $L^2(\sigma(A), \mu)$ and the isomorphism is
$U : \cH \ni x \mapsto \psi_x \in L^2(\sigma(A), \mu)$ with $\psi_x(n) := \langle x_n|x \rangle$ if $n \in \bN$. With this surjective isometry, trivially
$$Uf(A) U^{-1}= U\int_{\sigma(A)} f(\lambda) dP^{(A)}(\lambda)U^{-1}=U \sum_{n \in \bN} f(\lambda_n) \langle x_n |\:\: \rangle x_n U^{-1} = f\cdot\:.$$
If all eigenspaces have dimension $2$, exactly two copies of  $L^2(\sigma(A), \mu)$ are sufficient to improve the construction. If the dimension depends on the eigenspace, the construction can be rebuilt exploiting  many copies of different $L^2(S_k, \mu_k)$, where the $S_k$ are suitable (not necessarily disjoint) subsets of $\sigma(A)$ and $\mu_k$ the measure which counts the elements of $S_k$.  \hfill $\blacksquare$\\

\noindent The last tool we introduce is the notion of {\em joint spectral measure}. Everything is stated in the following theorem \cite{moretti}.

\begin{theorem}[Joint spectral measure]\label{JS}
Consider selfadjoint operators $A_1,A_2,\ldots, A_n$ in the complex separable Hilbert space $\cH$. Suppose that the spectral measures of those operators pairwise commute:
$$P_{E_k}^{(A_k)} P_{E_h}^{(A_h)} = P_{E_h}^{(A_h)} P_{E_k}^{(A_k)}  \quad \forall k,h \in \{1,\ldots,n\}\:, \forall E_k,E_h \in {\cal B}(\bR)\:.$$ 
There is a unique $PVM$, $P^{(A_1\times \cdots \times A_n)}$,  on $\bR^n$ such that
$$P^{(A_1\times \cdots \times A_n)}(E_1\times \cdots \times E_n) = P^{(A_1)}_{E_1} \cdots P^{(A_n)}_{E_n} \:,\quad \forall E_1,\dots, E_n \in {\cal B}(\bR)\:.$$
For every $f : \bR \to \bC$ measurable, it holds
\beq
\int_{\bR^n} f(x_k) dP^{(A_1\times \cdots \times A_n)}(x) = f(A_k) \:, \quad k=1,\ldots, n
\eeq
where $x= (x_1,\ldots, x_k,\ldots, x_n)$. 
\end{theorem}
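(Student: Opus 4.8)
The plan is to build the joint PVM first on the \emph{semiring of measurable rectangles} $R = E_1 \times \cdots \times E_n$ (with $E_i \in {\cal B}(\bR)$) and then extend it to all of ${\cal B}(\bR^n)$. On a rectangle I would set $Q(R) := P^{(A_1)}_{E_1} \cdots P^{(A_n)}_{E_n}$. Because the spectral measures pairwise commute, this product of orthogonal projectors is again an orthogonal projector: commutativity yields $Q(R)^2 = Q(R)$ and $Q(R)^* = Q(R)$, exactly as a product of commuting orthogonal projectors is idempotent and selfadjoint. For each fixed $x \in \cH$ this gives a positive set function $\mu_x(R) := \langle x | Q(R)x\rangle = \|Q(R)x\|^2 \geq 0$ on rectangles.

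The heart of the proof, and the step I expect to be the main obstacle, is to show that each $\mu_x$ extends to a genuine finite Borel measure on $\bR^n$. The key observation is \emph{separate countable additivity}: fixing $E_2,\dots,E_n$ and putting $y := P^{(A_2)}_{E_2}\cdots P^{(A_n)}_{E_n} x$, commutativity gives $\mu_x(E_1\times\cdots\times E_n) = \langle y | P^{(A_1)}_{E_1} y\rangle = \mu^{(P^{(A_1)})}_{y,y}(E_1)$, a finite positive measure in $E_1$, and symmetrically in every slot. From this one derives that $\mu_x$ is countably additive on the rectangle semiring (the argument parallels the one used to construct product measures), so that Carath\'eodory's extension theorem produces a unique finite Borel measure on $\bR^n$, still written $\mu_x$, with $\mu_x(\bR^n) = \|x\|^2$. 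Polarising, I obtain complex measures $\mu_{x,y}$, and for each fixed $E \in {\cal B}(\bR^n)$ the map $(x,y)\mapsto \mu_{x,y}(E)$ is a bounded sesquilinear form (bounded by a constant times $\|x\|\,\|y\|$), so a standard application of Riesz' lemma (Theorem \ref{RL}) to it yields a unique bounded operator $P_E$ with $\langle x | P_E y\rangle = \mu_{x,y}(E)$. That $E\mapsto P_E$ is a PVM, with $P_E$ an orthogonal projector, $P_{\bR^n}=I$, $P_E P_F = P_{E\cap F}$, and strong countable additivity, holds on rectangles by construction and extends to all Borel sets by a Dynkin ($\pi$--$\lambda$) or monotone-class argument, using that rectangles form a $\pi$-system generating ${\cal B}(\bR^n)$. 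One may alternatively induct on $n$, combining at each stage the PVM already built on $\bR^{n-1}$ with $P^{(A_n)}$, which reduces everything to the case of two commuting PVMs.

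With the joint PVM in hand the integral formula is comparatively painless. The crucial point is that the $k$-th \emph{marginal} of $\mu^{(P)}_{x,x}$ is exactly the scalar spectral measure of $A_k$: taking $E$ in the $k$-th slot and $\bR$ elsewhere, $\mu^{(P)}_{x,x}(\bR\times\cdots\times E\times\cdots\times\bR) = \langle x | P^{(A_k)}_E x\rangle = \mu^{(P^{(A_k)})}_{x,x}(E)$. Hence for measurable $f$ the marginal (change of variables) identity gives $\int_{\bR^n} f(x_k)\,d\mu^{(P)}_{x,x} = \int_{\bR} f\,d\mu^{(P^{(A_k)})}_{x,x} = \langle x | f(A_k)x\rangle$ for all $x\in\Delta_f$, whereupon the uniqueness criterion (3) in Exercise \ref{exPVM2} forces $\int_{\bR^n} f(x_k)\,dP^{(A_1\times\cdots\times A_n)}(x) = f(A_k)$; for $f=\chi_E$ this reduces to (\ref{chi}) for a cylinder set, which double-checks the construction. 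Uniqueness of the joint PVM follows from the same generating-$\pi$-system principle: any PVM obeying the rectangle condition determines $\mu_{x,x}$ on all rectangles, and two finite measures of equal total mass that agree on a $\pi$-system generating the $\sigma$-algebra must coincide; polarisation and Riesz' lemma then give equality of the operators $P_E$.
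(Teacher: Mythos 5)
The paper itself offers no proof of Theorem \ref{JS} --- it is quoted verbatim from \cite{moretti} --- so your proposal can only be judged on its own terms. Its architecture is the standard one: define the candidate PVM on the semiring of measurable rectangles, extend the scalar measures $\mu_x$ by Carath\'eodory, recover the operators by polarisation and Riesz' lemma, propagate the PVM identities from rectangles to all Borel sets by a $\pi$--$\lambda$ argument, and obtain the functional-calculus formula from the marginal identity together with the uniqueness criterion of (3) in Exercise \ref{exPVM2}. All of that is correct, including the observation that $Q(R)=P^{(A_1)}_{E_1}\cdots P^{(A_n)}_{E_n}$ is an orthogonal projector and that the $k$-th marginal of $\mu^{(P)}_{x,x}$ is $\mu^{(P^{(A_k)})}_{x,x}$ (which also gives the needed domain identification $\Delta_{f\circ \pi_k}=\Delta_f$ by applying the push-forward identity to $|f|^2$).

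The one genuine gap is exactly where you predicted it: countable additivity of $\mu_x$ on the semiring of rectangles. You justify it by saying ``the argument parallels the one used to construct product measures,'' but it does not, because $\mu_x$ is \emph{not} a product measure. In the product case one writes $\chi_{E_1}(s)\chi_{E_2}(t)=\sum_j \chi_{F_j}(s)\chi_{G_j}(t)$ and integrates out one variable at a time against a \emph{fixed} measure by monotone convergence. Here the measure in the first slot is $\mu^{(P^{(A_1)})}_{y,y}$ with $y=P^{(A_2)}_{E_2}\cdots P^{(A_n)}_{E_n}x$ \emph{depending on the remaining sets}, so the slots are coupled and the iterated monotone-convergence argument does not apply verbatim; what you actually have is a positive, separately $\sigma$-additive, uniformly bounded multimeasure, and for general bimeasures extension to a genuine measure can fail. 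Positivity and finite additivity only give the easy inequality $\sum_j \mu_x(R_j)\leq \mu_x(R)$. The standard repair, which stays closest to your Carath\'eodory plan, is regularity: each marginal is a finite Borel measure on $\bR$, hence Radon, so the finitely additive set function induced by $\mu_x$ on the ring generated by rectangles is inner regular with respect to finite unions of compact rectangles; a finite, finitely additive, compactly inner-regular set function on a ring is automatically a premeasure, and Carath\'eodory then applies. (This is where separability/polishness of $\bR^n$ genuinely enters.) Alternatively one can bypass the issue entirely by decomposing $\cH$ into cyclic subspaces for the abelian von Neumann algebra generated by all the $P^{(A_k)}_E$ and reading off the joint PVM in the multiplication-operator form of the spectral representation. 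Note also that your fallback of inducting on $n$ does not help: the case of two commuting PVMs already contains the whole difficulty.
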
 

\begin{definition} {\em Referring to Theorem \ref{JS}, the PVM $P^{(A_1\times \cdots \times A_n)}$ is called the {\bf joint spectral measure} of $A_1,A_2,\ldots, A_n$
and its support  $supp(P^{(A_1\times \cdots \times A_n)})$, i.e. the complement in $\bR^n$ to the largest open set $A$ with $P_A=0$, is called the 
{\bf joint spectrum} of $A_1,A_2,\ldots, A_n$.} \hfill $\blacksquare$

\end{definition}

\example The simplest example is provided by considering the $n$ position operators 
$X_m$ in $L^2(\bR^n, d^nx)$. It should be clear that the $n$ spectral measures commute because $P^{(X_k)}_E$, for $E\in {\cal B}(\bR)$, is the multiplicative operator for 
$\chi_{\bR\times \cdots \times \bR \times  E \times \bR\times \cdots \times \bR}$
the factor $E$ staying in the $k$-th position among the $n$ Cartesian factors. In this case the joint spectrum of the $n$ operators $X_m$ coincides with $\bR^n$ itself.\\
A completely analogous discussion holds for the $n$ momentum operators $P_k$, since they are related to the position ones by means of the unitary Fourier-Plancherel operator as already seen several times. Again the joint spectrum of the $n$ operators $P_m$ coincides with $\bR^n$ itself. \hfill $\blacksquare$

\subsection{Mesurable functional calculus} The following proposition states some useful properties of $f(A)$, where $A$ is selfadjoint and $f: \bR \to \bC$ is Borel measurable. These properties  define the so called {\em measurable functional calculus}.
We suppose here that $A=A^*$, but the statements can be reformulated for normal operators \cite{moretti}.

\begin{proposition}\label{propint2}
Let $A$ be a  selfadjoint operator  in the complex  Hilbert space $\cH$,  $f, g : \sigma(A) \to \bC$ measurable functions, $f\cdot g$ and $f+g$ respectively  denote the point-wise product and the point-wise sum of functions. The following facts hold.\\

{\bf (a)} $f(A) = \sum_{k=0}^n a_k A^k$ where the right-hand side is defined in its standard domain $D(A^n)$
when $f(\lambda)= \sum_{k=0}^n a_k \lambda^k$ with $a_n \neq 0$.\\

{\bf (b)} $f(A) = P^{(A)}(E)$ if $f= \chi_E$ the characteristic function of $E \in {\cal B}(\sigma(A))$;\\

{\bf (c)}  $f(A)^* = \overline{f}(A)$ where the bar denotes the complex conjugation;\\

{\bf (d)} $f(A)+g(A) \subset (f+g)(A)$ and $D(f(A)+g(A)) \subset \Delta_f \cap \Delta_g$\\
(the symbol $``\subset''$ can be replaced by $``=''$ if and only if $\Delta_{f+ g} = \Delta_f \cap \Delta_g$)\:,\\

{\bf (e)}  $f(A)f(B) \subset (f\cdot g)(A)$ and $D(f(A)f(B))= \Delta_{f\cdot g}\cap \Delta_g$ \\ (the symbol $``\subset''$ can be replaced by $``=''$ if and only if $\Delta_{f\cdot g} \subset \Delta_g$)\:,\\

{\bf (f)}  $f(A)^*f(A) = |f|^2 (A)$  so that $D(f(A)^*f(A))= \Delta_{|f|^2}$\:,\\

{\bf (g)}   $\langle x |f(A) x \rangle \geq 0$ for $x \in \Delta_f$ if $f\geq 0$.\\

{\bf (h)}  $||f(A)x||^2 = \int_{\sigma(A)} |f(\lambda)|^2 d\mu_{xx}(\lambda)$, if $x \in \Delta_f$.\\ 
In particular, if $f$ is bounded or $P^{(A)}$-essentially bounded\footnote{Remark \ref{rembound}.} on $\sigma(A)$, $f(A) \in \gB(\cH)$ and
$$||f(A)|| \leq ||f||^{P^{(A)}}_{\infty} \leq ||f||_{\infty}\:.$$

{\bf (i)} If $U: \cH \to \cH$ is unitary, $Uf(A)U^* = f(UAU^*)$ and, in particular, 
$D(f(UAU^*)) = UD(f(A)) = U(\Delta_f)$.\\

{\bf (j)}  If $\phi : \bR \to \bR$ is measurable, then ${\cal B}(\bR) \ni E \mapsto P'(E):= P^{(A)}(\phi^{-1}(E))$ is a PVM on $\bR$. Introducing  the selfadjoint operator
$$A' = \int_{\bR} \lambda' dP'(\lambda') $$
such that $P^{(A')}= P'$,
we have
$$A' = \phi(A)\:.$$ 
Moreover,  if $f : \bR \to \bC$ is measurable,
$$f(A') = (f \circ \phi)(A) \quad \mbox{and} \quad \Delta'_f = \Delta_{f\circ \phi}\:.$$
\end{proposition}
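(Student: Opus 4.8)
The plan is to reduce every assertion to a statement about the abstract spectral integral $\int_{\sigma(A)} f\,dP$ against the PVM $P:=P^{(A)}$ furnished by Theorem \ref{st}, via the convention (\ref{fA}), and then to exploit the machinery already assembled in Proposition \ref{propint} and Exercise \ref{exPVM2}. Several items are then immediate. Claim (b) is precisely Exercise \ref{exPVM2}(2), since there $P=P^{(A)}$. Claim (c) is the adjoint formula (\ref{aggf}). Claim (g) holds because $\langle x|f(A)x\rangle = \int_{\sigma(A)} f\,d\mu^{(P)}_{xx}$ by (\ref{intop2}) and $\mu^{(P)}_{xx}$ is a positive measure. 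Claim (h) is the isometry identity (\ref{strongbound}) restricted to $X=\sigma(A)$, together with Remark \ref{rembound} for the boundedness part.

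The technical heart is the multiplicativity (e), from which (a), (d) and (f) follow with little extra work. First I would prove the auxiliary measure identities, for $y\in\Delta_g$ and $x\in\cH$,
$$\mu^{(P)}_{x,\,g(A)y}(E) = \int_E g\,d\mu^{(P)}_{xy}\:,\qquad \mu^{(P)}_{g(A)y,\,g(A)y}(E) = \int_E |g|^2\,d\mu^{(P)}_{yy}\:,\quad E\in{\cal B}(\sigma(A))\:.$$
For a simple $g=\sum_k g_k\chi_{E_k}$ these reduce to $P_EP_{E_k}=P_{E\cap E_k}$ together with the definition of $\mu^{(P)}$; the passage to a general measurable $g$ is obtained by approximating $g$ by simple functions and invoking dominated convergence, the required $L^1$-domination being exactly the content of (\ref{ieq}). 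The second identity then shows that $g(A)y\in\Delta_f$ if and only if $y\in\Delta_{f\cdot g}$, which pins down the domain $\Delta_{f\cdot g}\cap\Delta_g$. Finally, the first identity combined with (\ref{intop2}) gives $\langle x|f(A)\,g(A)y\rangle = \int f\,d\mu^{(P)}_{x,g(A)y} = \int f\cdot g\,d\mu^{(P)}_{xy} = \langle x|(f\cdot g)(A)y\rangle$ for all $x\in\cH$ and all $y$ in the domain, and density of $x$ (cf. Exercise \ref{exPVM2}(3)) upgrades this to the operator identity $f(A)\,g(A)\subset (f\cdot g)(A)$. This step, with its interleaving of limits and domain bookkeeping, is where I expect the real work to lie.

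The remaining claims are corollaries. For (a) I would note that $\int\lambda\,dP = A$ is the spectral theorem itself, iterate (e) to obtain $A^k=\int\lambda^k\,dP$ on $\Delta_{\lambda^k}=D(A^k)$ (the equality case of (e) applying because $\Delta_{\lambda^k}\subset\Delta_{\lambda^{k-1}}$), and close by additivity, the common domain being $\Delta_{\lambda^n}=D(A^n)$. Claim (d) follows from the additivity of $y\mapsto\int f\,d\mu^{(P)}_{xy}$ in the integrand together with the triangle-inequality inclusion $\Delta_f\cap\Delta_g\subset\Delta_{f+g}$. Claim (f) is (c) composed with (e) applied to $\overline{f}$ and $f$: the product function is $|f|^2$ and the domain is $\Delta_{|f|^2}\cap\Delta_f=\Delta_{|f|^2}$, the inclusion $\Delta_{|f|^2}\subset\Delta_f$ being a Cauchy--Schwarz estimate exploiting the finiteness $\mu^{(P)}_{xx}(\sigma(A))=||x||^2$.

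Finally, (i) and (j) are handled by transporting the PVM. For (i) I would use $P^{(UAU^*)}=UP^{(A)}U^*$ (Remark \ref{remst}(g)) together with $\mu^{(UPU^*)}_{xy}(E)=\mu^{(P)}_{U^*x,\,U^*y}(E)$, so that the defining relation (\ref{intop2}) for $f(UAU^*)$ matches that of $Uf(A)U^*$, whereupon uniqueness in Proposition \ref{propint} closes the argument. For (j) I would first check that $P'(E):=P^{(A)}(\phi^{-1}(E))$ satisfies the three PVM axioms (using $\phi^{-1}(E\cap F)=\phi^{-1}(E)\cap\phi^{-1}(F)$ and the preservation of disjoint countable unions), so that $A'=\int\lambda'\,dP'$ is well defined with $P^{(A')}=P'$. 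Since $\mu^{(P')}_{xx}$ is the pushforward $\phi_*\mu^{(P)}_{xx}$, the standard change-of-variables formula yields $\int\lambda'\,dP'=\int\phi(\lambda)\,dP^{(A)}$, that is $A'=\phi(A)$, and the same change of variables applied to $f$ gives $f(A')=(f\circ\phi)(A)$ with $\Delta'_f=\Delta_{f\circ\phi}$.
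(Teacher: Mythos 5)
Your proof is correct. The paper itself supplies no proof of this proposition (it is imported from \cite{moretti}), but your argument --- reducing every item to the spectral integral against $P^{(A)}$, establishing the measure identities $d\mu_{x,\,g(A)y}=g\,d\mu_{xy}$ and $d\mu_{g(A)y,\,g(A)y}=|g|^{2}\,d\mu_{yy}$ by simple-function approximation with the domination coming from (\ref{ieq}) and (\ref{strongbound}), and then deriving (a), (d), (f) from the multiplicativity (e) --- is the standard route, and the domain bookkeeping you flag as the real work (namely $D(f(A)g(A))=\Delta_{f\cdot g}\cap\Delta_g$, and the inclusions $\Delta_{|f|^{2}}\subset\Delta_f$ and $\Delta_{\lambda^{k}}\subset\Delta_{\lambda^{k-1}}$ that follow from the finiteness of $\mu^{(P)}_{xx}$) goes through exactly as you describe.
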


\subsection{Elementary formalism for the infinite dimensional case}\label{formalism}
To complete the discussion in the introduction, 
 let us  show how practically the physical hypotheses on quantum systems (1)-(3) have to be mathematically  interpreted (again reversing the order of (2) and (3) for our convenience)
in the general case of infinite dimensional Hilbert spaces. 
\noindent Our general assumptions on the mathematical description of quantum systems are the following ones.

\begin{enumerate}
\item   A quantum mechanical system $S$ is always associated to complex Hilbert space ${\cal H}$, finite or infinite dimensional;

\item  observables are  pictured in terms of  (generally unbounded) {\em self-adjoint} operators $A$ in $\cal H $,

\item states are  of equivalence classes of {\em unit} vectors $\psi \in {\cal H}$, where $\psi \sim \psi'$ iff $\psi = e^{ia} \psi'$ for some $a\in \mathbb R$.

\end{enumerate}

\noindent Let us show how the mathematical assumptions (1)-(3)  permit us to set the physical properties 
of quantum systems  (1)-(3) of Section \ref{GP1} into mathematically nice form in the general case of an infinite dimesional Hilbet space $\cH$.\\

{\bf (1) Randomness}:  
The Borel subset $E \subset \sigma(A)$, represents the outcomes of measurement procedures of the observable associated with the selfadjoint operator $A$. (In case of continuous spectrum the outcome of a measurement is at least an interval in view of the experimental errors.) 
Given a state represented by the unit vector  $\psi \in \cal H$, the probability to obtain $E \subset \sigma(A)$ as an outcome when measuring $A$  is
$$\mu^{(P^{(A)})}_{\psi,\psi}(E) := || P^{(A)}_E \psi||^2\:,$$
where we have used the PVM $P^{(A)}$ of the operator $A$.\\
Going along with this interpretation, the {\bf expectation value}, $\langle A \rangle_\psi$, of $A$ when the state is represented by the unit vector $\psi\in \cH$, turns out to be
\beq \langle A \rangle_\psi := \int_{\sigma(A)} \lambda \:  d\mu^{(P^{(A)})}_{\psi,\psi}(\lambda)\:.\label{defexpt}\eeq
This identity makes sense provided $id : \sigma(A) \ni \lambda \to \lambda$ belongs to
$L^1(\sigma(A), \mu^{(P^{(A)})}_{\psi,\psi})$
 (which is equivalent to say that $\psi \in \Delta_{|id|^{1/2}}$ and, in turn, that $\psi \in D(|A|^{1/2})$), otherwise the expectation value is not defined.\\
Since $$L^2(\sigma(A), \mu^{(P^{(A)})}_{\psi,\psi}) \subset L^1(\sigma(A), \mu^{(P^{(A)})}_{\psi,\psi})$$ because $\mu^{(P^{(A)})}_{\psi,\psi}$ is finite, we have the popular identity arising from (\ref{intop2}),
\begin{equation}
\langle A \rangle_\psi = \langle \psi | A \psi \rangle \quad \mbox{if $\psi \in D(A)$} \label{E2}\:.
\end{equation}
The associated {\bf standard deviation}, $\Delta A_\psi$, results to be
\begin{equation}\Delta A_\psi^2 :=  \int_{\sigma(A)} (\lambda-  \langle A \rangle_\psi)^2\:  d\mu^{(P^{(A)})}_{\psi,\psi}(\lambda) \:.\label{Delta22} \end{equation}
This definition makes sense provided
$id \in L^2(\sigma(A), \mu^{(P^{(A)})}_{\psi,\psi})$ (which is equivalent to say that $\psi \in \Delta_{id}$ and, in turn, that $\psi \in D(A)$).\\ As before, the functional calculus permits us to write the other popular identity
\beq \Delta A_\psi^2  = \langle \psi | A^2  \psi \rangle - \langle\psi|  A \psi\rangle^2 
\quad \mbox{if $\psi \in D(A^2)\subset D(A)$}\:.\label{Delta3}\eeq
We stress that now, Heisenberg inequalities, as established in exercise \ref{eH}, are now completely justified as the reader can easily check.\\

 {\bf (3) Collapse of the state}: If the Borel set $E \subset \sigma(A)$ is the outcome of the (idealized) measurement of $A$, when the state is represented by the unit vector $\psi\in \cH$, the new state immediately after the measurement is represented by the unit vector
\begin{equation}\psi' := \frac{P_{E}^{(A)}\psi}{||P_{E}^{(A)}\psi ||}\label{LvN2}\:.\end{equation}

\remark  Obviously this formula does not make sense if $\mu^{(P^{(A)})}_{\psi,\psi}(E)=0$ as expected. Moreover 
the arbitrary phase affecting $\psi$ does not give rise to troubles due to the linearity of $P^{(A)}_E$ . \hfill $\blacksquare$\\

 {\bf (2) Compatible and  Incompatible Observables}: Two observables $A$, $B$ are compatible -- i.e. they can be simultaneously measured -- if and only if their {\bf spectral measures commute} which means
\beq P_E^{(A)} P_F^{(B)} = P_F^{(B)}P_E^{(A)} \:,\quad E \in {\cal B}(\sigma(A))\:, \quad F \in {\cal B}(\sigma(B))\:.\label{mco}\eeq
In this case 
$$||P^{(A)}_E P_F^{(B)} \psi||^2 = || P_F^{(B)} P^{(A)}_E \psi||^2 = ||P^{(A,B)}_{E\times F}\psi||^2$$
where $P^{(A,B)}$ is the joint spectral measure of $A$ and $B$,
has the natural interpretation of the probability to obtain the outcomes $E$ and $F$ for a simultaneous measurement of $A$ and $B$.  If instead $A$ and $B$ are incompatible  it may happen that
$$||P^{(A)}_E P_F^{(B)} \psi||^2 \neq  || P_F^{(B)} P^{(A)}_E \psi||^2\:.$$
Sticking  to the case of  $A$ and $B$ incompatible, exploiting  (\ref{LvN2}),  
\begin{equation}||P^{(A)}_E P_F^{(B)} \psi||^2 = \left|\left|P^{(A)}_E  \frac{P_F^{(B)} \psi}{|| P_F^{(B)} \psi ||}\right|\right|^2  || P_F^{(B)} \psi ||^2\label{measureincomp2}\end{equation} 
has the natural meaning of {\em the probability of obtaining first $F$ and next  $E$ in a subsequent measurement of $B$ and $A$}.

\begin{remark}\label{probtruble}$\null$

{\bf (a)} It is worth stressing that the notion of probability we are using here cannot be a classical notion because of the presence of incompatible observables.  The theory of conditional probability cannot follows the standard rules. The probability $\mu_\psi(E_A|F_B)$, that (in a state defined by a unit vector $\psi$) a certain observable $A$ takes the value $E_A$ when the observable $B$ has the value $F_B$, cannot be computed by the standard procedure
$$\mu_\psi(E_A|F_B)= \frac{\mu_\psi(E_A \mbox{ AND } F_B)}{\mu_\psi(F_B)}$$
 if $A$ and $B$ are incompatible, just because, in general, nothing exists which can be interpreted as the event ``$E_A \mbox{ AND } F_B$'' if $P^{(A)}_E$ and $P^{(B)}_F$ do not commute! The correct formula is
$$\mu_\psi(E_A|F_B)= \frac{\langle \psi|P^{(B)}_F P^{(A)}_E P^{(B)}_F \psi \rangle}{||P^{(B)}_F \psi||^2}$$
which leads to well known different properties with respect to the classical theory, the so called combination of ``probability amplitudes'' in particular. As a matter of fact, up to now we do not have a clear notion of (quantum) probability. This issue will be clarified in the next section.  

{\bf (b)} The reason to pass from operators to their spectral measures in defining compatible observables is that, if $A$ ad $B$ are selfadjoint and defined on different domains, $AB=BA$ does not make sense in general. Moreover it is possible to find counterexamples (due to Nelson) where commutativity of $A$ and $B$ on common dense invariant subspaces does not implies that their spectral measures commute. However, from general results again due to Nelson, one has the following nice result 
(see exercise \ref{ABcomm}).
\begin{proposition}
If selfadjoint  operators, $A$ and $B$, in a complex Hilbert space $\cH$ commute on 
 a common dense invariant domain $D$ where $A^2+B^2$ is essentially selfadjoint, then  the spectral measures of $A$ and $B$ commute. 
\end{proposition}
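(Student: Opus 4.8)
The plan is to exploit the single essentially selfadjoint operator $N := (A^2+B^2)|_D$, whose closure $\overline N$ is selfadjoint and nonnegative, as a ``dominating'' operator that controls $A$ and $B$ simultaneously, and then to transfer analyticity from $N$ to $A$ and $B$ so that the one-parameter unitary groups they generate can be shown to commute. From commuting groups one reads off commuting spectral measures.

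The crux will be a purely algebraic estimate on $D$, which I would establish first. Since $A,B$ are symmetric on the invariant domain $D$ and commute there, $A^2$ and $B^2$ also commute on $D$, so for $\psi \in D$ the binomial theorem gives
\[
\langle \psi | N^m \psi\rangle = \sum_{k=0}^m {m \choose k}\langle \psi | A^{2(m-k)}B^{2k}\psi\rangle = \|A^m\psi\|^2 + \sum_{k=1}^m {m \choose k}\|B^k A^{m-k}\psi\|^2\:,
\]
where each inner product is rewritten, using symmetry and invariance, as $\langle\psi|A^{2(m-k)}B^{2k}\psi\rangle = \|B^k A^{m-k}\psi\|^2$. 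Every term on the right is nonnegative, hence $\|A^m\psi\|^2 \le \langle\psi|N^m\psi\rangle \le \|\psi\|\,\|N^m\psi\|$, and symmetrically for $B$. Consequently, if $\psi\in D$ is an analytic vector for $N$, so that $\|N^m\psi\| \le C\,m!/t^m$ for some $t>0$, then $\|A^m\psi\| \le \|\psi\|^{1/2}\|N^m\psi\|^{1/2} \le C'(m!)^{1/2}t^{-m/2}$, whence $\sum_m \frac{s^m}{m!}\|A^m\psi\| < \infty$ for every $s>0$; thus $\psi$ would be an analytic (indeed entire) vector for $A$, and likewise for $B$.

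It would then remain to (i) secure a dense set of analytic vectors for $N$ lying in $D$, and (ii) deduce commutativity of the spectral measures. For (ii), given a dense set of common analytic vectors, both $e^{isA}e^{itB}\psi$ and $e^{itB}e^{isA}\psi$ are sums of convergent power series in $s,t$ whose coefficients are built from $A^jB^k\psi$ and $B^kA^j\psi$; the hypothesis $[A,B]=0$ on the invariant $D$ forces these to agree term by term, so the two unitaries coincide on a dense set and hence everywhere. Commuting unitary groups $e^{isA}$, $e^{itB}$ are in turn equivalent to commuting spectral measures $P^{(A)}$, $P^{(B)}$ (von Neumann's theorem, together with the functional-calculus identities of Proposition \ref{propint2}), which is the claim.

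The main obstacle is point (i) combined with the fact that the clean estimate above was derived only on $D$, whereas the natural reservoir of analytic vectors for a nonnegative selfadjoint operator is the range of its heat semigroup $e^{-t\overline N}$, whose vectors lie in $\bigcap_m D(\overline N^m)$ but not obviously in $D$. Bridging this gap is exactly where the hypothesis that $A^2+B^2$ is essentially selfadjoint on $D$ becomes indispensable: it guarantees that $D$ is a core for $\overline N$ and allows one to extend the domination $\|A^m\psi\|\le \|\overline N^{m/2}\psi\|$ (suitably interpreted, using that $A$ is $\overline N^{1/2}$-bounded) from $D$ to $D(\overline N^m)\supseteq \mathrm{Ran}\,e^{-t\overline N}$ by a closure and mollification argument, thereby producing the required dense family of common analytic vectors. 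This extension off $D$, rather than the algebra on $D$, is the delicate step and is precisely the content of the Nelson analytic-vector machinery invoked in the statement.
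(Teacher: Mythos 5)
Your algebraic estimate on $D$ is correct and is indeed the engine of Nelson-type arguments: expanding $(A^2+B^2)^m$ binomially and using symmetry together with invariance of $D$ gives $\|A^jB^k\psi\|^2\le\langle\psi|N^{j+k}\psi\rangle$ for $\psi\in D$, whence semi-analyticity of $A$ and $B$ on any vector of $D$ that is analytic for $N$, and the term-by-term commutation of the two exponential series. The genuine gap is your step (i), which you have located but not closed. Essential selfadjointness of $N=(A^2+B^2)|_D$ does \emph{not} supply a dense set of analytic vectors for $N$ lying \emph{inside} $D$; it only makes $D$ a core for $\overline N$. The natural reservoir of analytic vectors, $\mathrm{Ran}\,e^{-t\overline N}$, sits in $\bigcap_m D(\overline{N}^m)$, but there is no a priori reason its elements belong to $D$ or even to $D(A^m)$, and the inequality $\|A^m\psi\|^2\le\|\psi\|\,\|N^m\psi\|$ was proved only on $D$. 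Transporting it to $D(\overline{N}^m)$ requires knowing that $D$ is a core for every power $\overline{N}^m$ (or replacing the power estimates by a semigroup/resolvent argument), and this does not follow from $D$ being a core for $\overline N$ alone. Dismissing this as ``a closure and mollification argument'' conceals essentially all of the analytic content of Nelson's theorem; as written, the proof does not go through.

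The paper closes the argument differently and much more cheaply (exercise \ref{ABcomm}): $-iA|_D$ and $-iB|_D$ span a two-dimensional \emph{abelian} Lie algebra of symmetric operators on the common dense invariant domain $D$, whose Nelson operator is precisely $A^2|_D+B^2|_D$; Theorem \ref{teoN} then yields a strongly continuous unitary representation of the simply connected group $\bR^2$ whose one-parameter subgroups are generated by $\overline{A|_D}=A$ and $\overline{B|_D}=B$ (the equalities hold because a selfadjoint operator admits no proper symmetric extensions). Commutativity of $\bR^2$ gives $e^{-isA}e^{-itB}=e^{-itB}e^{-isA}$ for all $s,t$, and the proposition stated after Stone's theorem converts this into commutation of the spectral measures. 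In short: if you want a self-contained proof along your lines, you must reprove the hard half of Theorem \ref{teoN}; if you are willing to cite it, the whole argument takes three lines.
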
 
\noindent The following result, much  easier to prove,  is also  true \cite{moretti}.
\begin{proposition}\label{propcomm} Let $A$, $B$ be selfadjoint operators in the complex 
Hilbert space $\cH$. If $B\in \gB(\cH)$ the following facts are equivalent,\\

(i) the spectral measures of $A$ and $B$ commute (i.e. (\ref{mco}) holds),\\

(ii) $BA\subset AB$\:,\\

(iii) $Bf(A) \subset f(A)B$, if $f: \sigma(A) \to \bR$ is Borel measurable\:,\\

(iv) $P_E^{(A)}B=BP_E^{(A)}$ if $E \in {\cal B}(\sigma(A))$\:,
\end{proposition}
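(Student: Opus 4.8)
The plan is to prove that (ii), (iii), and (i) are each equivalent to (iv), so that the four conditions collapse to the single statement that $B$ commutes with every spectral projector of $A$. Throughout I use that $B \in \gB(\cH)$, so $D(B)=\cH$ and any inclusion ``$\subset$'' between operators with full domain is forced to be an equality. Two implications are immediate: for (iii) $\Rightarrow$ (ii) I take $f=id$, for which $f(A)=A$ by Theorem \ref{st}; for (iii) $\Rightarrow$ (iv) I take $f=\chi_E$ with $E\in\cB(\sigma(A))$, since $\chi_E(A)=P_E^{(A)}$ by Proposition \ref{propint2}(b) and the inclusion $B\chi_E(A)\subset\chi_E(A)B$ then becomes $BP_E^{(A)}=P_E^{(A)}B$. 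For (i) $\Rightarrow$ (iv) I approximate: on the compact set $\sigma(B)$ the bounded function $id$ is a uniform limit of simple functions $s_n$, so $\int s_n\,dP^{(B)}\to B$ in operator norm by Remark \ref{rembound}; each $\int s_n\,dP^{(B)}$ is a finite combination of the $P_F^{(B)}$, hence commutes with $P_E^{(A)}$ under (i), and commutation passes to the norm limit $B$.

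The substantive direction is (iv) $\Rightarrow$ (i), which rests on the lemma: \emph{if a bounded operator $C$ commutes with a bounded selfadjoint operator $B$, then $C$ commutes with every $P_F^{(B)}$}; here $C:=P_E^{(A)}$. I would prove the lemma in three steps. First, $CB=BC$ gives $Cp(B)=p(B)C$ for every polynomial $p$. Second, Stone--Weierstrass on the compact spectrum together with the norm bound $\|g(B)\|\le\|g\|_\infty$ (Remark \ref{rembound}) extends this to $Cg(B)=g(B)C$ for all continuous $g$. Third, rewriting $\langle C^*x|g(B)y\rangle=\langle x|g(B)Cy\rangle$ as $\int g\,d\mu^{(P^{(B)})}_{C^*x,y}=\int g\,d\mu^{(P^{(B)})}_{x,Cy}$ for all continuous $g$ and invoking uniqueness of the finite Borel measure determined by its integrals against continuous functions, I obtain equality of these two measures, whence $\langle x|CP_F^{(B)}y\rangle=\langle x|P_F^{(B)}Cy\rangle$ for all $x,y$, i.e. $CP_F^{(B)}=P_F^{(B)}C$.

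To close the loop I also establish (iv) $\Rightarrow$ (iii): knowing $B$ commutes with all $P_E^{(A)}$, it commutes with every bounded $f(A)$ by the same norm-limit argument, and for unbounded real $f$ I truncate by $f_n:=f\cdot\chi_{\{|f|\le n\}}$. Then $Bf_n(A)=f_n(A)B$, and for $x\in\Delta_f$ the estimate $\|f_n(A)Bx\|=\|Bf_n(A)x\|\le\|B\|\,\|f(A)x\|$ together with monotone convergence (Proposition \ref{propint2}(h)) forces $Bx\in\Delta_f$; closedness of $f(A)$ then yields $Bf(A)x=f(A)Bx$, so $Bf(A)\subset f(A)B$.

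The remaining and technically hardest link is (ii) $\Rightarrow$ (iv), where the possible unboundedness of $A$ must be controlled through its resolvent. From $BA\subset AB$ and $x=(A-iI)^{-1}y\in D(A)$ one gets $(A-iI)Bx=By$, hence $B$ commutes with the bounded resolvents $R_{\pm i}:=(A\mp iI)^{-1}$, where $R_{-i}=R_i^*$ since $A=A^*$. As $R_i$ is normal and $B$ commutes with both $R_i$ and $R_i^*$, the normal version of the lemma (Stone--Weierstrass on the compact $\sigma(R_i)\subset\bC$ in the variables $R_i,R_i^*$) shows that $B$ commutes with every spectral projector of $R_i$. Finally the change of variables in the functional calculus (Proposition \ref{propint2}(j), in its normal form) identifies $P^{(A)}(E)$ with a spectral projector of $R_i=\phi(A)$ for $\phi(\lambda)=(\lambda-i)^{-1}$, so commutation with the projectors of $R_i$ transfers to (iv). I expect this transfer through the resolvent, rather than any individual computation, to be the main obstacle: it is the only point where the unboundedness of $A$ blocks a direct polynomial/Weierstrass argument on $A$ itself and must be bypassed by passing to the bounded normal operator $R_i$.
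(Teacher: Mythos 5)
Your argument is correct and complete in outline; note that the paper itself gives no proof of Proposition \ref{propcomm} (it is quoted from \cite{moretti}), so there is nothing to compare against line by line. Your organisation — making (iv) the hub, getting (iii)$\Rightarrow$(ii) and (iii)$\Rightarrow$(iv) for free from the everywhere-defined domains, proving the bounded commutation lemma by polynomials, Stone--Weierstrass and uniqueness of regular complex Borel measures, handling unbounded $f$ by truncation plus monotone convergence, and reducing (ii)$\Rightarrow$(iv) to the bounded normal resolvent $R_i=(A-iI)^{-1}$ — is the standard route, and your closing remark correctly identifies the resolvent transfer as the only place where unboundedness of $A$ genuinely intervenes; since you obtain commutation of $B$ with both $R_i$ and $R_i^*$ directly from $BA\subset AB$ applied at $\pm i$, you rightly avoid any appeal to Fuglede's theorem. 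Two small points deserve to be written out if you expand this: first, the measure-uniqueness step uses that finite (complex) Borel measures on the compact metric space $\sigma(B)$ are automatically regular, so equality of integrals against all continuous functions does force equality of measures; second, Proposition \ref{propint2}(j) is stated in the paper only for real-valued $\phi$, so you must invoke (or prove) its extension to the complex-valued, normal-operator setting, and you should note explicitly that $\phi(\lambda)=(\lambda-i)^{-1}$ is a homeomorphism of $\bR$ onto its image, so that $\phi(E)$ is Borel and $P^{(A)}(E)=P^{(R_i)}(\phi(E))$ for every $E\in{\cal B}(\bR)$. Neither point is a gap, only a matter of making the citations honest.
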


\noindent Another useful result toward the converse direction  \cite{moretti} is the following.
\begin{proposition} Let $A$, $B$ be selfadjoint operators in the complex 
Hilbert space $\cH$ such that their spectral measures  commute. The following facts hold.\\
{\bf (a)} $ABx=BAx$ if $x \in D(AB)\cap D(BA)$\:.\\
{\bf (b)} $\langle Ax|By\rangle = \langle Bx|Ay\rangle$ if $x,y \in D(A)\cap D(B)$. 
\end{proposition}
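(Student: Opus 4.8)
The plan is to route everything through the \emph{joint spectral measure}. Since the spectral measures of $A$ and $B$ commute by hypothesis, Theorem \ref{JS} furnishes a single PVM $P := P^{(A\times B)}$ on $\bR^2$ with
$$A = \int_{\bR^2} \lambda\, dP(\lambda,\mu)\:, \qquad B = \int_{\bR^2} \mu\, dP(\lambda,\mu)\:,$$
where $\lambda$ and $\mu$ denote the two coordinate projections $\bR^2 \to \bR$. Both $A$ and $B$ are thus functions of one and the same PVM, and I intend to exploit the measurable functional calculus of Proposition \ref{propint2}, whose product and adjoint rules hold verbatim for an arbitrary PVM (their proofs use only the abstract properties collected in Proposition \ref{propint}), not merely for spectral measures on $\bR$.

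For part (a) I would apply the product rule in the form $\left(\int f\,dP\right)\left(\int g\,dP\right) \subset \int fg\, dP$ (Proposition \ref{propint2}(e), transferred to the PVM $P$ on $\bR^2$). Taking $f=\lambda$, $g=\mu$ gives $AB \subset \int_{\bR^2}\lambda\mu\, dP =: C$, and taking $f=\mu$, $g=\lambda$ gives $BA \subset \int_{\bR^2}\mu\lambda\, dP = C$, since $\lambda\mu = \mu\lambda$ pointwise on $\bR^2$. Hence $AB$ and $BA$ are restrictions of the \emph{single} operator $C$, and for any $x \in D(AB)\cap D(BA)$ one obtains $ABx = Cx = BAx$, which is exactly (a). Observe that this argument also yields $D(AB)\cap D(BA) \subset \Delta_{\lambda\mu}$ for free, so no separate domain bookkeeping is required.

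For part (b) the domain hypothesis is weaker ($x,y \in D(A)\cap D(B)$ only), so the product-rule shortcut is unavailable and I would instead truncate. Set $E_n := [-n,n]^2$ and define the bounded selfadjoint operators $A_n := \int \lambda\chi_{E_n}\, dP$ and $B_n := \int \mu\chi_{E_n}\, dP$; boundedness follows from Remark \ref{rembound} and selfadjointness from (\ref{aggf}) since the integrands are real. As bounded functions of the same PVM they commute, $A_n B_n = \int \lambda\mu\chi_{E_n}\, dP = B_n A_n$, whence for all $x,y$,
$$\langle A_n x\,|\,B_n y\rangle = \langle x\,|\,A_n B_n y\rangle = \langle x\,|\,B_n A_n y\rangle = \langle B_n x\,|\,A_n y\rangle\:.$$
The final step is to pass to the limit $n\to\infty$. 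Using (\ref{strongbound}), for $z\in D(A)=\Delta_\lambda$ one has $\|Az - A_n z\|^2 = \int_{E_n^c}|\lambda|^2\, d\mu^{(P)}_{zz} \to 0$ by dominated convergence (the dominating function $|\lambda|^2$ is $\mu^{(P)}_{zz}$-integrable and $\chi_{E_n^c}\to 0$ pointwise), so $A_n z \to Az$; symmetrically $B_n z \to Bz$ for $z \in D(B)$. Since $x,y\in D(A)\cap D(B)$, all four sequences $A_n x,\,B_n y,\,B_n x,\,A_n y$ converge to $Ax,\,By,\,Bx,\,Ay$ respectively, and continuity of the inner product gives $\langle Ax\,|\,By\rangle = \langle Bx\,|\,Ay\rangle$.

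The main obstacle is the legitimacy of transporting the functional-calculus identities of Proposition \ref{propint2} from a spectral PVM on $\bR$ to the joint PVM $P$ on $\bR^2$; once one is satisfied that those identities depend only on the abstract PVM axioms of Proposition \ref{propint}, both parts are essentially immediate, the only genuine analytic content being the dominated-convergence argument controlling the truncations in (b).
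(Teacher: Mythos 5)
Your argument is correct. The paper itself offers no proof of this proposition --- it is simply quoted from \cite{moretti} --- so there is nothing to compare line by line; judged on its own, your joint-spectral-measure route is clean and both parts go through. The load-bearing step is the one you flag yourself: the product, sum and adjoint rules of Proposition \ref{propint2} must be transported from the spectral PVM of a single selfadjoint operator on $\bR$ to the joint PVM on $\bR^2$. This is legitimate, since those identities are consequences of the abstract PVM axioms and of (\ref{intop2})--(\ref{strongbound}) alone, but it deserves exactly the explicit remark you give it. Two small caveats. First, Theorem \ref{JS} as stated in the paper assumes $\cH$ separable, whereas the proposition does not; you should either add that hypothesis or observe that the product of \emph{two} commuting PVMs on $\bR$ can be constructed without it. Second, in part (b) the identity $Az - A_n z = \int \lambda(1-\chi_{E_n})\,dP\,z$ for $z \in D(A)$ tacitly uses the sum rule together with $\Delta_{\lambda(1-\chi_{E_n})} \supset \Delta_{\lambda}$; this is true but worth half a sentence. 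For contrast, a marginally more elementary route avoids the joint measure altogether: truncate each operator with its \emph{own} spectral measure, $A_n := \int_{[-n,n]}\lambda\, dP^{(A)}(\lambda)$ and similarly for $B$, note that the truncations commute because each is a strong limit of linear combinations of mutually commuting projectors, and pass to the limit exactly as you do in (b); part (a) then needs an extra appeal to Proposition \ref{propcomm} and the closedness of $A$ and $B$. Your version buys a one-line proof of (a) --- both $AB$ and $BA$ sit inside the single operator $\int\lambda\mu\,dP$ --- which is where the joint measure genuinely earns its keep.
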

\end{remark} \hfill $\blacksquare$

\subsection{Technical Interemezzo: Three Operator Topologies}
In QM there are at least 7 relevant topologies \cite{BrRo} which enter the game discussing sequences of operators, here we limit ourselves to quickly  illustrate the three most important ones. We assume that $\cH$ is a complex Hilbert space though the illustrated examples may be extended to more general context with some re-adaptation.

{\bf (a)} The strongest topology is the {\bf uniform operator topology} in $\gB(\cH)$: It is the topology induced by the operator norm $||\:\:||$ defined in (\ref{ON}).\\ As a consequence of the definition of this topology, a sequence of elements 
$A_n \in \gB(\cH)$ is said to {\bf uniformly} converge to $A\in \gB(\cH)$ when $||A_n-A|| \to 0$ for $n \to +\infty$.\\
We already know that $\gB(\cH)$ is a Banach algebra with respect to that norm and also a $C^*$ algebra. 

{\bf (b)} If ${\gL}(D; \cH)$ with $D\subset \cH$ a subspace, denotes the complex vector space of the operators $A: D \to\cH$, the {\bf strong operator topology} on $\gL(D; \cH)$ is the topology induced by the seminorms $p_x$ with $x \in D$ and $p_x(A) := ||Ax||$ if $A \in \gL(D;\cH)$.\\  As a consequence of the definition of this topology, a sequence of elements 
$A_n \in \gL(D; \cH)$ is said to {\bf strongly} converge to $A\in \gL(D;\cH)$ when $||(A_n-A)x|| \to 0$ for $n \to +\infty$ for every $x\in D$.\\
It should be evident that, if we restrict ourselves to work in $\gB(\cH)$, the uniform operator topology is stronger than the strong operator 
topology.

{\bf (c)} The {\bf weak operator topology} on $\gL(D;\cH)$ is the topology induced by the seminorms $p_{x,y}$ with $x\in \cH$, $y\in D$ and $p_{x,y}(A) := |\langle x|Ay\rangle|$ if $A \in \gL(D; \cH)$.\\  As a consequence of the definition of this topology, a sequence of elements 
$A_n \in \gL(D;\cH)$ is said to {\bf weakly} converge to $A\in \gL(D;\cH)$ when $| \langle x|(A_n-A)y \rangle || \to 0$ for $n \to +\infty$ for every $x\in \cH$ and $y\in D$.\\
It should be evident that, the strong operator topology is stronger than the weak operator 
topology.

\example\label{exstrong} $\null$\\
{\bf (1)} If $f: \bR \to \bC$ is Borel measurable, and $A$ a selfadjoint operator in $\cH$, consider the sets 
$$R_n := \{r \in \bR \:|\: |f(r)|< n\}\quad \mbox{ for $n \in \bN$\:.}$$ It is clear that $\chi_{R_n}f \to f$ pointwise as $n \to +\infty$ and that $|\chi_{R_n}f|^2\leq |f|^2$. As a consequence restricting the operators on the left hand side to  $\Delta_f$,
$$\left.\int_{\sigma(A)} \chi_{R_n}f dP^{(A)}\right|_{\Delta_f}\to  f(A) \quad\mbox{strongly,  for $n\to +\infty$,}$$
as an immediate consequence of Lebesgue's dominate convergence theorem and the first part of (h) in Proposition \ref{propint2}. \\
{\bf (2)} If in the previous example  $f$ is bounded  on $\sigma(A)$,  and $f_n \to f$ uniformly on $\sigma(A)$, (or $P$-essentially uniformly $||f-f_n||^{(P^{(A)})}_\infty \to 0$ for $n \to +\infty$) then 
$$f_n(A) \to f(A)\quad  \mbox{uniformly, as $n\to+ \infty$,}$$
again for the (second part of  (h) in Proposition \ref{propint2}.\hfill $\blacksquare$

{\bf \exercise\label{esanv}} {\em Prove that a selfadjoint operator $A$ in the complex Hlbert $\cH$ admits a dense set of analytic vectors in its domain.}

{\bf Solution}. Consider the class of functions $f_n = \chi_{[-n,n]}$ where $n\in \bN$. As in (1) of  example \ref{exstrong}, we have 
$\psi_n := f_n(A)\psi = \int_{[-n,n]} 1 dP^{(A)}\psi  \to \int_\bR 1 dP^{(A)}\psi = P^{(A)}_{\bR}\psi =  \psi $ for $n \to +\infty$.
Therefore the set $D:= \{\psi_n\:|\: \psi \in \cH\:, n \in \bN\}$ is dense in $\cH$. The elements of $D$ are analytic vectors for $A$ as we go to prove.  Clearly $\psi_n \in D(A^k)$ since $\mu^{(P^{(A)})}_{\psi_n,\psi_n}(E)
= \mu^{(P^{(A)})}_{\psi,\psi}(E\cap [-n,n])$  as immediate consequence of the definition of the measure $\mu^{(P)}_{x,y}$, therefore  $\int_{\bR} |\lambda^k|^2 d \mu^{(P^{(A)})}_{\psi_n,\psi_n}(\lambda)
= \int_{[-n,n]} |\lambda|^{2k} d \mu^{(P^{(A)})}_{\psi,\psi}(\lambda) \leq  \int_{[-n,n]}|n|^{2k} d \mu^{(P^{(A)})}_{\psi,\psi}(\lambda) \leq  |n|^{2k} \int_{\bR} d \mu^{(P^{(A)})}_{\psi,\psi}(\lambda) =  |n|^{2k} ||\psi||^2 < +\infty$. Similarly
$||A^k\psi_n||^2 = \langle A^k\psi_n | A^k \psi_n\rangle =  \langle \psi_n | A^{2k} \psi_n\rangle = \int_{\bR} \lambda ^{2k} d \mu^{(P^{(A)})}_{\psi_n,\psi_n}(\lambda)  \leq |n|^{2k} ||\psi||^2$. We conclude that
$\sum_{k=0}^{+\infty} \frac{(it)^k}{k!} ||A^k\psi_n||$ conveges for every $t\in \bC$ as it is dominated 
by the series $\sum_{k=0}^{+\infty} \frac{|t|^k}{k!} |n|^{2k} ||\psi||^2 = e^{|t|\:|n|^2} ||\psi||^2$. \hfill $\blacksquare$

\section{More Fundamental Quantum Structures}
The question we want to answer now is the following: \\
{\em Is there anything more fundamental behind the phenomenological facts (1), (2), and (3)  discussed in the first section and their formalization presented in Sect. \ref{formalism}?}\\
 An appealing attempt to answer that question and justify the formalism based on the spectral theory 
is due to von Neumann \cite{vonNeumann}  (and subsequently extended by Birkhoff and von Neumann).
This section is devoted to quickly review an elementary part of those ideas, adding however several more  modern results
(see also \cite{varadarajan} for a similar approach).

\subsection{The  Boolean logic of CM}\label{exf}
Consider a classical Hamiltonian system described in symplectic manifold $(\Gamma, \omega)$, where $\omega = \sum_{k=1}^n dq^k\wedge dp_k$ in any system of local symplectic coordinates $q^1,\ldots, q^n,p_1,\ldots,p_n$.
The state of the system at time $t$ is a point $s \in \Gamma$,
in local coordinates $s\equiv (q^1,\ldots, q^n,p_1,\ldots,p_n)$,
whose evolution $\bR \ni t \mapsto s(t)$
is a solution of the {\em Hamiltonian equation} of motion. Always in local symplectic coordinates, they read
\begin{eqnarray}
\frac{dq^k}{dt} &=& \frac{\partial H(t,q,p)}{\partial p_k}\:, \quad k=1,\ldots, n\\
\frac{dp_k}{dt} &=& -\frac{\partial H(t,q,p)}{\partial q^k} :, \quad k=1,\ldots, n\:,
\end{eqnarray}
$H$ being the Hamiltonian function of the system, depending on the (inertial) reference frame.
Every physical {\em elementary  property},  $E$, that the system may possess at a certain time $t$, i.e., which can be true or false at that time, can be identified with a subset $E\subset \Gamma$. The property is true if $s\in E$ and it is not if $s \not \in E$. From this point of view,
the standard set theory  operations $\cap$,  $\cup$, $\subset$,  $\neg$ (where $\neg  E := \Gamma \setminus E$ from now on is the {\bf complement operation}) have a logical interpretation: \\

(i) $E \cap F$ corresponds to the property ``$E$ AND $F$'',

(ii) $E \cup F$ corresponds to the property ``$E$ OR $F$'', 

(iii) $\neg E$ corresponds to the property ``NOT  $F$'', 

(iv) $E \subset  F$ means ``$E$ IMPLIES $F$''.\\

\noindent In this context:\\

(v)  $\Gamma$  is the property which is always true

(vi)  $\emptyset$  is the property which is always false. \\

\noindent This identification is possible because, as is well known, the logical operations have the same algebraic structure of the set theory operations.\\
As soon as we admit the possibility to construct statements including {\em countably infinite number  of disjunctions or conjunctions},  we can enlarge our interpretation towards the abstract measure theory, interpreting the states as  {\em probability Dirac measures} supported on a single point. 
To this end we first restrict the class of possible elementary properties to the Borel $\sigma$-algebra of $\Gamma$, ${\cal B}(\Gamma)$. For various reasons this class of sets seems to be sufficiently large to describe physics (in particular ${\cal B}(\Gamma)$ includes the preimages of measurable sets under continuous functions). A state at time $t$, 
$s\in \Gamma$, can be viewed as a Dirac measure, $\delta_s$, supported on $s$ itself. 
If $E \in {\cal B}(\Gamma)$, $\delta_s(E)=0$ if $s \not \in E$ or $\delta_s(E)=1$ if $s \in E$.\\
If we do not have a perfect knowledge of the system, as for instance it happens in {\em statistical mechanics}, 
the state at time $t$, $\mu$, is a proper probability measure on ${\cal B}(\Gamma)$ which now, is allowed to attain all values of $[0,1]$. If $E \in {\cal B}(\Gamma)$
is an elementary property of the physical system, $\mu(E)$ denotes  the probability that the property $E$ is true for the system at time $t$. 
\begin{remark} The evolution equation of $\mu$, in statistical mechanics is given by the well-known 
{\em Liouville's equation} associate with the Hamiltonian flow. In that case $\mu$ is proportional to the natural symplectic volume measure of $\Gamma$, $\Omega = \omega \wedge \cdots \wedge \omega$ ($n$-times, where $2n = dim(\Gamma)$). 
In fact we have $\mu = \rho \Omega$, where the non-negative function $\rho$ is the so-called {\bf Liouville density} satisfying the famous  {\em Liouville's equation}. In symplectic local coordinates that equation reads
$$\frac{\partial \rho(t,q,p)}{\partial t} +\sum_{k=1}^n \left(\frac{\partial \rho}{\partial q^k} \frac{\partial H}{\partial p_k}-\frac{\partial \rho}{\partial p_k} \frac{\partial H}{\partial q^k}\right)=0\:.$$
We shall not deal any further with this equation in this paper. \hfill $\blacksquare$
\end{remark}
\noindent More complicated classical quantities of the system can be described by {\em Borel measurable} functions $f: \Gamma \to \bR$.
Measurability is a good requirement as it permits one to perform physical operations like computing, for instance,  the {\em expectation value} (at a given time) when the state is $\mu$:
$$\langle f \rangle_\mu = \int_\Gamma f \mu\:.$$
 Also elementary properties can be pictured by measurable functions, in fact they are one-to-one identified with all the Borel measurable functions $g: \Gamma \to \{0,1\}$. The Borel set $E_g$ associated to $g$ is $g^{-1}(\{1\})$ and in fact $g=\chi_{E_g}$. \\
A generic physical quantity, a measurable function  $f: \Gamma \to \bR$, is completely determined by the class of Borel sets (elementary properties) $E^{(f)}_B := f^{-1}(B)$ where $B \in {\cal B}(\bR)$. 
The meaning of $E^{(f)}_B$ is 
\beq  E^{(f)}_B =\mbox{``the value of $f$ belongs to $B$''} \label{meaningE}\eeq
It is possible to prove \cite{moretti} that the map $ {\cal B}(\bR) \ni B \mapsto E^{(f)}_B$ permits one to reconstruct the function $f$.
The sets $E^{(f)}_B := f^{-1}(B)$ form a $\sigma$-algebra as well and 
 the class of sets $E^{(f)}_B$ satisfies the following elementary properties when 
$B$ ranges in ${\cal B}(\bR)$.

 {\bf (Fi)}\:  $E^{(f)}_\bR = \Gamma$, 

{\bf (Fii)}\: $E^{(f)}_B \cap E^{(f)}_C =E^{(f)}_{B\cap C}$,

 {\bf (Fiii)}\: If $N \subset \bN$ and $\{B_k\}_{k\in N} \subset {\cal B}(\bR)$
satisfies  $B_j \cap B_k = \emptyset$ if $k\neq j$, then
$$\cup_{j \in N} E^{(f)}_{B_j} = E^{(f)}_{\cup_{j\in N}B_j}\:.$$
These conditions just say that $ {\cal B}(\bR) \ni B \mapsto E^{(f)}_B$ is a {\bf homomorpism of $\sigma$-algebras}.\\

\noindent For future convenience we observe that our model of {\em classical} elementary properties  can be also viewed as another 
mathematical structure, when referring to the notion of {\em lattice}.
\begin{definition} A partially ordered set $(X, \geq)$ is a  {\bf lattice} when, 
for any $a,b \in X$, \\
{\bf (a)} $\sup\{a,b\}$ exists, denoted $a \vee b$ (sometimes called `join');\\
{\bf (b)} $\inf\{a,b\}$ exists, written $a \wedge b$ (sometimes `meet').\\
(The partially ordered set is not required to be totally ordered.) \hfill $\blacksquare$
\end{definition}

\remark$\null$

 {\bf (a)} In our considered concrete case $X={\cal B}(\bR)$ and $\geq$ is nothing but $\supset$ and thus  $\vee$  means $\cup$
and $\wedge$ has the meaning of $\cap$.

{\bf (b)} In the general case $\vee$ and $\wedge$ turn out to be separately {\em associative}, therefore it make sense to write $a_1  \vee \cdots \vee a_n$ and  $a_1  \wedge \cdots \wedge a_n$ in a lattice.
Moreover they are also separately {\em commutative} so 
$$a_1  \vee \cdots \vee a_n = a_{\pi(1)}  \vee \cdots \vee a_{\pi(n)}\quad \mbox{and}\quad
a_1  \wedge \cdots \wedge a_n = a_{\pi(1)}  \wedge \cdots \wedge a_{\pi(n)}$$
for every permutation $\pi : \{1,\ldots, n\} \to  \{1,\ldots, n\}$.  \hfill $\blacksquare$

\begin{definition} A lattice $(X,\geq)$ is said to be:\\
{\bf (a)} {\bf distributive} if  $\vee$ and $\wedge$ distribute over one another: for any  $a,b,c\in X$,
\beq
a\vee (b\wedge c)  = (a\vee b)\wedge (a\vee c) \:,\quad  a\wedge (b\vee c)  = (a\wedge b)\vee (a\wedge c) \:;\nonumber
\eeq
{\bf (b)} {\bf bounded} if it admits a minimum ${\bf 0}$ and a maximum ${\bf 1}$ (sometimes called `bottom' and `top');\\
{\bf (c)} {\bf orthocomplemented} if bounded and equipped with a mapping $X\ni  a \mapsto \neg a$, where $\neg a$ is the  {\bf orthogonal complement} of $a$, such that:

(i) $a \vee \neg a = {\bf 1}$ for any $a\in X$,

(ii) $a \wedge \neg a = {\bf 0}$    for any $a\in X$,

(iii) $\neg (\neg a) = a$ for any $a\in X$,

(iv) $a\geq b$ implies $\neg b \geq \neg a$ for any $a,b \in X$;\\
{\bf (d)} {\bf $\sigma$-complete}\index{$\sigma$-complete lattice}, if every countable set 
 $\{a_n\}_{n\in \bN} \subset X$ admits least upper bound  $\vee_{n\in \bN} a_n$.\\
A lattice with properties (a), (b) and (c) is called a  {\bf Boolean algebra}. A Boolean algebra satisfying (d) is a {\bf Boolean $\sigma$-algebra}.  \hfill $\blacksquare$
\end{definition}

\begin{definition} If $X$, $Y$ are lattices, a map $h: X\to Y $ is a {\bf  (lattice) homomorphism} when
$$h(a \vee_X b) = h(a) \vee_Y h(b) \:,  \: \: \:h(a \wedge_X b) = h(a) \wedge_Y h(b)\:,\:\:\:\mbox{$a,b \in X$}$$
(with the obvious notations.) 
If $X$ and $Y$ are bounded, a homomorphism $h$ is further required to satisfy
$$h({\bf 0}_X) =  {\bf 0}_Y \:, \quad h({\bf 1}_X) =  {\bf 1}_Y \:.$$
If $X$ and $Y$ are orthocomplemented, a homomorphism $h$ also satisfies
$$ h(\neg_X a) =  \neg_Y h(x)\:.$$
If $X$, $Y$ are $\sigma$-complete, $h$ further fulfills
$$h(\vee_{n\in \bN} a_n) = \vee_{n\in \bN} h(a_n) \:, \mbox{if  $\{a_n\}_{n\in \bN} \subset X$}\:.$$
In all cases (bounded, orthocomplemented, $\sigma$-complete lattices, Boolean ($\sigma$-)algebras) if $h$ is bijective it is called {\bf isomorphism} of the relative structures.
\end{definition}
\noindent It is clear that, just because it is a concrete $\sigma$-algebra,  the lattice of the elementary properties of a classical system 
is a lattice which is  {\em distributive}, 
{\em bounded}  (here $0=\emptyset$ and $1= \Gamma$), {\em orthocomplemented} (the orthocomplement being the complement with respect to $\Gamma$) and {\em $\sigma$-complete}. Moreover, as the reader can easily prove, the above map, ${\cal B}(\bR) \ni B \mapsto E^{(f)}_B$,
is also a homomorphism  of Boolean $\sigma$-algebras.

\remark  Given an abstract Boolean $\sigma$-algebra $X$, does there exist a concrete $\sigma$-algebra of sets that is isomorphic to the previous one? In this respect the following general result holds, known as {\em Loomis-Sikorski theorem}.\footnote{Sikorski S.: {\em On the representation of
Boolean algebras as field of sets.} Fund. Math. {\bf 35}, 247-256
(1948).} This guarantees that every Boolean $\sigma$-algebra is isomorphic to a quotient Boolean $\sigma$-algebra $\Sigma/{\cal N}$, where $\Sigma$ is a concrete $\sigma$-algebra of sets over a measurable space and ${\cal N}\subset \Sigma$ is closed under countable unions; moreover, $\emptyset \in {\cal N}$ and for any $A \in \Sigma$ with $A \subset N \in {\cal N}$, then  $A \in {\cal N}$.  The equivalence relation  is $A\sim B$ iff $A\cup B \setminus (A \cap B) \in {\cal N}$, for any $A,B \in \Sigma$.  It is easy to see the coset space $\Sigma/{\cal N}$ inherits the structure of Boolean $\sigma$-algebra from $\Sigma$ with respect to the (well-defined) partial order relation 
$[A] \geq [B]$ if $A \supset B$, $A,B \in \Sigma$. \hfill $\blacksquare$

\subsection{The non-Boolean Logic of QM, the reason why observables are selfadjoint operators.} It is evident that the classical like picture illustrated in  Sect. \ref{exf}  is untenable if referring to quantum systems. The deep reason is that there are pair of elementary properties $E,F$ of quantum systems which are incompatible.  Here an elementary property is an observable which, if measured by means of a corresponding experimental apparatus,  can only attain two values: $0$ if it is false or $1$ if it is true.
For instance, $E=$ ``the component $S_x$ of the electron is $\hbar/2$'' and $F=$ ``the component $S_y$ of the electron is $\hbar/2$''. There is no physical instrument capable to establish if  $E$ AND $F$  is true or false. We conclude that some of elementary observables of quantum systems cannot be logically combined  by the standard operation of the logic.
The model of Borel $\sigma$-algebra seems not to be appropriate for quantum systems.
However one could try to use some form of lattice structure different form the classical one. The fundamental ideas by von Neumann were the following pair. \\

{\bf (vN1)} Given a quantum system, there is a complex separable Hilbert space $\cH$ 
such that the {\bf elementary observables} -- the ones which only assume values in  $\{0,1\}$ --   are one-to-one represented by all the elements of ${\cal L}(\cH)$,  the orthogonal projectors in $\gB(\cH)$.\\

{\bf (vN2)} Two elementary observables $P$, $Q$ are compatible if and only if they commute as projectors.

\begin{remark} $\null$

{\bf (a)} As we shall see later (vN1) has to be changed for those quantum systems which admit {\em superselection rules}. For the moment we stick to the above version of (vN1).

{\bf (b)} The technical requirement of separability will play a crucial role in several places. \hfill $\blacksquare$
\end{remark}
\noindent Let us analyse the reasons for von Neumann's postulates. First of all we observe that ${\cal L}(\cH)$ is in fact a lattice if one remembers the relation between orthogonal projectors and closed subspaces stated in Proposition \ref{propproj}. 

\notation Refferring to Proposition \ref{propproj}, if $P,Q \in {\cal L}(\cH)$, we write
$P \geq Q$ if and only if $P(\cH) \supset Q(\cH)$.
\hfill $\blacksquare$\\

\noindent $P(\cH) \supset Q(\cH)$ is equivalent to  $PQ= Q$. Indeed, if $P(\cH) \supset Q(\cH)$ then there is a Hilbert basis of $P(\cH)$ $N_P = N_Q\cup N'_Q$ where $N_Q$ ia a Hilbert basis of $Q(\cH)$ and $N'_Q$ of $Q(\cH)^{\perp_P}$, the notion of orthogonal being referred to the Hilbert space $P(\cH)$.
From $Q= \sum_{z \in N_Q} \langle z|\cdot \rangle z$ and  $P= Q+\sum_{z \in N'_Q} \langle z|\cdot \rangle z$ we have $PQ=Q$. The converse implication is obvious.\\

\noindent As preannounced, it turns out that $({\cal L}(\cH), \geq)$ is a lattice and, in particular, it enjoys the following properties (e.g., see \cite{moretti}) whose proof is direct.
 
\begin{proposition}\label{proplat}
Let $\cH$ be a complex separable Hilbert space
and,  if $P \in {\cal L}(\cH)$, define $\neg P := I-P$ (the orthogonal projector onto $P(\cH)^\perp$). With this definition, $({\cal L}(\cH), \geq, \neg)$ 
turns out to be  bounded, orthocomplemented, $\sigma$-complete lattice which is not distributive if $dim(\cH)\geq 2$. \\
More precisely, 

(i) $P\vee Q$ is the orthogonal projector onto $\overline{P(\cH)+Q(\cH)}$. \\
The analogue holds for a countable set $\{P_n\}_{n \in \bN}\subset {\cal P}(\cH)$,
$\vee_{n\in \bN}P_n$ is the orthogonal projector onto $\overline{+_{n\in \bN}P_n(\cH)}$.

(ii) $P\wedge Q$ is the orthogonal projector on $P(\cH)\cap Q(\cH)$.\\
The analogue holds for a countable set $\{P_n\}_{n \in \bN}\subset {\cal P}(\cH)$,
$\wedge_{n\in \bN}P_n$ is the orthogonal projector onto $\cap_{n\in \bN}P_n(\cH)$.

(iii) The bottom and the top are respectively $0$ and $I$.\\
Referring to (i) and (ii), it turns out that 
$$\vee_{n\in \bN} P_n = \lim_{k \to +\infty} \vee_{n\leq  k} P_n \quad \mbox{and}\quad 
\wedge_{n\in \bN} P_n = \lim_{k \to +\infty} \wedge_{n\leq  k} P_n $$
with respect to the strong operator topology.
\end{proposition}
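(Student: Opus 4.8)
The plan is to transport everything through the order-isomorphism between $({\cal L}(\cH),\geq)$ and the poset of closed subspaces of $\cH$ ordered by inclusion, furnished by Proposition \ref{propproj} together with the equivalence $P\geq Q \iff PQ=Q \iff P(\cH)\supset Q(\cH)$ recorded just above. Since arbitrary intersections of closed subspaces are closed, this poset is a complete lattice: the infimum of a family is the intersection of the members, and the supremum is the smallest closed subspace containing them, namely the closure of the algebraic sum of the members. Reading these two facts back through the bijection between closed subspaces and orthogonal projectors immediately yields the formulas (i) and (ii) for $P\vee Q$, $P\wedge Q$ and their countable analogues, and identifies the bottom and the top as the projectors onto $\{0\}$ and $\cH$, that is $0$ and $I$; this establishes that $({\cal L}(\cH),\geq)$ is a bounded lattice, proving (iii).

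For orthocomplementation I would check the four defining axioms for $\neg P:=I-P$, which on subspaces is the map $M\mapsto M^{\perp}$. Here $P\vee\neg P$ corresponds to $\overline{P(\cH)+P(\cH)^{\perp}}=\cH$, so $P\vee\neg P={\bf 1}$; $P\wedge\neg P$ corresponds to $P(\cH)\cap P(\cH)^{\perp}=\{0\}$, so $P\wedge\neg P={\bf 0}$; $\neg\neg P=P$ follows from $(M^{\perp})^{\perp}=M$ for closed $M$; and $P\geq Q$ gives $P(\cH)\supset Q(\cH)$, hence $Q(\cH)^{\perp}\supset P(\cH)^{\perp}$, i.e. $\neg Q\geq\neg P$. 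This is the inclusion-reversing property of $\perp$.

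The $\sigma$-completeness is already contained in the completeness noted above, but the stated strong-operator-topology limits require a genuinely analytic argument and I expect this to be the main technical point. Setting $R_k:=\vee_{n\leq k}P_n$, monotonicity of $\vee$ makes $\{R_k\}$ an increasing sequence of projectors, so $R_k-R_j$ is again an orthogonal projector for $j<k$ and one computes $\|R_kx-R_jx\|^2=\langle x|(R_k-R_j)x\rangle=\|R_kx\|^2-\|R_jx\|^2$. Since $\|R_kx\|^2$ is non-decreasing and bounded by $\|x\|^2$ it converges, so $\{R_kx\}$ is Cauchy and converges; the resulting bounded operator is idempotent and self-adjoint, hence the projector onto $\overline{\bigcup_kR_k(\cH)}=\overline{\sum_nP_n(\cH)}$, which is exactly $\vee_{n\in\bN}P_n$. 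The dual computation for the decreasing sequence $\wedge_{n\leq k}P_n$ gives $\wedge_{n\in\bN}P_n$ as a strong limit.

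Finally, for non-distributivity when $\dim(\cH)\geq 2$ I would exhibit the standard counterexample: choose three pairwise distinct one-dimensional subspaces lying in a common two-dimensional subspace and let $P,Q,R$ be the projectors onto them. Then $Q\vee R$ is the projector onto the whole plane, so $P\wedge(Q\vee R)=P\neq 0$, whereas $P\wedge Q=P\wedge R=0$ give $(P\wedge Q)\vee(P\wedge R)=0$, contradicting distributivity. The only subtlety to keep in mind throughout is that the algebraic sum of two closed subspaces need not be closed in infinite dimensions, which is precisely why every join is defined through the closure; apart from this and the strong-convergence step, the argument is pure order theory read off from Proposition \ref{propproj}.
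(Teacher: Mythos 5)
Your proof is correct and follows the same direct route the paper has in mind (it defers the proof to the reference, giving only the non-distributivity counterexample, which is exactly the one you exhibit): translate everything to the complete lattice of closed subspaces via Proposition \ref{propproj}, verify the orthocomplementation axioms for $M\mapsto M^\perp$, and establish the strong limits by the standard monotone-projector argument $\|R_kx-R_jx\|^2=\|R_kx\|^2-\|R_jx\|^2$. All the steps, including the identification of the range of the strong limit with $\overline{+_{n}P_n(\cH)}$ (resp.\ $\cap_n P_n(\cH)$) and the caveat about non-closedness of algebraic sums, are handled correctly.
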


\remark The fact that the distributive property does not hold is evident from the following elementary counterexample in $\bC^2$ (so that it is valid for {\em every} dimension $> 1$). Let $\{e_1,e_2\}$ be the standard basis of $\bC^2$ and define the subspaces
$\cH_1 := span(e_1)$, $\cH_2:= span(e_2)$, $\cH_3:= span(e_1+e_2)$. Finally $P_1$, $P_2$,
$P_3$ respectively denote the orthogonal projectors onto these spaces. By direct inspection one sees that
$P_1\wedge (P_2 \vee P_3) = P_1 \wedge I = P_1$ and $(P_1 \wedge P_2)\vee (P_1 \wedge P_3)= 0 \vee 0 = 0$,
so that 
$P_1\wedge (P_2 \vee P_3) \neq (P_1 \wedge P_2)\vee (P_1 \wedge P_3)$.\hfill $\blacksquare$\\

\noindent The crucial observation is that, nevertheless $({\cal L}(\cH), \geq, \neg)$ includes lots of Boolean $\sigma$ algebras, and precisely the maximal sets of pairwise compatible projectors \cite{moretti}.

\begin{proposition}\label{propclass} Let $\cH$ be a complex separable Hilbert space
and consider the lattice $({\cal L}(\cH), \geq, \neg)$. If ${\cal L}_0 \subset {\cal L}(\cH)$ is a maximal subset of pairwise commuting elements, then ${\cal L}_0$ contains $0$, $I$ is $\neg$-closed and, if equipped
with the restriction of the lattice structure of $({\cal L}(\cH), \geq, \neg)$, turns out to be  a Boolean $\sigma$-algebra.\\
In particular, if $P,Q \in {\cal L}_0$,

(i) $P\vee Q = P+Q-PQ$\:,

(ii) $P \wedge Q = PQ$. 
\end{proposition}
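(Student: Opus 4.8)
The plan is to exploit the maximality of ${\cal L}_0$ repeatedly, in combination with the explicit description of meet and join for commuting projectors. First I would record the elementary algebraic facts about a pair of commuting projectors $P,Q \in {\cal L}(\cH)$: since $PQ=QP$, both $PQ$ and $P+Q-PQ$ are self-adjoint and idempotent, hence belong to ${\cal L}(\cH)$. Moreover $PQ$ projects onto $P(\cH)\cap Q(\cH)$ (if $x\in P(\cH)\cap Q(\cH)$ then $PQx=x$, and the range of $PQ$ is contained in both $P(\cH)$ and $Q(\cH)$), while $P+Q-PQ = I-(I-P)(I-Q)$ projects onto $\overline{P(\cH)+Q(\cH)}$. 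Comparing with Proposition \ref{proplat}(i)--(ii) this yields the formulas (ii) $P\wedge Q = PQ$ and (i) $P\vee Q = P+Q-PQ$ at once.

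Next I would settle the structural closure properties using maximality. Since $0$ and $I$ commute with every projector, adjoining either to ${\cal L}_0$ would keep all elements pairwise commuting, so maximality forces $0,I\in{\cal L}_0$. If $P\in{\cal L}_0$ then $\neg P = I-P$ commutes with every $Q\in{\cal L}_0$ (as both $I$ and $P$ do), whence $\neg P\in{\cal L}_0$ again by maximality; thus ${\cal L}_0$ is $\neg$-closed. For finite meets and joins, note that if $P,Q\in{\cal L}_0$ then every $R\in{\cal L}_0$ commutes with both $P$ and $Q$, hence with the products and sums $PQ$ and $P+Q-PQ$; maximality then places $P\wedge Q$ and $P\vee Q$ in ${\cal L}_0$. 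Consequently the order, meet, join and orthocomplement of ${\cal L}(\cH)$ restrict to ${\cal L}_0$, making it a bounded, orthocomplemented sublattice in which the bounds $0,I$ and the relations $P\vee\neg P=I$, $P\wedge\neg P=0$, $\neg\neg P=P$ are inherited verbatim.

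Distributivity is where the commuting formulas pay off directly: for pairwise commuting $P,Q,R\in{\cal L}_0$ one computes
$$P\wedge(Q\vee R)= P(Q+R-QR)=PQ+PR-PQR,$$
while, using $(PQ)(PR)=P^2QR=PQR$,
$$(P\wedge Q)\vee(P\wedge R)=PQ+PR-(PQ)(PR)=PQ+PR-PQR,$$
so the two sides coincide; the dual distributive law then follows by De Morgan from $\neg$-closure. Hence ${\cal L}_0$ is a Boolean algebra.

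It remains to verify $\sigma$-completeness, which I expect to be the only genuinely topological step and the main obstacle. Given a countable family $\{P_n\}_{n\in\bN}\subset{\cal L}_0$, its join $\vee_{n\in\bN}P_n$ exists in ${\cal L}(\cH)$ by Proposition \ref{proplat}, and by the same proposition equals the strong-operator limit of the finite joins $S_k:=\vee_{n\leq k}P_n$, each of which lies in ${\cal L}_0$ by the previous paragraph. To see the limit stays in ${\cal L}_0$, fix $R\in{\cal L}_0$: from $S_kR=RS_k$ and boundedness of $R$, applying both sides to an arbitrary $x\in\cH$ and letting $k\to+\infty$ in the strong topology gives $(\vee_n P_n)Rx = R(\vee_n P_n)x$, so $\vee_n P_n$ commutes with every element of ${\cal L}_0$ and maximality yields $\vee_n P_n\in{\cal L}_0$. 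Thus the least upper bound taken in ${\cal L}(\cH)$ already lies in ${\cal L}_0$ and serves as the join there, completing the proof that ${\cal L}_0$ is a Boolean $\sigma$-algebra.
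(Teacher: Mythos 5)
Your proposal is correct and follows essentially the same route as the paper's proof: establish the explicit formulas $P\wedge Q=PQ$ and $P\vee Q=P+Q-PQ$ for commuting projectors (the join via the complement identity $I-P\vee Q=(I-P)(I-Q)$), use maximality to show ${\cal L}_0$ contains $0$, $I$ and is closed under $\neg$, $\vee$, $\wedge$, verify distributivity by direct computation, and obtain $\sigma$-completeness from the fact that countable joins are strong limits of finite joins (Proposition \ref{proplat}(iii)) and that commutation with a fixed bounded operator passes to strong limits. You merely spell out some steps the paper leaves as ``direct inspection.''
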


\noindent {\bf Proof}. ${\cal L}_0$ includes both $0$ and $I$ because ${\cal L}_0$ is maximally commutative.
Having (i) and (ii), due to (iii) in proposition \ref{proplat}, the $\sup$ and the $\inf$ of a sequence of projectors of ${\cal L}_0$ commute with the elements of ${\cal L}_0$, maximality implies that they belong to ${\cal L}_0$. Finally (i) and (ii) prove by direct inspection that $\vee$ and $\wedge$ are mutually distributive. Let us prove (ii) and (i) to conclude. If $PQ=QP$, $PQ$ is an orthogonal projector and $PQ(\cH) = QP(\cH)\subset P(\cH)\cap Q(\cH)$. On the other hand, if $x\in P(\cH) \cap Q(\cH)$ then $Px=x$ and $x=Qx$ so that $PQx=x$
and thus $P(\cH)\cap Q(\cH) \subset PQ(\cH)$ and (ii) holds. To prove (i) observe that 
$\overline{<P(\cH), Q(\cH)>}^\perp = P(\cH)^\perp \cap Q(\cH)^\perp$. Using (ii), this can be rephrased as $I-P\vee Q= (I-P)(I-Q)$ which entails (i) immediately.  \hfill $\blacksquare$

\remark $\null$

{\bf (a)} Every set of pairwise commuting orthogonal projectors can be completed to a maximal set as an elementary application of Zorn's lemma.
However, since the commutativity property is {\em not} transitive, there are many possible maximal subsets of pairwise commuting elements in ${\cal L}(\cH)$ with non-empty intersection.  

{\bf (b)} As a consequence of the stated proposition, the symbols $\vee$, $\wedge$ and $\neg$ have the same properties in ${\cal L}_0$ as the corresponding symbols of classical logic $OR$, $AND$ and $NOT$. Moreover $P \geq Q$ can be interpreted as ``$Q$ IMPLIES $P$''.

{\bf (c)} There were and are many attempts to interpret $\vee$ and $\wedge$ as connectives of a new non-distributive  logic when dealing with the whole  ${\cal L}(\cH)$:  a {\em quantum logic}. The first noticeable proposal was due to Birkhoff and von Neumann \cite{BvN}. Nowadays there are lots of quantum logics \cite{BC,egl}  all regarded with suspicion by physicists. Indeed, the most difficult issue is the physical operational interpretation of these connectives taking into account the fact that they put together  incompatible propositions, which cannot be measured simultaneously. An interesting interpretative attempt, due to Jauch, relies upon a result by von Neumann (e.g., \cite{moretti})
$$(P \wedge Q)x  = \lim_{n \to +\infty} (PQ)^nx \quad \mbox{for every $P,Q \in {\cal L}(\cH)$ and $x \in \cH$.}$$
Notice that the result holds in particular if $P$ and $Q$ do not commute, so they are incompatible elementary observables. The right hand side of the identity above can be interpreted as the consecutive and alternated measurement of an infinite sequence of elementary observables $P$ and $Q$.  As 
$$||(P \wedge Q)x||^2  = \lim_{n \to +\infty} ||(PQ)^nx||^2 \quad \mbox{for every $P,Q \in {\cal L}(\cH)$ and $x \in \cH$,}$$
the probabilty that $P \wedge Q$ is true for a state  represented by the unit vector $x\in \cH$ is the probabilty that  the infinite sequence  of consecutive alternated measurements of $P$ and $Q$ produce is true at each step.\hfill $\blacksquare$\\

\noindent We are in a position to clarify why, in this context, observables are PVMs.
Exactly as in CM, an observable  $A$ is a collection of elementary observables  $\{P_E\}_{E\in {\cal B}(\bR)}$ labelled on the Borel sets $E$ of $\bR$. Exactly as for classical quantities, (\ref{meaningE}) we can say that the meaning of $P_E$ is
\beq  P_E =\mbox{``the value of the observable belongs to $E$''} \label{meaningP}\eeq
We expect that all those elementary observables are pairwise compatible and that they satisfy the same properties (Fi)-(Fiii) as for classical quantities. We can complete 
$\{P_E\}_{E\in {\cal B}(\bR)}$ to a maximal set of compatible elementary observables.
 Taking Proposition \ref{propclass} into account (Fi)-(Fiii) translate into

 (i)  $P_\bR=I$, 

(ii) $P_EP_F = P_{E\cap F}$,

 (iii) If $N \subset \bN$  and $\{E_k\}_{k\in N} \subset {\cal B}(\bR)$
satisfies  $E_j \cap E_k = \emptyset$ for $k\neq j$, then
$$\sum_{j \in N} P_{E_j}x= P_{\cup_{j\in N}E_j}x \quad \mbox{for every $x\in \cH$.}$$
(The presence of $x$ is due to the fact that the convergence of the series if $N$ is infinite
is in the strong operator topology as declared in the last statement of Proposition \ref{proplat}.)
{\em In other words we have just  found Definition \ref{defPVM}, specialized to PVM on $\bR$: 
Observables in QM are PVM over $\bR$!}\\ We know that all PVM over $\bR$ are one-to-one  associated to all selfadjoint operators in view of the results presented in the previous section (see (e) in remark \ref{remst}). We conclude that,  adopting von Neumann's framework, in QM observables are  
naturally described by selfadjoint operators,  whose spectra coincide with the set of values attained by the observables.

\subsection{Recovering the Hilbert space structure}
A reasonable question to ask is whether there are better reasons for choosing to describe quantum systems via a lattice of  orthogonal projectors, other than the kill-off argument ``it works''. To tackle the problem we start by listing special properties of the lattice of orthogonal projectors, whose proof is elementary.
\begin{theorem}\label{teoaggpro} 
The bounded, orthocomplemented, $\sigma$-complete lattice ${\cal L}(\cH)$ of Propositions \ref{proplat} and \ref{propclass}  satisfies these additional properties:

(i) {\bf separability} (for $\cH$ separable): if $\{P_a\}_{a\in A} \subset {\cal L}(\cH)$ satisfies 
$P_i P_j=0$, $i\neq j$, then $A$ is at most countable;

(ii) {\bf atomicity and atomisticity}:  
there exist elements in $A\in {\cal L}(\cH)\setminus \{0\}$, called {\bf atoms}, for which  $0 \leq P \leq A$ implies $P=0$ or $P=A$; for any $P\in{\cal L}(\cH)\setminus\{0\}$ there exists an atom $A$ with  $A\leq P$ (${\cal L}(\cH)$ is then called {\bf atomic}); For every  
 $P\in{\cal L}(\cH)\setminus\{0\}$, $P$ is the $\sup$ of the set of  atoms  $A\leq P$
 (${\cal L}(\cH)$ is then called {\bf atomistic});

(iii) {\bf orthomodularity}:  $P\leq Q$ implies $Q=P \vee ((\neg P) \wedge Q)$;

(iv) {\bf covering property}: if $A,P\in {\cal L}(\cH)$, with $A$ an atom, satisfy $A\wedge P = 0$, then  
(1) $P \leq A \vee P$ with $P \neq A \vee P$, and  
(2) $P \leq Q \leq A\vee P$ implies $Q=P$ or $Q= A \vee P$;

(v) {\bf irreducibility}: only $0$ and $I$ commute with every  element of ${\cal L}(\cH)$.\\
The orthogonal projectors onto one-dimensional spaces are the only atoms of ${\cal L}(\cH)$.
 \end{theorem}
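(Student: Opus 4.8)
The plan is to work entirely on the side of closed subspaces, using the bijection of Proposition \ref{propproj} between orthogonal projectors and closed subspaces together with the explicit descriptions of the lattice operations from Proposition \ref{proplat}: $P\vee Q$ projects onto $\overline{P(\cH)+Q(\cH)}$, $P\wedge Q$ onto $P(\cH)\cap Q(\cH)$, and $\neg P=I-P$ onto $P(\cH)^\perp$. Under this dictionary each of the five items turns into a routine statement about closed subspaces of $\cH$, which is why the result is elementary.

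First I would dispose of separability (i) and the identification of the atoms. For (i), choose a unit vector $x_a\in P_a(\cH)$ for each $a\in A$; since $P_iP_j=0$ forces $P_i(\cH)\perp P_j(\cH)$, the family $\{x_a\}_{a\in A}$ is orthonormal, hence at most countable because $\cH$ is separable. For the atoms, I would show that a projector $A$ is an atom if and only if $\dim A(\cH)=1$: if $A$ is rank one then any $0\le P\le A$ has $P(\cH)\subset A(\cH)$, a one-dimensional space, so $P(\cH)\in\{\{0\},A(\cH)\}$; conversely if $\dim A(\cH)\ge 2$ a one-dimensional subspace of $A(\cH)$ yields a projector strictly between $0$ and $A$. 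This already proves the closing sentence of the theorem. Atomicity then follows by picking a unit vector in $P(\cH)\neq\{0\}$, and atomisticity by taking a Hilbert basis $\{e_i\}$ of $P(\cH)$: the rank-one projectors onto $\mathrm{span}(e_i)$ are atoms below $P$ whose supremum projects onto $\overline{\mathrm{span}\{e_i\}}=P(\cH)$, i.e. equals $P$.

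For orthomodularity (iii), from $P\le Q$, i.e. $P(\cH)\subset Q(\cH)$, I would use the orthogonal decomposition of the closed subspace $Q(\cH)$ as $P(\cH)\oplus\big(P(\cH)^\perp\cap Q(\cH)\big)$. Since the two summands are orthogonal closed subspaces, their algebraic sum is already closed, so $P\vee\big((\neg P)\wedge Q\big)$ projects exactly onto $Q(\cH)$, giving the claim. For the covering property (iv), write the atom $A$ as the projector onto $\mathrm{span}(x)$ with $\|x\|=1$; the hypothesis $A\wedge P=0$ says $x\notin P(\cH)$. The key fact is that adding the one-dimensional space $\mathrm{span}(x)$ to the closed subspace $P(\cH)$ produces a closed subspace $P(\cH)+\mathrm{span}(x)=(A\vee P)(\cH)$ in which the orthogonal complement of $P(\cH)$ is exactly one-dimensional. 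Then $P\neq A\vee P$ is immediate, and for any $Q$ with $P\le Q\le A\vee P$ the subspace $Q(\cH)\ominus P(\cH)$ sits inside that one-dimensional space, hence is $\{0\}$ or all of it, forcing $Q=P$ or $Q=A\vee P$.

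Finally, irreducibility (v) is where I expect the only genuine work. Suppose $P$ commutes with every element of $\cL(\cH)$, in particular with every rank-one projector $Q_x=\langle x|\cdot\rangle x$. Writing out $PQ_x=Q_xP$, applying both sides to $x$, and using $P=P^*$ gives $Px=\langle Px|x\rangle x$, so $Px$ is a scalar multiple of $x$ for every $x$; as $P$ is a projector this means $Px\in\{0,x\}$. The main obstacle is to rule out a genuine mixture: if $Px=x$ and $Py=0$ for orthogonal unit vectors $x,y$, then applying the eigenvalue alternative to $x+y$ gives $P(x+y)=x$, which is not a multiple of $x+y$ unless $y=0$, a contradiction. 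Hence $P$ is $0$ or $I$. Beyond this the argument is purely bookkeeping of closures and orthogonal complements, so the substantive difficulty is concentrated in verifying the closedness claims used in (iii) and (iv) and in excluding the mixed case in (v).
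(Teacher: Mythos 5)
Your proposal is correct and follows essentially the same route as the paper, which declares the proof elementary and only writes out the irreducibility argument: there too one shows that a projector commuting with all rank-one projectors satisfies $Px=\lambda_x x$ for every $x$ and then forces $\lambda_x\in\{0,1\}$ to be constant (you reach the same conclusion by a direct contradiction on $x+y$ rather than via a Hilbert basis, an inessential variation). The remaining items are handled exactly as one would expect, by translating everything into statements about closed subspaces via Propositions \ref{propproj} and \ref{proplat}, and your verifications of the closedness claims in (iii) and (iv) are the right points to be careful about.
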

 \noindent Irreducibility can easily be proved observing that if $P\in {\cal L}(\cH)$ commutes with all projectors along one-dimensional subspaces, $Px=\lambda_x x$ for every $x\in \cH$. Thus $P(x+y)= \lambda_{x+y}(x+y)$ but also $Px + Py = \lambda_x x +\lambda _yy$ and thus $(\lambda_x- \lambda_{x+y})x = (\lambda_{x+y}-\lambda_y)y$, which entails 
$\lambda_x=\lambda_y$ if $x\perp y$. If $N\subset \cH$ is a Hilbert basis, $Pz= \sum_{x\in N} \langle x|z\rangle \lambda x = \lambda z$ for some fixed $\lambda \in \bC$. Since $P=P^*=PP$, we conclude that either $\lambda =0$ or $\lambda=1$, i.e., either $P=0$ or $P=I$, as wanted.
Orthomodularity is a weaker version of distributivity of $\vee$ with respect to $\wedge$ that we know 
to be untenable in ${\cal P}(\cH)$.\\
Actually each of the listed  properties admits a physical operational interpretation (e.g. see \cite{BC}.)
So, based on the experimental evidence of quantum systems, we could try to prove, in the absence of any Hilbert space, that elementary propositions with experimental outcome in $\{0,1\}$ form a poset. 
More precisely, we could attempt to find a bounded, orthocomplemented $\sigma$-complete  lattice that verifies conditions (i)--(v) above, and then prove this lattice is described by the orthogonal projectors of a Hilbert space.\\
The partial order relation of elementary propositions can be  defined  in various ways. But  it will always  correspond to the logical implication, in some way or another. Starting from  \cite{Mackey} a number of  approaches (either of essentially physical nature, or of formal character) have been developed to this end:  
in particular, those making use of the notion of (quantum)  {\em state}, which we will see in a short while for the concrete case of propositions represented by orthogonal projectors. 
The object of the theory is now \cite{Mackey}  the pair $({\cal O}, {\cal S})$, where 
${\cal O}$ is the class of observables and ${\cal S}$ the one of states. The elementary propositions form a subclass ${\cal L}$ of ${\cal O}$ equipped with a natural poset structure $({\cal L}, \leq)$ (also satisfying a weaker version of some of the conditions (i)--(v)).
A state $s\in {\cal S}$, in particular, defines the probability $m_s(P)$ that $P$ is true for every $P\in {\cal L}$ \cite{Mackey}.
 As a matter of fact, if $P,Q \in {\cal L}$,  $P\leq Q$ means by definition that the probability  $m_s(P)\leq m_s(Q)$ for every state $s\in {\cal S}$.
 More difficult is to justify that the poset thus obtained is a lattice, i.e. that it admits a greatest lower bound $P\vee Q$ and a least upper bound $P\wedge Q$ for  every $P,Q$.  
There are several proposals, very different in nature, to introduce this lattice structure (see \cite{BC} and \cite{egl} for a general treatise) and make the physical meaning explicit in terms of measurement outcome. 
See Aerts in \cite{egl} for an abstract but operational viewpoint and \cite[\S 21.1]{BC} for a summary on  several possible ways to introduce the lattice structure on the partially ordered set of abstract elementary propositions ${\cal L}$.\\
If we accept the lattice structure on elementary propositions of a  quantum system, then we may define the operation of orthocomplementation by the familiar logical/physical negation. 
Compatible propositions can then be defined  in terms of commuting propositions, i.e. commuting 
elements of a orthocomplemented lattice as follows.  
\begin{definition}\label{defgen} {\em Let $({\cal L}, \geq, \neg)$ an orthocomplemented lattice. Two elements $a,b \in {\cal L}$ are said to be:
 
 {\bf orthogonal}
written $a\perp b$,  if $\neg a \geq b$ (or equivalently $\neg b \geq a$);

{\bf commuting}, if 
$a=c_1 \vee c_3$ and $b=c_2 \vee c_3$ with $c_i \perp c_j$ if $i\neq j$.}  \hfill $\blacksquare$
\end{definition}
\noindent These notions of orthogonality and compatibility make sense beacuse, {\em a posteriori}, they  turn out to be  the usual ones when propositions are interpreted via projectors. As the reader may easily prove, two elements $P,Q \in {\cal L}(\cH)$ are orthogonal in accordance with Definition \ref{defgen} if and only if $PQ=QP=0$ (in other words they project onto mutually orthogonal subspaces), and commute in accordance with Definition \ref{defgen} 
if and only if $PQ=QP$. (If $P= P_1+P_3$ and $Q=P_2+P_3$ where the orthogonal projectors satisfy  $P_i\perp P_j=0$ for $i \neq j$, we trivially have $PQ=QP$. If conversely, $PQ=QP$, the said decomposition arises for $P_3:= PQ$, $P_1:= P(I-Q)$, $P_2:= Q(I-P)$.)\\
Now fully-fledged with an orthocomplemented lattice and the notion of compatible propositions, we can attach a physical meaning (an interpretation backed by experimental 
evidence) to the requests that the lattice be orthocomplemented, complete, atomistic,
 irreducible and that it have the covering property \cite{BC}.  
Under these hypotheses and assuming there exist at least  $4$ pairwise-orthogonal atoms, Piron (\cite{Piron64,JP69},  \cite[\S 21]{BC}, Aerts in \cite{egl}) used  projective geometry techniques to show that the lattice of quantum propositions can be canonically identified with the closed (in a generalised sense) subsets of a generalised Hilbert space of sorts. In the latter: (a) the field is replaced by a division ring (usually not commutative) equipped with an involution, and (b) there exists a certain non-singular  Hermitian form associated with the involution. It has been conjectured by many people (see \cite{BC}) that if the lattice is also orthomodular and separable, the division ring can only be picked among $\bR,\bC$ or $\bH$ (quaternion algebra). More recently  Sol\`er\footnote{Sol\`er, M. P.:  Characterization of Hilbert spaces by orthomodular spaces. {\em Communications in Algebra}, {\bf 23}, 219-243 (1995).},
 Holland\footnote{Holland, S.S.: Orthomodularity in infinite dimensions; a theorem of M. Sol\`er. {\em Bulletin of the American Mathematical Society}, {\bf 32}, 205-234, (1995).}
and  Aerts--van Steirteghem\footnote{Aerts, D., van Steirteghem B.: Quantum Axiomatics and a theorem of M.P. Sol\'er. {\em International Journal of Theoretical Physics}. {\bf 39}, 497-502, (2000).} have found sufficient hypotheses, in terms of the existence of infinite orthogonal systems, for this to happen.
Under these hypotheses, if the ring is $\bR$ or $\bC$, we obtain precisely the lattice of orthogonal projectors of the separable  Hilbert space. In the case of $\bH$, one gets a similar generalised structure. 
In all these arguments the assumption of irreducibility is not really crucial: if property (v) fails, the lattice can be split into irreducible sublattices \cite{Jauch,BC}. Physically-speaking this situation is natural in the presence of {\em superselection rules}, of which more soon. \\
It is worth stressing that the covering property in Theorem \ref{teoaggpro} is a crucial property. Indeed there are other lattices relevant in physics verifying all the remaining properties in the afore-mentioned theorem. Remarkably the family of the so-called {\em causally closed sets} in a general spacetime  satisfies  all the said properties  but the covering one\footnote{See H. Casini, {\em The logic of causally closed spacetime subsets}, Class.Quant.Grav. {\bf 19},  2002, 6389-6404}. This obstruction prevents one from endowing a spacetime with a natural (generalized) Hilbert space structure, while it   suggests some ideas towards a formulation of quantum gravity.
\subsection{States as measures on ${\cal L}(\cH)$: Gleason's Theorem}
Let us introduce an important family of operators. This family will plays a decisive r\^ole in the issue concerning a possible justification of the fact that  quantum states are elements of the projective space $P\cH$.

\subsubsection{Trace class operators}
\begin{definition} {\em If $\cal H$ is a complex Hilbert space, $\gB_1(\cH)\subset \gB(\cH)$
denotes the set of {\bf trace class} or {\bf nuclear} operators, i.e. the operators $T\in \gB(\cH)$ satisfying
\beq \sum_{z\in N} \langle z| |T| z \rangle <+\infty \label{trace1}\eeq
for some Hilbertian basis $N \in \cH$ and where $|T|:= \sqrt{T^*T}$ defined via functional calculus.}\hfill $\blacksquare$
\end{definition}

\remark Notice that. above,  $T^*T$ is selfadjoint and $\sigma(T^*T)\in [0,+\infty)$ because of exercise \ref{espos} so that $\sqrt{T^*T}$ is well defined as a function of $T^*T$.\hfill $\blacksquare$\\

\noindent Trace class operators enjoy several remarkable properties \cite{moretti}. Here we only mention the ones relevant for these lecture notes.

\begin{proposition} Let $\cH$ a complex Hilbert space, $\gB_1(\cH)$ satisfy the following properties.\\
{\bf (a)} If $T\in \gB_1(\cH)$ and $N \subset \cH$ is any Hilbertian basis, then (\ref{trace1}) holds and thus 
$$||T||_1 := \sum_{z\in N} \langle z| |T| z \rangle $$
is well defined.\\
{\bf (b)} $\gB_1(\cH)$ is a subspace of $\gB(\cH)$
 which is moreover a two-sided $*$-ideal, namely

(i) $AT, TA \in \gB_1(\cH)$ if $T\in \gB_1(\cH)$ and $A\in \gB(\cH)$,

(ii) $T^*\in \gB_1(\cH)$ if $T\in \gB_1(\cH)$.\\
{\bf (c)} $||\:\:||_1$ is a norm on $\gB_1(\cH)$ making it a Banach space and satisfying

(i) $||TA||_1 \leq ||A||\:||T||_1$ and $||AT||_1 \leq ||A||\:||T||_1$ if $T\in \gB_1(\cH)$ and $A\in \gB(\cH)$,

(ii) $||T||_1=||T^*||_1$ if $T\in \gB_1(\cH)$.\\
{\bf (d)} If $T\in \gB_1(\cH)$, the {\bf trace} of $T$,
$$tr\: T := \sum_{z\in N} \langle z|T z\rangle \in \bC$$
is well defined, does not depend on the choice of the Hilbertian basis $N$ and the sum converges absolutely (so can be arbitrarily re-ordered). 
\end{proposition}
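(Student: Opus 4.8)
The plan is to base the entire proposition on the \emph{polar decomposition} $T = U|T|$, where $U$ is a partial isometry and $|T| = \sqrt{T^*T}\geq 0$ is well defined by the functional calculus (the remark preceding the statement, via exercise \ref{espos}, guarantees $\sigma(T^*T)\subset[0,+\infty)$), supplemented by the trivial splitting $|T| = |T|^{1/2}|T|^{1/2}$ and by Cauchy--Schwarz. It is convenient to also introduce the \emph{Hilbert--Schmidt class} $\gB_2(\cH)$ of those $A\in\gB(\cH)$ with $||A||_2^2 := \sum_{z\in N}||Az||^2 <+\infty$; the very same Parseval/Tonelli computation I use for part (a) shows that $||A||_2$ is basis independent, that $\gB_2(\cH)$ is a two-sided ideal with $||XA||_2\leq ||X||\,||A||_2$ and $||AX||_2\leq ||A||_2\,||X||$, and that $||A||_2 = ||A^*||_2$. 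For part (a) itself I would write $\langle z|\,|T|\,z\rangle = ||\,|T|^{1/2} z\,||^2$, expand this in a second basis $M$ by Parseval as $\sum_{w\in M}|\langle w|\,|T|^{1/2} z\rangle|^2$, exchange the two sums by Tonelli (all terms are nonnegative), move $|T|^{1/2}$ onto $w$ by selfadjointness, and apply Parseval once more; this yields $\sum_{z\in N}\langle z|\,|T|\,z\rangle = \sum_{w\in M}\langle w|\,|T|\,w\rangle$, so $||T||_1$ is well defined.

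The reusable analytic core is the \emph{monotonicity lemma}: for $A\geq 0$ and $B\in\gB(\cH)$ one has $\sum_{z\in N}\langle z|B^*ABz\rangle = ||A^{1/2}B||_2^2\leq ||B||^2\,||A^{1/2}||_2^2 = ||B||^2\sum_{z\in N}\langle z|Az\rangle$, using the Hilbert--Schmidt ideal inequality. From this everything in (b) and (c)(i) follows. For the \emph{subspace} property and the triangle inequality simultaneously, I would polar-decompose $S+T = W|S+T|$, so that $|S+T| = W^*(S+T)$ and $\langle z|\,|S+T|\,z\rangle = \langle Wz|Sz\rangle + \langle Wz|Tz\rangle$; writing $S = U_S|S|$, splitting $|S| = |S|^{1/2}|S|^{1/2}$ and applying Cauchy--Schwarz in $z$ together with the monotonicity lemma (with contraction $B = U_S^*W$) gives $\sum_z|\langle Wz|Sz\rangle|\leq ||S||_1$, and likewise for $T$, whence $||S+T||_1\leq ||S||_1+||T||_1$. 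The ideal estimate $||AT||_1\leq ||A||\,||T||_1$ is obtained the same way: polar-decompose $AT = V|AT|$, write $\mathrm{tr}|AT| = \sum_z\langle z|\,V^*AU\,|T|\,z\rangle$, split $|T|$ and bound by the lemma to get the factor $||V^*AU||\leq ||A||$; finiteness of the sum gives $AT\in\gB_1(\cH)$, and $||TA||_1\leq ||A||\,||T||_1$ then follows from (c)(ii).

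The remaining parts of (c) are then routine. Positivity and homogeneity of $||\cdot||_1$ are immediate, and $||T||_1 = 0$ forces $|T| = 0$, hence $T = 0$. For (c)(ii) I would use $|T^*| = U|T|U^*$ (from $TT^* = U|T|^2U^*$ and uniqueness of the positive square root, since $U^*U$ is the identity on $\overline{Ran\,|T|}$), giving $||T^*||_1 = ||\,|T|^{1/2}U^*\,||_2^2 = ||\,U|T|^{1/2}\,||_2^2 = ||T||_1$, where I use adjoint-invariance of $||\cdot||_2$ and that $U$ is isometric on $\overline{Ran\,|T|}\supseteq Ran\,|T|^{1/2}$. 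For \emph{completeness} I would first note $||T||\leq ||T||_1$ (approximate the top of $\sigma(|T|)$ by a unit vector and include it in a basis), so a $||\cdot||_1$-Cauchy sequence $\{T_n\}$ converges in operator norm to some $T\in\gB(\cH)$; then a Fatou-type argument, summing $\langle z|\,|T_n-T_m|\,z\rangle$ over finite subsets of $N$ and letting $m\to+\infty$, shows $T\in\gB_1(\cH)$ and $||T_n-T||_1\to 0$.

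Finally, for (d), polar decomposition and Cauchy--Schwarz give $\sum_{z\in N}|\langle z|Tz\rangle| = \sum_z|\langle\,|T|^{1/2}U^*z\,|\,|T|^{1/2}z\rangle|\leq ||T||_1$, so the series converges absolutely; and writing $T = AB$ with $A = U|T|^{1/2}$ and $B = |T|^{1/2}\in\gB_2(\cH)$, the sum $\sum_z\langle z|ABz\rangle = \sum_z\langle A^*z|Bz\rangle$ is exactly the Hilbert--Schmidt inner product $\langle A^*|B\rangle_2$, which is basis independent by the (polarized) part-(a) argument; hence $\mathrm{tr}\,T$ does not depend on $N$. The main obstacle I anticipate is not any single identity but the bookkeeping around the partial isometry $U$ (its initial and final spaces) that makes $|T^*| = U|T|U^*$ and the monotonicity lemma rigorous; the completeness step is the most technical, since it requires a lower-semicontinuity/Fatou argument for $||\cdot||_1$ rather than a direct estimate.
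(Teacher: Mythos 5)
Your proposal is correct. Note that the paper itself offers no proof of this proposition: it is quoted from the literature (the reference [6] = Moretti's book), so there is nothing in-text to compare against. Your route --- polar decomposition $T=U|T|$, the splitting $|T|=|T|^{1/2}|T|^{1/2}$, and systematic reduction to the Hilbert--Schmidt class via $\||T|^{1/2}\|_2^2=\|T\|_1$ --- is the standard textbook argument, and all the key steps check out: the double-Parseval/Tonelli computation for (a), the Cauchy--Schwarz estimate $\sum_z|\langle Wz|Sz\rangle|\leq \|U_S^*W\|\,\|S\|_1$ for subadditivity and the ideal property, the identity $|T^*|=U|T|U^*$ (correctly justified via $U^*U|T|=|T|$) for (c)(ii), and the factorization $T=(U|T|^{1/2})\cdot|T|^{1/2}$ with the polarized basis-independence of the Hilbert--Schmidt pairing for (d). The only place where you lean on an unstated fact is the completeness argument: letting $m\to+\infty$ in $\sum_{z\in F}\langle z|\,|T_n-T_m|\,z\rangle$ requires that $S\mapsto|S|$ be continuous in the operator norm, which is true (uniform polynomial approximation of $\sqrt{\cdot}$ on $[0,\sup_m\|T_n-T_m\|^2]$ applied to $S^*S$) but deserves a line, since without it the Fatou step does not go through. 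With that remark added, the proof is complete and self-contained.
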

\remark $\null$

{\bf (1)} Obviously we have $tr \:|T| = ||T||_1$ if $T \in \gB_1(\cH)$.

{\bf (2)} The trace just possesses the properties one expects from the finite dimensional case. In particular, \cite{moretti},

 (i) it is linear on $\gB_1(\cH)$, 

(ii) $tr\: T^* = \overline{tr \: T}$ if $T\in \gB_1(\cH)$,

(iii) the trace satisfies the {\bf cyclic property},
\beq
tr(T_1\cdots T_n) = tr(T_{\pi(1)} \cdots T_{\pi(n)})
\eeq
if at least one of the $T_k$ belongs to $\gB_1(\cH)$, the remaining ones are in $\gB(\cH)$, and $\pi : \{1,\ldots, n\}\to  \{1,\ldots, n\}$ is a {\em cyclic} permutation.\hfill $\blacksquare$\\

\noindent The trace of $T \in \gB_1(\cH)$ can computed on a basis of eigenvectors in view of the following further result \cite{moretti}. Actually (d) and (e) easily follow from (a),(b),(c), (d), and the spectral theory previously developed.
\begin{proposition} Let $\cH$ a complex Hilbert space and  $T^*=T \in\gB_1(\cH)$.
The following facts hold.\\
{\bf (a)} $\sigma(T) \setminus \{0\}= \sigma_p(T)\setminus \{0\}$. If $0\in \sigma(T)$
it may be either the unique element of $\sigma_c(T)$ or an element of $\sigma_p(T)$. \\
{\bf (b)} Every eigenspace $\cH_\lambda$ has finite dimension $d_\lambda$ provided  $\lambda \neq 0$.\\
{\bf (c)} $\sigma_p(T)$ is made of at most  countable number of reals such that 

 (i) $0$ is unique possible accumulation point,

(ii)  $||T||= \max_{\lambda \in \sigma_p(T)} |\lambda|$.\\
{\bf (d)} There is a Hilbert basis of eigenvectors $\{ x_{\lambda, a}\}_{\lambda \in \sigma_p(T), a = 1,2,\ldots, d_\lambda}$ ($d_0$ may be infinite)  and
$$tr(T) = \sum_{\lambda \in \sigma_p(T)} d_\lambda \lambda\:,$$
where the sum converges absolutely (and thus can be arbitrarily re-ordered).\\
{\bf (e)} Referring to the basis presented in (d), the spectral decomposition of $T$ reads
$$T=  \sum_{\lambda \in \sigma_p(T)} \lambda P_\lambda $$
where $P_\lambda =  \sum_{a = 1,2,\ldots, d_\lambda}  \langle x_{\lambda,s}| \:\:\: \rangle x_{\lambda, a}$
and the sum is computed in the strong operator topology  and can be re-ordered arbitarily.
The convergence holds in the uniform topology too if the set of eigenspaces are suitably ordered in the count.
\end{proposition}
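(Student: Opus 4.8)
The plan is to reduce the whole statement to the spectral decomposition theorem (Theorem \ref{st}) supplemented by one finiteness estimate coming from the trace-class hypothesis. First I would invoke Theorem \ref{st} to write $T = \int_\bR \lambda\, dP^{(T)}(\lambda)$, noting that $\sigma(T)\subset\bR$ and $\sigma_r(T)=\emptyset$ since $T=T^*$ (exercise \ref{enr}), and that by the functional calculus (Proposition \ref{propint2}) $|T| = \int_\bR |\lambda|\, dP^{(T)}(\lambda)$. Fix $\epsilon>0$ and set $E_\epsilon := \{\lambda\in\bR : |\lambda|\geq\epsilon\}$ and $P_\epsilon := P^{(T)}(E_\epsilon)$. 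For any unit vector $x\in P_\epsilon(\cH)$ the measure $\mu^{(P^{(T)})}_{xx}$ is supported in $E_\epsilon$, whence $\langle x | |T| x\rangle = \int_{E_\epsilon} |\lambda|\, d\mu_{xx} \geq \epsilon$.

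The decisive step is then that $P_\epsilon(\cH)$ is finite-dimensional, with $\dim P_\epsilon(\cH) \leq ||T||_1/\epsilon$: picking an orthonormal basis $z_1,\dots,z_m$ of $P_\epsilon(\cH)$ and completing it to a Hilbertian basis $N$, the trace-class bound gives $||T||_1 = \sum_{z\in N}\langle z| |T| z\rangle \geq \sum_{k=1}^m \langle z_k | |T| z_k\rangle \geq m\epsilon$. From here (a), (b), (c) follow quickly. Since $\sigma(T)$ is compact ($T$ being bounded), were $\sigma(T)\cap E_\epsilon$ infinite it would accumulate at a point of modulus $\geq\epsilon$, and disjoint spectral intervals (shrunk to lie in $E_{\epsilon/2}$) around infinitely many of its points would produce an infinite orthonormal set, each vector contributing at least $\epsilon/2$ to the trace — impossible. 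Thus $\sigma(T)\cap E_\epsilon$ is finite; letting $\epsilon\downarrow 0$ shows $\sigma(T)\setminus\{0\}$ is at most countable with $0$ its only possible accumulation point, and each nonzero $\lambda\in\sigma(T)$ is isolated, hence an eigenvalue by (c) of Theorem \ref{st}: this is (a) and (c)(i). Its eigenspace is $P^{(T)}(\{\lambda\})(\cH)\subset P_{|\lambda|}(\cH)$, so finite-dimensional, giving (b); and (c)(ii) is the spectral radius formula (\ref{spectralradius2}), the supremum being attained since above any $\epsilon$ the spectrum is finite.

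For (d) and (e) I would assemble an orthonormal system $\{x_{\lambda,a}\}$ from bases of the eigenspaces (finite-dimensional for $\lambda\neq0$) together with a basis of $\ker T$; distinct eigenspaces are orthogonal by exercise \ref{enr}(iii). To see this is a Hilbertian basis, let $\cH_0$ be the closure of its span: $\cH_0^\perp$ reduces $T$, and $T|_{\cH_0^\perp}$ is self-adjoint and trace class with no eigenvectors outside $\ker T$, so by (a) and (c)(ii) applied to it one gets $T|_{\cH_0^\perp}=0$, forcing $\cH_0^\perp\subset\ker T\subset\cH_0$ and hence $\cH_0^\perp=\{0\}$. Since each $x_{\lambda,a}$ is also an eigenvector of $|T|$ with eigenvalue $|\lambda|$, absolute convergence in (d) reads $\sum_\lambda d_\lambda|\lambda| = \sum_{\lambda,a}\langle x_{\lambda,a}| |T| x_{\lambda,a}\rangle = tr\,|T| = ||T||_1<\infty$, and evaluating the trace on this eigenbasis yields $tr\,T = \sum_{\lambda,a}\lambda = \sum_\lambda d_\lambda\lambda$ (the $\lambda=0$ terms dropping out). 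Finally (e) holds because the spectral measure is carried by the atoms $\{\lambda\}$, so $P^{(T)}=\sum_\lambda P^{(T)}(\{\lambda\})$ strongly and $T=\int_\bR\lambda\,dP^{(T)} = \sum_\lambda \lambda P_\lambda$ with $P_\lambda=P^{(T)}(\{\lambda\})$.

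The main obstacle is the finiteness step — establishing $\dim P_\epsilon(\cH)\leq ||T||_1/\epsilon$ and ruling out accumulation of spectrum away from $0$ — which is precisely the content making $T$ behave like a compact operator; once it is in place, the remaining assertions are bookkeeping over the spectral theorem together with the linearity and cyclicity of the trace.
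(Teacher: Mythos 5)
Your argument is sound, and it follows the route the paper itself implicitly endorses: the text does not prove this proposition (it defers to the reference and merely remarks that (d) and (e) ``easily follow'' from (a)--(c) and the spectral theory already developed), and your reduction to Theorem \ref{st} plus the finiteness estimate $\dim P^{(T)}(E_\epsilon)(\cH)\leq ||T||_1/\epsilon$ is exactly the standard way to carry that out. The key step is stated slightly backwards -- you speak of ``an orthonormal basis $z_1,\dots,z_m$ of $P_\epsilon(\cH)$'' before knowing that space is finite dimensional -- but the inequality $m\epsilon\leq ||T||_1$ for an arbitrary finite orthonormal set in $P_\epsilon(\cH)$ is what your computation actually proves, and that is all you need. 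The completeness argument for the eigenbasis via $T|_{\cH_0^\perp}=0$ and the identification $tr|T|=\sum_\lambda d_\lambda|\lambda|$ are both correct.

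The one clause of the statement you do not address is the final sentence of (e): that the series $\sum_\lambda\lambda P_\lambda$ converges in the \emph{uniform} topology when the eigenspaces are suitably ordered. This needs a separate (one-line) remark: ordering the nonzero eigenvalues so that $|\lambda_1|\geq|\lambda_2|\geq\cdots$, the tail of the sum is $\int_{\{|\lambda|<|\lambda_{n+1}|\}\cup\{0\}}\lambda\,dP^{(T)}$, whose operator norm is bounded by $|\lambda_{n+1}|\to 0$ by Remark \ref{rembound}. Likewise the unconditional (re-orderable) strong convergence claimed in (e) deserves a word -- it follows from the $\sigma$-additivity of the PVM, whose partial sums are sums of pairwise orthogonal vectors and hence order-independent, as noted in (a) of Remark \ref{remD}. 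These are small additions, not flaws in the structure of your proof.
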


\begin{corollary}
$tr\:  : \gB_1(\cH)\to \bC$ is continuous with respect to the norm $||\:\:||_1$ because $|tr T| \leq tr |T| = ||T||_1$ if $T \in \gB_1(\cH)$.
\end{corollary}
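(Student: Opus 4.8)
The plan is to reduce the continuity statement to a single norm estimate. Since $tr : \gB_1(\cH) \to \bC$ is linear (as recorded in the remark following the proposition that defines the trace), and since $\gB_1(\cH)$ is a normed space under $||\cdot||_1$, the general equivalence between boundedness and continuity of linear maps recalled just after (\ref{bounded}) shows that it is enough to prove the bound $|tr\, T| \leq 1\cdot ||T||_1$ for every $T \in \gB_1(\cH)$. Because the identity $tr\, |T| = ||T||_1$ has already been noted, this in turn amounts to establishing the inequality $|tr\, T| \leq tr\, |T|$, which is exactly the content stated in the corollary.

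To prove $|tr\, T| \leq tr\, |T|$ I would compute the trace in a cleverly chosen Hilbertian basis. The operator $|T| = \sqrt{T^*T}$ is selfadjoint, positive (its spectrum lies in $[0,+\infty)$ by exercise \ref{espos}, which is what makes the square root meaningful), and trace class. Hence the spectral proposition for selfadjoint trace class operators furnishes a Hilbert basis $\{x_n\}$ of eigenvectors of $|T|$, say $|T|x_n = s_n x_n$ with $s_n \geq 0$. Since the trace is independent of the chosen basis, I evaluate it on this particular one: $tr\, T = \sum_n \langle x_n | T x_n\rangle$.

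The key computation is that $||T x_n|| = s_n$. Indeed, by the functional calculus (Proposition \ref{propint2}) one has $|T|^2 = T^*T$, so $T^*T x_n = s_n^2 x_n$ and therefore $||T x_n||^2 = \langle x_n | T^*T x_n\rangle = s_n^2$. Cauchy--Schwarz then gives $|\langle x_n | T x_n\rangle| \leq ||x_n||\,||T x_n|| = s_n$, and since the series for $tr\, T$ converges absolutely the triangle inequality may be applied term by term to yield
$$|tr\, T| \leq \sum_n |\langle x_n | T x_n\rangle| \leq \sum_n s_n = \sum_n \langle x_n | |T| x_n\rangle = tr\, |T| = ||T||_1\:.$$

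The points that require attention are bookkeeping rather than genuine difficulty: one must invoke the basis-independence of the trace to be allowed to work in the eigenbasis of $|T|$, and the absolute convergence of $\sum_n \langle x_n | T x_n\rangle$ to justify the termwise estimate and the reordering. The whole argument rests on the single identity $||T x_n|| = s_n$, so that no polar decomposition of $T$ is needed; this is what makes the main obstacle essentially evaporate.
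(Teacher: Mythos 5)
Your proof is correct, and it takes a genuinely different route from the paper's. The paper invokes the polar decomposition $T = U|T|$, expands the trace in an eigenbasis $N$ of $|T|$ as $\sum_{u\in N}\langle u|Uu\rangle\lambda_u$, and then bounds the diagonal matrix elements of the partial isometry by $|\langle u|Uu\rangle|\leq \|Uu\|\leq 1$. You bypass the polar decomposition entirely: from $|T|^2=T^*T$ you get $\|Tx_n\|^2=\langle x_n|T^*Tx_n\rangle=s_n^2$, and Cauchy--Schwarz applied directly to $\langle x_n|Tx_n\rangle$ gives the same termwise bound $|\langle x_n|Tx_n\rangle|\leq s_n$. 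Both arguments rest on the same ingredients from the paper (the eigenbasis of the selfadjoint trace class operator $|T|$, basis-independence and absolute convergence of the trace), but yours trades the polar decomposition theorem for the elementary identity $\|Tx\|=\||T|x\|$, which makes it slightly more self-contained; the paper's version has the mild advantage that the polar decomposition is a standard tool it cites anyway and the estimate $|\langle u|Uu\rangle|\leq 1$ generalises to bounding $|tr(AT)|\leq\|A\|\,\|T\|_1$ for arbitrary $A\in\gB(\cH)$. Your bookkeeping (reduction of continuity to the bound $|tr\,T|\leq\|T\|_1$ via linearity, and the appeal to absolute convergence before estimating termwise) is all justified by statements already recorded in the paper.
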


\begin{proof} If $T \in \gB(\cH)$, we have the {\em polar decomposition} $T= U|T|$ (see, e.g., \cite{moretti}) where $U \in \gB(\cH)$ is isometric on $Ker(T)^\perp$ and $Ker(U)= Ker(T)= Ker (|T|)$ so that, in particular $||U||\leq 1$. Let $N$ be a Hilbertian basis of $\cH$ made of eigenvectors of $|T|$
(it exists for the previous theorem since $|T|$ is trace class).
We have
$$|tr \:T| = \left| \sum_{u\in N}\langle u|U\: |T| u \rangle\right|= \left| \sum_{u\in N}\langle u|U u \rangle \lambda_u\right|
\leq \sum_{u\in N} |\lambda_u|\:|\langle u|U u \rangle|\:.$$
Next observe that $|\lambda_u|= \lambda_u$ because $|T|\geq 0$ and $|\langle u|U u \rangle| \leq ||u||\: ||Uu|| \leq 1 ||Uu|| \leq ||u|| =1$ and thus,
$|tr \:T| \leq \sum_{u\in N} \lambda_u = \sum_{u\in N} \langle u ||T| u\rangle = tr |T| = ||T||_1$.
\end{proof}

\subsubsection{The notion of quantum state and the crucial theorem by Gleason} As commented in (a) in remark \ref{probtruble}, the probabilistic interpretation of quantum states is not well defined because there is no a true probability measure in view of the fact that there are incompatible observables. The idea is to re-define the notion of probability in the  bounded, orthocomplemented, $\sigma$-complete lattice like  ${\cal L}(\cH)$ instead of on a $\sigma$-algebra. Exactly as in CM, where the generic states 
are probability measures on Boolean lattice ${\cal B}(\Gamma)$ of the elementary properties of the system (Sect. \ref{exf}), we can think of states of a quantum system as $\sigma$-additive probability measures over the non-Boolean lattice of the elementary observables ${\cal L}(\cH)$.
\begin{definition}\label{defstates}
{\em Let $\cH$ be a complex Hilbert space. A {\bf quantum state} in $\cH$ is a map $\rho : {\cal L}(\cH) \to [0,1]$ such that the following requirements are satisfied.

(1) $\rho(I) =1$\:.

(2) If $\{P_n\}_{n \in N}\subset  {\cal L}(\cH)$, for $N$ at most countable satisfies 
$P_k(\cH)\perp P_h(\cH)$ when $h\neq k$ for  $h,k \in N$, then
\beq \rho(\vee_{k\in N}P_k) = \sum_{k \in N} \rho(P_k)\:.\label{sigmaadd}\eeq
The set of the states in $\cH$ will be denoted by $\gS(\cH)$. \hfill $\blacksquare$}
\end{definition}

\remark $\null$ 

{\bf (a)} The condition $P_k(\cH)\perp P_h(\cH)$ is obviously equivalent to $P_kP_h =0$. Since (taking the adjoint) we also obtain $P_h P_k =0= P_kP_h$, we conclude that we are dealing with pairwise compatible elementary observables. Therefore Proposition \ref{propclass} permits us to equivalently re-write the $\sigma$-additivity (2) as follows.\\
(2)  If $\{P_n\}_{n \in N}\subset  {\cal L}(\cH)$, for $N$ at most countable satisfies 
$P_k P_h =0$ when $h\neq k$ for  $h,k \in N$, then
\beq \rho\left(\sum_{k \in N}P_k\right) = \sum_{k \in N} \rho(P_k)\:,\label{sigmaadd2}\eeq
the sum on the left hand side  being computed with respect to the strong operator topology if $N$ is infinite.

{\bf (b)} Requirement (2), taking (1) into account implies $\rho(0)=0$.

{\bf (c)} Quantum states do exist. It is immediately proved that, in fact, $\psi \in \cH$ with $||\psi|| =1$ defines a quantum state $\rho_\psi$ as
\beq \rho_\psi(P) = \langle \psi|P \psi\rangle \quad P \in {\cal L}(\cH) \label{purestate}\:.\eeq
This is in nice agreement with what we already know and proves that these types of quantum states are one-to-one with the elements of $P\cal H$ as well known.\\
However these states do not exhaust $\gS(\cH)$. In fact, it immediately arises from Definition \ref{defstates} that the set of the states is convex: If $\rho_{1},\ldots, \rho_{n} \in \gS(\cH)$ then $\sum_{j=1}^n p_k\rho_k \in \gS(\cH)$ if $p_k \geq 0$ and $\sum_{k=1}^n p_k =1$. These convex combinations of states generally do not have the form $\rho_\psi$.

{\bf (d)} Restricting ourselves to a maximal set ${\cal L}_0$ of pairwise commuting projectors, which in view of Proposition \ref{propclass} has the abstract structure of a $\sigma$-algebra, a quantum state $\rho$ reduces thereon to a standard probability measure. In this sense the ``quantum probability'' we are considering extends the classical notion. Differences show up just when one deals with conditional probability  involving incompatible elementary observables. \hfill $\blacksquare$\\

\noindent An interesting case of (c) in the remark above is a convex combination of states induced by unit vectors as in (\ref{purestate}), where $\langle \psi_k|\psi_h\rangle = \delta_{hk}$,
$$\rho = \sum_{k=1}^n p_k \rho_{\psi_k}\:.$$
By direct inspection, completing the finite orthonormal system $\{\psi_k\}_{k =1,\ldots,n}$  to a full Hilbertian basis of $\cH$, one quickly proves that,  defining
\beq T = \sum_{k=1}^n p_k \langle \psi_k|\:\: \rangle \psi_k\label{T}\eeq
$\rho(P)$ can be computed as 
$$\rho(P) = tr(T P) \quad P \in {\cal L}(\cH)$$
In particular it turns out that
$T$  is in $\gB_1(\cH)$, satisfies 
$T\geq 0$ (so it is selfadjoint for (3) in exercise \ref{vex}) and $tr \: T =1$. 
As a matter of fact, (\ref{T}) is just the spectral decomposition of $T$, whose spectrum is 
$\{p_k\}_{k=1,\ldots,n}$.
This result is general \cite{moretti}
\begin{proposition}\label{proppregleason} Let $\cH$ be a complex Hilbert space and let $T \in \gB_1(\cH)$ satisfy
$T\geq 0$ and $Tr\: T =1$, then the map 
$$\rho_T : {\cal L}(\cH) \ni P \mapsto tr(T P)$$
is well defined and $\rho_T \in \gS(\cH)$.
\end{proposition}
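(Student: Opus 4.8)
The plan is to verify, in order, that $\rho_T$ takes values in $[0,1]$, that $\rho_T(I)=1$, and finally that $\rho_T$ satisfies the $\sigma$-additivity condition (2) of Definition \ref{defstates}. For well-definedness I first note that, since $T\in\gB_1(\cH)$ and every $P\in{\cal L}(\cH)$ lies in $\gB(\cH)$, the two-sided ideal property (item (b)(i) of the listed properties of trace class operators) guarantees $TP\in\gB_1(\cH)$, so that $tr(TP)$ is meaningful.

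To locate the values in $[0,1]$ I would use the cyclic property of the trace together with $P=P^*=PP$ to write $tr(TP)=tr(PTP)$, and observe that $PTP\geq 0$ because $\langle x|PTP\,x\rangle=\langle Px|T\,Px\rangle\geq 0$ for all $x$ (here $T\geq 0$). Since the trace of a positive trace class operator is a sum of non-negative diagonal terms over any Hilbertian basis, this gives $tr(TP)\geq 0$. For the upper bound I would apply the same reasoning to the complementary projector $I-P\in{\cal L}(\cH)$ and use linearity of $tr$: from $1=tr\,T=tr(TP)+tr(T(I-P))$ and $tr(T(I-P))\geq 0$ one gets $tr(TP)\leq 1$. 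Taking $P=I$ gives at once $\rho_T(I)=tr\,T=1$.

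The substantial part is $\sigma$-additivity. Given pairwise orthogonal $\{P_k\}_{k\in N}$, I would first invoke Proposition \ref{proplat} together with Proposition \ref{propclass}(i) (the $P_k$ commute, being orthogonal) to identify $P:=\vee_{k\in N}P_k$ with the strong-operator limit of the partial sums $Q_M:=\sum_{k=1}^M P_k$. For finite $N$ the claim is immediate from linearity of $tr$, so the real point is the limit interchange when $N$ is countably infinite, i.e. showing $tr(TR_M)\to 0$, where $R_M:=P-Q_M=\sum_{k>M}P_k$ is again an orthogonal projector with $R_M\to 0$ strongly. Here I would bring in the spectral theory of $T$: since $T\geq 0$ is trace class, there is a Hilbertian basis $\{u_j\}$ of eigenvectors with eigenvalues $\lambda_j\geq 0$ and $\sum_j\lambda_j=tr\,T=1$. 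Computing the trace on this basis, using cyclicity ($tr(TR_M)=tr(R_MT)$) and $R_M=R_M^*=R_MR_M$, yields the key formula
\[
tr(TR_M)=\sum_j \lambda_j\,\langle u_j|R_M u_j\rangle=\sum_j\lambda_j\,\|R_M u_j\|^2 .
\]

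I expect this limit interchange to be the only genuine obstacle. For each fixed $j$ one has $R_M u_j\to 0$ as $M\to\infty$ (strong convergence of $R_M$), while $0\leq\lambda_j\|R_M u_j\|^2\leq\lambda_j$ with $\sum_j\lambda_j=1<+\infty$; the dominated convergence theorem applied to the counting measure then forces $tr(TR_M)\to 0$. Consequently the non-negative partial sums $\sum_{k=1}^M tr(TP_k)=tr(TQ_M)$ converge to $tr(TP)$, which is exactly $\rho_T(\vee_{k\in N}P_k)=\sum_{k\in N}\rho_T(P_k)$. Assembling these facts shows that $\rho_T:{\cal L}(\cH)\to[0,1]$ satisfies (1) and (2), so $\rho_T\in\gS(\cH)$. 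The decisive input is the summability $\sum_j\lambda_j<+\infty$ coming from $T\in\gB_1(\cH)$, which is precisely what powers the dominated-convergence argument.
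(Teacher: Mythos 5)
Your proof is correct and complete. Note that the paper does not actually prove this proposition in the text --- it only remarks that ``this result is general'' and defers to the cited reference --- so there is no in-paper argument to compare against; your route (well-definedness from the two-sided ideal property, values in $[0,1]$ via cyclicity of the trace and positivity of $PTP$ applied to both $P$ and $I-P$, then $\sigma$-additivity by expanding $tr(TR_M)=\sum_j\lambda_j\|R_Mu_j\|^2$ on an eigenbasis of $T$ and invoking dominated convergence for the counting measure, with domination supplied by $\sum_j\lambda_j=tr\,T=1$) is precisely the standard argument. A marginally slicker finish for the last step: since $\|Q_Mu_j\|^2=\sum_{k\leq M}\|P_ku_j\|^2$ increases to $\|Pu_j\|^2$, monotone convergence for double series with non-negative terms gives the interchange $\sum_j\lambda_j\|Q_Mu_j\|^2\to\sum_j\lambda_j\|Pu_j\|^2$ directly, without introducing the complementary projector $R_M$; but your version is equally valid.
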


\noindent The very remarkable fact  is that these operators exhaust $\gS(\cH)$
 if $\cH$ is separable with dimension $\neq 2$, as established by Gleason in a celebrated theorem we restate re-adapting it  to these lecture notes (see \cite{moretti} for a the original statement and \cite{gleasonbook} for a general treatise on the subject).

\begin{theorem}[Gleason's Theorem]\label{Gleasontheorem}
Let $\cH$ be a complex Hilbert space of finite dimension $\neq	 2$, or infinite-dimensional and  separable. If $\rho \in \gS(\cH)$ there exists a unique operator $T \in \gB_1(\cH)$ with $T\geq 0$
and $tr \: T =1$ such that $tr(T P) = \rho(P)$ for every $P \in {\cal L}(\cH)$.
\end{theorem}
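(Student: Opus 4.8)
The plan is to reduce the statement to the classical \emph{frame-function} problem and then to the three-dimensional case. Given $\rho \in \gS(\cH)$, I would first define, for every unit vector $x \in \cH$, the number $f(x) := \rho(\langle x|\cdot\rangle x)$, i.e. the value of $\rho$ on the one-dimensional projector onto $\mbox{span}(x)$. Since for any orthonormal basis $\{e_i\}$ one has $I = \sum_i \langle e_i|\cdot\rangle e_i$ with the summands mutually orthogonal projectors, the $\sigma$-additivity in Definition \ref{defstates} together with $\rho(I)=1$ gives $\sum_i f(e_i) = 1$. Thus $f$ is a bounded ($0 \leq f \leq 1$) frame function of weight $1$: its sum over any orthonormal basis equals $1$. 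Moreover $f(e^{i\theta}x)=f(x)$, because $x$ and $e^{i\theta}x$ span the same complex line and hence define the same projector; this phase invariance will be what forces complex-linearity at the end. The whole theorem then rests on proving that such an $f$ is necessarily \emph{regular}, namely $f(x) = \langle x|Tx\rangle$ for a bounded selfadjoint $T$.

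The core is the case of a three-dimensional real Hilbert space, since $f$ restricts to a bounded frame function on every real three-dimensional subspace $V$ of the form $\mbox{span}_{\bR}(e_1,e_2,e_3)$ for a complex-orthonormal triple (on such $V$ the complex inner product is real and coincides with the real one, so real-orthonormal bases of $V$ are complex-orthonormal and the frame condition restricts). Here I would follow Gleason's two-step route. First, the hard analytic step: prove that a bounded frame function on $S^2$ is \emph{continuous}, carried out by purely geometric estimates that control the oscillation of $f$, relating its values at nearby points through rotations of orthonormal frames and showing $f$ cannot jump. Second, once $f$ is continuous, expand it in real spherical harmonics $f = \sum_{\ell \geq 0}\sum_m c_{\ell m} Y_{\ell m}$; the identity $f(x)+f(y)+f(z)=1$ for every orthonormal triple, being $SO(3)$-invariant, forces $c_{\ell m}=0$ for all odd $\ell$ and all even $\ell \geq 4$, leaving only the $\ell=0$ part (the trace term) and the $\ell=2$ part (a traceless symmetric quadratic form). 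Hence $f(x)=\langle x|T_V x\rangle$ on $V$ for a symmetric $T_V$.

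Next I would glue the local operators. Each real three-dimensional subspace $V$ yields a symmetric $T_V$ with $f(x)=\langle x|T_V x\rangle$ on its unit sphere; since a quadratic form is determined by its values on unit vectors, the $T_V$ agree on overlaps and assemble into a single real quadratic form $q(x) := \|x\|^2 f(x/\|x\|)$ on $\cH$ with $0 \leq q(x)\leq \|x\|^2$. The phase invariance $f(e^{i\theta}x)=f(x)$ guarantees that $q$ is compatible with the complex structure, so polarization produces a genuine bounded complex-linear selfadjoint operator $T$ with $0 \leq T \leq I$ and $\langle x|Tx\rangle=f(x)$. Evaluating on one orthonormal basis, $\sum_i \langle e_i|Te_i\rangle = \sum_i f(e_i) = 1 < +\infty$; combined with $T \geq 0$ (so $|T|=T$) this shows $T \in \gB_1(\cH)$ with $tr\,T = 1$.

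Finally, to obtain $\rho(P)=tr(TP)$ for an arbitrary $P \in {\cal L}(\cH)$, I would take an orthonormal basis $\{u_j\}$ of $P(\cH)$, write $P = \sum_j \langle u_j|\cdot\rangle u_j$ as a sum of mutually orthogonal rank-one projectors, and apply $\sigma$-additivity: $\rho(P) = \sum_j f(u_j) = \sum_j \langle u_j|Tu_j\rangle = tr(TP)$. Uniqueness is immediate, since $tr(T\langle x|\cdot\rangle x) = \langle x|Tx\rangle$ determines $T$ by polarization. The finite cases $\dim \neq 2$ are covered directly (dimension $1$ being trivial), and the main obstacle throughout is the continuity step in dimension three: showing that a merely bounded frame function on $S^2$ is continuous is the genuinely deep part of the argument. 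It is precisely here that $\dim \neq 2$ enters: in dimension two every orthonormal basis has only two elements, so the frame condition degenerates to the pairing relation $f(x)+f(y)=1$ for $x \perp y$, which admits a wealth of irregular, non-quadratic solutions, and no complex-orthonormal triple exists to feed into the three-dimensional reduction.
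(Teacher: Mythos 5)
Your outline is a faithful reconstruction of Gleason's original argument, and it is worth noting that the paper itself does not prove the existence part of this theorem at all: it defers existence to the literature and only supplies the uniqueness argument (from $\langle x|(T-T')x\rangle=0$ for all $x$, which forces $T=T'$ in a complex Hilbert space), an argument you reproduce correctly via polarization. Every step you actually spell out is sound: the frame function $f(x)=\rho(\langle x|\cdot\rangle x)$ summing to $1$ over any orthonormal basis by $\sigma$-additivity, the restriction to completely real three-dimensional subspaces spanned by complex-orthonormal triples, the recovery of $T\in\gB_1(\cH)$ with $tr\, T=1$ from $T\geq 0$ and $\sum_i\langle e_i|Te_i\rangle=1$, and the passage from rank-one projectors to arbitrary $P\in{\cal L}(\cH)$ by $\sigma$-additivity. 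The genuine gap is that the proposal remains an outline precisely where the entire mathematical content of the theorem is concentrated: the regularity of a bounded non-negative frame function on a real three-dimensional space. You name the two sub-steps (continuity of $f$ on $S^2$ via geometric oscillation estimates, then the spherical-harmonics computation eliminating all odd $\ell$ and all even $\ell\geq 4$) but supply neither; the continuity argument in particular is several pages of delicate elementary spherical geometry in Gleason's paper and cannot be treated as a black box in a proof of this statement.

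A second, smaller compression occurs in the gluing step. Two vectors $x$ and $ix$ never lie in a common completely real subspace, since $\langle x|ix\rangle=i\|x\|^2\notin\bR$; consequently $q(x):=\|x\|^2f(x/\|x\|)$ is only known to satisfy the parallelogram identity within each completely real subspace, and the sentence asserting that phase invariance makes $q$ ``compatible with the complex structure'' conceals a separate lemma (Gleason's treatment of complex two-dimensional subspaces inside an ambient three-dimensional one) that is genuinely needed to assemble the family $\{T_V\}$ into a single complex sesquilinear form. Neither objection indicates a wrong approach --- this is the standard, and essentially the only elementary, route to the theorem --- but as written the proposal carries out the easy reductions and defers the theorem's core.
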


\noindent Concerning the existence of $T$, Gleason's proof works for real Hilbert spaces too. If the Hilbert space is complex, the operator $T$ associated to  $\rho$ is unique for the following reason. Any other $T'$ of trace class such that $\rho(P)=tr(T'P)$ 
for any $P\in {\cal L}(\cH)$ must also satisfy $\langle x|(T-T')x\rangle =0$ for any $x\in \cH$. 
If $x=0$ this is clear, while if $x\neq 0$ we may complete the vector $x/||x||$ to a basis, in which 
$tr((T-T')P_x)=0$ reads $||x||^{-2}\langle x|(T-T')x\rangle =0$, where $P_x$ is the projector onto $span(x)$.
By  (3) in exercise \ref{vex}, we obtain $T-T'=0$\footnote{In a real Hilbert space $\langle x|Ax\rangle=0$ for all $x$ does not imply $A=0$. Think of real anti symmetric matrices in $\bR^n$ equipped with the standard scalar product.}.

\remark\label{ossgelason} $\null$

{\bf (a)} Imposing $\dim\cH \neq 2 $ is mandatory,  a well known counterexample can be found, e.g. in \cite{moretti}.

{\bf (b)} Particles with 
 spin $1/2$, like electrons, admit a Hilbert space -- in which the observable spin is defined -- of dimension $2$.  
 The same occurs to the Hilbert space in which the polarisation of light is described (cf. helicity of photons).  When these systems are described in full, however, for instance including  degrees of freedom relative to position or momentum, they are representable on a separable Hilbert space of  infinite dimension.

{\bf (c)} Gleason's characterization of states has an important consequence known asthe  {\em Kochen-Specker theorem}. It proves that in QM there are no states assigning probability $1$
to some  elementary observables and $0$ to the remaining ones, differently to what happens in CM. 
\begin{theorem}[Kochen-Specker Theorem] Let $\cH$ be a complex Hilbert space of finite dimension $\neq 2$, or infinite-dimensional and  separable. There is no quantum state $\rho : {\cal L}(\cH) \to [0,1]$, in the sense of Def. \ref{defstates}, such that
$\rho({\cal L}(\cH)) = \{0,1\}$
\end{theorem}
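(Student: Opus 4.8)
The plan is to derive the result directly from Gleason's Theorem (Theorem \ref{Gleasontheorem}), whose hypotheses on $\cH$ are exactly those assumed in the Kochen--Specker statement. I would argue by contradiction: suppose some $\rho \in \gS(\cH)$ had range $\{0,1\}$. By Gleason's Theorem there is a unique $T \in \gB_1(\cH)$ with $T \geq 0$ and $tr\,T = 1$ such that $\rho(P) = tr(TP)$ for every $P \in {\cal L}(\cH)$. The strategy is to show that these constraints, together with $\rho$ taking only the values $0$ and $1$, force $T$ to be a rank-one orthogonal projector, and then to exhibit an elementary observable on which such a $\rho$ must take a value strictly between $0$ and $1$.

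First I would exploit the spectral structure of $T$. Since $T=T^* \in \gB_1(\cH)$ and $T \geq 0$, the spectral theory of trace-class operators provides a Hilbert basis of eigenvectors $\{x_k\}$ with eigenvalues $\lambda_k \geq 0$ and $\sum_k \lambda_k = tr\,T = 1$. Writing $P_{x_k}$ for the orthogonal projector onto $\mathrm{span}(x_k)$, one computes $\rho(P_{x_k}) = tr(T P_{x_k}) = \langle x_k | T x_k \rangle = \lambda_k$. By hypothesis each $\lambda_k \in \{0,1\}$; combined with $\sum_k \lambda_k = 1$, this forces exactly one eigenvalue to equal $1$ and all the others to vanish. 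Hence $T = P_\psi$, the orthogonal projector onto the line spanned by a single unit vector $\psi$, so that $\rho(P) = \langle \psi | P \psi \rangle$ for all $P \in {\cal L}(\cH)$.

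To close the argument I would produce a projector on which $\rho$ is neither $0$ nor $1$. Because $\dim \cH \geq 2$ (the value $\dim = 2$ being excluded precisely so that Gleason applies, and the admissible finite cases having $\dim \geq 3$), there is a unit vector $\phi$ with $0 < |\langle \psi | \phi \rangle| < 1$: one may take $\phi$ to be a suitable normalised linear combination of $\psi$ and a unit vector orthogonal to it. Then $\rho(P_\phi) = |\langle \psi | \phi \rangle|^2 \in (0,1)$, contradicting $\rho({\cal L}(\cH)) = \{0,1\}$.

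The genuinely substantial input is Gleason's Theorem; once it is granted, the remaining steps are short. The step requiring the most care is the passage from ``$\rho$ takes values in $\{0,1\}$'' to ``$T$ is rank one'', which rests on the normalisation $tr\,T = 1$ collapsing the $0/1$-valued eigenvalue sequence to a single $1$. As an alternative I would note the purely topological route: the map $x \mapsto \langle x | T x\rangle$ is continuous on the (connected) unit sphere of $\cH$ with image inside the discrete set $\{0,1\}$, hence constant; the constant cannot be $0$ (else $T=0$, against $tr\,T=1$), so it equals $1$, forcing $T = I$ by polarisation, which is impossible for $\dim\cH \geq 2$ (and in infinite dimensions already incompatible with $T \in \gB_1(\cH)$).
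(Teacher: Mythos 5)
Your proof is correct, and it takes a genuinely different route from the one in the paper. The paper's argument is purely topological: it considers the map $\bS \ni x \mapsto \langle x|Tx\rangle = \rho(\langle x|\:\:\rangle x)$ on the unit sphere $\bS$, notes it is continuous with image in the discrete set $\{0,1\}$ and that $\bS$ is connected, and concludes the map is constant -- the constant $0$ being incompatible with $tr\, T=1$ and the constant $1$ forcing $T=I$, again incompatible with $tr\, T =1$ when $\dim\cH \geq 3$ or $\dim \cH = \infty$. You instead identify $T$ explicitly: diagonalising the positive trace-class operator $T$ in a Hilbert basis of eigenvectors, you observe that each eigenvalue equals $\rho$ of the corresponding rank-one projector, hence lies in $\{0,1\}$, and the normalisation $\sum_k \lambda_k = 1$ collapses the spectrum to a single eigenvalue $1$, so $T = \langle\psi|\:\:\rangle\psi$; you then exhibit a concrete projector $P_\phi$ with $\rho(P_\phi)=|\langle\psi|\phi\rangle|^2 = 1/2$. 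Both arguments put all the real weight on Gleason's theorem. The paper's version is shorter and needs nothing about $T$ beyond boundedness; yours uses the spectral theory of trace-class operators already developed in Sect.\ 3, but in exchange it is more informative -- it shows that any two-valued state would have to be a vector state and pinpoints exactly where two-valuedness fails. Your closing ``alternative'' is in fact the paper's own proof, and your rendition of it is cleaner: the paper's final clause (``$tr\, T \neq 2$'') is garbled, whereas you correctly argue via polarisation that $\langle x|Tx\rangle \equiv 1$ forces $T=I$, which cannot have unit trace in the admissible dimensions. The only shared caveat is that both proofs tacitly exclude the degenerate case $\dim\cH =1$, where ${\cal L}(\cH)=\{0,I\}$ and a two-valued state trivially exists; you flag this implicitly by restricting to $\dim\cH\geq 3$ in the finite case.
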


\begin{proof} Define $\bS := \{x\in \cH \:|\: ||x||=1\}$ endowed with the topology induced by $\cH$, and let $T\in \gB_1(\cH)$ be the  representative of $\rho$ using  Gleason's theorem. The map $f_\rho: \bS \ni x \mapsto \langle x|Tx\rangle = \rho(\langle x| \:\: \rangle x) \in \bC$ is continuous because $T$ is bounded. We have $f_\rho(\bS) \subset \{0,1\}$, where $\{0,1\}$ is equipped with the topology induced by $\bC$. Since $\bS$ is connected its image must be connected also. So either $f_\rho(\bS)=\{0\}$ or $f_\rho(\bS)=\{1\}$. In the first case $T=0$ which is impossible because $tr T =1$, in the second case $tr T \neq 2$ which is similarly impossible.
\end{proof}
\noindent This negative  result produces no-go theorems in some attempts to explain QM in terms of CM introducing {\em hidden variables} \cite{stanford}.\hfill $\blacksquare$\\

\remark {\em In view of Proposition \ref{proppregleason} and 
 Theorem  \ref{Gleasontheorem}, assuming that $\cH$ has finite dimension or is separable,  we henceforth  identify $\gS(\cH)$ with the subset of $\gB_1(\cH)$ of positive operators with unit trace. We simply disregard the states in $\cH$ with dimension $2$ which are not of this form especially taking (b) in remark \ref{ossgelason} into account.} \hfill $\blacksquare$\\

\noindent We are in a position  to state some definitions of interest for physicists, especially the distinction between pure and mixed states, so we proceed  to analyse the structure of the space of the states. To this end,
 we remind the reader that, if  $C$ is a convex set in a vector space,  $e\in C$ is called  {\bf extreme} if it cannot be written as  $e= \lambda x + (1-\lambda)y$, with $\lambda \in (0,1)$, $x,y \in C\setminus\{e\}$.\\
We have the following simple result whose proof can be found in \cite{moretti}.

\begin{proposition}\label{extremalstates} Let $\cH$ be a complex separable Hilbert space.\\
{\bf (a)} $\gS(\cH)$ is a convex closed subset in $\gB_1(\cH)$ whose extreme points are those of the form:
$\rho_\psi := \langle \psi|\:\:\rangle\psi$ for every vector $\psi\in \cH$ with $||\psi||=1$.
(This sets up a bijection between extreme states and elements of $P\cH$.)\\
{\bf (b)} A state $\rho \in \gS(\cH)$  is extreme if and only if $\rho\rho =\rho$.
(All the elements of $\gS(\cH)$ however satisfy $\langle x| \rho \rho x \rangle \leq \langle x| \rho x \rangle$ for all $x\in \cH$.)\\
{\bf (c)} Any state $\rho \in \gS(\cH)$ is a linear combination of extreme states, including infinite combinations in the strong operator topology. In particular there is always a decomposition
$$\rho = \sum_{\phi\in N} p_\phi \langle \phi|\:\:\rangle \phi,$$
where $N$ is an  eigenvector basis for $\rho$, $p_\phi\in [0,1]$ for any $\phi\in N$, and  $$\sum_{\phi\in N}p_\phi=1\:.$$
\end{proposition}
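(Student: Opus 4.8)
The plan is to work throughout with the identification, furnished by Gleason's Theorem~\ref{Gleasontheorem} and Proposition~\ref{proppregleason}, of $\gS(\cH)$ with the set of operators $T\in\gB_1(\cH)$ satisfying $T\geq 0$ and $tr\,T=1$, writing $\rho$ both for the state and for its density operator and setting $\rho_\psi:=\langle\psi|\:\:\rangle\psi$. I would establish (c) first, since the remaining parts lean on it. Part (c) is essentially a restatement of the spectral decomposition of a self-adjoint trace-class operator quoted earlier: applying it to $\rho=\rho^*\in\gB_1(\cH)$ yields a Hilbert basis $N$ of eigenvectors $\phi$ with eigenvalues $p_\phi$ and $\rho=\sum_{\phi\in N}p_\phi\langle\phi|\:\:\rangle\phi$ in the strong operator topology. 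Positivity of $\rho$ (via exercise~\ref{espos}) gives $p_\phi\geq 0$, while $\sum_{\phi\in N}p_\phi=tr\,\rho=1$ forces each $p_\phi\in[0,1]$, which is exactly (c). Convexity and closedness in (a) are routine: a convex combination of positive unit-trace operators is again positive with unit trace, and if $T_n\to T$ in $\|\cdot\|_1$ then $tr\,T=\lim tr\,T_n=1$ by continuity of the trace, while $\|\cdot\|\leq\|\cdot\|_1$ gives $\langle x|Tx\rangle=\lim\langle x|T_nx\rangle\geq 0$, so $T\in\gS(\cH)$.

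The substantive point in (a) is identifying the extreme points with the $\rho_\psi$. For one direction I would use the elementary lemma that a positive operator $A$ with $\langle x|Ax\rangle=0$ satisfies $Ax=0$ (write $A=\sqrt{A}\,\sqrt{A}$, so $\|\sqrt{A}\,x\|^2=0$). Suppose $\rho_\psi=\lambda\sigma_1+(1-\lambda)\sigma_2$ with $\lambda\in(0,1)$ and $\sigma_i\in\gS(\cH)$. For every unit $x\perp\psi$ one gets $\lambda\langle x|\sigma_1 x\rangle+(1-\lambda)\langle x|\sigma_2 x\rangle=\langle x|\rho_\psi x\rangle=0$, and since both summands are $\geq 0$ each vanishes, whence $\sigma_i x=0$ by the lemma. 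Thus each self-adjoint $\sigma_i$ kills $\psi^\perp$, so its range lies in $\mathrm{span}(\psi)$ and $\sigma_i=c_i\rho_\psi$; taking the trace gives $c_i=1$, so $\sigma_1=\sigma_2=\rho_\psi$ and $\rho_\psi$ is extreme. Conversely, writing any $\rho\in\gS(\cH)$ through its decomposition (c) as a convex combination $\rho=\sum_\phi p_\phi\rho_\phi$ of the extreme states $\rho_\phi$, extremality forces all but one coefficient to vanish: if some $p_1\in(0,1)$, then $\rho=p_1\rho_1+(1-p_1)\sigma$ with $\sigma:=(1-p_1)^{-1}\sum_{\phi\neq 1}p_\phi\rho_\phi$ a legitimate state, and the definition of extreme point yields $\rho=\rho_1$. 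Hence $\rho$ is of the claimed form, and since $\rho_\psi=\rho_{\psi'}$ exactly when $\psi'=e^{ia}\psi$, this gives the stated bijection with $P\cH$.

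For (b), note that $\rho\rho=\rho$ together with $\rho=\rho^*$ says $\rho$ is an orthogonal projector; being trace class with $tr\,\rho=1$ it has rank one, i.e. $\rho=\rho_\psi$, which is extreme by (a); the converse holds because every extreme state is such a rank-one projector, hence idempotent. The accompanying inequality is immediate from (c): since the $\rho_\phi$ are mutually orthogonal projectors, $\rho\rho=\sum_\phi p_\phi^2\,\rho_\phi$, and $p_\phi^2\leq p_\phi$ for $p_\phi\in[0,1]$ gives $\langle x|\rho\rho x\rangle=\sum_\phi p_\phi^2|\langle\phi|x\rangle|^2\leq\sum_\phi p_\phi|\langle\phi|x\rangle|^2=\langle x|\rho x\rangle$.

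The main obstacle is the extreme-point characterisation in (a): the forward implication needs the positivity lemma and the observation that a positive operator vanishing on $\psi^\perp$ is a multiple of $\rho_\psi$, while the converse must handle a possibly infinite convex decomposition, where one has to verify that the grouped remainder $\sigma$ is indeed positive of unit trace so that extremality of $\rho$ leaves a single surviving eigenvalue. Everything else reduces to the already-established spectral theory of trace-class operators and the continuity of the trace.
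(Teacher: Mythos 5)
Your argument is correct. Note that the paper itself gives no proof of this proposition (it defers entirely to the reference \cite{moretti}), so there is nothing internal to compare against; your route is the standard one and all the ingredients you invoke are already available in the text. Specifically: (c) is indeed just the spectral decomposition of the positive trace-class operator $\rho$ together with $\sum_\phi p_\phi = tr\,\rho = 1$; the closedness in (a) follows as you say from $\|\cdot\|\leq\|\cdot\|_1$ and continuity of the trace; the lemma that $A\geq 0$ and $\langle x|Ax\rangle=0$ force $Ax=0$ correctly yields that any $\sigma_i$ in a convex decomposition of $\rho_\psi$ annihilates $\psi^\perp$ and is therefore a multiple of $\rho_\psi$; and in the converse direction the grouped remainder $\sigma$ is checked to be a genuine state, so extremality collapses the decomposition to a single rank-one projector (indeed, if some $p_1\in(0,1)$ survived, extremality would give $\rho=\rho_1$ and hence $p_1=\langle\phi_1|\rho\,\phi_1\rangle=1$, a contradiction). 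Part (b) then reduces to the observation that a positive idempotent of unit trace is a rank-one orthogonal projector, and the inequality $\langle x|\rho\rho x\rangle\leq\langle x|\rho x\rangle$ follows from $p_\phi^2\leq p_\phi$ exactly as you write.
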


\noindent The stated proposition allows us to introduce some notions and terminology relevant in physics.
First of all, extreme elements in $\gS(\cH)$ are usually called {\bf pure states} by physicists. We shall denote their set is denoted $\gS_p(\cH)$. Non-extreme states are instead called
 {\bf mixed states}, {\bf mixtures} or {\bf non-pure states}.
If  $$\psi = \sum_{i\in I}a_i\phi_i\:,$$ with $I$ finite or countable 
(and the series converges in the topology of $\cH$ in the second case),
 where the vectors $\phi_i\in \cH$ are all non-null and $0\neq a_i\in \bC$, physicists  say that  the  state 
$\langle \psi|\:\: \rangle\psi$ is called  an {\bf coherent superposition} of the states 
 $\langle \phi_i|\:\: \rangle\phi_i/||\phi_i||^2$.\\
The possibility of creating pure states by non-trivial combinations of 
vectors associated to other pure states  is called, in the jargon of QM, 
{\bf superposition principle of  (pure) states}\\
There is however another type of superposition of states. If  $\rho\in \gS(\cH)$ satisfies:
$$\rho = \sum_{i\in I} p_i \rho_i$$
with $I$ finite,  $\rho_i \in \gS(\cH)$, $0\neq p_i\in [0,1]$ for any $i\in I$, and $\sum_i p_i=1$, 
the state $\rho$ is called {\bf incoherent superposition}
of states $\rho_i$ (possibly pure).\\
If $\psi,\phi\in \cH$ satisfy $||\psi||=||\phi|| =1$ the following terminology is very popular: The complex number $\langle\psi|\phi\rangle$ is the {\bf transition amplitude}
or {\bf probability amplitude} of the state $\langle \phi|\:\:\rangle\phi$ 
on the state $\langle \psi|\:\:\rangle \psi$, moreover
the non-negative real number $|\langle \psi|\phi\rangle|^2$ is the  {\bf transition probability} of the state $\langle \phi|\:\: \rangle\phi$ on the state $\langle\psi|\:\:\rangle \psi$.\\
We make some comments about these notions. 
Consider the pure state $\rho_\psi \in \gS_p(\cH)$, written $\rho_\psi = \langle \psi|\:\:\rangle \psi$ for some  $\psi \in \cH$ with $||\psi || =1$. What we want to emphasise is that this pure state is also an orthogonal projector $P_\psi:= \langle \psi|\:\:\rangle \psi$, so it must correspond to an elementary observable of the system (an {\em atom} using the terminology of Theorem \ref{teoaggpro}).
 The na\"ive and natural interpretation\footnote{We cannot but notice how this interpretation  muddles the semantic and syntactic levels. Although this could be problematic in a formulation within formal logic, the use physicists make of the interpretation eschews the issue.} 
of that observable is this: {\em``the system's state is the pure state given by the vector $\psi$''.}
We can therefore  interpret the square modulus of the transition amplitude 
$\langle \phi|\psi\rangle$ as follows. If $||\phi||=||\psi||=1$, as the definition of transition amplitude imposes, 
$tr(\rho_\psi P_\phi) = |\langle \phi|\psi\rangle|^2$, where $\rho_\psi := \langle\psi|\:\:\rangle \psi$ and $P_\phi =\langle \phi|\:\:\rangle \phi$.
Using (4) we conclude:  \\
{\em $|\langle \phi|\psi\rangle|^2$ is the probability that the state, given (at time $t$) by the vector $\psi$, following a measurement (at time $t$) on the system becomes determined by $\phi$.}\\
Notice  $|\langle \phi|\psi\rangle|^2= |\langle \psi|\phi\rangle|^2$, so the probability transition of the state determined by  $\psi$  on the state determined by $\phi$ coincides with the analogous probability where the vectors  are swapped. This fact is, {\it a priori}, highly non-evident in physics.\\

\noindent Since we have introduced a new notion of state the axiom concerning the collapse of the state (Sect. \ref{formalism}) must be improved in order to encompass all states of $\gS(\cH)$. 
The standard formulation of QM assumes the following axiom (introduced by von Neumann and generalised by L\"uders) about what occurs to the physical system, in state $\rho \in \gS(\cH)$ at time $t$, when subjected to the measurement of an elementary observable 
$P\in {\cal L}(\cH)$, if the latter is true (so in particular $tr (\rho P) >0$, prior to the measurement). We are  referring to {\em non-destructive} testing, also known as {\em indirect measurement} or {\em first-kind measurement}, where the physical system examined (typically a particle) is not absorbed/annihilated by the instrument. They are idealised versions of the actual processes used in labs, and only in part they can be modelled in such a way. \\

\noindent {\bf Collapse of the state revisited}. If the quantum system  is in state $\rho \in \gS(\cH)$ at time $t$ and  proposition $P\in  {\cal L}(\cH)$ is true after a measurement at time $t$, the system's state immediately afterwards is: 
$$\rho_P := \frac{P\rho P}{tr(\rho P)}\:.$$
In particular, if $\rho$ is pure and determined by the unit vector $\psi$, the state immediately after measurement is still pure, and determined by:
$$\psi_P =\frac{P\psi}{||P\psi||}\:.$$

\noindent Obviously, in either case $\rho_P$ and $\psi_P$ define states. In the former, in fact,  $\rho_P$ is  positive of trace class, with unit trace, while in the latter  $||\psi_P||=1$.\\

\remark $\null$

{\bf (a)}  Measuring a property of a physical quantity goes through the interaction between the system and an instrument (supposed to be  macroscopic and obeying the laws of classical physics).  Quantum Mechanics, in its standard formulation, does not establish what a measuring instrument is, it only says they exist; nor is it capable of describing  the interaction of instrument and quantum system set out in the  von Neumann L\"uders' postulate quoted above. Several viewpoints and conjectures exist on how to complete the physical description of the measuring process; these are called, in the slang of QM, {\bf collapse}, or {\bf reduction},  {\bf of the state} or {\bf of the wavefunction} (see \cite{moretti} for references).

{\bf (b)} Measuring instruments are commonly employed to {\em prepare a system in a certain pure state}. Theoretically-speaking the preparation of a {\em pure} state\index{preparation of system in a pure state} is carried out like this. A finite collection of  {\em compatible}  propositions
$P_1,\ldots, P_n$ is chosen so that the projection subspace  of 
$P_1 \wedge \cdots \wedge P_n = P_1\cdots P_n$ is {\em one-dimensional}. In other words $P_1\cdots P_n = (\psi|\:\:)\psi$ for some vector with $||\psi||=1$. The existence of such propositions 
is seen in practically all quantum systems used in experiments. (From a theoretical point of view these are {\em atomic} propositions)
Then propositions $P_i$ are simultaneously measured  on several identical copies of the physical system of concern (e.g., electrons), whose initial states, though,  are unknown. 
If for one system the measurements of all propositions are successful,  the post-measurement state is determined by the vector $\psi$, and the system was 
{\bf prepared} in that particular pure state.\\
Normally each projector $P_i$ belongs to the PVM $P^{(A)}$ of an observable  $A_i$ 
whose spectrum is made of isolated points (thus a pure point spectrum) and $P_i = P^{(A)}_{\{\lambda_i\}}$ with $\lambda_i \in \sigma_p(A_i)$.

{\bf (c)} Let us finally explain how to practically obtain non-pure states from pure ones. Consider $q_1$ identical copies of system  $S$ prepared in the pure state associated to $\psi_1$, $q_2$ copies of $S$   prepared in the pure state associated to $\psi_2$ and so on, up to $\psi_n$. If we mix these states each one will be in the non-pure state: 
$\rho = \sum_{i=1}^n p_i \langle \psi_i|\:\:\rangle \psi_i\:,$
where $p_i := q_i/\sum_{i=1}^n q_i$. In general, $\langle \psi_i|\psi_j\rangle$ is not zero if $i\neq j$, so the above expression for $\rho$ is not the decomposition with respect to an eigenvector basis for $\rho$.  This procedure hints at the existence of two different types of probability, one intrinsic and due to the quantum nature of state $\psi_i$, the other epistemic, and  encoded in the probability $p_i$. But this is not true: once a non-pure state has been created, as above, there is no way, within QM, to distinguish the states forming the mixture. For example, the same $\rho$ could have been obtained mixing other  pure states than those determined by the $\psi_i$. In particular, one could have used those in the decomposition of $\rho$ into a basis of its eigenvectors. For physics, no kind of measurement  would distinguish the two mixtures.
\hfill $\blacksquare$\\

\noindent Another delicate point is that, dealing with mixed states, definitions
(\ref{defexpt}) and (\ref{Delta22}) for, respectively the expectation value $\langle A\rangle_\psi$ and the standard deviation $\Delta A_\psi$ of an observable $A$ referred to  the pure state $\langle \psi| \:\: \rangle \psi$ with $||\psi||=1$ are no longer  valid.
We just say that extended natural definitions can be stated referring to the probability measure associated to both the mixed state $\rho \in \gB_1(\cH)$ (with $\rho\geq 0$ and tr $\:\rho =1$) and the observable,
$$\mu_\rho^{(A)} : {\cal B}(\bR) \ni E \mapsto tr(\rho P^{(A)}_E)\:.$$
We refer the reader to \cite{moretti} for a technical discussion on these topics.
\subsection{von Neumann algebra of observables, superselection rules}
The aim of this section is to focus on the class of observables of a quantum system, described in the complex Hilbert space $\cH$, exploiting some elementary results  of the theory of {\em von Neuman algebras}. Up to now, we have 
tacitly supposed that {\em all} selfadjoint operators in $\cH$ represent observables,  {\em all} orthogonal projectors 
represent elementary observables, {\em all} normalized vectors represent pure states. This is not the case in 
physics due to the presence of the so-called {\em superselection rules}. Within the Hilbert space approach 
the modern  tool to deal with this notion is the mathematical structure of a {\em von Neumann algebra}. For this reason 
we spend the initial part of this section to introduce this mathematical tool.

\subsubsection{von Neumann algebras} Before we introduce it, let us define first the {\em commutant} of an operator algebra and state an important preliminary theorem. If $\gM\subset \gB(\cH)$ is a subset in the algebra of bounded operators on the complex Hilbert space $\gB(\cH)$, the {\bf commutant} of $\gM$ is:
\beq \gM' := \{T \in \gB(\cH)\:\: | \: \: TA-AT =0\quad \mbox{for any $A
\in \gM$}\}\:.\label{commutant}\eeq
If $\gM$ is closed under the adjoint operation  (i.e. $A^* \in \gM$ if $A \in \gM$) the 
commutant $\gM'$ is certaintly  a  $^*$-algebra with unit. In general: $\gM_1' \subset \gM_2'$
if $\gM_2 \subset \gM_1$ and $\gM\subset (\gM')'$, which imply $\gM' = ((\gM')')'$. 
Hence we cannot reach beyond the second commutant by iteration.\\
The  continuity of the product  of operators in the uniform topology says that the  commutant $\gM'$ is closed in the 
uniform topology, so if $\gM$ is closed under the adjoint operation, its commutant 
$\gM'$ is a $C^*$-algebra ($C^*$-subalgebra) in $\gB(\cH)$.\\
$\gM'$ has other pivotal topological properties in this general setup. It is easy to prove that $\gM'$ is both strongly and weakly closed. This holds, despite the product of operators is not continuous with respect to the strong operator topology, because separate continuity in each variable is sufficient. \\
In the sequel we shall adopt the standard convention used for von Neumann algebras and write  $\gM''$ in place of $(\gM')'$ {\it etc}. The next crucial result is due to von Neumann (see e.g. \cite{moretti}).

\begin{theorem} [von Neumann's double commutant theorem] \label{teoDC}
If  $\cH$  is a complex Hilbert space and $\gA$ a unital $^*$-subalgebra in $\gB(\cH)$, the following statements are equivalent.\\
{\bf (a)}  $\gA = \gA''$.\\
{\bf (b)}  $\gA$ is weakly closed.\\
{\bf (c)}  $\gA$ is strongly closed.
\end{theorem}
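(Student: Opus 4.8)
The plan is to establish the cycle $(a) \Rightarrow (b) \Rightarrow (c) \Rightarrow (a)$, where the first two implications are essentially topological and the third carries all the real content. For $(a) \Rightarrow (b)$, I would observe that $\gA''$ is by definition the commutant of $\gA'$, and commutants are always weakly closed (as already remarked just before the theorem, since separate continuity of the product in each variable suffices); hence $\gA = \gA''$ forces $\gA$ to be weakly closed. For $(b) \Rightarrow (c)$, I would use that the strong operator topology is finer than the weak operator topology, so every weakly closed set is a fortiori strongly closed. Both steps take only a couple of lines.

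The substance is $(c) \Rightarrow (a)$. Here I would first recall the trivial inclusion $\gA \subseteq \gA''$, immediate from the definition of commutant, so that the task reduces to proving $\gA'' \subseteq \gA$. Since $\gA$ is strongly closed by hypothesis, it suffices to show that every $T \in \gA''$ lies in the strong closure of $\gA$: that is, given finitely many vectors $x_1,\dots,x_n \in \cH$ and $\epsilon > 0$, I must produce $A \in \gA$ with $\|(T-A)x_i\| < \epsilon$ for all $i$.

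I would settle the single-vector case $n=1$ first. Fix $x \in \cH$, set $M := \overline{\gA x}$, and let $P \in \gB(\cH)$ be the orthogonal projector onto $M$. Because $\gA$ is a $^*$-subalgebra, $M$ is invariant under every $A \in \gA$ (since $\gA$ is an algebra and $A$ is continuous), and $M^\perp$ is invariant as well (using $A^* \in \gA$); hence $P$ commutes with all of $\gA$, i.e. $P \in \gA'$. As $T \in \gA'' = (\gA')'$, $T$ commutes with $P$. Since $\gA$ is unital, $x = Ix \in \gA x \subseteq M$, so $Px = x$ and therefore $Tx = TPx = PTx \in M = \overline{\gA x}$, which gives an $A \in \gA$ with $\|Tx - Ax\| < \epsilon$. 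For general $n$ I would use the amplification trick: consider $\cH^n := \cH \oplus \cdots \oplus \cH$ ($n$ summands) and the diagonal representation $A \mapsto A^{(n)}$ acting by $A^{(n)}(y_1,\dots,y_n) := (Ay_1,\dots,Ay_n)$, writing $\gA^{(n)} := \{A^{(n)} : A \in \gA\}$, again a unital $^*$-subalgebra of $\gB(\cH^n)$. Identifying operators on $\cH^n$ with $n\times n$ matrices of operators on $\cH$, one shows $(\gA^{(n)})' = M_n(\gA')$ and then $(\gA^{(n)})'' = \{S^{(n)} : S \in \gA''\}$. Applying the single-vector result to $\gA^{(n)}$ and the vector $\xi := (x_1,\dots,x_n)$, noting $T^{(n)} \in (\gA^{(n)})''$, yields $A \in \gA$ with $\|T^{(n)}\xi - A^{(n)}\xi\|^2 = \sum_{i=1}^n \|(T-A)x_i\|^2 < \epsilon^2$. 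Thus $T$ lies in the strong closure of $\gA$, and strong closedness gives $T \in \gA$.

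The main obstacle is the amplification step, and specifically the honest computation of $(\gA^{(n)})''$. The first part, $(\gA^{(n)})' = M_n(\gA')$, is direct: a matrix $B = (B_{ij})$ commutes with every $A^{(n)}$ if and only if each entry satisfies $B_{ij}A = AB_{ij}$ for all $A \in \gA$, i.e. $B_{ij} \in \gA'$. The delicate part is identifying the second commutant: here I would use the scalar matrix units $e_{kl}\otimes I$, which belong to $M_n(\gA')$ precisely because $I \in \gA'$, and check that commuting with all of them forces an operator $C = (C_{ij})$ to be diagonal with a single repeated entry $C_{ii} = S$; then commuting with each $\mathrm{diag}(B,\dots,B)$ for $B \in \gA'$ forces $S \in \gA''$. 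Once this bookkeeping is in place, the projection argument and the reduction to the strong topology are routine.
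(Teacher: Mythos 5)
Your proof is correct. The paper itself does not prove Theorem \ref{teoDC} --- it only states it and defers to the cited literature --- and what you have written is precisely the classical argument found there: the easy implications via weak closedness of commutants and the comparison of topologies, and then the projection onto $\overline{\gA x}$ combined with the $n$-fold amplification $(\gA^{(n)})' = M_n(\gA')$, $(\gA^{(n)})'' = \{S^{(n)} : S \in \gA''\}$ to pass from one vector to finitely many. All the delicate points (invariance of $M$ and $M^\perp$ under $\gA$ using the $^*$-closure, the role of the unit in guaranteeing $x \in M$ and $e_{kl}\otimes I \in M_n(\gA')$) are handled correctly.
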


\noindent At this juncture we are ready to define von Neumann algebras.
\begin{definition}\label{defAvN} {\em Let $\cH$ be a complex Hilbert space.  A {\bf von Neumann algebra}  in $\gB(\cH)$ is a $^*$-subalgebra of $\gB(\cH)$, with unit, that satisfies any of the   equivalent properties  appearing in von Neumann's theorem \ref{teoDC}.} \hfill $\blacksquare$
\end{definition}
\noindent In particular $\gM'$ is a von Neumann algebra provided $\gM$ is a  $^*$-closed subset of $\gB(\cH)$, because $(\gM')'' = \gM'$ as we saw above. 
Note how, by construction, a von Neumann algebra in $\gB(\cH)$ is a $C^*$-algebra with unit, or better, a $C^*$-subalgebra with unit of $\gB(\cH)$.\\
It is not hard to see that the intersection of  von Neumann algebras is a von Neumann algebra.
If $\gM \subset \gB(\cH)$ is closed under the adjoint operation,  $\gM''$ turns out to be the smallest (set-theoretically) von Neumann algebra containing $\gM$ as a subset \cite{BrRo}. Thus $\gM''$ is called the {\bf von Neumann algebra generated}
by $\gM$.\\

\noindent Since in QM it is natural to deal with unbounded selfadjoint operators, the  definition of commutant is extended to the case of a set of generally unbounded selfadjoint operators, exploiting the fact that these operators admit spectral measures made of bounded operators. 
\begin{definition}
{\em If $\gN$ is a set of (generally unbounded) selfadjoint operators in the complex Hilbert space $\cH$, the {\bf commutant} $\gN'$ of $\gN$, is defined as the commutant in the sense of (\ref{commutant}) of the set of  all the spectral measures $P^{(A)}$ of every $A \in \gN$. \\
The von Neuman algebra $\gN''$ generated by $\gN$ is defined as  $(\gN')'$, where the external prime is the one of definition (\ref{commutant}).} \hfill $\blacksquare$
\end{definition}

\remark Notice that, if the selfadjoint operators are all bounded,  
 $\gN'$ obtained this way coincides with the one already defined  in (\ref{commutant}) as a consequence of of (ii) and (iv) of Proposition \ref{propcomm} (for a bounded selfadjoint operator $A$).
Thus $\gN'$ is well-defined and gives rises to a von Neumann algebra because the set of spectral measures is $^*$-closed.  $\gN''$ is a von Neumann algebra too for the same reason.
 \hfill $\blacksquare$\\

\noindent We are in a position to state a technically important result which concerns both the spectral theory and the notion of von Neumann algebra \cite{moretti}.

\begin{proposition}\label{propBeCaagg} Let $\gN= \{A_1, \ldots, A_n\}$ be a finite collection of  self-adjoint operators in the separable Hilbert space $\cH$ whose spectral measures commute. The von Neumann algebra  $\gN''$ coincides with the collection of operators $$f(A_1,\ldots,A_n) := \int_{supp(P^{({\bf A})})} f(x_1,\ldots,x_n) dP^{({\bf A})}\:,$$ with $f : supp(P^{({\bf A})}) \to \bC$ measurable and bounded.
\end{proposition}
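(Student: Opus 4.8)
The plan is to recognise $\gN''$ as the von Neumann algebra generated by a single PVM --- the joint spectral measure of $A_1,\dots,A_n$ --- and then to show that this algebra is exactly the algebra of bounded functions of that PVM. First I would invoke Theorem \ref{JS}: since the $P^{(A_i)}$ pairwise commute, there is a joint PVM $P:=P^{({\bf A})}$ on $\bR^n$ with $P(E_1\times\cdots\times E_n)=P^{(A_1)}_{E_1}\cdots P^{(A_n)}_{E_n}$ and $f(A_1,\dots,A_n)=\int_{\bR^n}f\,dP=:f({\bf A})$. Write $\gR$ for the set in the statement and $\gQ:=\{P(F)\mid F\in\cB(\bR^n)\}$ for the range of $P$. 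By the measurable functional calculus (Proposition \ref{propint2}(c),(d),(e)), restricted to bounded $f$ where every domain is $\cH$ and every operator lies in $\gB(\cH)$ by Remark \ref{rembound}, $\gR$ is a unital $^*$-subalgebra of $\gB(\cH)$; the claim to prove is $\gR=\gN''$.

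Next I would show $\gN'=\gQ'$, so that $\gN''=\gQ''$. Each $P^{(A_i)}_E$ equals $P(\bR\times\cdots\times E\times\cdots\times\bR)\in\gQ$, so the set $\cP$ of all the spectral projectors satisfies $\cP\subseteq\gQ$ and hence $\gQ'\subseteq\cP'=\gN'$. For the opposite inclusion, take $T\in\gN'=\cP'$ and run a Dynkin ($\pi$--$\lambda$) argument on $\cD:=\{F\in\cB(\bR^n)\mid TP(F)=P(F)T\}$: the measurable rectangles form a $\pi$-system generating $\cB(\bR^n)$ and lie in $\cD$ (each $P(\mathrm{rectangle})$ is a product of projectors commuting with $T$), while $\cD$ is a $\lambda$-system because $P$ is $\sigma$-additive in the strong topology and $T$ is bounded, so that strong limits preserve $TP(F_k)=P(F_k)T$. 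Thus $\cD=\cB(\bR^n)$, giving $T\in\gQ'$ and therefore $\gN'=\gQ'$.

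The inclusion $\gR\subseteq\gQ''=\gN''$ is then routine: for $T\in\gQ'$ and simple $f$, $\int f\,dP$ is a finite linear combination of the $P(F)$ and commutes with $T$; for general bounded $f$, approximate $f$ by simple functions $P$-essentially uniformly, so that $\int f\,dP$ is an operator-norm limit (using $\|\int f\,dP\|\le\|f\|^{(P)}_\infty$ from Remark \ref{rembound}, cf. Example \ref{exstrong}) of operators commuting with $T$, whence $\int f\,dP$ commutes with $T$ as well.

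The hard part will be the reverse inclusion $\gQ''\subseteq\gR$, which is where the separability of $\cH$ enters and which I regard as the crux. I would decompose $\cH$ as an at most countable orthogonal sum of $\gQ$-cyclic subspaces $\cH_k=\overline{\mathrm{span}\{P(F)\psi_k\mid F\in\cB(\bR^n)\}}$; the projector onto each $\cH_k$ belongs to $\gQ'$, so each $\cH_k$ is invariant under every $B\in\gQ''$. On $\cH_k$ the map $U_k:L^2(\bR^n,\mu_k)\to\cH_k$, with $\mu_k(F):=\|P(F)\psi_k\|^2$ and $U_kg:=g({\bf A})\psi_k$, is a surjective isometry by (\ref{strongbound}) and intertwines $P(F)$ with multiplication by $\chi_F$. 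Since the multiplication operators form a maximal abelian subalgebra of $\gB(L^2(\mu_k))$, the operator $U_k^{-1}B|_{\cH_k}U_k$ must itself be multiplication by some $g_k\in L^\infty(\mu_k)$, i.e. $B|_{\cH_k}=g_k({\bf A})|_{\cH_k}$. The genuinely technical residue is to glue the $g_k$ into one bounded measurable $f$ on $\bR^n$ with $B=f({\bf A})$; I would order the cyclic vectors so that the measures $\mu_k$ decrease in absolute continuity, verify the resulting consistency $\mu_k$-almost everywhere, and use the countability of the index set, carrying this out along the lines of \cite{moretti}. Combining the two inclusions yields $\gR=\gQ''=\gN''$, as required.
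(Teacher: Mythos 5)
The paper itself does not prove this proposition --- it is stated with a pointer to \cite{moretti} --- so your argument can only be measured against the standard proof, which it follows faithfully and, as far as I can check, correctly. The reduction $\gN'=\gQ'$ via a $\pi$--$\lambda$ argument on $\cD=\{F\in\cB(\bR^n):TP(F)=P(F)T\}$ is sound (rectangles form a generating $\pi$-system, and $\cD$ is a $\lambda$-system by strong $\sigma$-additivity of the PVM and boundedness of $T$), and the inclusion $\gR\subseteq\gQ''$ by uniform approximation with simple functions is routine as you say. For the converse you implicitly use that $\gQ$ is commutative, hence $\gQ\subseteq\gQ'$ and so $\gQ''\subseteq\gQ'$, which is what guarantees that $B\in\gQ''$ commutes with every $P(F)$ and not merely with the projectors onto the cyclic subspaces; it would be worth making that one line explicit. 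The two genuinely delicate points are exactly the ones you flag: (i) the claim that multiplication by $L^\infty(\mu_k)$ is maximal abelian in $\gB(L^2(\mu_k))$, which is standard but should be cited or proved, and (ii) the gluing of the $g_k$ into a single bounded Borel $f$. For (ii) your plan works, but a cleaner device on a separable $\cH$ is to pick a single vector $\psi=\sum_k c_k\psi_k$ (with suitable $c_k>0$) whose measure $\mu_\psi$ dominates every $\mu_k$; the function $g$ produced by the cyclic subspace of $\psi$ then serves for all sectors at once, and the uniform bound $\|g_k\|_{L^\infty(\mu_k)}\le\|B\|$ takes care of boundedness. With those two points tightened, the proof is complete and is essentially the argument the cited reference carries out (it is the $n$-variable analogue of the paper's Spectral Representation Theorem, whose cyclic decomposition you are in effect rebuilding).
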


\subsubsection{Lattices of von Neumann algebras} 
To conclude this elementary  mathematical survey, we will say some words about von Neumann algebras and their associated lattices of orthogonal projectors. \\
Consider a von Neumann algebra $\gR$  on the complex Hilbert space $\cH$.  It is easy to prove that the set ${\cal L}_\gR(\cH)\subset \gR$ of the orthogonal projectors included in $\gR$ form a lattice, which is bounded by $0$ and $I$, orthocomplemented with respect to the orthocomplementation operation of  ${\cal L}(\cH)$ and $\sigma$-complete (because this notion involves only the strong topology ((iii) in Proposition \ref{proplat}) and $\gR$ is closed with respect to  that topology in view of Theorem \ref{teoDC}. Moreover ${\cal L}_\gR(\cH)$ is orthomodular,  and separable like 
the whole ${\cal L}(\cH)$, assuming that $\cH$ is separable. It is interesting to note that, as expected, ${\cal L}_\gR(\cH)$ contains all information about $\gR$ itself since the following result holds.

\begin{proposition}\label{propLR}
Let $\gR$ be a von Neumann algebra on the complex Hilbert space $\cH$ and consider the lattice ${\cal L}_\gR(\cH)\subset \gR$ of the orthogonal projectors in $\gR$.
then the equality ${\cal L}_\gR(\cH)'' = \gR$ holds.
\end{proposition}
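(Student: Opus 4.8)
The plan is to establish the two inclusions ${\cal L}_\gR(\cH)'' \subseteq \gR$ and $\gR \subseteq {\cal L}_\gR(\cH)''$ separately, the second being the substantial one.

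First I would observe that ${\cal L}_\gR(\cH)$, being a set of orthogonal projectors, is closed under the adjoint operation (each $P=P^*$), so ${\cal L}_\gR(\cH)''$ is a genuine von Neumann algebra, in fact the smallest one containing ${\cal L}_\gR(\cH)$. Since $\gR$ is itself a von Neumann algebra containing ${\cal L}_\gR(\cH)$ by the very definition of the latter, minimality immediately gives ${\cal L}_\gR(\cH)'' \subseteq \gR$.

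For the reverse inclusion I would first reduce to self-adjoint elements: any $A \in \gR$ splits as $A = B + iC$ with $B = \frac{1}{2}(A+A^*)$ and $C = \frac{1}{2i}(A - A^*)$ self-adjoint, and both in $\gR$ since $\gR$ is a $^*$-subalgebra. Thus it suffices to show that each self-adjoint $A \in \gR$ lies in ${\cal L}_\gR(\cH)''$. The key step is to prove that every spectral projector $P^{(A)}(E)$, $E \in {\cal B}(\bR)$, already belongs to $\gR$. Here I would invoke $\gR = \gR''$ (Theorem \ref{teoDC}): since $\gR'$ is a $^*$-algebra it is spanned by its self-adjoint elements, and if $T = T^* \in \gR'$ then $TA = AT$ with both operators bounded, so Proposition \ref{propcomm} yields $T P^{(A)}(E) = P^{(A)}(E) T$ for every Borel $E$. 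Consequently every element of $\gR'$ commutes with $P^{(A)}(E)$, whence $P^{(A)}(E) \in \gR'' = \gR$; being a projector, $P^{(A)}(E) \in {\cal L}_\gR(\cH)$.

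Finally I would recover $A$ from its spectral projectors by uniform approximation. As $A$ is bounded, $\sigma(A)$ is compact (Proposition \ref{propboundob}), so the identity function on $\sigma(A)$ is a uniform limit of simple functions, and correspondingly $A = \int_{\sigma(A)} \lambda \, dP^{(A)}(\lambda)$ is the operator-norm limit of finite linear combinations $\sum_k \lambda_k P^{(A)}(E_k)$, the norm bound in (h) of Proposition \ref{propint2} controlling the convergence. Each such combination lies in ${\cal L}_\gR(\cH)''$, which is a $C^*$-subalgebra of $\gB(\cH)$ and hence uniformly closed; therefore $A \in {\cal L}_\gR(\cH)''$. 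Combined with the reduction to self-adjoint parts this gives $\gR \subseteq {\cal L}_\gR(\cH)''$, and together with the first inclusion the asserted equality follows. The main obstacle is the spectral-projector step, which is precisely where the double commutant theorem and the commutation criterion of Proposition \ref{propcomm} carry the argument; everything else is routine bookkeeping.
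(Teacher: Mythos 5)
Your proof is correct and follows essentially the same route as the paper's: reduction to self-adjoint elements, showing the spectral projectors lie in $\gR$ via Proposition \ref{propcomm} and the double commutant theorem, and recovering $A$ as a limit of simple-function integrals. The only cosmetic differences are that you obtain the easy inclusion from minimality of the generated von Neumann algebra rather than by taking commutants directly, and you close the final approximation in the uniform topology where the paper passes to the strong one; both are valid.
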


\begin{proof} 
Since ${\cal L}_\gR(\cH) \subset \gR$, we have ${\cal L}_\gR(\cH)' \supset \gR'$
and ${\cal L}_\gR(\cH)'' \subset \gR''= \gR$. Let us prove the other inclusion.
 $A \in \gR$ can always be decomposed as a linear combination of two self adjoint 
operators of $\gR$, $A+A^*$ and $i(A-A^*)$. So we can restrict ourselves  to the case of $A^*=A \in \gR$, proving that $A \in {\cal L}_\gR(\cH)''$ if $A\in \gR$. The PVM of $A$  belongs to $\gR$
because of (ii) and (iv) of Proposition  \ref{propcomm}: $P^{(A)}$ commutes with every bounded operator $B$ which commutes with 
$A$. So $P^{(A)}$ commutes, in particular, with the elements of $\gR'$ because  $\gR \ni A$. We conclude that every $P_E^{(A)} \in \gR'' = \gR$. 
Finally, there is a sequence of simple functions $s_n$ uniformly converging to $id$ in a compact $[-a,a] \supset \sigma(A)$ (e.g, see \cite{moretti}). By construction $\int_{\sigma(A)} s_n dP^{(A)} \in 
{\cal L}_\gR(\cH)''$ because it is a linear combination of elements of $P^{(A)}$ and ${\cal L}_\gR(\cH)''$ is a linear space. Finally $\int_{\sigma(A)} s_n dP^{(A)}\to A$ for $n\to +\infty$ uniformly, and thus strongly, as seen in (2) of example \ref{exstrong}.
 Since ${\cal L}_\gR(\cH)''$ is closed with respect to the strong topology, we must have  $A \in {\cal L}_\gR(\cH)''$, proving that ${\cal L}_\gR(\cH) \supset \gR$ as wanted.
\end{proof}

\subsubsection{General algebra of observables and its centre} Let us pass to physics and we apply these notions  and results. 
 Relaxing the hypothesis that all selfadjoint operators in the separable Hilbert space $\cH$ associated to a quantum system represent observables, there are many reasons to assume that the observables of a quantum system are represented (in the sense we are going to illustrate)  by the selfadjoint elements of an algebra of von Neumann, we hereafter indicated by $\gR$,  called the {\bf von Neumann algebra of observables} (though only the selfadjoint elements are observables).
Including non-selfadjoint elements $B \in \gR$ is armless,  as they can always be one-to-one decomposed into a pair of selfadjoint elements $$B= B_1+iB_2 =\frac{1}{2}(B+B^*) + i \frac{1}{2i}(B-B^*)\:.$$
The fact that the elements of $\gR$ are bounded does not seem a physical problem. If $A=A^*$ is unbounded and represents an observable it does not belong to $\gR$. Nevertheless the associated {\em class} of bounded  selfadjoint operators  
$\{A_n\}_{n \in \bN}$ where
$$A_n := \int_{[-n,n]\cap \sigma(A)} \lambda dP^{(A)}(\lambda)\:,$$
embodies the same information as $A$ itself. 
$A_n$ is bounded due to Proposition \ref{propboundob} because the support of its spectral measures is included in $[-n,n]$.
Physically speaking, we can say that $A_n$ is nothing but  the observable $A$ when it is measured with an instrument unable to produce outcomes larger than $[-n,n]$. All real  measurement instruments are similarly limited. We can safely assume that every $A_n$ belongs to $\gR$. Mathematically speaking, the whole (unbounded)  observable  $A$ is recovered as the limit in the {\em strong operator topology} $A= \lim_{n\to +\infty} A_n$
((1) in examples \ref{exstrong}). Moreover the union of the spectral measures of all the $A_n$ is that of $A$. Finally the spectral measure of $A$ belongs to $\gR$ since the spectral measure of every $A_n\in \gR$ does, as has been established in the proof of Proposition \ref{propLR} above.\\
Within this framework the orthogonal projectors $P \in \gR$ represent all elementary observables 
of the system. The lattice of these projectors, ${\cal L}_\gR(\cH)$, encompass the amount of  information about observables as established Proposition \ref{propLR}.
As said above  ${\cal L}_\gR(\cH)\subset \gR$ is bounded, orthocomplemented,  $\sigma$-complete, orthomodular and separable like
the whole ${\cal L}(\cH)$ (assuming that $\cH$ is separable) but there is no guarantee for the validity of the other 
properties listed in Theorem \ref{teoaggpro}. The natural question is whether $\gR$
is $^*$-isomorphic to  $\gB(\cH_1)$ for a suitable complex Hilbert space $\cH_1$, which would automatically imply that also the remaining properties were true. In particular there would exist atomic elements in ${\cal L}_\gR(\cH)$ and the covering property would be satisfied. A necessary condition is that, exactly as it happens for $\gB(\cH_1)$, there are no non-trivial elements in  $\gR\cap \gR'$, since $\gB(\cH_1)\cap \gB(\cH_1)' = \gB(\cH_1)' = \{cI\}_{c \in \bC}$.
\begin{definition} A von Neumann algebra $\gR$ is a {\bf factor}  when its {\bf centre}, the subset $\gR \cap \gR'$ of elements commuting with the whole algebra, is trivial: $\gR \cap \gR' =\{c I\}_{c\in \bC}$. \hfill $\blacksquare$
\end{definition}
\noindent 
\remark It is possible to prove that a von Neumann algebra is always a direct sum or a direct integral of factors. Therefore factors play a crucial role.
The classification of factors, started by von Neumann and Murray, is one of the key chapters in the theory of operator algebras, and has enormous consequences in the algebraic theory of quantum fields.  The factors isomorphic to  $\gB(\cH_1)$ for some complex Hilbert space $\cH_1$,  are called of {\em type $I$}. These factors admit atoms, fulfil the covering property (orthomodularity and irreducibility are always true). Regarding separability, it depends on separability of $\cH_1$ and requires a finer classification in factors of {\em type $I_n$} where $n$ is a cardinal number. There are however factors of type $II$ and $III$ which do not admit atoms and are not important in elementary QM.  \hfill $\blacksquare$\\

\noindent The centre of the von Neumann algebra of observables enters the physical theory in a nice way. A common situation dealing with quantum systems is the existence of a {\bf maximal set of compatible observables}, i.e. a finite maximal class $\gA= \{ A_1, \ldots, A_n\}$ of pairwise compatible observables. The notion of maximality here means  that, if a (bounded) selfadjoint operator commutes  with all the observables in $\gA$, then it is a {\em function} of them. In perticular it is an observable as well.
In view of proposition \ref{propBeCaagg} the existence of a maximal set of compatibel observables is equivalent to say that there is a finite set of  observables $\gA$ such  that $\gA' =\gA''$. 
We have the following important consequence 
\begin{proposition} If a  quantum physical system admits a maximal set of compatibel observables, then the commutant $\gR'$ of the von Neumann algebra of observables $\gR$ is Abelian and coincides with the center  of $\gR$. 
\end{proposition}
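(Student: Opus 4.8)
The plan is to reduce everything to a formal manipulation of commutants, using the given reformulation of the maximality hypothesis as $\gA' = \gA''$ (this is the content of the remark preceding the proposition, via Proposition \ref{propBeCaagg}) together with the inclusion-reversing properties of the commutant recalled before Theorem \ref{teoDC}, in particular the general identity $\gM''' = \gM'$.

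First I would record that $\gA'' \subset \gR$. Each $A_k \in \gA$ is an observable, so its spectral measure $P^{(A_k)}$ lies in $\gR$ (exactly as discussed above for the bounded truncations $A_{k,n}$, whose spectral measures belong to $\gR$ and generate that of $A_k$). Since $\gA''=(\gA')'$ is by definition the von Neumann algebra generated by these spectral measures, and $\gR$ is a von Neumann algebra containing all of them, we get $\gA'' \subset \gR$. Applying the commutant to this inclusion and using that it reverses inclusions gives $\gR' \subset (\gA'')' = \gA''' = \gA'$. Now invoking $\gA' = \gA''$ we obtain $\gR' \subset \gA''$, and combining this with $\gA'' \subset \gR$ yields $\gR' \subset \gR$. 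Hence $\gR' = \gR \cap \gR'$, i.e. the commutant coincides with the centre, which is the first assertion.

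For the abelian claim I would observe that $\gA''$ is commutative. By Proposition \ref{propBeCaagg} its elements are precisely the bounded operators $f(A_1,\ldots,A_n) = \int_{supp(P^{({\bf A})})} f\, dP^{({\bf A})}$ built from the joint spectral measure $P^{({\bf A})}$ of Theorem \ref{JS}, and any two such operators commute because $\int f\, dP^{({\bf A})} \int g\, dP^{({\bf A})} = \int fg\, dP^{({\bf A})} = \int g\, dP^{({\bf A})} \int f\, dP^{({\bf A})}$. Since the previous step gives $\gR' \subset \gA''$, the commutant $\gR'$ is contained in an abelian algebra and is therefore itself abelian.

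There is essentially no analytic obstacle here, as the argument is purely order-theoretic once the ingredients are in place. The only points requiring genuine care are, first, justifying $\gA'' \subset \gR$ at the level of the \emph{spectral measures} rather than of the (possibly unbounded) operators $A_k$ themselves, and second, making sure the maximality hypothesis is correctly translated into the operator-algebraic statement $\gA' = \gA''$ through Proposition \ref{propBeCaagg}; both of these are already supplied by the surrounding discussion, so the proof is short.
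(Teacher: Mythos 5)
Your argument is correct and follows essentially the same route as the paper: both rest on $\gA'' \subset \gR$ (justified at the level of spectral measures), the inclusion-reversing property of the commutant, and the maximality hypothesis in the form $\gA'=\gA''$, arriving at $\gR'\subset\gR$ and hence $\gR'=\gR'\cap\gR$. The only cosmetic difference is that the paper deduces abelianness directly from $\gR'\subset\gR$ (every element of $\gR'$ commutes with everything in $\gR$, in particular with the other elements of $\gR'$), whereas you route it through the commutativity of $\gA''$ via the joint functional calculus — both are fine, the paper's being marginally shorter.
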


\begin{proof}
As the spectral measures of each $A \in \gA$ belong to $\gR$, it must be (i) $\gA'' \subset \gR$.
Since $\gA' =\gA''$, (i) yields $\gA' \subset \gR$ and thus, taking the commutant, (ii) $\gA'' \supset \gR'$. Comparing (i) and (ii) we have $\gR' \subset \gR$. In other words $\gR' = \gR' \cap \gR$. In particular, $\gR'$ must be Abelian.
\end{proof}

\example $\null$\\
{\bf (1)} Considering a quantum particle without spin and referring to the rest space $\bR^3$ of an inertial reference frame, $\cH = L^2(\bR^3, d^3x)$. A maximal set of compatible observables 
is the set of the three position operators $\gA_1= \{X_1,X_2,X_3\}$ or the the set of the three momenta  operators $\gA_2= \{P_1,P_2,P_3\}$. $\gR$ is the von Neumann algebra generated by $\gA_1\cup \gA_2$. It is possible to prove that the commutant (which coincides with the centre) of this von Neumann algebra is trivial (as it includes a unitary irreducible representation of the Weyl-Heisenberg group) so that $\gR= \gB(\cH)$
(see also Theorem \ref{SNM}).\\
{\bf (2)} If adding the spin space (for instance dealing with an electron ``without charge''), we have $\cH = L^2(\bR^3, d^3x)\otimes \bC^2$. Referring to (\ref{spin}) a maximal set of compatible observables is, for instance, $\gA_1= \{X_1\otimes I ,X_2\otimes I ,X_3\otimes I, I\otimes S_z\}$, another is $\gA_2= \{P_1\otimes I ,P_2\otimes I ,P_3\otimes I, I\otimes S_x\}$. As before $(\gA_1\cup \gA_2)''$ is the von Neumann algebra of observables of the system (changing the component of the spin passing from $\gA_1$ to $\gA_2$ is crucial for  this result).
 Also in this case, it turns out that the commutant of the von Neumann algebra of observables is trivial yielding  $\gR= \gB(\cH)$.\hfill $\blacksquare$

\subsubsection{Superselection charges and coherent sectors} We must have accumulated enough  formalism to successfully investigate the structure of the Hilbert space (always supposed to be separable) and the algebra of the observables when  not all selfadjoint operators represent observables and not all orthogonal projectors are intepreted as  elementary observables. 
Re-adapting the approach by Wightman \cite{wightman} to our framework, 
we make two assumptions generally describing the so called {\em superselection rules} for QM formulated in a (separable) Hilbert space where $\gR$ denotes the von Neumann algebra of observables.\\

 {\bf (SS1)} There is a maximal set of compatible observables in $\gR$, so that  $\gR'= \gR' \cap \gR$.\\

{\bf (SS2)}  $\gR'\cap \gR$ contains a finite class of observables  $\gQ =\{Q_1,\ldots, Q_n\}$, with  $\sigma(Q_k)= \sigma_p(Q_k)$, $k=1,2,\ldots, n$, generating the centre:
$\gQ'' \supset \gR'\cap \gR$.\\
(If the $Q_k$ are unbounded, $\gQ\subset \gR'\cap \gR$ means that the PVM of the $Q_j$ are included in $\gR'\cap \gR$.)\\

\noindent The $Q_k$ are called {\bf superselection charges}.\\

\noindent As the reader can easily prove, the joint spectral measure $P^{(\gQ)}$ in $\bR^n$ has support given exactly by $\times_{k=1}^n \sigma_p(Q_k)$ and, if $E \subset \bR^n$,
\beq P^{(\gQ)}_{E} = \sum_{(q_1,\ldots, q_n) \in \times_{k=1}^n \sigma_p(Q_k) \cap E} P^{(Q_1)}_{\{q_1\}}\cdots  P^{(Q_n)}_{\{q_n\}}\label{gendec} \eeq
We have the following remarkable result where we occasionally adopt the notation ${\bf q} := (q_1,\ldots, q_n)$ and $\sigma(\gQ):= \times_{k=1}^n \sigma_p(Q_k)$.
\begin{proposition}
Let $\cH$ be a complex separable Hilbert and suppose that the von Neumann algebra $\gR$ in $\cH$ satisfies {\bf(SS1)} and {\bf (SS2)}. The following facts hold.\\
{\bf (a)} $\cH$ admits the following direct decomposition into closed pairwise orthogonal subspaces, called {\bf superselection sectors} or {\bf coherent sectors},
\beq \cH = \bigoplus_{{\bf q}\in \sigma(\gQ)} \cH_{{\bf q}} \label{bigdec}\eeq
where 
$$\cH_{{\bf q}} := P^{(\gQ)}_{{\bf q}}\cH\:.$$
and each  $\cH_{{\bf q}}$ is invariant and irreducible under $\gR$.\\
{\bf (b)} An analogous direct decomposition occurs for $\gR$.
\beq \gR = \bigoplus_{{\bf q}\in \sigma(\gQ)} \gR_{{\bf q}} \label{bigdecA}\eeq
where 
$$\gR_{{\bf q}} := \left\{ \left. A|_{\cH_{{\bf q}}}\:\right|\: A \in \gR\right\}$$
is a von Neumann algebra on $\cH_{{\bf q}}$ considered as Hilbert space in its own right. Finally,
$$\gR_{{\bf q}} = \gB(\cH_{{\bf q}})$$
{\bf (c)} Each map $$\gR \ni A \mapsto  A|_{\cH_{{\bf q}}} \in \gR_{{\bf q}}$$
is a $^*$-algebra representation of $\gR$ (Def.\ref{defrap}). Representations associated with 
different values of ${\bf q}$ are (unfithful and) unitarily inequivalent: In other words 
there is no isometric surjective map 
$U: \cH_{{\bf q}}\to \cH_{{\bf q}'}$ such that 
$$U A|_{\cH_{{\bf q}}} U^{-1} = A|_{\cH_{{\bf q}'}}$$
when ${\bf q}\neq {\bf q}'$
\end{proposition}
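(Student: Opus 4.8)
The plan is to build everything on the central projections $P^{(\gQ)}_{{\bf q}}$ and on the identification of the centre of $\gR$ with $\gQ''$.

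First I would record the structural facts about the family $\{P^{(\gQ)}_{{\bf q}}\}_{{\bf q}\in\sigma(\gQ)}$. Since $\cH$ is separable and each $Q_k$ satisfies $\sigma(Q_k)=\sigma_p(Q_k)$, each $\sigma_p(Q_k)$ is at most countable, so $\sigma(\gQ)=\times_{k=1}^n\sigma_p(Q_k)$ is at most countable and, by (\ref{gendec}), carries the whole support of $P^{(\gQ)}$. The projectors $P^{(\gQ)}_{{\bf q}}=P^{(Q_1)}_{\{q_1\}}\cdots P^{(Q_n)}_{\{q_n\}}$ are then mutually orthogonal and satisfy $\sum_{{\bf q}}P^{(\gQ)}_{{\bf q}}=P^{(\gQ)}_{\bR^n}=I$ in the strong topology, which is exactly the orthogonal decomposition (\ref{bigdec}). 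Because the PVMs of the $Q_k$ lie in $\gR'\cap\gR$ by {\bf (SS2)}, each $P^{(\gQ)}_{{\bf q}}$ is a central projection; in particular it commutes with every $A\in\gR$, so $A\cH_{{\bf q}}\subset\cH_{{\bf q}}$ and $\cH_{{\bf q}}$ is $\gR$-invariant, which is the first assertion of (a).

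The second step is to pin down the centre. Combining {\bf (SS1)}, which gives $\gR'=\gR'\cap\gR$, with {\bf (SS2)} and $\gQ\subset\gR'\cap\gR$, one gets $\gQ''=\gR'\cap\gR=\gR'$ (using that $\gR'\cap\gR$ is a von Neumann algebra, hence equal to its double commutant). Proposition \ref{propBeCaagg} then identifies $\gQ''$ with the bounded measurable functions $f(Q_1,\ldots,Q_n)=\int f\,dP^{(\gQ)}$; since $P^{(\gQ)}$ is supported on the discrete set $\sigma(\gQ)$, every such operator has the form $\sum_{{\bf q}}c_{{\bf q}}P^{(\gQ)}_{{\bf q}}$ with bounded scalars $c_{{\bf q}}$. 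Hence $\gR'=\{\sum_{{\bf q}}c_{{\bf q}}P^{(\gQ)}_{{\bf q}}\}$. The hard step is then irreducibility together with $\gR_{{\bf q}}=\gB(\cH_{{\bf q}})$. Since $P^{(\gQ)}_{{\bf q}}$ is central, the reduced family $\gR_{{\bf q}}=\{A|_{\cH_{{\bf q}}}:A\in\gR\}$ is a von Neumann algebra on $\cH_{{\bf q}}$ (reduction of a von Neumann algebra by a projection of its commutant). I would compute its commutant directly: if $B\in\gB(\cH_{{\bf q}})$ commutes with every $A|_{\cH_{{\bf q}}}$, extend it to $\tilde B\in\gB(\cH)$ by $\tilde B:=BP^{(\gQ)}_{{\bf q}}$ (zero on the other sectors); using that each $A\in\gR$ commutes with $P^{(\gQ)}_{{\bf q}}$ one checks $A\tilde B=\tilde B A$, so $\tilde B\in\gR'$. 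By the previous step $\tilde B=\sum_{{\bf q}'}c_{{\bf q}'}P^{(\gQ)}_{{\bf q}'}$, and since $\tilde B$ is supported on $\cH_{{\bf q}}$ this forces $\tilde B=c_{{\bf q}}P^{(\gQ)}_{{\bf q}}$, i.e. $B=c_{{\bf q}}I_{\cH_{{\bf q}}}$. Thus $\gR_{{\bf q}}'=\bC I_{\cH_{{\bf q}}}$, which is exactly irreducibility, and by the double commutant theorem \ref{teoDC}, $\gR_{{\bf q}}=\gR_{{\bf q}}''=(\bC I_{\cH_{{\bf q}}})'=\gB(\cH_{{\bf q}})$. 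Collecting the reductions over all ${\bf q}$ yields (\ref{bigdecA}), settling (a) and (b). I expect the delicate point to be the justification that $\gR_{{\bf q}}$ is genuinely weakly closed on $\cH_{{\bf q}}$ (the reduction-by-a-central-projection argument), since without it the commutant computation alone would not upgrade $\gR_{{\bf q}}'=\bC I$ to $\gR_{{\bf q}}=\gB(\cH_{{\bf q}})$.

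Finally, for (c): linearity, multiplicativity, and — because $P^{(\gQ)}_{{\bf q}}$ is central, so that $\cH_{{\bf q}}$ is invariant under both $A$ and $A^*$ — the identity $(A|_{\cH_{{\bf q}}})^*=A^*|_{\cH_{{\bf q}}}$, together with $I|_{\cH_{{\bf q}}}=I_{\cH_{{\bf q}}}$, show that $A\mapsto A|_{\cH_{{\bf q}}}$ is a $^*$-representation in the sense of Definition \ref{defrap}. For inequivalence I would test a would-be intertwining unitary $U:\cH_{{\bf q}}\to\cH_{{\bf q}'}$ on the central element $A=P^{(\gQ)}_{{\bf q}}\in\gR$: its restriction is $I_{\cH_{{\bf q}}}$ on the ${\bf q}$-sector but $0$ on the ${\bf q}'$-sector, so $U A|_{\cH_{{\bf q}}}U^{-1}=I_{\cH_{{\bf q}'}}\neq 0=A|_{\cH_{{\bf q}'}}$ whenever ${\bf q}\neq{\bf q}'$ (here $\cH_{{\bf q}'}\neq\{0\}$ since ${\bf q}'\in\sigma(\gQ)$), a contradiction; the very same element exhibits a nontrivial kernel, so each representation is unfaithful.
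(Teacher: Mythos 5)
Your proof is correct and follows the same overall strategy as the paper's: centrality of the $P^{(\gQ)}_{{\bf q}}$ gives the orthogonal decomposition and the invariance, the identification of $\gR'$ with $\gQ''$ (hence with the ``diagonal'' operators $\sum_{{\bf q}} c_{{\bf q}} P^{(\gQ)}_{{\bf q}}$) drives irreducibility, and the double commutant theorem yields $\gR_{{\bf q}}=\gB(\cH_{{\bf q}})$. Two local differences are worth noting. For irreducibility the paper argues directly on invariant subspaces: the projector $P_\cK$ of an invariant $\{0\}\neq\cK\subset\cH_{{\bf s}}$ lies in $\gR'=\gR'\cap\gR$, hence is an element of the joint PVM by {\bf (SS2)}, and $P_\cK\leq P^{(\gQ)}_{{\bf s}}$ forces $P_\cK=P^{(\gQ)}_{{\bf s}}$ because $P^{(\gQ)}_{{\bf s}}$ is an atom of that PVM; you instead compute $\gR_{{\bf q}}'=\bC I_{\cH_{{\bf q}}}$ by extending a commuting $B$ to $\tilde B=BP^{(\gQ)}_{{\bf q}}\in\gR'$. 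The two formulations are equivalent, and your explicit flagging of the need for $\gR_{{\bf q}}$ to be weakly closed before the commutant computation can be upgraded to $\gR_{{\bf q}}=\gB(\cH_{{\bf q}})$ is a point the paper passes over with ``by direct inspection''. For the unitary inequivalence in (c) the paper tests the would-be intertwiner on $Q_1$ itself, obtaining $q_1 I_{\cH_{{\bf q}'}}=q_1'I_{\cH_{{\bf q}'}}$; your choice of the bounded central element $P^{(\gQ)}_{{\bf q}}\in\gR$ is arguably cleaner, since $Q_1$ may be unbounded and thus not literally an element of $\gR$ to which the intertwining relation is postulated to apply --- one would have to pass to its spectral projections anyway, which is exactly what you do, and the same element also exhibits the non-faithfulness.
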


\begin{proof}
(a)  Since 
$P^{(\gQ)}_{{\bf q}} P^{(\gQ)}_{{\bf s}} =0$ if ${\bf q} \neq {\bf s}$
and
$\sum_{{\bf q}\in \sigma_p(\gQ) }P^{(\gQ)}_{{\bf q}}=I$,
$\cH$ decomposes as in (\ref{bigdec}). Since  $P^{(\gQ)}_{{\bf q}}$ belongs to the centre of $\gR$, the subspaces of the decomposition are invariant under the action of each element of $\gR$. Let us pass to the irreducibility.
If $P \in \gR'\cap \gR$ is an orthogonal projector it must be a function of the $Q_k$ by hypotheses:
$P = \int_{\bR^n} f(x) dP^{(\gQ)}(x)$
since $P=PP\geq 0$ and $P=P^*$, exploiting the measurable functional calculus, we easily find that $f(x) = \chi_E(x)$ for some $E \subset supp(P^{(\gQ)})$. In other words $P$ is an element of the joint PVM of $\gQ$: that PVM exhausts all orthogonal projectors in $\gR'\cap \gR$.
Now, if  $\{0\} \neq \cK \subset \cH_{{\bf s}}$ is an invariant closed  subspace for $\gR$, its orthogonal projector $P_\cK$ must commute with $\gR$, so it must belong to the centre for (SS2) and thus it belongs to $P^{(\gQ)}$ for (SS1) and, more precisely it must be of the form $P_\cK = P^{(\gQ)}_{{\bf s}}$ because $P_\cK \leq P^{(\gQ)}_{{\bf s}}$ by hypothesis, but there are no projectors smaller that $P^{(\gQ)}_{{\bf s}}$ in the PVM of $\gQ$. So $\cK = \cH_{{\bf s}}$.\\
(b) $\gR_{{\bf q}} := \left\{ \left. A|_{\cH_{{\bf q}}}\:\right|\: A \in \gR\right\}$ is a von Neumann algebra on $\cH_{{\bf s}}$ considered as a Hilbert space in its own right as it arises by direct inspection. (\ref{bigdecA}) holds by definition. Since $\cH_{{\bf q}}$ is irreducible for $\gR_{{\bf q}}$, we have
$\gR_{{\bf s}} = \gR_{{\bf s}}'' = \gB(\cH_{{\bf s}})$.
Each map $\gR \ni A \mapsto  A|_{\cH_{{\bf q}}} \in \gR_{{\bf q}}$
is a representation of $^*$-algebras as follows by direct check.
If ${\bf q} \neq {\bf q}'$ --for instance $q_1 \neq q_1'$-- there is no isometric surjective map 
$U: \cH_{{\bf q}}\to \cH_{{\bf q}'}$ such that 
$$U A|_{\cH_{{\bf q}}} U^{-1} = A|_{\cH_{{\bf q}'}}$$
If such an operator existed one would have, contrarily to our hypothesis $q_1 \neq q_1'$,
$q_1I_{\cH_{{\bf q}'}} = U Q_1|_{\cH_{{\bf q}}} U^{-1} = Q_1|_{\cH_{{\bf q}'}} =q_1'I_{\cH_{{\bf q}'}}$ so that $q_1=q'_1$.
\end{proof}
\noindent We have found that, in the presence of superselection charges, the Hilbert space decomposes into pairwise orthogonal subspaces which are  invariant and irreducible with respect to the algebra of the observables, giving rise to inequivalent representations of the algebra itself. Restricting ourselves to each such subspace, QM takes its standard form as all orthogonal projectors are representatives of  elementary observables, differently from what happens in the whole Hilbert space where there are orthogonal projectors 
which cannot represent observables: These are the projectors which do not commute with $P^{(\gQ)}$.\\
There are several superselection structures as the one pointed out in physics. The three most known are of very different nature: The superselection structure of the 
{\em electric charge}, the superselection structure of 
{\em integer/semi integers values of the angular momentum}, and the one related to the mass in non-relativistic physics, i.e., {\em Bargmann's superselection rule}.

\example The electric charge is the typical example of superselction charge. For instance, referring to an electron, its Hilbert space is $L^2(\bR^3, d^3x)\otimes \cH_{s}\otimes \cH_{e}$. The space of the electric charge is $\cH_e= \bC^2$ and therein $Q= e\sigma_z$
(see (\ref{pauli})). Many other observables could exist in $\cH_{e}$ in principle, but the elecrtic charge superselection rule imposes that the only possible observables are functions of $\sigma_z$. The centre of the algebra of observables is $I\otimes I \otimes f(\sigma_3)$ for every function $f : \sigma(\sigma_z) = \{\-1,1\} \to \bC$.
We have the decomposition in coherent sectors
$$\cH = (L^2(\bR^3, d^3x)\otimes \cH_{s}\otimes \cH_+ ) \bigoplus (L^2(\bR^3, d^3x)\otimes \cH_{s}\otimes \cH_-)\:,$$
where  $\cH_\pm$ are respectively the eigenspaces of $Q$ with eigenvalue $\pm e$. \hfill $\blacksquare$\\

\remark $\null$ 

{\bf (a)} A fundamental requirement  is that the superselection charges have punctual spectrum. If instead $\gR\cap \gR'$ includes an 
operator $A$ with a continuous part in its spectrum ($A$ may also be the strong limit on  $D(A)$ of a sequence of elements in $\gR\cap \gR'$), the established proposition does not hold. $\cH$ cannot be decomposed into a direct sum of closed subspaces. In this case 
it decomposes into a direct integral and we find a much more complicated structure whose physical meaning seems dubious.

{\bf (b)} The represntations $\gR \ni A \mapsto  A|_{\cH_{{\bf q}}} \in \gR_{{\bf q}}$
are not faithful (injective), because both $I$ and $P^{(\gQ)}_{{\bf s}}$ have the same image under the representation.

{\bf (c)} The discussed picture is not the most general one though we only deal with it  in these notes. There are quantum physical systems such that their $\gR'$ is not Abelian (think of chromodynamics where $\gR'$ includes a faithful representation of $SU(3)$) so that the centre of $\gR$ does not contain the full information about $\gR'$. In this case, the non-Abelian group of the unitary operators in $\gR'$ is called the {\bf gauge group} of the theory.  The existence of a gauge group is compatible with the presence of superselection rules which are completely described by the centre $\gR'\cap \gR$. The only  difference is that now
 $\gR_{{\bf q}} = \gB(\cH_{{\bf q}})$ cannot be possible for every coherent subspace otherwise we would have $\gR' = \gR \cap \gR'$.

 \hfill  $\blacksquare$

\subsubsection{States in the presence of superselection rules} Let us come to the problem to characterize  the states when a superselection structure is assumed on a complex separable Hilbert space $\cH$ in accordance with {\bf (SS1)} and {\bf (SS2)}. In principle we can extend Definition \ref{defstates} already given for the case of $\gR$ with trivial centre. As usual 
 ${\cal L}_{\gR}(\cH)$ indicates the 
lattice of orthogonal projectors in $\gR$, which we know to be
bounded by $0$ and $I$, orthocomplemented, $\sigma$-complete, orthomodular and separable,
but not atomic and it does not satisfy the covering property in general. The atoms are one-dimensional projectors exactly as pure sates, so we may expect some difference at that level when $\gR \neq \gB(\cH)$. 

\begin{definition}\label{defstates2}
Let $\cH$ be a complex separable Hilbert space. A {\bf quantum state} in $\cH$, for a quantum sistem with von Neumann algebra of observables $\gR$,
is a map $\rho : {\cal L}_{\gR}(\cH) \to [0,1]$ such that the following requirement are satisfied.

(1) $\rho(I) =1$\:.

(2) If $\{Q_n\}_{n \in N}\subset  {\cal L}_{\gR}(\cH)$, for $N$ at most countable satisfies 
$Q_k \wedge Q_h =0$ when $h,k \in N$, then
\beq \rho(\vee_{k\in N}Q_k) = \sum_{k \in N} \rho(Q_k)\:.\label{sigmaadd22}\eeq
The set of the states  will be denoted by $\gS_{\gR}(\cH)$. \hfill $\blacksquare$
\end{definition}
\noindent 
If there is a superselection structure we have the decompositions we re-write down into a simpler version,
\beq \cH = \bigoplus_{k\in K} \cH_{k} \:, \quad \gR = \bigoplus_{k\in  K} \gR_{k}\:, \quad \gR_{k} = \gB(\cH_k)\:, \: k \in K\label{bigdecr}\eeq
where $K$ is some finite or countable set.
The lattice ${\cal L}_{\gR}(\cH)$, as a consequence of (\ref{sigmaadd22}), decomposes as  (the notation should be obvious)
\beq
{\cal L}_{\gR}(\cH) = \bigvee_{k\in K} {\cal L}_{\gR_k}(\cH_k) =  \bigvee_{k\in K} {\cal L}(\cH_k)
\eeq
where 
$$ {\cal L}_{\gR_k}(\cH_k) \bigwedge  {\cal L}_{\gR_h}(\cH_h) = \{0\} \quad \mbox{if $k\neq h$}\:.$$
In other words $Q \in {\cal L}_{\gR}(\cH)$ can uniquely be written as
$Q = +_{k\in K} Q_k$  where   $Q_k \in {\cal L}(\gB(\cH_k))$. In fact 
$Q_k = P_kQ_k$, where $P_k$ is the orthogonal projector onto $\cH_k$.\\
In this framework, it is possible to readapt Gleason's result simply observing that 
a state $\rho$ on ${\cal L}_\gR(\cH)$ as above defines a state $\rho_k$ on 
${\cal L}_{\gR_k}(\cH_k) = {\cal L}(\cH_k)$ by 
$$\rho_k (P) := \frac{1}{\rho(P_k)} \rho(P)\:, \quad P \in {\cal L}(\cH_k)\:.$$
If $dim(\cH_k)\neq 2$ we can exploit Gleason's theorem.

\begin{theorem}
Let $\cH$ be a complex separable Hilbert space and assume that the von Neumann algebra $\gR$ in $\cH$ satisfies {\bf(SS1)}
and {\bf (SS2)}, so that the decomposition (\ref{bigdecr}) in coherent sectors is valid where we suppose $dim \cH_k \neq 2$ for every $k\in K$. The following facts hold.\\
{\bf (a)} If $T \in \gB_1(\cH)$ satisfies $T\geq 0$ and $tr\: T=1$ then
$$\rho_T : {\cal L}_{\gR}(\cH) \ni P \mapsto tr(T P)$$ is an elemeont of $\gS_{\gR}(\cH)$ that is a state on ${\cal L}_{\gR}(\cH)$.\\
{\bf (b)} For  $\rho\in \gS_{\gR}(\cH)$ there is a $T \in \gB_1(\cH)$ satisfies $T\geq 0$ and $tr\: T=1$ such that $\rho = \rho_T$.\\
{\bf (c)} If $T_1, T_2 \in \gB_1(\cH)$ satisfy same hypotheses as $T$ in (a), then  
 $\rho_{T_1}=\rho_{T_2}$ is valid if and only if $P_kT_1P_k = P_kT_2P_k$ for all $k\in K$,
$P_k$ being the orthogonal projector onto $\cH_k$.\\
{\bf (d)}  A unit vector  $\psi \in \cH$ defines a pure state only if  belongs to a coherent sector. More precisely,
a state $\rho\in \gS_{\gR}(\cH)$ is pure, that is extremal, if and only if there is $k_0\in K$, $\psi \in \cH_{k_0}$  with $||\psi||=1$ such that
$$\rho(P)=0 \quad \mbox{if $P \in {\cal L}(\cH_{k})$, $k\neq k_0$ \quad and \quad  $\rho(P) =
\langle \psi|P \psi \rangle$ if $P \in {\cal L}(\cH_{k_0})$}$$
\end{theorem}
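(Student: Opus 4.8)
The plan is to reduce every claim to Gleason's Theorem \ref{Gleasontheorem} applied separately on each coherent sector, exploiting the decomposition $\cH=\bigoplus_{k\in K}\cH_k$ and the fact that each $P\in{\cal L}_\gR(\cH)$ splits uniquely as $P=\sum_{k}Q_k$ with $Q_k:=P_kPP_k\in{\cal L}(\cH_k)$, where $P_k$ is the orthogonal projector onto $\cH_k$. Part \textbf{(a)} is then immediate: $\rho_T$ is already a state on the whole lattice ${\cal L}(\cH)$ by Proposition \ref{proppregleason}, so its restriction to the sublattice ${\cal L}_\gR(\cH)$ inherits $\rho_T(I)=tr\,T=1$ and the $\sigma$-additivity of Definition \ref{defstates2} on pairwise orthogonal families (continuity of the trace gives $tr(T\sum_k Q_k)=\sum_k tr(TQ_k)$); the bound $\rho_T(P)\in[0,1]$ follows from $0\le P\le I$ and $T\ge0$.

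For \textbf{(b)}, the substantive direction, I would set $p_k:=\rho(P_k)$, so that $\sum_k p_k=\rho(I)=1$ by $\sigma$-additivity, and for every $k$ with $p_k>0$ define the state $\rho_k:=p_k^{-1}\rho|_{{\cal L}(\cH_k)}$ on ${\cal L}(\cH_k)={\cal L}_{\gR_k}(\cH_k)$. Since $\dim\cH_k\ne2$ and $\cH_k$ is separable, Gleason's theorem yields a unique density operator $T_k$ on $\cH_k$ representing $\rho_k$. I would glue these into $T:=\sum_k p_kT_k$, each $T_k$ read as $P_kT_kP_k\in\gB(\cH)$. The key technical point is convergence: since $T_k\ge0$ gives $\|p_kT_k\|_1=p_k\,tr\,T_k=p_k$ and $\sum_kp_k=1$, the partial sums are Cauchy in $(\gB_1(\cH),\|\cdot\|_1)$, whence $T\in\gB_1(\cH)$ with $T\ge0$ and $tr\,T=1$. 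Finally, orthogonality of the sectors gives $TP=\sum_kp_kT_kQ_k$, and, handling the sectors with $p_k=0$ by monotonicity of $\rho$ (itself a consequence of orthomodularity and $\rho\ge0$), one obtains $\rho(P)=\sum_k\rho(Q_k)=\sum_k tr(p_kT_kQ_k)=tr(TP)$.

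Parts \textbf{(c)} and \textbf{(d)} are then formal. For \textbf{(c)}, the direction $(\Leftarrow)$ is the cyclic-trace identity $tr(T_iP)=\sum_k tr\big(P_kT_iP_k\,Q_k\big)$, so coincidence of the compressions $P_kT_1P_k=P_kT_2P_k$ forces $\rho_{T_1}=\rho_{T_2}$; for $(\Rightarrow)$ I would restrict to each sector and invoke the uniqueness argument following Gleason's theorem (equivalently exercise \ref{vex}(3)), namely $\langle x|(P_kT_1P_k-P_kT_2P_k)x\rangle=0$ for all $x\in\cH_k$ implies $P_kT_1P_k=P_kT_2P_k$. For \textbf{(d)}, $(\Leftarrow)$ is handled by noting that the candidate state is represented by the rank-one projector $P_\psi=\langle\psi|\cdot\rangle\psi$ supported in $\cH_{k_0}$: if $\rho=\lambda\rho_{T_1}+(1-\lambda)\rho_{T_2}$, then (c) gives $\lambda P_kT_1P_k+(1-\lambda)P_kT_2P_k=P_kP_\psi P_k$ in each sector, which by positivity kills the sectors $k\ne k_0$ and, since $P_\psi$ is extremal among density operators on $\cH_{k_0}$ (Proposition \ref{extremalstates}), forces $P_{k_0}T_iP_{k_0}=P_\psi$; hence $\rho_{T_1}=\rho_{T_2}=\rho$ by (c) and $\rho$ is extremal. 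For $(\Rightarrow)$, I would first prove that a pure $\rho$ is concentrated on a single sector: were $p_{k_1}\in(0,1)$ for some $k_1$, the splitting $\rho=p_{k_1}\sigma_1+(1-p_{k_1})\sigma_2$ into its $k_1$-part and the complementary part exhibits two states separated by their value on $P_{k_1}$, contradicting extremality. Once $\rho$ lives on $\cH_{k_0}$, extremality descends to $\gS(\cH_{k_0})$, so its representative $T_{k_0}$ is a one-dimensional projector by Proposition \ref{extremalstates}, giving $\rho(P)=\langle\psi|P\psi\rangle$ on ${\cal L}(\cH_{k_0})$ and $\rho(P)=0$ off it.

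I expect the main obstacle to lie in \textbf{(b)}: establishing trace-norm convergence of the glued operator $\sum_kp_kT_k$ and checking that the representation identity $\rho=\rho_T$ genuinely survives the passage through the (possibly infinitely many) sectors, including those of zero weight. The extremality-to-single-sector reduction in \textbf{(d)} is the other delicate step, as it requires producing an explicit nontrivial convex decomposition whenever the weight is spread over more than one sector.
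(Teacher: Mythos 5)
Your proposal is correct and follows essentially the same route as the paper: restriction of Proposition \ref{proppregleason} for (a), sector-by-sector application of Gleason's theorem with a glued trace-class operator $T=\oplus_k T_k$ for (b), sectorwise uniqueness for (c), and the observation that a state spread over more than one sector admits a nontrivial convex splitting for (d). Your treatment is if anything slightly more careful than the paper's on the trace-norm convergence of the glued operator and on the zero-weight sectors, but the underlying argument is the same.
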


\begin{proof} (a) is obvious from Proposition \ref{proppregleason}, as restricting a state $\rho$ on ${\cal L}(\cH)$ to ${\cal L}_{\gR}(\cH)$ we still obtain a state as one can immediately verify.
Let us prove (b). Evidently, every $\rho|_{{\cal L}(\cH_k)}$ is a positive measure with
$0\leq \rho(P_k) \leq 1$. We can apply Gleason's theorem finding $T_k\in \gB(\cH_k)$ with $T_k\geq 0$
and $Tr\: T_k = \rho(P_k)$ such that $\rho(Q)= tr(T_kQ)$ if $Q\in {\cal L}(\cH_k)$. 
Notice also that $||T_k||\leq \rho(P_k)$ because
$$||T_k||= \sup_{\lambda \in \sigma_p(T_k)} |\lambda| =
\sup_{\lambda \in \sigma_p(T_k)} \lambda
 \leq \sum_{\lambda \in \sigma_p(T_k)} d_\lambda \lambda = Tr\:T_k = \rho(P_k)\:.$$
If $Q \in {\cal L}_{\gR}(\cH)$, $Q = \sum_kQ_k$, where $Q_k:= P_kQ \in {\cal L}(\cH_k)$, $Q_kQ_h=0$ if $k \neq h$ and thus, by $\sigma$-additivity,
$$\rho(Q)= \sum_k \rho(Q_k) = \sum_k tr(T_kQ_k)$$
since $\cH_k \perp \cH_h$, this identity can be rewritten as 
$$\rho(Q) = tr(TQ)$$
provided $T := \oplus_{k}T_k\in \gB_1(\cH)$. It is clear that $T \in \gB(\cH)$ because, if $x\in \cH$ and $||x||=1$ then, as $x = \sum_k x_k$ with $x_k\in \cH_k$,  $||Tx|| \leq  \sum_k||T_k||\:||x_k|| \leq 
\sum_k||T_k|| 1 \leq \sum_k \rho(P_k) =1$. In particular $||T||\leq 1$. $T\geq 0$ because each $T_k\geq 0$. Hence $|T| = \sqrt{T^*T} = \sqrt{TT} = T$ via functional calculus, and also $|T_k|=T_k$. Moreover, using the spectral decomposition of $T$, whose PVM commutes with each $P_k$, one easily has $|T| = \oplus_k |T_k| = \oplus_k T_k$.
The condition 
$$1= \rho(I) =\sum_k \rho(P_k) = \sum_k tr(T_kP_k)= \sum_k tr(|T_k|P_k)$$
is equivalent to say that $tr\: |T| =1$  using a Hilbertian basis of $\cH$ made of the union of bases in each $\cH_k$. We have obtained, as wanted, that $T \in \gB_1(\cH)$, $T\geq 0$, $tr\: T=1$ and $\rho(Q) = tr(TQ)$ for all $Q \in {\cal L}_{\gR}(\cH)$.\\
 (c) The proof straightforwardly follows form  ${\cal L}_{\gR_k}(\cH_k)
= {\cal L}(\gB(\cH_k))$ because $\gR_k = \gB(\cH_k)$ and, evidently,  $\rho_{T_1} = \rho_{T_2}$ if and only if $\rho_{T_1}|_{{\cal L}(\gB(\cH_k))} = \rho_{T_2}|_{{\cal L}(\gB(\cH_k))}$ for all $k\in K$. Regarding (d) it is clear that if $\rho$ encompasses more than one component $\rho|_{{\cal L}(\cH_k)}\neq 0$ cannot be extremal because is, by construction, a convex combination of other states which vanishes in some of the given coherent subspace. Therefore only states such that only one restriction $\rho|_{{\cal L}(\cH_{k_0})}$ does not vanish may be extremal. Now (a) of Proposition \ref{extremalstates} implies that, among these states, the extremal ones are precisely those of the form said in (d) of the thesis. 
\end{proof}

\remark $\null$\\
{\bf (a)} Take $\psi = \sum_{k\in K} c_k\psi_k$ where the $\psi_k \in \cH_k$ are unit vectors and also suppose that $||\psi||^2 = \sum_k |c_k|^2 =1$.
This vector induces a state $\rho_\psi$ on $\gR$ by means of the standard procedure (which is nothing but the trace procedure with respect to $T_\psi := \langle \psi| \:\: \rangle \psi$!)
$$\rho_\psi(P) = \langle \psi|P \psi \rangle \quad P \in {\cal L}_{\gR}(\cH)\:.$$
In this case however, since $PP_k=P_kP$ and $\psi_k = P_k\psi_k$ we have
$$\rho_\psi(P) = \sum_{k}\sum_h \overline{c_k}c_h \langle \psi_k|P_kPP_h \psi_k \rangle
= \sum_{k}\sum_h \overline{c_k}c_h \langle \psi_k|PP_kP_h \psi_k \rangle $$ $$ =
\sum_{k}\sum_h \overline{c_k}c_h \langle \psi|PP_k \psi \rangle\delta_{kh}  = \sum_k |c_k|^2 \langle \psi_k|P \psi_k \rangle = tr(T'_\psi P)$$
where 
$$T'_\psi = \sum_{k\in K} |c_k|^2 \langle \psi_k | \:\: \rangle \psi_k$$
We conclude that the apparent pure state $\psi$ and the mixed state $T'_\psi$ cannot be distinguished, just because the algebra $\gR$ is too small to make a difference. Actually they define the same state at all and this is an elementary case of (c) in the above theorem with $T_1 = \langle \psi| \:\: \rangle \psi$ and $T_2= T'_\psi$. \\
This discussion, in the language of physicist is often stated as follows: \\ {\em No coherent superpositions $\psi = \sum_{k\in K} c_k\psi_k$ of pure states $\psi_k \in \cH_k$ of different coherent sectors are possible, only incoherent superpositions $\sum_{k\in K} |c_k|^2 \langle \psi_k | \:\: \rangle \psi_k$ are allowed.}\\
{\bf (b)} It should be clear that the one-to-one correspondence between pure states and atomic elementary observables (one-dimensional projectors) here does not work. Consequently, notions like {\em probability amplitude} must be handled with great care. In general, however, everything goes right if staying in a fixed superselection sector $\cH_k$ where the said correspondence exists. \hfill $\blacksquare$

\subsection{Quantum Symmetries: unitary projective representations}
The notion of symmetry in QM is quite abstract. Actually there are three distinct ideas, respectively  by Wigner, Kadison and Segal \cite{Simon}. Here we focus on the first pair  only. 
Physically speaking, a {\em symmetry} is an active transformation on the quantum system changing its state.  It is supposed that this transformation preserves some properties of the physical system  and here we have to distinguish between the two afore-mentioned cases.
However in both cases the transformation is required to be reversible (injective)  and to cover (surjective) the space of the states. Symmetries are supposed to mathematically describe some concrete transformation acting  on the physical system. Sometimes their action, in practice, can be cancelled simply changing the reference frame. This is not the general case however, even if  this class of symmetries plays a relevant role in physics.

\subsubsection{Wigner and Kadison theorems, groups of symmetries}
Consider a quantum system described in the complex Hilbert space $\cH$
with dimension $\neq 2$, separable whenever infinite dimensional. We assume that either  $\cH$ is the whole Hilbert space in the absence of superselection charges or it denotes a single coherent sector. Let $\gS(\cH)$ and $\gS_p(\cH)$ respectively indicate the convex body of the quantum states and the set of pure states, referred to the sector $\cH$ if it is the case. 

\begin{definition} {\em If $\cH$ is a complex Hilbert space with dimension $\neq 2$, separable if infinite dimensional, we have the following definitions.\\
{\bf (a)} A {\bf Wigner symmetry} is a bijective map $$s_W : \gS_p(\cH) \ni \langle \psi| \:\: \rangle \psi \to \langle \psi'| \:\: \rangle \psi' \in \gS_p(\cH)$$ which preserves the probabilties of transition. In other words
$$|\langle\psi_1|\psi_2\rangle|^2 = |\langle\psi_1'|\psi_2'\rangle|^2 \quad \mbox{if}\quad  \psi_1 \:,\psi_2 \in \cH\:\:\: \mbox{with}\:\: ||\psi_1||=||\psi_2||=1\:.$$
{\bf (b)} A {\bf Kadison symmetry} is a bijective map $$s_K : \gS(\cH) \ni \rho \to \rho' \in \gS(\cH)$$ which preserves the convex structure of the space of the states. In other words
 $$(p\rho_1 + q \rho_2)' = p\rho_1' + q\rho_2'\quad \mbox{if}\quad \rho_1,\rho_2 \in \gS(\cH) \quad\mbox{and\:\: $p,q\geq 0$ with $p+q=1$.} $$}
\hfill $\blacksquare$
\end{definition}
\noindent We observe that the first definition is well-posed even if unit vectors define pure states just up to a phase, as the reader can immediately prove, because transition probabilities are not affected by that ambiguity.\\
Though the definitions are evidently of different nature, they lead to the same mathematical object, as established in a pair of famous characterization theorems we quote into a unique statement.  We need a preliminary definition.
\begin{definition} {\em Let $\cH$ a complex Hilbert space. A map $U : \cH \to \cH$ is said to be an {\bf antiunitary operator} if it is surjective, isometric and $U(ax+by)=
\overline{a}Ux + \overline{b}Uy$ when $x,y \in \cH$ and $a,b \in \bC$.} \hfill $\blacksquare$
\end{definition}
\noindent We come to the celebrated theorem.
The last statement is obvious, the difficult parts are (a) and (b) (see, e.g.,\cite{moretti}).
\begin{theorem}[Wigner and Kadison theorems]
Let $\cH$ be a complex Hilbert space with dimension $\neq 2$, separable if infinite dimensional. The following facts hold.\\
{\bf (a)} For every  Wigner symmetry  $s_W$ there is an operator $U : \cH \to \cH$, which can be either unitary or anti unitary  (depending on $s_w$) such that 
\beq s_w : \langle \psi| \:\: \rangle \psi \to \langle U\psi| \:\: \rangle U\psi\:,
\quad \forall \langle \psi| \:\: \rangle \psi \in \gS_p(\cH)\:.\label{Ws}\eeq
$U$ and $U'$ are associated to the same $s_W$ if and only if $U'=e^{ia}U$ for $a \in \bR$.\\
{\bf (b)}  For every  Kadison symmetry  $s_K$ there is an operator $U : \cH \to \cH$, which can be either unitary or anti unitary  (depending on $s_K$) such that 
\beq s_w : \rho  \to U\rho U^{-1} \:,\quad \forall \rho \in \gS(\cH)\:. \label{Ks}\eeq
$U$ and $U'$ are associated to the same $s_K$ if and only if $U'=e^{ia}U$ for $a \in \bR$.\\
{\bf (c)} $U: \cH \to \cH$, either unitary or antiunitary, simultaneously defines a Wigner and a Kadison symmetry by means of (\ref{Ws}) and (\ref{Ks}) respectively.
\end{theorem}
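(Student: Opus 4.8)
The plan is to dispose of part (c) by direct verification and then to prove (b) by reducing it to (a), so that the whole theorem rests on an analysis of Wigner symmetries together with a careful bookkeeping of phases. Part (c) is immediate: if $U$ is unitary then $|\langle U\psi_1|U\psi_2\rangle|^2 = |\langle\psi_1|\psi_2\rangle|^2$, while if $U$ is antiunitary then $|\langle U\psi_1|U\psi_2\rangle|^2 = |\overline{\langle\psi_1|\psi_2\rangle}|^2 = |\langle\psi_1|\psi_2\rangle|^2$; in either case $\langle\psi|\:\:\rangle\psi \mapsto \langle U\psi|\:\:\rangle U\psi$ is a well-defined bijection of $\gS_p(\cH)$ (independent of the chosen representative $\psi$) preserving transition probabilities, hence a Wigner symmetry. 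Similarly $\rho\mapsto U\rho U^{-1}$ has inverse $\rho\mapsto U^{-1}\rho U$ and is manifestly affine in $\rho$, hence a Kadison symmetry.

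For part (a) I would fix a Hilbert basis $\{e_n\}$ of $\cH$ and choose unit representatives $f_n$ of the rays $s_W(\langle e_n|\:\:\rangle e_n)$. Preservation of transition probabilities gives $|\langle f_n|f_m\rangle|^2 = |\langle e_n|e_m\rangle|^2 = \delta_{nm}$, so $\{f_n\}$ is orthonormal, and surjectivity of $s_W$ forces it to be a Hilbert basis. For a general unit vector $\psi = \sum_n \langle e_n|\psi\rangle e_n$ the image ray admits a representative $\psi' = \sum_n \langle f_n|\psi'\rangle f_n$ with $|\langle f_n|\psi'\rangle| = |\langle e_n|\psi\rangle|$. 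The remaining freedom is in the phases of the $f_n$ and of $\psi'$; I would pin these down by testing $s_W$ on the superpositions $\frac{1}{\sqrt2}(e_1 + e_n)$ and $\frac{1}{\sqrt2}(e_1 + i e_n)$, whose transition probabilities with $e_1, e_n$ constrain the ratios $\langle f_n|\psi'\rangle / \langle e_n|\psi\rangle$ to equal a single constant for all $n$, or else to equal the conjugate uniformly in $n$. This forced choice is exactly the unitary/antiunitary dichotomy: defining $U$ by $U e_n := f_n$ and extending linearly in the first case, antilinearly in the second, one checks $U$ is unitary resp. antiunitary and implements $s_W$ via (\ref{Ws}). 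The main obstacle here is precisely the phase analysis: showing that the two admissible choices cannot be mixed across different basis directions, so that a single global alternative (linear vs.\ antilinear) survives.

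For part (b) I would first note that an affine bijection of a convex set carries extreme points to extreme points: if $s_K\rho = \frac12(\sigma_1+\sigma_2)$ with $\rho$ pure, applying $s_K^{-1}$ gives $\rho = \frac12(s_K^{-1}\sigma_1 + s_K^{-1}\sigma_2)$, whence $s_K^{-1}\sigma_i = \rho$ and $\sigma_i = s_K\rho$. Thus $s_K$ restricts to a bijection of $\gS_p(\cH)$. To invoke (a) I must show this restriction preserves transition probabilities, and for this I would use the identity $|\langle\psi|\phi\rangle|^2 = 1 - \tfrac14 ||\rho_\psi - \rho_\phi||_1^2$, valid because $\rho_\psi-\rho_\phi$ has eigenvalues $\pm\sqrt{1-|\langle\psi|\phi\rangle|^2}$. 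Extending $s_K$ by positive homogeneity and then linearity to the real space of self-adjoint trace class operators yields an order isomorphism that preserves the trace (it maps states to states); since $||T||_1 = \min\{tr(A)+tr(B) : A,B\geq 0,\ T=A-B\}$, attained at the Jordan decomposition, such a map preserves $||\cdot||_1$. Hence transition probabilities are preserved, part (a) supplies a unitary or antiunitary $U$ implementing $s_K$ on pure states, and by affinity together with the decomposition of an arbitrary $\rho\in\gS(\cH)$ into extremals (Proposition \ref{extremalstates}(c)) one concludes $s_K(\rho) = U\rho U^{-1}$ for all states.

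Finally, for the uniqueness clauses, suppose $U$ and $U'$ implement the same symmetry; then $V := U^{-1}U'$ fixes every ray, so $V\psi = \chi(\psi)\psi$ with $|\chi(\psi)|=1$. If $V$ is linear, evaluating on two non-parallel vectors and their sum forces $\chi$ to be constant, giving $U' = e^{ia}U$; if $V$ were antilinear (the case of one unitary and one antiunitary operator), the same test on $x$, $y$ and $x+y$ in dimension $\geq 2$ yields a contradiction, ruling out mixed implementations. The converse direction is trivial since $e^{ia}U$ induces the same map on rays and on $\rho\mapsto U\rho U^{-1}$. I expect the genuine difficulty of the whole argument to be concentrated in the phase-consistency step of part (a); everything else is either verification or a clean reduction. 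Detailed proofs of (a) and (b) may be found in \cite{moretti}.
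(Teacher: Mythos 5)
The paper itself does not prove parts (a) and (b): it only remarks that (c) is obvious and refers the reader to \cite{moretti} for the two hard implications, so there is no in-text argument to compare yours against. Your sketch follows the standard route (the classical basis-plus-phase analysis for Wigner's theorem, and the reduction of Kadison to Wigner through extreme points and the identity $|\langle\psi|\phi\rangle|^2 = 1-\frac{1}{4}\|\rho_\psi-\rho_\phi\|_1^2$, which you compute correctly from the eigenvalues $\pm\sqrt{1-|\langle\psi|\phi\rangle|^2}$ of $\rho_\psi-\rho_\phi$), and all the steps you state are sound, including the uniqueness argument: note that when $U$ is unitary and $U'$ antiunitary the composite $V=U^{-1}U'$ is antilinear, and an antilinear map cannot fix every ray since $V(ix)=-iVx$ while ray-preservation would force $V(ix)=i\chi(ix)x$ with $\chi(ix)=\chi(x)$ after testing on sums, a contradiction. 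Two points deserve more care than your sketch gives them, beyond the phase-consistency step you already flag: first, in (b) the extension of $s_K$ by positive homogeneity and differences to the real space of selfadjoint trace-class operators must be shown to be \emph{well defined} (independent of the chosen decomposition $T=\alpha\rho_1-\beta\rho_2$), which is where the affinity hypothesis is really consumed; second, the final passage from pure states to a general $\rho\in\gS(\cH)$ uses a countable (possibly infinite) convex decomposition as in Proposition \ref{extremalstates}, so you need trace-norm continuity of the extended map --- which does follow from its positivity and trace preservation, but should be said. With those two remarks supplied, your proposal is a correct and complete blueprint of the standard proof that the paper delegates to the reference.
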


\remark $\null$

{\bf (a)} It is worth stressing that the Kadison notion of symmetry is an extension of 
the Wigner one, after the result above. In fact, a Kadison  symmetry $\rho \mapsto U\rho U^{-1}$ restricted to  one-dimensional projectors preserves the probability transitions, as immediately follows from the identity $|\langle \psi|\phi \rangle|^2 = tr(\rho_\psi\rho_\phi)$ and the cyclic property of the trace, where we use the notation 
$\rho_\chi = \langle \chi| \:\: \rangle \chi$. In particular we can use the same operator $U$ to represent also the found  Wigner symmetry.  

{\bf (b)} If superselection rules are present, in general quantum symmetries are described in a similar way 
with unitary or antiunitary operators acting in a single coherent sector or also  swapping different sectors \cite{moretti}.
\hfill $\blacksquare$	\\

\noindent If a unitary or antiunitary operator $V$ represents a symmetry $s$, it has an action on observables, too.
If $A$ is an observable (a selfadjoint operator on $\cH$), we define the {\bf transformed observable} along the action of $s$ as \beq s^*(A):= VAV^{-1}\:.\label{VAV}\eeq
Obviously $D(s^*(A))= V(D(A))$.
It is evident that this definition is not affected by the ambiguity of the arbitrary phase in the choice of $V$ when $s$ is given. \\
According with (i) in Proposition \ref{propint2} the spectral measure of $s^*(A)$ is $$P^{(s^*(A))}_E= V P_E^{(A)} V^{-1} = s^*(P_E^{(A)})$$ as expected.  \\
The meaning of $s^*(A)$ should be evident: The probability that the observable $s^*(A)$ produces the outcome $E$ when the state is $s(\rho)$ (namely   $tr(P_E^{(s^*(A))}s(\rho))$) is the same as the probability that the observable $A$ produces the outcome $E$ when the state is $\rho$ (that is $tr(P_E^{(A)} \rho)$). Changing simultaneously and coherently observables and states nothing changes. 
 Indeed
$$tr(P_E^{(s^*(A))}s(\rho)) = tr(VP_E^{(A)} V^{-1} V\rho V^{-1}) =   tr(VP_E^{(A)}\rho V^{-1}) $$
$$ =tr(P_E^{(A)}\rho V^{-1}V) =  tr(P_E^{(A)} \rho)\:.$$

\begin{example}\label{exrep} $\null$\\
{\bf (1)} Fixing an inertial reference frame, the pure state of a quantum particle is defined, up to phases,  as a unit norm element $\psi$ of  $L^2(\bR^3, d^3x)$, where $\bR^3$ stands for the rest three space of the reference frame. The group of isometries $IO(3)$ of $\bR^3$ equipped with the standard Euclidean structure acts on states by means of symmetries the sense of Wigner and Kadison. If $(R,t) : x \mapsto Rx+t$ is the action of the generic element of $IO(3)$, where $R\in O(3)$ and $t \in \bR^3$, the associated quantum (Wigner) symmetry $s_{(R,t)}(\langle \psi |\:\: \rangle \psi) = \langle U_{(R,t)}\psi |\:\: \rangle U_{(R,t)}\psi$
is completely fixed by the unitary  operators  $U_{(R,t)}$. They are defined as
$$(U_{(R,t)}\psi)(x) := \psi((R,t)^{-1}x) \:, \quad x \in \bR^3\:, \psi \in L^2(\bR^3, d^3x)\:, \quad ||\psi||=1\:.$$
The fact that the Lebesgue measure is invariant under $IO(3)$ immediately proves that $U_{(R,t)}$ is unitary. 
It is furthermore easy to prove that, with the given definition
\beq U_{(I,0)}  = I \:, \quad U_{(R,t)}U_{(R',t')}=U_{(R,t)\circ (R',t')} \:, \quad \forall (R,t), (R',t') \in IO(3)\:. \label{groupIO}\eeq
{\bf (2)} The so called {\em time reversal} transformation classically corresponds to invert the sign of all the velocities of the physical system. It is possible to prove \cite{moretti} (see also (3) in exercise \ref{remconstt} below) that, in QM and for systems whose energy is bounded below but not above, the time reversal symmetry cannot be represented by unitary transformations, but only antiunitary.
In the most elementary situation as in (1), the time reversal is defined by means of the anti unitary operator  
$$(T\psi)(x) := \overline{\psi(x)} \:, \quad x \in \bR^3\:,  \psi \in L^2(\bR^3, d^3x)\:, \quad ||\psi||=1\:.$$
{\bf (3)} According to the example in (1), let us focus on   the subgroup of $IO(3)$  of  displacements  along $x_1$ parametrized by $u\in \bR$,
$$\bR^3 \ni x \mapsto x + u{\bf e}_1\:,$$
where  ${\bf e}_1$ 
denotes  the unit vector in $\bR^3$ along $x_1$. For every  value of the parameter $u$,
we indicate by $s_u$  the corresponding (Wigner) quantum symmetry,  $s_u(\langle \psi |\:\: \rangle \psi) = \langle U_u\psi |\:\: \rangle U_u\psi$ with 
$$(U_{u}\psi)(x) = \psi(x - u{\bf e_1})\:, \quad u \in \bR\:,$$
The action of this symmetry on the observable $X_k$ turns out to be
$$s^*_u(X_k) = U_u X_k U^{-1}_u= X_k + u\delta_{k1} I\:, \quad u \in \bR \:.$$
\hfill $\blacksquare$
\end{example}

\subsubsection{Groups of quantum symmetries} As in (1) in the example above, very often in physics one deals with groups of symmetries. In other words, there is a certain group $G$, with unit element $e$ and group product $\cdot$, and one  associates each element $g\in G$ to a symmetry $s_g$ (if Kadison or Wigner is immaterial here, in view of the above discussion). In turn, $s_g$ is associated to an operator $U_g$, unitary or antiunitary.

\remark In the rest of this section, we assume that all the $U_g$ are unitary. \hfill $\blacksquare$\\

\noindent It would be nice to fix these operators $U_g$ in order that the map $G \ni g \mapsto U_g$ be a {\bf unitary representation} of $G$ on $\cH$, that is
\beq
U_e = I\:, \quad U_gU_{g'} = U_{g\cdot g'} \quad g,g' \in G \label{gR}
\eeq
The identities (\ref{groupIO}) found in (1) in example \ref{exrep} shows that it is possible at least in certain cases.
In general the requirement (\ref{gR}) does not hold. What we know is that $U_{g\cdot g'}$
equals $U_gU_{g'}$ just   {\em up to phases}:
\beq U_gU_{g'}U_{g\cdot g'}^{-1}= \omega(g,g') I \quad \mbox{with $\omega(g,g')\in U(1)$ for all  $g,g' \in G$}. \label{gR2}\eeq 
For $g=e$ this identity gives in particular \beq U_e = \omega(e,e)I \label{Uee}\:. \eeq The numbers $\omega(g,g')$ are called {\bf multipliers}. 
They cannot be completely arbitrary, indeed associativity of composition of operators $(U_{g_1}U_{g_2})U_{g_3}=
U_{g_1}(U_{g_2}U_{g_3})$ yields the identity
\beq
\omega(g_1,g_2)\omega(g_1\cdot g_2,g_3) = \omega(g_1, g_2\cdot g_3)\omega(g_2,g_3)\:,\quad g_1,g_2,g_3 \in G\label{coci}
\eeq
which also implies
\beq
\omega(g,e) = \omega(e,g) = \omega(g',e)\:, \quad \omega(g,g^{-1}) = \omega(g^{-1},g)\:,\quad g, g' \in G\:.
\eeq
\begin{definition} {\em If $G$ is a group, a map  $G\ni g \mapsto U_g$ -- where the $U_g$ are unitary operators in the complex Hilbert space $\cH$ -- is named a {\bf unitary projective representation} of $G$ on $\cH$ if (\ref{gR2}) holds (so that also (\ref{Uee}) and (\ref{coci}) are valid). Moreover,
 
(i) two  unitary projective representation $G\ni g \mapsto U_g$ and $G\ni g \mapsto U'_g$ are said to be {\bf equivalent} if $U'_g= \chi_g U_g$, where $\chi_g \in U(1)$ for every $g\in G$. That is the same as requiring  that there are numbers $\chi_h \in U(1)$, if $h \in G$, such that
\beq
\omega'(g,g') = \frac{\chi_{g\cdot g'}}{\chi_g\chi_{g'}} \:\omega(g,g')\quad \forall g,g' \in G
\eeq
with obvious notation;

(ii) a unitary projective representation with $\omega(e,e)=\omega(g,e)=\omega(e,g)=1$ for every $g\in G$ is said to be {\bf normalized}.} \hfill $\blacksquare$
\end{definition}
\noindent 
\remark $\null$

 {\bf (a)} It is easily  proved that every unitary projective representation is always  equivalent to  a normalized representation. 

 {\bf (b)}  It is clear that two  projective unitary representations are equivalent if and only if they are  made of the same Wigner (or Kadison) symmetries.

{\bf (c)} In case of superselection rules, continuous symmetries representing a connected topological group 
do not swap  different  coherent sectors when acting on pure states \cite{moretti}.  

{\bf (d)} One may wonder if it is possible to construct a group representation $G\ni g \mapsto V_g$ where 
the operators $V_g$ may be both unitary or antiunitary. If  every  $g\in G$ can be written as  $g=h\cdot h$
for some $h$ depending on $g$ -- and this is the case if $G$ is a connected Lie group -- all the operators $U_g$ must be unitary because $U_g =U_hU_h$
is necessarily linear no matter if $U_h$ is linear or anti linear. The presence of arbitrary phases  does not change the result. 
\hfill $\blacksquare$\\

\noindent  Given a  unitary projective representation, a technical problem is to check if it is equivalent to a unitary representation, because unitary representations are much  simpler to handle. This is a difficult problem \cite{varadarajan,moretti} which is tackled especially when $G$ is a {\em topological group} (or {\em Lie group}) and the representation satisfies the following natural {\em continuity property}
\begin{definition}\label{defrapcont} {\em A unitary projective representation of the topological group $G$, $G\ni g \mapsto U_g$ on the Hilbert space $\cH$ is said to be {\bf continuous} if  the map 
$$G \ni g \mapsto |	\langle \psi| U_g \phi \rangle |$$
is continuous for every $\psi,\phi \in \cH$.}\hfill $\blacksquare$\\
\end{definition}
\noindent The notion of continuity defined above is natural as it regards continuity of probability transitions.
A well known co-homological condition assuring that a unitary projective representation
of  Lie groups is equivalent to a unitary one is due to Bargmann \cite{BaRa,moretti}.

\begin{theorem}[Bargmann's criterion] Let $G$ be a connected and simply connected  (real finite dimensional) Lie group with Lie algebra $\gg$. Every continuous unitary projective  representation of $G$ in a complex Hilbert space is equivalent to a strongly  continuous unitary representation of $G$ if, for every bilinear antisymmetric map $\Theta : \gg\times \gg \to \bR$ such that
$$\Theta([u,v],w)+ \Theta([v,w],u)+ \Theta([w,u],v) =0\:,\quad \forall u,v,w \in \gg$$
there is a linear map $\alpha : \gg \to \bR$ such that $\Theta(u,v) = \alpha([u,v])$, for all $u,v \in \gg$.
\end{theorem}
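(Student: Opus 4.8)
The plan is to read the hypothesis on $\Theta$ as the statement that the second Lie-algebra cohomology $H^2(\gg,\bR)$ with trivial coefficients vanishes, and then to translate the question about multipliers into a question about central extensions. First I would invoke remark (a) after the definition to assume the representation normalized, so that $\omega(e,e)=1$. Next I would record the elementary but decisive observation that $g\mapsto U_g$ is equivalent to a genuine unitary representation precisely when its multiplier $\omega$ is a \emph{coboundary}: if there are $\chi_g\in U(1)$ with $\omega(g,g')=\chi_g\chi_{g'}\chi_{g\cdot g'}^{-1}$, then $g\mapsto \chi_g^{-1}U_g$ satisfies $(\ref{gR})$. So the entire theorem reduces to showing that the cocycle identity $(\ref{coci})$ forces $\omega$ to be a coboundary, under the cohomological hypothesis on $\gg$.

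The bridge to the Lie algebra is the central extension picture. The identity $(\ref{coci})$ says exactly that the set $\hat G:=U(1)\times G$, endowed with the product $(\lambda,g)(\mu,h):=(\lambda\mu\,\omega(g,h),\,g\cdot h)$, is a group sitting in a central extension $1\to U(1)\to \hat G\to G\to 1$, and that $(\lambda,g)\mapsto \lambda U_g$ is an \emph{honest} unitary representation of $\hat G$. The representation $U$ is then equivalent to a unitary one if and only if this extension admits a continuous homomorphic section $g\mapsto(\chi_g,g)$, which is just the coboundary condition again. The step I expect to be the genuine obstacle is analytic rather than algebraic: one must upgrade the weak continuity of Definition \ref{defrapcont} to enough local regularity that a representative multiplier $\omega$ can be chosen continuous — indeed smooth — in a neighbourhood of $e$, so that $\hat G$ acquires the structure of a Lie group. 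This measurability-to-smoothness bootstrapping, in the spirit of the theorems guaranteeing that sufficiently measurable cocycles and homomorphisms of Lie groups are automatically smooth, is the delicate input that makes the infinitesimal analysis legitimate.

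Granting a local Lie group structure on $\hat G$, the infinitesimal step is clean. The Lie algebra $\hat\gg$ of $\hat G$ fits into a central extension of Lie algebras $0\to\bR\to\hat\gg\to\gg\to0$, and such extensions are classified up to equivalence by an antisymmetric bilinear form $\Theta:\gg\times\gg\to\bR$ satisfying exactly the Jacobi-type identity in the statement; concretely $\Theta(u,v)$ is the $\bR$-component of the bracket $[\tilde u,\tilde v]$ of arbitrary lifts $\tilde u,\tilde v\in\hat\gg$ of $u,v$. The hypothesis now supplies a linear map $\alpha:\gg\to\bR$ with $\Theta(u,v)=\alpha([u,v])$, which is precisely the assertion that $u\mapsto(\tilde u - \alpha(u))$ defines a Lie-algebra homomorphism $\gg\to\hat\gg$ splitting the projection. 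Thus the extension of Lie algebras is trivial.

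Finally I would integrate this splitting back to the group. Because $G$ is connected and simply connected, Lie's integration theorem lets the Lie-algebra section $\gg\to\hat\gg$ integrate to a unique Lie-group homomorphism $G\to\hat G$; composing with the projection $\hat G\to G$ integrates the identity of $\gg$, so this map is a genuine section of the extension. Reading off its $U(1)$-component yields phases $\chi_g$ exhibiting $\omega$ as a coboundary, and hence $g\mapsto\chi_g^{-1}U_g$ is an ordinary unitary representation equivalent to the original. Strong continuity of the corrected representation then follows from the continuity hypothesis together with the construction, using the standard fact that a weakly measurable unitary representation with the appropriate separability is automatically strongly continuous. The essential uses of simple connectedness are twofold — to pass from the infinitesimal splitting to a global section, and, together with the regularity bootstrapping, to carry the local trivialization of $\omega$ to a global one — and it is this local-to-global passage, rather than the cohomological bookkeeping, where the real work resides.
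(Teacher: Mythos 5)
The paper itself offers no proof of this theorem; it is quoted with references to \cite{BaRa} and \cite{moretti}, so there is nothing internal to compare against. Your architecture is the classical Bargmann argument and it is the right one: normalize the multiplier, form the central extension $\hat{G}_\omega$, pass to the Lie-algebra central extension $0\to\bR\to\hat{\gg}\to\gg\to 0$ classified by a $2$-cocycle $\Theta$, split it using the hypothesis, and integrate the splitting back to a global homomorphic section using simple connectedness. Those steps, as you present them, are sound (up to a sign convention in the splitting $u\mapsto\tilde u-\alpha(u)$ versus $\tilde u+\alpha(u)$, which is immaterial).

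The genuine gap is the step you yourself flag and then wave through: showing that the multiplier of a representation that is continuous only in the weak sense of Definition \ref{defrapcont} (continuity of $g\mapsto|\langle\psi|U_g\phi\rangle|$) can be re-gauged so as to be continuous, and then smooth (analytic in canonical coordinates), in a neighbourhood of $e$. Without this, $\hat{G}_\omega$ is merely an abstract group, it has no Lie group structure, there is no Lie algebra $\hat{\gg}$, and the entire infinitesimal analysis in your third paragraph has no object to apply to. This local regularity of the multiplier is not a routine "measurable implies smooth" citation — it is the technical core of Bargmann's original paper, involving a careful local choice of phases (e.g.\ via local cross-sections and the continuity of transition probabilities) before any automatic-smoothness theorem can be invoked. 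A second, smaller omission in the same vein: $\omega$ is $U(1)$-valued, while $\Theta$ must be $\bR$-valued, so you need to lift $\omega$ through the exponential $\bR\to U(1)$ near the identity (and use simple connectedness again to globalize that lift); as written, the passage from the group cocycle to the real-valued Lie-algebra cocycle is silent on this. Finally, your closing appeal to "weakly measurable plus separability implies strongly continuous" is both unnecessary and unavailable ($\cH$ is not assumed separable here): once the phases are chosen continuously near $e$, $g\mapsto\langle\psi|U_g\phi\rangle$ is genuinely continuous, and weak continuity of unitaries upgrades to strong continuity exactly as in remark (b) following the definition of one-parameter unitary groups in the paper.
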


\remark The condition is equivalent to require that the second cohomology group $H^2(G,\bR)$ is trivial.  $SU(2)$ for instance satisfies the requirement.
 \hfill $\blacksquare$\\

\noindent  However, non-unitarisable unitary projective representations do exist and one has to deal with them. There is nevertheless a way to circumvent the technical problem. Given a unitary projective representation $G \ni g \mapsto U_g$ with multiplicators $\omega$, let us put on $U(1) \times G$ the group structure  arising by the product $\circ$
$$(\chi, g) \circ  (\chi',g') = (\chi\chi' \omega(g,g'), g\cdot g')$$
and indicate by $\hat{G}_\omega$ the obtained group. The map 
$$\hat{G}_\omega \ni (\chi, g) \mapsto \chi U_g =: V_{(\chi,g)}$$
is a {\em unitary representation} of $\hat{G}_\omega$. If the initial representation is normalized, $\hat{G}_\omega$ is said to be a {\bf central extension of $G$} by means of $U(1)$ \cite{varadarajan,moretti}. Indeed, the elements $(\chi,e)$, $\chi \in U(1)$, commute with all the elements of $\hat{G}_\omega$ and thus they belong to the centre of the group.

\remark These types of unitary representations of central extensions   play a remarkable role in physics. Sometimes $\hat{G}_\omega$ with a particular choice for $\omega$ is seen as the {\em true} group of symmetries at quantum level, when $G$ is the {\em classical group} of symmetries. There is a very important case. If $G$ is the {\em Galileian group} -- the group of transformations between inertial reference frames in classical physics, viewed as {\em active} transformations --  as clarified by Bargmann \cite{moretti} the only physically relevant unitary projective representations in QM are just the ones which are {\em not} equivalent to unitary representations! The multiplicators embody the information about the {\em mass} of the system. This phenomenon gives also rise to a famous superselection structure in the Hilbert space of quantum systems admitting the Galileian group as a symmetry group, known as  {\em Bargmann's superselection rule} \cite{moretti}.\hfill $\blacksquare$\\

\noindent To conclude we just  state  a  technically  important result \cite{moretti} which introduces the 
one-parameter strongly continuous unitary groups as crucial tool in QM.

\begin{theorem}\label{teoone}
Let  $\gamma : \bR \ni r \mapsto U_r$ be  a continuous  unitary projective  representation 
of the additive topological group $\bR$ on the complex Hilbert space $\cH$. The following facts hold.\\
{\bf (a)}  $\gamma$ is  equivalent to a strongly continuous unitary representation $\bR \ni r \mapsto V_r$ of the same topological additive group on $\cH$. \\
{\bf (b)} A strongly continuous unitary representation $\bR \ni r \mapsto V'_r$ is equivalent to  $\gamma$ if and only if  $$V'_r = e^{icr}V_r$$ for some constant $c \in\bR$ and all $r\in \bR$.
\end{theorem}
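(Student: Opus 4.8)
The plan is to read part (a) off Bargmann's criterion and to prove part (b) by showing that the phase ambiguity between two genuine representations of $\bR$ is forced to be a continuous character $r\mapsto e^{icr}$.

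For part (a) I would apply Bargmann's criterion directly to $G=\bR$. The additive group $\bR$ is a connected, simply connected, one-dimensional abelian real Lie group, and its Lie algebra $\gg=\bR$ carries the trivial bracket $[\cdot,\cdot]=0$. Any antisymmetric bilinear form $\Theta:\gg\times\gg\to\bR$ on the one-dimensional space $\gg$ vanishes: antisymmetry gives $\Theta(u,u)=0$, and bilinearity then forces $\Theta\equiv 0$. The cyclic identity $\Theta([u,v],w)+\Theta([v,w],u)+\Theta([w,u],v)=0$ holds automatically, and the linear map required by the criterion, $\alpha:\gg\to\bR$ with $\Theta(u,v)=\alpha([u,v])$, is supplied by $\alpha=0$ (indeed $[u,v]=0$). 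Thus the hypothesis of Bargmann's criterion is met verbatim by the continuous unitary projective representation $\gamma$, and we obtain a strongly continuous unitary representation $\bR\ni r\mapsto V_r$ equivalent to $\gamma$. The one fact I record for later use is that equivalence means $V_r=\lambda_r U_r$ for suitable $\lambda_r\in U(1)$.

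For the necessity half of part (b), suppose $\bR\ni r\mapsto V'_r$ is a strongly continuous unitary representation equivalent to $\gamma$, so $V'_r=\chi_r U_r$ with $\chi_r\in U(1)$. Combining with $V_r=\lambda_r U_r$ yields $V'_r=\mu_r V_r$, where $\mu_r:=\chi_r\lambda_r^{-1}\in U(1)$. Since both $V$ and $V'$ are genuine homomorphisms, the identities $V'_{r+s}=V'_rV'_s=\mu_r\mu_s V_{r+s}$ and $V'_{r+s}=\mu_{r+s}V_{r+s}$, together with the invertibility of the unitary $V_{r+s}$, give $\mu_{r+s}=\mu_r\mu_s$; and $V_0=V'_0=I$ forces $\mu_0=1$, so $\mu:\bR\to U(1)$ is a character. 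To see $\mu$ is continuous I fix a unit vector $\phi$: by strong continuity $\langle\phi|V_r\phi\rangle\to 1$ as $r\to 0$, hence this quantity is nonzero on some interval $(-\delta,\delta)$, where $\mu_r=\langle\phi|V'_r\phi\rangle\,\langle\phi|V_r\phi\rangle^{-1}$ is a ratio of continuous functions and therefore continuous at the origin. A character of $\bR$ continuous at $0$ is continuous everywhere, and by the standard classification of continuous one-parameter subgroups of $U(1)$ it has the form $\mu_r=e^{icr}$ for a unique $c\in\bR$. Therefore $V'_r=e^{icr}V_r$. The converse is immediate: if $V'_r=e^{icr}V_r$ then $V'$ is strongly continuous and multiplicative (a product of the character $e^{icr}$ with $V$), while $V'_r=e^{icr}\lambda_r U_r$ exhibits each $V'_r$ as a unit-phase multiple of $U_r$, i.e. $V'$ is equivalent to $\gamma$.

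The only substantive content is concentrated in Bargmann's criterion, so as a plan this proof is almost entirely routine. The hard part, which I am deliberately outsourcing to that already-stated theorem, is the regularization step behind it: upgrading the merely continuous projective cocycle $\omega(r,s)$ to a coboundary by a continuous phase function while retaining strong continuity, which for $\bR$ is the measurable-selection/differentiability argument of Stone-type results. Were one to forbid invoking Bargmann, that phase-straightening would be the genuine obstacle; given the tools available in the excerpt, part (a) is free and the real (and elementary) work is the character argument of part (b).
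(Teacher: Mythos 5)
Your argument is correct, but it is worth noting that the paper itself gives no proof of this theorem --- it only cites \cite{moretti} --- so the comparison is with the standard proof in that reference, which is a direct, self-contained argument for $G=\bR$: one uses the continuity of $r\mapsto|\langle\psi|U_r\phi\rangle|$ to choose phases making $r\mapsto U_r\psi$ strongly continuous, and then shows that the resulting continuous multiplier $\omega(r,s)$ on $\bR$ is a coboundary. Your part (a) instead reads the statement off Bargmann's criterion, which is formally legitimate given that the criterion is stated earlier in the text and its cohomological hypothesis is trivially verified for the one-dimensional abelian Lie algebra (every antisymmetric bilinear form on a one-dimensional space vanishes). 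What this buys you is brevity; what it costs is that you are invoking a much deeper theorem whose own proof contains, as an ingredient, essentially the phase-straightening step that the direct proof of the one-dimensional case performs by hand --- a point you candidly acknowledge, so there is no hidden circularity at the level of the paper's logical ordering, only at the level of the underlying mathematics. Your part (b) is the standard and complete argument: the ratio $\mu_r$ of the two phase systems is a homomorphism $\bR\to U(1)$ because both $V$ and $V'$ are genuine representations, its continuity at the origin follows from the nonvanishing of $\langle\phi|V_r\phi\rangle$ near $r=0$, and the classification of continuous characters of $\bR$ gives $\mu_r=e^{icr}$; the converse direction is immediate. No gaps.
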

\noindent The above unitary representation can also be defined as  {\em strongly continuous  one-parameter  unitary  group}.
\begin{definition} {\em  If $\cH$ is a Hilbert space,  $V : \bR \ni r \mapsto V_r \in \gB(\cH)$, such that

 (i) $V_r$ is unitary for every $r\in \bR$\;

 (ii) $V_rV_s =V_{r+s}$ for all $r,s \in \bR$,\\
\noindent  is called  {\bf one-parameter  unitary  group}. 
It is called  {\bf strongly continuous  one-parameter  unitary  group} if in addition to (i) and (ii) we also have

 (iii)  V is continuous referring to the strong operator topology. In other words $V_r\psi \to V_{r_0}\psi$ for $r\to r_0$ and every $r_0 \in \bR$ and $\psi \in \cH$.} \hfill $\blacksquare$
\end{definition}

\remark$\null$

{\bf (a)} It is evident that, in view of the group structure,  a  one-parameter  unitary  group $\bR \ni r \mapsto V_r \in \gB(\cH)$  is strongly continuous if and only if 
is strongly continuous for $r=0$.

{\bf (b)} It is a bit less evident but  true  that a  one-parameter  unitary  group $\bR \ni r \mapsto V_r \in \gB(\cH)$
is strongly continuous if and only if it is {\em weakly continuous} at  $r=0$.  
Indeed, if $V$ is weakly continuous at $r=0$, for every  $\psi \in \cH$, we have
$$||U_r\psi - \psi||^2 = ||U_r\psi||^2 + ||\psi||^2 - \langle \psi| U_r\psi \rangle - \langle U_r\psi|\psi \rangle
= 2||\psi||^2 - \langle \psi| U_r\psi \rangle - \langle U_r\psi|\psi \rangle \to 0$$
 for  $r\to 0$. \hfill $\blacksquare$

\subsubsection{One-parameter strongly continuous unitary groups:  von Neumann and  Stone theorems} 
Theorem \ref{teoone} establishes that, dealing with continuous  unitary projective  representation 
of the additive topological group $\bR$, one can always reduce to work with proper  strongly continuous 
one-parameter  unitary  groups. 
So, for instance, the action on a quantum system of rotations around an axis  can always described by means of strongly continuous  one-parameter  unitary  groups.  There is a couple of technical results of very different nature which are very useful in QM.
The former is due to von Neumann  \cite{moretti} and proves that the  one-parameter  unitary  group which are not strongly continuous are not so many  in separable Hilbert spaces.
\begin{theorem}
If $\cH$ is a separable complex  Hilbert space and $V : \bR \ni r \mapsto V_r \in \gB(\cH)$ is a one parameter unitary group, it is strongly continuous if and only if the maps $\bR \ni r \mapsto \langle \psi|U_r \phi \rangle$
are Borel measurable for all $\psi, \phi \in \cH$.
 \end{theorem}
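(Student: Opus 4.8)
The plan is to dispatch the easy implication first and then build the substantive one around a smearing (mollification) construction. If $V$ is strongly continuous, then for fixed $\phi$ the orbit $r\mapsto V_r\phi$ is norm-continuous, so $r\mapsto\langle\psi|V_r\phi\rangle$ is continuous and \emph{a fortiori} Borel measurable; this settles the ``only if'' direction for all $\psi,\phi\in\cH$.

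For the converse, I would assume each $g_{\psi,\phi}(r):=\langle\psi|V_r\phi\rangle$ is Borel measurable. Since $|g_{\psi,\phi}(r)|\le\|\psi\|\,\|\phi\|$, each $g_{\psi,\phi}$ lies in $L^\infty(\bR)\cap L^1_{loc}(\bR)$. For $f\in C_c(\bR)$ and $\phi\in\cH$ I would define a ``smeared'' vector $V_f\phi$ by requiring $\langle\psi|V_f\phi\rangle=\int_\bR f(r)\langle\psi|V_r\phi\rangle\,dr$ for every $\psi$: the right-hand side is antilinear and bounded in $\psi$ (with bound $\|f\|_{L^1}\|\phi\|$), so Riesz' lemma (Theorem \ref{RL}) produces a unique such $V_f\phi$ with $\|V_f\phi\|\le\|f\|_{L^1}\|\phi\|$. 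The key algebraic identity is $V_sV_f\phi=V_{f_s}\phi$ with $f_s(r):=f(r-s)$, obtained by the change of variable $r\mapsto r+s$ inside the integral together with the group law $V_sV_r=V_{s+r}$.

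Let $\cH_0$ be the closure of the span of all vectors $V_f\phi$. Because $V_sV_f\phi=V_{f_s}\phi$, the subspace $\cH_0$ is $V$-invariant, and on it $V$ is strongly continuous at $r=0$: indeed $\|V_sV_f\phi-V_f\phi\|=\|V_{f_s-f}\phi\|\le\|f_s-f\|_{L^1}\|\phi\|\to 0$ as $s\to 0$, using continuity of translation in $L^1(\bR)$. This extends to finite linear combinations and, since $\|V_s\|=1$, to all of $\cH_0$ by an $\epsilon/3$ argument; by the group property (cf. the remark that continuity at $r=0$ forces continuity everywhere) this upgrades to strong continuity of $V$ on the whole of $\cH_0$.

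The crux, where I expect the real work and where separability is indispensable, is proving $\cH_0=\cH$. Suppose $\chi\perp\cH_0$. Since $\cH_0$ is $V$-invariant and each $V_s$ is unitary, $\cH_0^\perp$ is $V$-invariant too, so $V_r\chi\in\cH_0^\perp$ for all $r$; hence the weak integral $V_f\chi$ lies in $\cH_0^\perp$, while by construction $V_f\chi\in\cH_0$, forcing $V_f\chi=0$ for every $f\in C_c(\bR)$. Consequently $\int_\bR f(r)\langle\psi|V_r\chi\rangle\,dr=0$ for all $\psi$ and all $f\in C_c(\bR)$, so for each fixed $\psi$ the locally integrable function $r\mapsto\langle\psi|V_r\chi\rangle$ vanishes for almost every $r$. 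Here separability enters: fixing a countable total set $\{e_n\}_{n\in\bN}\subset\cH$, there are null sets $N_n$ with $\langle e_n|V_r\chi\rangle=0$ for $r\notin N_n$; the countable union $N:=\bigcup_n N_n$ is again null, and for any $r\notin N$ the vector $V_r\chi$ is orthogonal to every $e_n$, hence $V_r\chi=0$, and unitarity of $V_r$ yields $\chi=0$. Thus $\cH_0=\cH$ and $V$ is strongly continuous. The only delicate points to monitor are the well-posedness of $V_f$ via Riesz and the passage from ``almost every $r$'' to ``every $r$'', which is exactly what the countable total set (separability) secures.
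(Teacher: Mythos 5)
Your argument is correct and complete: the smearing construction $V_f\phi$ via Riesz' lemma, the identity $V_sV_f\phi=V_{f_s}\phi$, continuity of translation in $L^1$, and the separability argument turning ``$\langle e_n|V_r\chi\rangle=0$ for a.e.\ $r$'' into ``$V_r\chi=0$ for some $r$'' are all sound, and each delicate point (well-posedness of the weak integral, the $\epsilon/3$ density argument, reduction of continuity to $r=0$) is handled properly. The paper itself does not prove this theorem of von Neumann but only quotes it with a reference to \cite{moretti}; your proof is the classical mollification argument given there and in standard references, so there is nothing to correct.
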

\noindent The second proposition  we quote \cite{moretti} is a celebrated result due to Stone (and later extend to the famous {\em Hille-Yosida theorem} in Banach spaces).  We start by noticing that, if $A$ is a selfadjoint 
operator in a Hilbert space, $U_t := e^{itA}$, for $t \in \bR$, defines a strongly continuous  one-parameter  unitary  group as one easily proves using the functional calculus. The result is remarkably reversible. 

\begin{theorem}[Stone theorem]\label{teostone}
Let $\bR \ni t \mapsto U_t \in \gB(\cH)$ be  a strongly continuous  one-parameter  unitary  group
in the  complex Hilbert space  $\cH$. The following facts hold.\\
{\bf (a)} There exists a unique selfadjoint operator, called the {\bf generator} of the group, $A:D(A)\to \cH$ in $\cH$, such that
\beq
U_t= e^{-itA}\:,\quad t \in \bR\:.
\eeq 
{\bf (b)} The generator is determined  as 
\beq A\psi = i\lim_{t \to 0} \frac{1}{t}(U_t -I)\psi \label{dstone}\eeq
and $D(A)$ is made of the vectors $\psi \in \cH$ such that the right hand side of (\ref{dstone}) exists in $\cH$.\\
{\bf (c)}  $U_t(D(A))\subset D(A)$ for all $t\in \bR$ and
$$AU_t\psi =U_tA\psi \quad \mbox{if $\psi \in D(A)$ and $t\in \bR$.}$$
\end{theorem}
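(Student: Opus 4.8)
The plan is to build the generator by hand from the strong derivative at the origin, prove it is (essentially) selfadjoint, and then identify $U_t$ with the exponential of its closure via the functional calculus of Theorem \ref{st}. First I would introduce the operator $A_0$, defined on the maximal domain
$$D_0 := \left\{\psi\in\cH \:\left|\: \lim_{t\to0}\frac1t(U_t-I)\psi \ \text{ exists in }\cH\right.\right\}$$
by $A_0\psi := i\lim_{t\to0}\frac1t(U_t-I)\psi$, exactly as in (\ref{dstone}). Density of $D_0$ is obtained by the standard smearing trick: for $f\in C_0^\infty(\bR)$ put $\psi_f:=\int_\bR f(t)\,U_t\psi\,dt$, a strong Riemann integral that exists by strong continuity and the uniform bound $\|U_t\|=1$. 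A change of variables gives $\frac1s(U_s-I)\psi_f=\frac1s\int_\bR(f(t-s)-f(t))\,U_t\psi\,dt\to-\int_\bR f'(t)\,U_t\psi\,dt=-\psi_{f'}$, so $\psi_f\in D_0$ with $A_0\psi_f=-i\psi_{f'}$. Choosing $f=f_n$ a nonnegative approximate identity concentrated at $0$ yields $\psi_{f_n}\to\psi$, proving $\overline{D_0}=\cH$.

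Next I would verify that $A_0$ is symmetric: using $U_t^*=U_{-t}$ and the substitution $t\mapsto -t$ in the difference quotient, a direct computation gives $\langle A_0\psi|\phi\rangle=\langle\psi|A_0\phi\rangle$ for $\psi,\phi\in D_0$, so $A_0\subset A_0^*$. The invariance property (the heart of (c)) follows from the group law: for $\psi\in D_0$ the relation $U_{t+s}\psi=U_tU_s\psi=U_sU_t\psi$ shows both that $U_t(D_0)\subset D_0$ and that $t\mapsto U_t\psi$ is strongly differentiable with $\frac{d}{dt}U_t\psi=-iA_0U_t\psi=-iU_tA_0\psi$; in particular $A_0U_t\psi=U_tA_0\psi$.

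The crux, and the step I expect to be the main obstacle, is proving that $A_0$ is essentially selfadjoint, i.e. that its deficiency subspaces vanish, so that Proposition \ref{propindex}(a) applies. Suppose $A_0^*\psi=i\psi$ (the case $-i$ is symmetric). For $\phi\in D_0$ set $g(t):=\langle\psi|U_t\phi\rangle$. Using the differentiability just established together with $\langle\psi|A_0\chi\rangle=\langle A_0^*\psi|\chi\rangle$, one computes $g'(t)=\langle\psi|{-i}A_0U_t\phi\rangle=-i\langle A_0^*\psi|U_t\phi\rangle=-i\langle i\psi|U_t\phi\rangle=-g(t)$, whence $g(t)=g(0)e^{-t}$. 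Since $|g(t)|\le\|\psi\|\,\|\phi\|$ is bounded on all of $\bR$ while $e^{-t}$ is not, necessarily $g(0)=\langle\psi|\phi\rangle=0$; as $D_0$ is dense this forces $\psi=0$. Hence $n_\pm=0$, and $A:=\overline{A_0}=A_0^*$ is the unique selfadjoint generator candidate.

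Finally I would show $U_t=e^{-itA}$, with $e^{-itA}$ defined by the functional calculus of Proposition \ref{propint2}, and identify the domain. Writing $V_t:=e^{-itA}$, the functional calculus gives $\frac{d}{dt}V_t\phi=-iAV_t\phi$ for $\phi\in D(A)$, while $\frac{d}{dt}U_t\phi=-iAU_t\phi$ by (c); thus $w(t):=U_t\phi-V_t\phi\in D(A)$ satisfies $\frac{d}{dt}w(t)=-iAw(t)$, and selfadjointness of $A$ yields $\frac{d}{dt}\|w(t)\|^2=2\,\mathrm{Re}\langle w|{-i}Aw\rangle=0$. Since $w(0)=0$ we get $w\equiv 0$, so $U_t=V_t$ on the dense set $D(A)$ and, both being bounded, everywhere. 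Uniqueness of $A$ is immediate from (b): any selfadjoint $B$ with $U_t=e^{-itB}$ has $-iB\psi=\frac{d}{dt}\big|_{t=0}U_t\psi$ on $D(B)$, forcing $B=A$; and the same differentiation of $e^{-itA}$ shows the limit in (b) exists precisely on $D(A)$, so that $D_0=D(A)$ and $A_0=A$ was selfadjoint all along, which also upgrades (c) to the stated form on $D(A)$.
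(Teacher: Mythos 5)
The paper does not actually prove Theorem \ref{teostone}: it is quoted from the literature (the reference \cite{moretti}) without proof, so there is nothing internal to compare against. Your argument is the standard and correct proof of Stone's theorem: density of the domain via smearing with $C_0^\infty$ functions, symmetry of $A_0$ from $U_t^*=U_{-t}$, vanishing of the deficiency indices through the ODE $g'=-g$ for $g(t)=\langle\psi|U_t\phi\rangle$ combined with boundedness of $g$ on all of $\bR$, and identification $U_t=e^{-itA}$ by the norm-conservation argument for $w(t)=U_t\phi-V_t\phi$. All the key computations check out, including the sign bookkeeping with the antilinear first slot of the scalar product.

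One small ordering slip: in the final step you differentiate $U_t\phi$ for $\phi\in D(A)$ "by (c)", but at that stage (c) has only been established on $D_0$, and you do not yet know that $U_t$ preserves $D(A)=D(\overline{A_0})$. The fix is immediate: run the uniqueness argument for $w(t)$ only for $\phi\in D_0$ (where both $U_t\phi$ and $V_t\phi$ are differentiable, with values in $D(A)$ since $A_0\subset A$), conclude $U_t=V_t$ on the dense set $D_0$ and hence everywhere by boundedness, and only afterwards deduce $D_0=D(A)$ and the full statement of (c), exactly as you do in your closing paragraph. With that reordering the proof is complete.
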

\remark $\null$

{\bf (a)} For a selfadjoint operator $A$, the expansion $$e^{-itA}\psi = \sum_{n=0}^{+\infty} \frac{(-it)^n}{n!}A^n \psi$$
generally does {\em not} work for $\psi \in D(A)$. It works in two cases however: (i) if $\psi$ is an {\em analytic} vector of $A$ (Def. \ref{defanalitic} and this result is due to Nelson), (ii) if $A\in \gB(\cH)$ which is equivalent to say that $D(A)=\cH$. In the latter case, one more strongly finds $e^{-itA} = \sum_{n=0}^{+\infty} \frac{(-it)^n}{n!}A^n$, referring to the uniform operator topology. \cite{moretti}.

{\bf (b)} One parameter unitary  group generated by selfadjoint operators can be used to check if the associated observables are compatible in view of the following nice result \cite{moretti}.
\begin{proposition} If $A$ and $B$ are selfadjoint operators in the complex Hilbert space $\cH$, the identity holds
$$e^{-itA}e^{-isB}= e^{-isB}e^{-itA} \quad \forall t,s \in \bR$$ 
 if and only if the spectral measures of $A$ and $B$ commute. \hfill $\blacksquare$
\end{proposition}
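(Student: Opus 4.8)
The plan is to prove the two implications separately, disposing of the straightforward direction first and reserving the bulk of the work for the converse. Throughout I will use the spectral representations $e^{-itA}=\int_{\sigma(A)}e^{-it\lambda}\,dP^{(A)}(\lambda)$ and $e^{-isB}=\int_{\sigma(B)}e^{-is\mu}\,dP^{(B)}(\mu)$, which hold by the functional calculus since $\lambda\mapsto e^{-it\lambda}$ and $\mu\mapsto e^{-is\mu}$ are bounded.

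For the direction ``commuting spectral measures $\Rightarrow$ commuting groups'', I would show more generally that every bounded measurable $f(A)$ commutes with every bounded measurable $g(B)$. First I would approximate $g$ uniformly on $\sigma(B)$ by simple functions $g_n=\sum_j g_{n,j}\chi_{F_{n,j}}$, so that $g_n(B)=\sum_j g_{n,j}P^{(B)}_{F_{n,j}}\to g(B)$ uniformly by remark \ref{rembound} (cf. example \ref{exstrong}(2)). Each $g_n(B)$ is a finite linear combination of projectors $P^{(B)}_F$, hence commutes with every $P^{(A)}_E$ by hypothesis; passing to the uniform limit gives $P^{(A)}_E g(B)=g(B)P^{(A)}_E$ for all $E$. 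Repeating the same approximation on the $A$-side, with the bounded operator $g(B)$ now playing the role of a fixed operator commuting with all $P^{(A)}_E$, yields $f(A)g(B)=g(B)f(A)$. Specializing to the two exponentials closes this implication.

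For the converse the main obstacle is that both $A$ and $B$ may be unbounded, so one cannot apply Proposition \ref{propcomm} directly, since it presupposes one bounded selfadjoint operator. The bridge I would isolate as a lemma is: \emph{if a bounded operator $V$, together with $V^*$, commutes with every $e^{-itA}$, then $V$ commutes with every $P^{(A)}_E$.} To prove it, fix $\psi\in D(A)$; Stone's theorem \ref{teostone}(b) gives $A\psi=i\lim_{t\to 0}\frac1t(e^{-itA}-I)\psi$, and commuting the bounded $V$ through shows that $\lim_{t\to 0}\frac1t(e^{-itA}-I)(V\psi)$ exists and equals $-iVA\psi$. By the domain characterization in \ref{teostone}(b) this forces $V\psi\in D(A)$ with $AV\psi=VA\psi$, i.e. $VA\subset AV$; the same argument for $V^*$ gives $V^*A\subset AV^*$. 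Writing $V=V_1+iV_2$ with $V_1,V_2$ bounded selfadjoint, each satisfies $V_jA\subset AV_j$, so Proposition \ref{propcomm} (implication (ii)$\Rightarrow$(iv)) yields $V_jP^{(A)}_E=P^{(A)}_E V_j$, whence $V$ commutes with every $P^{(A)}_E$.

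With the lemma in hand the converse follows by applying it twice. Taking $V=e^{-isB}$, which by hypothesis commutes with all $e^{-itA}$ and whose adjoint $e^{isB}$ does as well, gives $e^{-isB}P^{(A)}_E=P^{(A)}_E e^{-isB}$ for every $s$ and every $E$. Now fix $E$ and set $W=P^{(A)}_E$, a bounded selfadjoint operator commuting with every $e^{-isB}$; applying the lemma a second time with the roles of $A$ and $B$ interchanged produces $WP^{(B)}_F=P^{(B)}_F W$, that is $P^{(A)}_E P^{(B)}_F=P^{(B)}_F P^{(A)}_E$ for all Borel $E,F$. This is exactly the commutativity of the spectral measures. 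I expect the verification that $V\psi\in D(A)$ through the Stone generator formula to be the only delicate point; the rest is bookkeeping with the functional calculus and the already-established equivalences of Proposition \ref{propcomm}.
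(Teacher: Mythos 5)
Your proof is correct. Note that the paper itself does not prove this proposition (it defers to the cited monograph), so there is no in-text argument to compare against; your route is, however, essentially the standard one. The easy direction via uniform approximation of bounded measurable functions by simple functions, applied once on each side, is sound, since the norm estimate $\|g_n(B)-g(B)\|\leq\|g_n-g\|_\infty$ from Proposition \ref{propint2}(h) lets you pass commutation relations through the operator-norm limit. For the converse, your lemma is the crux and is correctly executed: the hypothesis that $V$ commutes with all $e^{-itA}$ already forces $V^*$ to do so (take adjoints and replace $t$ by $-t$), the Stone formula of Theorem \ref{teostone}(b) together with the characterization of $D(A)$ as the set of vectors where that limit exists gives $VA\subset AV$ exactly as you say, and the decomposition into selfadjoint parts is the right device to bring Proposition \ref{propcomm}, which is stated only for selfadjoint bounded $B$, to bear via its implication (ii)$\Rightarrow$(iv). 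Applying the lemma twice, first with $V=e^{-isB}$ and then with $W=P^{(A)}_E$ and the roles of $A$ and $B$ exchanged, closes the argument. An alternative standard route for the converse, which avoids the domain verification entirely, is to integrate the group against $e^{-\epsilon t}$ to obtain the resolvents $(A\pm iI)^{-1}$ and deduce commutation with the spectral measure from there; your version through the generator is equally legitimate and stays closer to the tools the paper actually provides.
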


\subsubsection{Time evolution, Heisenberg picture and quantum Noether theorem}
Consider a quantum system described in the Hilbert space $\cH$ when an inertial reference frame is fixed. Suppose that, physically speaking, the system is either isolated or interacts with some external stationary environment. With these hypotheses,  the time evolution of states is axiomatically described by
a continuous symmetry, more precisely,  by a continuous one-parameter group of  unitary projective operators $\bR \ni  t \mapsto  V_t$. In view of Theorems \ref{teoone}  and \ref{teostone}, this group is equivalent to a strongly continuous one-parameter group of  unitary  operators $\bR \ni t \mapsto U_t$ and, up to additive constant, there is a unique selfadjoint operator $H$, called the {\bf Hamiltonian operator}
such that (notice the sign in front of the exponent) 
\beq
U_t = e^{-itH}\:, \quad t \in \bR\:.
\eeq
The observable represented by $H$ is usually identified  with {\em the energy of the system} in the considered reference frame.\\
Within this picture, if $\rho \in \gS(\cH)$ is the state of the system  at $t=0$, as usual described by a positive trace-class operator with unit trace,  the state at time $t$  is $\rho_t = U_t \rho U^{-1}_t$. If the initial state is pure and represented by  the unit vector $\psi\in \cH$, the state at time $t$ is $\psi_t := U_t\psi$. In this case, if 
$\psi \in D(H)$ we have that $\psi_t \in D(H)$ for every $t\in \bR$ in view of (c) in Theorem \ref{teostone}
and furthermore, for (b) of the same theorem
\beq -iH\psi_t =  \frac{d\psi_t}{dt}\label{Seq} \:.\eeq
where the derivative is computed wit respect to the topology of $\cH$.
One recognises in Eq. (\ref{Seq})  the general form of {\bf Sch\"odinger equation}.

\remark It is possible to study quantum systems interacting with some external system which is not stationary. In this case the Hamiltonian observable depends parametrically on time as already introduced in remark \ref{rem8}. In these cases a Schr\"odinger equation is assumed to describe the  time evolution of the system giving rise to a groupoid  of unitary operators \cite{moretti}. We shall not enter into the details of this technical issue here. \hfill $\blacksquare$\\

\noindent Adopting the above discussed framework,  observables do not evolve and states do. This framework is called 
{\bf Schr\"odinger picture}.
There  is however another approach to describe time evolution called {\bf Heisenberg picture}. In this representation 
states do not evolve in time but observables do. If $A$ is an observable at $t=0$, its evolution at time $t$
is the observable
$$A_t := U^{-1}_t A U_t\:.$$
Obviously $D(A_t)= U^{-1}_t(D(A))= U_{-t}(D(A))= U_t^*(D(A))$.
According with (i) in Proposition \ref{propint2} the spectral measure of $A_t$ is $$P^{(A_t)}_E= U^{-1}_t P_E^{(A)} U_t$$ as expected. The probability that, at time $t$, the observable $A$ produces the outcome $E$ when the state is $\rho$ at $t=0$, can equivalently be computed both using the standard picture, where states evolve 
as $tr(P_E^{(A)}\rho_t) $,
or Heisenberg picture where observables do obtaining $tr(P_E^{(A_t)} \rho)$. Indeed
$$tr(P_E^{(A)}\rho_t) = tr(P_E^{(A)} U_t^{-1}\rho U_t) =  tr(U_tP_E^{(A)} U_t^{-1}\rho) = tr(P_E^{(A_t)} \rho)\:.$$
The two pictures are completely equivalent to describe physics.  Heisenberg picture permits to give the following important definition

\begin{definition}
 {\em In the complex Hilbert space $\cH$ equipped with a strongly continuous unitary one-parameter group  representing the time evolution $\bR \ni t \mapsto U_t$, an observable represented by the  selfadjoint operator $A$ is said to be a {\bf constant of motion} with  respect to $U$, if  $A_t =A_0$. }
\end{definition}
\noindent The meaning of the definition should be clear: Even if the state evolve, the probability to obtain 
an outcome $E$, measuring a constant of motion $A$, remains  stationary.  Also expectation values and standard deviations do not change in time.\\  
We are now in a position to state the equivalent of the {\em Noether theorem} in QM.

\begin{theorem}[Noether quantum theorem]\label{neother}
Consider a quantum system described  in the complex Hilbert space $\cH$ equipped with a strongly continuous unitary one-parameter group  representing the time evolution $\bR \ni t \mapsto U_t$. If $A$
is an observable represented by a (generally unbounded) selfadjoint operator $A$ in $\cH$, the following facts are equivalent.

{\bf (a)} $A$ is a constant of motion: $A_t=A_0$ for all $t \in \bR$.

{\bf (b)}  The one-parameter group of symmetries generated by $A$, $\bR \ni s \mapsto e^{-isA}$ is a {\bf group of dynamical symmetries}: It commutes with time evolution \beq e^{-isA} U_t = U_t e^{-isA}\quad  \mbox{ for all $s,t \in \bR$}\label{N1}\:.\eeq 
In particular  transforms evolutions
of pure states into evolutions of (other) pure states, i.e.,  $e^{-isA}\:   U_t\psi = U_t \: e^{-isA}\psi $. 

{\bf (c)}  The action on  observables (\ref{VAV}) of  the one-parameter group of symmetries generated by $A$, $\bR \ni s \mapsto e^{isA}$ leaves $H$
invariant. That is
$$e^{-isA} H e^{isA} = H\:, \quad  \mbox{ for all $s\in \bR$} \:.$$
\end{theorem}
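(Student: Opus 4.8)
The plan is to show that all three conditions are equivalent to a single \emph{central} statement,
\[
(\star)\qquad \text{the spectral measures } P^{(A)} \text{ and } P^{(H)} \text{ commute,}
\]
where $U_t=e^{-itH}$ is given by Stone's theorem (Theorem~\ref{teostone}). Since $A$ and $H$ are generally unbounded, the guiding idea is to never manipulate them directly but to translate every assertion into a statement about their \emph{bounded} spectral projectors. The two tools I would lean on are: the rule (remark~\ref{remst}(g), built on Proposition~\ref{propint2}(i)) that for a unitary $V$ the PVM of $VAV^{-1}$ is $V P^{(A)} V^{-1}$; and the uniqueness of the PVM asserted in the Spectral Decomposition Theorem (Theorem~\ref{st}), which lets me upgrade ``the PVMs agree'' to ``the operators agree''.

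First I would dispatch $(b)\Leftrightarrow(\star)$, which is essentially free: condition (b) reads $e^{-isA}e^{-itH}=e^{-itH}e^{-isA}$ for all $s,t$, and this is exactly the hypothesis of the result recorded after Theorem~\ref{teostone} characterising commutativity of the spectral measures of two self-adjoint operators by commutativity of their one-parameter unitary groups. Hence (b) holds iff $(\star)$ holds. Next come the easy directions $(\star)\Rightarrow(a)$ and $(\star)\Rightarrow(c)$. If $(\star)$ holds, then $U_t=\int e^{-it\lambda}\,dP^{(H)}(\lambda)$, being a strong limit of linear combinations of the $P^{(H)}_F$, commutes with every $P^{(A)}_E$; therefore $U_t^{-1}P^{(A)}_E U_t=P^{(A)}_E$, so by the PVM-transformation rule the spectral measure of $A_t=U_t^{-1}AU_t$ coincides with that of $A$, and uniqueness forces $A_t=A$, i.e.\ (a). Symmetrically $e^{-isA}$ commutes with every $P^{(H)}_E$, whence $e^{-isA}P^{(H)}_E e^{isA}=P^{(H)}_E$ and the same argument yields $e^{-isA}He^{isA}=H$, i.e.\ (c).

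The converses $(a)\Rightarrow(\star)$ and $(c)\Rightarrow(\star)$ carry the real content, and they share a common lemma which I would establish first: for a \emph{bounded} self-adjoint $B$ and a self-adjoint $C$, if $B$ commutes with $e^{-irC}$ for all $r\in\bR$, then $B$ commutes with the whole PVM $P^{(C)}$. The proof is a Fourier-uniqueness argument: the identity $\langle Bx|e^{-irC}y\rangle=\langle x|e^{-irC}By\rangle$ for all $r$ reads $\int e^{-ir\lambda}\,d\mu^{(P^{(C)})}_{Bx,y}(\lambda)=\int e^{-ir\lambda}\,d\mu^{(P^{(C)})}_{x,By}(\lambda)$, so by injectivity of the Fourier transform on finite complex measures the two measures agree, giving $\langle Bx|P^{(C)}_F y\rangle=\langle x|P^{(C)}_F By\rangle$ for every Borel $F$, that is $BP^{(C)}_F=P^{(C)}_F B$. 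Granting this, from (a) I get $U_t^{-1}P^{(A)}_E U_t=P^{(A)}_E$ (the PVM of $A_t$ equals that of $A$), so each bounded projector $B=P^{(A)}_E$ commutes with $e^{-itH}$ for all $t$; the lemma with $C=H$ then gives $P^{(A)}_E P^{(H)}_F=P^{(H)}_F P^{(A)}_E$ for all $E,F$, which is $(\star)$. Likewise (c) shows $B=P^{(H)}_E$ commutes with $e^{-isA}$ for all $s$, and the lemma with $C=A$ again produces $(\star)$. Chaining these gives $(a)\Leftrightarrow(b)\Leftrightarrow(c)\Leftrightarrow(\star)$.

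I expect the main obstacle to be purely the domain bookkeeping rather than any deep difficulty: one must read $U_t^{-1}AU_t=A$ and $e^{-isA}He^{isA}=H$ as genuine operator identities (equal domains included), and it is essential to route everything through the bounded spectral projectors, because for unbounded $A$ and $H$ the naive statement ``$A$ commutes with $H$'' has no unambiguous meaning—this is precisely why the theorem is phrased via one-parameter groups and why the reduction to $(\star)$ is the natural backbone of the argument.
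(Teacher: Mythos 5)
Your proof is correct, but it takes a genuinely different route from the paper's. The paper proves the cycle (a)$\Rightarrow$(b)$\Rightarrow$(c)$\Rightarrow$(a) directly in a few lines: from $U_t^{-1}AU_t=A$ it applies the functional-calculus rule $Vf(A)V^{-1}=f(VAV^{-1})$ of Proposition \ref{propint2}(i) with $f(x)=e^{-isx}$ to get (b); then, since $t\mapsto e^{-isA}U_te^{isA}$ is a strongly continuous unitary group whose generator is $e^{-isA}He^{isA}$, the uniqueness part of Stone's theorem yields (c); and the same two ingredients close the cycle. You instead exhibit a single hub $(\star)$ (commutation of $P^{(A)}$ and $P^{(H)}$) and prove all three conditions equivalent to it, supplying your own Fourier-uniqueness lemma for the hard directions (a)$\Rightarrow(\star)$ and (c)$\Rightarrow(\star)$. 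Your version makes the true invariant content of the theorem explicit and is maximally careful about domains, since every nontrivial manipulation is performed on bounded spectral projectors; the price is that you re-derive machinery the paper already has on the shelf — your (b)$\Leftrightarrow(\star)$ is literally the proposition stated right after Theorem \ref{teostone}, and your lemma is a special case of one direction of it — so in the context of the paper the direct cyclic argument is shorter. Both arguments are sound; note only that your lemma's conclusion (a bounded operator commuting with $e^{-irC}$ for all $r$ commutes with $P^{(C)}$) could also have been quoted from Proposition \ref{propcomm} combined with the afore-mentioned proposition, rather than proved from scratch.
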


\begin{proof} Suppose that (a) holds. By definition $U^{-1}_tAU_t =A$. By (i) in Proposition \ref{propint2} we have that  $U^{-1}_te^{-isA} U_t =e^{-isA}$ which is equivalent to (b).  
If (b) is true, we have that $e^{-isA} e^{-itH} e^{isA} = e^{-itH}$. Here an almost direct application of Stone theorem 
yields $e^{-isA} H e^{isA} = H$.  Finally suppose that (c) is valid. Again  (i) in Proposition \ref{propint2} produces
$e^{-isA} U_t e^{isA} = U_t$ which can be rearranged into
$U^{-1}_te^{-isA} U_t =e^{-isA}$. Finally Stone theorem leads to  $U^{-1}_tAU_t =A$ which is (a), concluding the proof. 
\end{proof}

\remark$\null$

{\bf (a)} In physics textbooks the above statements are almost always stated using time derivatives
and commutators. This is useless and involves many subtle troubles with domains of the involved operators.

{\bf (b)} The theorem can be extended to observables $A(t)$ parametrically depending on time already in the Schr\"odinger picture \cite{moretti}. In this case (a) and (b) are equivalent too. With this more general situation, (\ref{N1}) in (b) has to be re-written as 
$$e^{-isA(t)} U_t = U_t e^{-isA(0)}\quad  \mbox{ for all $s,t \in \bR$}$$
and Heisenberg evolution considered in (a) encompasses both time dependences
$$A_t = U_t^{-1}A(t) U_t\:.$$
At this juncture,  (c) can similarly be stated but, exactly as it happens in Hamiltonian classical mechanics, it has a more complicated interpretation \cite{moretti}.\\
An example is the generator of the boost one-parameter subgroup along the axis ${\bf n}$ of transformations of the Galileian group  $\bR^3 \ni x \mapsto x+ tv {\bf n}\in \bR^3$, where the speed $v\in \bR$ is the parameter of the group. The generator is \cite{moretti} the unique  self adjoint extension of \beq K_{\bf n}(t)= \sum_{j=1}^3 n_j  (m  X_j|_D - tP_j|_{D})\:,\label{boostF}\eeq
the constant $m>0$ denoting the mass of the system and $D$ being the G\r{a}ding or the Nelson  domain of the representation of (central extension of the) Galileian group as we will discuss later.

{\bf (c)} In QM there are symmetries described by operators which are simultaneously selfadjoint and unitary, so they are also observables and can be measured. The {\bf parity} is one of them: $({\cal P}\psi)(x) := \psi(-x)$
for a particle described in $L^2(\bR^3, d^3x)$. These are constants of motion 
($U^{-1}_t {\cal P} U_t = {\cal P}$)
if and only if they are dynamical symmetries (${\cal P} U_t = {\cal P}U_t$). This phenomenon has no classical corresponding.

{\bf (d)} The {\em time reversal symmetry}, when described by an anti unitary operator $T$ is supposed to satisfy: $THT^{-1} = H$. However, since it is antilinear gives rise to the identity (exercise)
$Te^{-itH}T^{-1} = e^{-itTHT^{-1}}$, so that $TU_t = U_{-t}T$ as physically expected.
There is no conserved quantity associated with this operator because it is not selfadjoint.

{\bf \exercise}\label{remconstt} $\null$ \\
\noindent {\bf (1)} {\em Prove that if the Hamiltonian observable does not depend on time  is a constant of motion}.

{\bf Solution}. In this case the time translation  is described by $U_t = e^{itH}$ and trivially it commute with $U_s$. Noether theorem implies the thesis $\hfill \Box$

\noindent {\bf (2)} {\em Prove that for the free particle in $\bR^3$ the momentum along $x_1$
is a constant of motion as consequence of translational invariance along that axis.
Assume that the unitary group representing translations along $x_1$ is $U_u$ with
$(U_u\psi)(x) = \psi(x - u {\bf e}_1)$ if $\psi \in L^2(\bR^3, d^3x)$.}

{\bf Solution}. The Hamiltonian is $H= \frac{1}{2m}\sum_{j=1}^3 P_j^2$. It commutes with the one-parameter unitary group describing displacements along $x_1$, because as one can prove the said groups is generated by $P_1$ itself: $U_u := e^{-iu P_1}$. Theorem \ref{neother} yields the thesis. \hfill $\Box$

\noindent {\bf (3)}  {\em Prove that if $\sigma(H)$ is bounded below but not above, the time reversal symmetry cannot be unitary.}

{\bf Solution}. We look for an operator, unitary or antiunitary such that $TU_t = U_{-t}T$ for all $t \in \bR$. If the operator is unitary, the said identity easily implies 
$THT^{-1} = -H$ and therefore, with obvious notation, $\sigma(THT^{-1}) = -\sigma(H)$.
(e) in remark \ref{remspectra} immediately yields $\sigma(H)= -\sigma(H)$ which is false  if $\sigma(H)$ is bounded below but not above. \hfill $\Box$

 \subsubsection{Strongly continuous unitary representations of Lie groups, Nelson theorem}
Topological and Lie groups are intensively used in QM \cite{BaRa}. More precisely they are studied in terms of their strongly continuous unitary representations. The reason to consider strongly continuous representations is that they immediately induce continuous representations of the group in terms of quantum symmetries (Def. \ref{defrapcont}). 
In the rest of the section  we consider only  the case of a {\em real Lie group}, $G$, whose   Lie algebra is indicated by $\gg$ endowed with the Lie bracket or commutator $\{\:\:,\:\:\}$.

\begin{definition}{\em  If $G$ is a Lie  group, a {\bf strongly continuous unitary representation} of $G$ over the complex Hilbert space $\cH$ is a group homomorphism $G \ni g \mapsto U_g \in \gB(\cH)$ such that every $U_g$ is unitary and $U_g \to U_{g_0}$, in the strong operator topology, if $g \to g_0$. }\hfill $\blacksquare$
\end{definition}

\noindent We leave to the reader the elementary proof that strong continuity is equivalent to strong continuity at the unit element of the group and in turn, this is equivalent to weak continuity at the unit element of the group.\\
A fundamental technical fact is that the said unitary representations are associated with representations of the Lie algebra of the group in terms of (anti)selfadjoint operators.
These operators are often physically interpreted as constants of motion (generally parametrically depending on time) when the Hamiltonian of the system belongs to the representation of the Lie algebra. We want to study this relation between the representation of the group on  the one hand and the representation of the Lie algebra on the other hand. First of all we define the said operators representing the Lie algebra.

\begin{definition}\label{defasagen}
{\em Let $G$ be a real Lie group and consider a
 strongly continuous unitary representation $U$ of $G$ over the complex Hilbert space $\sH$.\\
If $\sA \in \gg$ let  $\bR \ni t \mapsto \exp(t\sA) \in G$ be the generated  one-parameter Lie subgroup.
The {\bf self-adjoint generator associated with $\sA$} $$A : D(A) \to \cH$$  is the generator of the strongly continuous one-parameter unitary group $$\bR \ni t \mapsto U_{\exp\{t\sA\}} = e^{-isA}$$ in the sense of Theorem  \ref{teostone}.}\hfill $\blacksquare$
\end{definition}
\noindent The expected result is that these generators (with a factor $-i$) define a representation of the Lie algebra of the group. The utmost reason is that  they are associated to the unitary one-parameter subgroups exactly as the elements of the Lie algebra are associated to the Lie one-parameter subgroups. In particular we expect that the Lie parenthesis correspond to the commutator of operators. The technical problem is that the generators $A$ may have different domains. Thus we look for a common invariant (because the commutator must be defined thereon) domain, where all them can be defined. This domain should embody all the amount of  information about the operators $A$ themselves, disregarding the fact that they are defined in larger domains. In other words we would like that the domain be  a {\em core} ((3) in Def. \ref{defcore})
for each generator. There are several candidates for this space, one of the most appealing is the se called {\em G\r{a}rding space}.

\begin{definition}\label{defgarding}
{\em Let $G$ be a (finite-dimensional real) Lie group and consider a
 strongly continuous unitary representation $U$ of $G$ over the  complex Hilbert space $\cH$. If  $f \in C_0^\infty(G; \bC)$ and $x\in \cH$, define
\beq
x[f] := \int_G f(g) U_g x \:dg
\eeq
where $dg$ denotes the Haar measure over $G$ and the integration is defined in a weak sense exploiting Riesz' lemma: Since the map $\cH \ni x \mapsto \int_G f(g) \langle y|U_g x\rangle dg $ is continuous (the proof being elementary), 
 $x[f]$ is the unique vector in $\cH$ such that 
$$\langle  y| x[f]\rangle = \int_G f(g) \langle y|U_g x\rangle dg  \:, \quad  \forall y \in \cH\:.$$  The complex span of all vectors $x[f] \in \cH$ with $f \in C_0^\infty(G;\bC)$ and $x\in \sH$ is called {\bf G\r{a}rding space} of the representation and is denoted by $D^{(U)}_G$.}\hfill $\blacksquare$
\end{definition}

\noindent The subspace  $D^{(U)}_G$ enjoys very remarkable properties   we state in the next theorem.
In the following $L_g :  C_0^\infty(G; \bC) \to C_0^\infty(G;\bC)$ denotes the standard left-action of $g\in G$ on complex valued smooth compactly supported functions defined on $G$:
\beq
(L_gf)(h):= f(g^{-1}h) \quad \forall h \in G\:,
\eeq
and, if $\sA \in \gg$, $X_{\sA} : C_0^\infty(G;\bC) \to C_0^\infty(G;\bC)$ is the smooth vector field  over $G$ (a smooth differential operator)  defined as:
\beq
\left(X_{\sA}(f)\right)(g) := \lim_{t\to 0} \frac{f\left(\exp\{-t\sA\}g \right)-f(g)}{t}\quad \forall g \in G\:.
\eeq
so that that map
\beq
\gg \ni \sA \mapsto X_{\sA}
\eeq
defines a representation of $\gg$ in terms of vector fields (differential operators) on $C_0^\infty(G; \bC)$. We conclude with the following theorem \cite{BaRa}, establishing  that the 
G\r{a}rding space has all the expected properties.

\begin{theorem}\label{teogarding} Referring to Definitions \ref{defasagen} and \ref{defgarding}, the G\r{a}rding space  $D^{(U)}_G$ satisfies the following properties.\\

{\bf (a)} $D^{(U)}_G$ is dense in $\cH$.\\

{\bf (b)} If $g\in G$, then  $U_g(D^{(U)}_G) \subset D^{(U)}_G$. More precisely, if $f \in C_0^\infty(G)$, $x \in \cH$, $g\in G$, it holds
\beq
U_gx[f] = x[L_gf]\:.\label{Ugf}
\eeq

{\bf (c)} If $\sA \in \gg$, then $D^{(U)}_G\subset D(A)$  and furthermore $A(D^{(U)}_G) \subset D^{(U)}_G$. More precisely
\beq
-iA x[f] = x[X_{\sA}(f)] 
\eeq

{\bf (d)} The map  
\beq \gg \ni \sA \mapsto -iA|_{D^{(U)}_G} =: U(\sA)\eeq
is a Lie algebra representation in terms of anti symmetric operators defined on the common dense invariant domain $D^{(U)}_G$. In particular if $\{\:\:,\:\:\}$ is the Lie commutator of $\gg$ we have:
$$[U(\sA), U(\sA')] = U(\{\sA, \sA'\}) \quad \mbox{if $\sA, \sA'\in\gg$.}$$

{\bf (e)}  $D^{(U)}_G$ is a core for every  selfadjoint  generator $A$ with $\sA \in \gg$, that is
\beq
A = \overline{A|_{D^{(U)}_G}}\:,\quad \forall \sA \in \gg\:. 
\eeq
\end{theorem}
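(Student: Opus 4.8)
The plan is to derive all five assertions from the single transformation law (b) together with one elementary continuity estimate for the weak integral defining $x[f]$. That estimate is the workhorse: from $\langle y| x[f]\rangle = \int_G f(g)\langle y|U_g x\rangle\, dg$, the Cauchy--Schwarz inequality, and $\|U_g x\| = \|x\|$ (unitarity), one gets $\|x[f]\| \leq \|x\| \int_G |f(g)|\, dg$, so $f \mapsto x[f]$ is continuous from $C_0^\infty(G;\bC)$ with the $L^1(G,dg)$ topology into $\cH$. I would record this first and use it repeatedly. For (a) I would pick an approximate identity $\{f_n\} \subset C_0^\infty(G;\bC)$, i.e.\ nonnegative functions with $\int_G f_n\, dg = 1$ whose supports shrink to $\{e\}$; strong continuity of $U$ at $e$ then forces $x[f_n] = \int_G f_n(g) U_g x\, dg \to U_e x = x$, so every $x \in \cH$ is a limit of G\r{a}rding vectors and $D^{(U)}_G$ is dense. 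For (b) I would compute directly, $U_g x[f] = \int_G f(h) U_{gh} x\, dh = \int_G f(g^{-1}h') U_{h'} x\, dh' = x[L_g f]$, where $U_g$ (bounded) passes inside the weak integral and left invariance of the Haar measure handles the change of variable $h' = gh$; in particular $U_g(D^{(U)}_G) \subset D^{(U)}_G$.

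The core of the argument is (c). Applying (b) with $g = \exp(t\sA)$ and forming the difference quotient gives
$$\tfrac{1}{t}\bigl(U_{\exp(t\sA)} - I\bigr) x[f] = x\!\left[\tfrac{1}{t}\bigl(L_{\exp(t\sA)} f - f\bigr)\right].$$
Here the functions $\tfrac{1}{t}(L_{\exp(t\sA)} f - f)$, where $(L_{\exp(t\sA)} f)(g) = f(\exp(-t\sA)g)$, all have supports inside a fixed compact set for $|t|$ small, and by Taylor's theorem applied to the smooth map $(t,g) \mapsto f(\exp(-t\sA)g)$ they converge \emph{uniformly} to $X_{\sA} f$ as $t \to 0$. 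Uniform convergence with uniformly compact supports yields convergence in $L^1(G,dg)$, so by the continuity estimate the right-hand side converges to $x[X_{\sA} f]$. By Theorem \ref{teostone}(b) the existence of this limit means precisely $x[f] \in D(A)$ and $-iA\, x[f] = x[X_{\sA} f]$, which again lies in $D^{(U)}_G$; hence $D^{(U)}_G \subset D(A)$ and $A(D^{(U)}_G) \subset D^{(U)}_G$.

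For (d), antisymmetry of $U(\sA) = -iA|_{D^{(U)}_G}$ is immediate once $D^{(U)}_G \subset D(A)$ with $A = A^*$: using that the scalar product is antilinear in the first entry, $\langle -iAx | y\rangle = i\langle Ax|y\rangle = i\langle x|Ay\rangle = -\langle x|-iAy\rangle$. The bracket relation follows by iterating (c) on G\r{a}rding vectors, $[U(\sA), U(\sA')]\, x[f] = x\bigl[[X_{\sA}, X_{\sA'}] f\bigr] = x[X_{\{\sA,\sA'\}} f] = U(\{\sA,\sA'\})\, x[f]$, where I invoke that $\sA \mapsto X_{\sA}$ is a Lie algebra homomorphism onto vector fields on $G$ (the minus sign in $\exp(-t\sA)$ is exactly what makes this a homomorphism rather than an antihomomorphism). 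Finally, for (e) I would combine (a), (b) and (c): $D^{(U)}_G$ is dense, is contained in $D(A)$, and is invariant under the group $U_{\exp(t\sA)} = e^{-itA}$ by (b); the standard criterion that a dense subspace of $D(A)$ left invariant by the one-parameter unitary group generated by a selfadjoint $A$ is automatically a core for $A$ then gives $A = \overline{A|_{D^{(U)}_G}}$.

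I expect the main obstacle to be exactly the limit interchange in (c): one must argue the convergence of the difference quotients in a topology strong enough for the $L^1$-continuity estimate to bite, namely uniform convergence together with a uniform bound on the supports. This is where the smoothness of $f$ and the compactness of $\mathrm{supp}\, f$ are indispensable, and it is the only place where a genuine analytic estimate (Taylor expansion in the group parameter, uniform over the compact support) rather than a purely algebraic manipulation is required.
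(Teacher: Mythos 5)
Your proposal is correct: the paper states this theorem without proof, deferring to \cite{BaRa}, and your argument is exactly the standard one found in that literature --- the $L^1$ continuity estimate $\|x[f]\|\le \|x\|\int_G|f|\,dg$, smooth approximate identities for (a), left-invariance of the Haar measure for (b), uniform convergence of the difference quotients on a fixed compact support fed into Stone's theorem for (c), and the invariant-dense-domain core criterion for (e). All steps check out, including the sign bookkeeping by which the $\exp(-t\sA)$ in the definition of $X_{\sA}$ makes $\gg\ni\sA\mapsto X_{\sA}$ a Lie algebra homomorphism rather than an antihomomorphism.
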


\noindent Now we tackle the inverse problem: We suppose to have a certain representation of a Lie algebra $\gg$ in terms of symmetric operators defined in common invariant domain of a complex Hilbert space $\cH$.
We are interested in lifting this representation to a whole strongly continuous representation of the unique simply connected Lie group $G$ admitting $\gg$ as Lie algebra. This is a much more difficult problem solved by Nelson. \\

\noindent Given a strongly continuous representation $U$ af a (real) Lie group $G$, there is another space $D^{(U)}_N$\index{$D_N$} with similar features to $D^{(U)}_G$. Introduced by Nelson  \cite{BaRa}, it turns out to be more useful than the G\r{a}rding space to {\em recover} the representation $U$ by exponentiating the Lie algebra representation. \\
By definition $D^{(U)}_N$ consists of vectors $\psi \in \cH$ such that  $G\ni g \mapsto U_g\psi$ is {\em analytic}
in  $g$, i.e. expansible  in power series in (real) analytic coordinates around every point of $G$. 
The elements of $D^{(U)}_N$ are called {\bf analytic vectors of the representation} $U$ and $D^{(U)}_N$ is the {\bf space of analytic vectors of the representation} $U$. It turns out that  $D^{(U)}_N$ is invariant for every  $U_g$, $g \in G$.\\
A remarkable  relationship exists between analytic vectors in $D^{(U)}_N$ and analytic vectors according to Definition \ref{defanalitic}. Nelson proved the following important result \cite{BaRa}, which implies that $D^{(U)}_N$ is dense in $\cH$, as we said, because analytic vectors for a self-adjoint operator are dense (exercise \ref{esanv}).
An operator is introduced, called {\em Nelson operator}, that sometimes has to do with the {\em Casimir operators} \cite{BaRa} of the represented group.

\begin{proposition}\label{propN} Let $G$  be a (finite dimensional real) Lie group and $G \ni g \mapsto U_g$ a  strongly continuous unitary representation on the Hilbert space $\cH$. Take $\sA_1,\ldots, \sA_n \in \gg$ a basis and define {\bf Nelson's operator} on $D^{(U)}_G$ by  
$$\Delta :=\sum_{k=1}^n U(\sA_k)^2\:,$$
where the $U(\sA_k)$ are, as before, the selfadjoint  generators $A_k$ restricted to the G\r{a}rding domain $D^{(U)}_G$. Then\\
{\bf (a)} $\Delta$ is essentially selfadjoint on $D^{(U)}_G$.\\
{\bf (b)} Every  analytic vector of the selfadjoint operator $\overline{\Delta}$ is analytic an element of $D^{(U)}_N$, in particular $D^{(U)}_N$ is dense.\\
{\bf (c)} Every vector in $D^{(U)}_N$ is analytic for every self-adjoint operator $\overline{U(\sA_k)}$, which is thus essentially selfadjoint 
in  $D^{(U)}_N$ by Nelson's criterion.
\end{proposition}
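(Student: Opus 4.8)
The plan is to transport the problem to the group $G$ itself, where $\Delta$ becomes a genuine elliptic Laplacian and representation-analyticity becomes ordinary analyticity, controlled by the heat semigroup. Fix the basis $\sA_1,\dots,\sA_n$ and put $\Delta_G:=\sum_{k=1}^n X_{\sA_k}^2$, a left-invariant second-order operator on $C_0^\infty(G;\bC)$. Theorem \ref{teogarding}(c) gives $U(\sA_k)x[f]=x[X_{\sA_k}(f)]$, and iterating yields the intertwining relation
\[
\Delta\, x[f]=x[\Delta_G f]\:,\qquad f\in C_0^\infty(G;\bC),\ x\in\cH .
\]
Together with the elementary bound $||x[f]||\le ||x||\,||f||_{L^1(G)}$ (immediate from the weak integral defining $x[f]$, Cauchy-Schwarz, and unitarity of the $U_g$), this reduces every estimate on powers of $\Delta$ to an $L^1$-estimate on powers of $\Delta_G$ acting on functions.

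For (a), $\Delta$ is symmetric and $\le 0$ on $D^{(U)}_G$ because each $U(\sA_k)=-iA_k$ is skew-symmetric. By Nelson's criterion it suffices to produce a dense set of analytic vectors of $\Delta$ inside $D^{(U)}_G$, and these arise by heat smoothing. For $h\in C_0^\infty(G;\bC)$ and $s>0$ the function $e^{s\Delta_G}h$ is (globally) real-analytic, by analytic hypoellipticity of the heat operator $\partial_t-\Delta_G$, and holomorphy of the semigroup gives Cauchy estimates $||\Delta_G^m e^{s\Delta_G}h||_{L^1(G)}\le C\,m!\,r^{-m}$ for suitable $C,r>0$. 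Feeding this into the displayed intertwining and the $L^1$-bound yields $\sum_m \frac{t^m}{m!}||\Delta^m x[e^{s\Delta_G}h]||<\infty$ for $t<r$, so the smooth vector $x[e^{s\Delta_G}h]$ is analytic for $\Delta$; it lies in the common G\r{a}rding domain, and since $x[e^{s\Delta_G}h]\to x[h]$ as $s\downarrow 0$ while the $x[h]$ span the dense space $D^{(U)}_G$ (Theorem \ref{teogarding}(a)), such analytic vectors are dense. Hence $\Delta$ is essentially selfadjoint; write $\overline{\Delta}$ for its closure.

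For (b) the task is to turn $\overline{\Delta}$-analyticity of $\psi$ into analyticity of the orbit map $g\mapsto U_g\psi$. The engine is an elliptic a-priori estimate: there are constants $b,C>0$ such that, schematically, every ordered monomial in the generators satisfies $||A_{i_1}\cdots A_{i_m}\psi||\le C^m\,||(bI-\Delta)^m\psi||$ on $D^{(U)}_G$, expressing that $bI-\Delta$ jointly dominates all the $A_k$ (the Hilbert-space form of G\r{a}rding's inequality). Granting this, an $\overline{\Delta}$-analytic $\psi$ makes $\sum_m\frac{t^m}{m!}\,n^m\max_{i_1,\dots,i_m}||A_{i_1}\cdots A_{i_m}\psi||$ converge for small $t$; but these monomials are exactly the $\cH$-valued Taylor coefficients of $g\mapsto U_g\psi$ in canonical coordinates of the second kind, so the power series converges absolutely near each point and $\psi\in D^{(U)}_N$. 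Because $\overline{\Delta}$ is selfadjoint it admits a dense set of analytic vectors (exercise \ref{esanv}), which by the above lie in $D^{(U)}_N$; hence $D^{(U)}_N$ is dense.

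Finally (c) is a short corollary: if $\psi\in D^{(U)}_N$ then $t\mapsto U_{\exp(t\sA_k)}\psi$ is real-analytic at $0$, so its Taylor coefficients $A_k^m\psi$ obey the growth bound of Definition \ref{defanalitic}, i.e.\ $\psi$ is an analytic vector of the generator $A_k$ (equivalently of $\overline{U(\sA_k)}=-iA_k$); since $D^{(U)}_N$ is dense, Nelson's criterion gives essential selfadjointness of $U(\sA_k)$ on $D^{(U)}_N$. The main obstacle is concentrated in the two analytic inputs behind (a) and (b): the analytic hypoellipticity of $\partial_t-\Delta_G$ (equivalently the analyticity and Gaussian decay of the heat kernel on $G$, needed both for the $L^1$ Cauchy estimates and for keeping the smoothed vectors inside $D^{(U)}_G$), and the G\r{a}rding-type domination of monomials in the $A_k$ by powers of $bI-\Delta$. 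Everything else is bookkeeping with the weak integral $x[f]$, the functional calculus, and Stone's theorem.
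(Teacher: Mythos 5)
The paper does not actually prove this proposition: it is quoted from Nelson via \cite{BaRa}, so there is nothing internal to compare against. Your outline is a faithful reconstruction of Nelson's own strategy (heat semigroup on $G$ for (a), elliptic domination of generator monomials by powers of $bI-\Delta$ for (b)), and the easy bookkeeping you do carry out — the intertwining $\Delta\,x[f]=x[\Delta_G f]$, the bound $\|x[f]\|\le\|x\|\,\|f\|_{L^1}$, the negativity of $\Delta$, and part (c) — is correct. But as a proof it has two genuine gaps, and they are exactly where all the difficulty of the theorem lives. First, in (a): $e^{s\Delta_G}h$ is not compactly supported, so $x[e^{s\Delta_G}h]$ is \emph{not} an element of $D^{(U)}_G$ as Definition \ref{defgarding} is stated, and Nelson's criterion requires the dense set of analytic vectors to lie in the domain of the symmetric operator whose essential selfadjointness you are claiming. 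You gesture at Gaussian decay, but decay does not put the vector back into the span of the $x[f]$ with $f\in C_0^\infty$. The standard repairs are either to show that the operators $U(\rho_s)=\int_G\rho_s(g)U_g\,dg$ form a selfadjoint contraction semigroup whose generator extends $\Delta|_{D_G^{(U)}}$ and that $D_G^{(U)}$ is (after a cutoff approximation with controlled derivatives) a core for that generator, or to enlarge the Gårding space to rapidly decreasing $f$ and check Theorem \ref{teogarding} survives; either way an argument is needed, not just the remark.

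Second, and more seriously, the inequality $\|A_{i_1}\cdots A_{i_m}\psi\|\le C^m\|(bI-\Delta)^m\psi\|$ that drives (b) is not ``the Hilbert-space form of Gårding's inequality'' to be granted: it \emph{is} Nelson's theorem. The one-step estimate $\|A_k\psi\|^2\le\langle\psi|(bI-\Delta)\psi\rangle$ is easy, but iterating it forces you through the commutators $[\Delta,A_k]$, which are again second order in the generators, so a naive induction loses an order at each step; controlling the resulting combinatorics so that the constant stays geometric in $m$ is the hard core of Nelson's paper and cannot be black-boxed. In addition, that estimate is only available on $D^{(U)}_G$, whereas you must apply it to an arbitrary analytic vector $\psi$ of $\overline{\Delta}$, which a priori is not even a smooth vector; one must first extend the estimate by closure and show $\psi\in\bigcap D(A_{i_1}\cdots A_{i_m})$ before the monomials make sense. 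Smaller items — analytic hypoellipticity of $\partial_t-\Delta_G$ with $L^1$ Cauchy estimates, and the identification of the monomials with the Taylor coefficients of $g\mapsto U_g\psi$ together with the remainder estimate showing the series actually represents the orbit map — are true but also nontrivial inputs you are importing. In short: right architecture, but the proposal postpones precisely the two estimates that constitute the proof.
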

\noindent We  finally state the well-known theorem of Nelson that enables to associate representations of the only simply connected Lie group with a given Lie algebra to  representations of that Lie algebra. 

\begin{theorem}[Nelson theorem]\label{teoN} Consider a real  $n$-dimensional Lie algebra $V$ of operators $-iS$ -- with each  $S$ symmetric on the Hilbert space $\cH$, defined on a common invariant subspace $\cD$ dense in $\cH$ and $V$-invariant -- with the usual commutator of operators as Lie bracket.\\ 
Let $-iS_1,\cdots, -iS_n \in V$ be a basis of $V$ and define Nelson's operator with domain $\cD$:
$$\Delta := \sum_{k=1}^n S_k^2\:.$$ 
If $\Delta$ is essentially self-adjoint, there exists a strongly continuous unitary representation 
$$G_V \ni g \mapsto U_g$$ 
on $\cH$, of the unique simply connected Lie group $G_V$ with  Lie algebra  $V$. \\ $U$  is completely determined by the fact that the closures
$\overline{S}$, for every $-iS\in V$, are the selfadjoint generators of the representation
of the one-parameter subgroups of $G_V$ in the sense of Def. \ref{defasagen}.\\
In particular, the symmetric operators $S$ are essentially selfadjoint on $\cD$, their closure being selfadjoint.
\end{theorem}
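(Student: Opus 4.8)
The plan is to use the hypothesis that the Nelson operator $\Delta=\sum_{k=1}^n S_k^2$ is essentially self-adjoint to manufacture a dense space of vectors that are simultaneously analytic for the whole Lie algebra, and then to exponentiate that infinitesimal action to a representation of $G_V$. First I would pass to the self-adjoint closure $\overline{\Delta}$ and invoke exercise \ref{esanv} to fix a dense subspace $\cA\subset\cH$ of analytic vectors of $\overline{\Delta}$ (in the sense of Definition \ref{defanalitic}). The technical heart is an \emph{analytic domination} estimate: writing $S_w:=S_{i_1}\cdots S_{i_m}$ for a word $w$ of length $m$ in the basis $S_1,\dots,S_n$, one proves by induction on $m$ that
$$\sum_{|w|=m}||S_w\psi||^2 \;\le\; C^m\,\langle \psi\,|\,(I+\overline{\Delta})^m\psi\rangle\,,\qquad \psi\in\cD\,,$$
the induction using the structure relations $[S_i,S_j]=i\sum_l \gamma_{ij}^l S_l$ to reorder factors and the positivity of $\Delta$ to absorb the lower-order remainders. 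Because $\psi\in\cA$ is analytic for $\overline{\Delta}$, the right-hand side has controlled factorial growth in $m$; the estimate then forces $\psi\in\bigcap_m D(\overline{S_k}^{\,m})$ with $\sum_m \tfrac{t^m}{m!}||\overline{S_k}^{\,m}\psi||<+\infty$ for some $t>0$, so that $\cA$ is a \emph{dense set of analytic vectors for each} $\overline{S_k}$, and more generally for $S_v:=\sum_k v_k S_k$ uniformly for $v\in\bR^n$ in bounded sets.

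From here the infinitesimal objects are under control. By Nelson's essential self-adjointness criterion, each symmetric $S_v$ is essentially self-adjoint on $\cD$, its closure $\overline{S_v}$ is self-adjoint, and by Stone's theorem (Theorem \ref{teostone}) it generates a strongly continuous one-parameter unitary group $t\mapsto e^{-it\overline{S_v}}$. This already yields the last sentence of the statement. On $\cA$ these groups act through the convergent power series $e^{-it\overline{S_v}}\psi=\sum_m\tfrac{(-it)^m}{m!}\,\overline{S_v}^{\,m}\psi$, and the domination estimate shows the series converges jointly and real-analytically in $(v,t)$; moreover the integrated form of the Lie relations matches $[\overline{S_v},\overline{S_{v'}}]\psi$ with the generator of the corresponding bracket direction on this analytic core.

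Next I would assemble the global representation. Setting $U_{\exp(-iS_v)}:=e^{-i\overline{S_v}}$ defines $U_g$ for $g$ in a neighbourhood of the identity, and I would verify the \emph{local} homomorphism property $U_{\exp(-iS_v)}U_{\exp(-iS_{v'})}=U_{\exp(-iS_{v''})}$, where $\exp(-iS_{v''})=\exp(-iS_v)\exp(-iS_{v'})$ is governed by the Baker--Campbell--Hausdorff series in $V$: for each fixed $\psi\in\cA$ both sides are analytic $\cH$-valued functions solving the same initial value problem dictated by the commutation relations above, hence coincide, and by density and unitarity the operators agree. This produces a local strongly continuous unitary representation of $G_V$. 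Since $G_V$ is connected and, crucially, \emph{simply connected}, the monodromy principle for Lie groups extends this local homomorphism uniquely to a global strongly continuous unitary representation $G_V\ni g\mapsto U_g$; strong continuity on all of $\cH$ follows from analyticity on the dense set $\cA$ together with the uniform bound $||U_g||=1$ by the usual three-$\varepsilon$ argument, and the stated identification of the generators holds by construction.

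\textbf{The hard part} will be the analytic domination estimate of the first paragraph: extracting the correct factorial growth for arbitrary \emph{noncommutative} monomials $S_w$ from control of the single elliptic operator $\overline{\Delta}$ is the genuinely delicate combinatorial step, and it is precisely where the hypothesis ``$\Delta$ essentially self-adjoint'' enters. A secondary subtlety is ensuring that the analytic core $\cA$ is invariant under each $\overline{S_k}$, so that the Lie relations and the local group law can be checked there; this invariance is not automatic and is part of what the domination estimate must be arranged to deliver.
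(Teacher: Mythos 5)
The paper does not actually prove Theorem \ref{teoN}: it states it as a known result and refers to \cite{BaRa}, so there is no internal proof to compare against. Your plan reproduces the architecture of Nelson's original argument (which is also the proof in the cited reference): an analytic-domination estimate bounding the noncommutative monomials $S_w$ by powers of $I+\Delta$, the conclusion that analytic vectors of $\overline{\Delta}$ are analytic for every $S_v=\sum_k v_kS_k$, Nelson's essential-selfadjointness criterion to obtain the last assertion of the statement, and then integration of the infinitesimal representation to a local homomorphism on the analytic core followed by the monodromy extension using simple connectedness. That is the right route, and your identification of the domination estimate as the crux is accurate.

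There is, however, one concrete gap you should not gloss over. Your estimate $\sum_{|w|=m}||S_w\psi||^2\le C^m\langle \psi|(I+\overline{\Delta})^m\psi\rangle$ is proved by induction for $\psi\in\cD$, because the monomials $S_w$ and the commutation relations only make sense there; but the dense set $\cA$ of analytic vectors for $\overline{\Delta}$ that you take from exercise \ref{esanv} is produced by the spectral calculus of $\overline{\Delta}$ and has no reason to lie in $\cD$. As written, you apply an inequality valid on $\cD$ to vectors outside $\cD$. Closing this is not bookkeeping: essential selfadjointness of $\Delta$ on $\cD$ does not by itself make $\cD$ a core for every power $(I+\overline{\Delta})^m$, which is what a naive limiting argument would require, nor does it let you interpret $S_w\psi$ for $\psi\notin\cD$ without discussing closures of products versus products of closures. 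Nelson's resolution is to regularize with the semigroup $e^{-t\overline{\Delta}}$ (equivalently, to establish the domination estimates for the closed operators on a suitable chain of graph-norm domains) and only then extract the dense set of simultaneously analytic vectors; this step must be added explicitly, since without it the claim that $\cA$ consists of analytic vectors for each $\overline{S_k}$ --- and hence the application of Nelson's criterion and everything downstream --- is not yet justified. The remaining steps (Baker--Campbell--Hausdorff on the analytic core, density plus unitarity, monodromy for the simply connected $G_V$) are standard once that is in place.
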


{\bf \exercise\label{ABcomm}} {\em Let  $A,B$ be  selfadjoint operators in the complex Hilbert space $\cH$ with a common invariant dense domain $D$ where they are symmetric and  commute. Prove that if  $A^2+B^2$ is essentially self adjoint on $D$, then the spectral measures of $A$ and $B$  commute.}

{\bf Solution}. Apply Nelson's theorem observing that $A,B$ define the Lie algebra of the additive Abelian Lie group $\bR^2$ and  that $D$ is a core for  $A$ and $B$,
because they are essentially selfadjoint therein again by Nelson theorem. $\hfill \blacksquare$

\example\label{exL} $\null$\\
{\bf (1)} Exploiting spherical polar coordinates, the Hilbert space $L^2(\bR^3, d^3x)$ can be factorised as $L^2([0,+\infty), r^2dr) \otimes L^2(\bS^2, d\Omega)$, where $d\Omega$ is the natural rotationally invariant Borel measure on the sphere  $\bS^2$ with unit radius in $\bR^3$,
with $\int_{\bS^2} 1 d\Omega =4\pi$. In particular a Hilbertian basis of $L^2(\bR^3, d^3x)$ is therefore made of
the products $\psi_n(r) Y^l_{m}(\theta,\phi)$ where $\{\psi_n\}_{n\in \bN}$ is any Hilbertian basis in $L^2([0,+\infty), r^2dr) $ and $\{Y^l_m\:|\: l=0,1,2,\ldots \:, m=0,\pm 1,\pm 2, \ldots \pm l \}$ is  the standard Hilbertian basis of {\em spherical harmonics} of  $L^2(\bS^2, d\Omega)$
\cite{BaRa}. Since the function $Y^l_m$ are smooth on $\bS^2$,  it is possible to arrange the basis of $\psi_n$ made of  compactly supported smooth functions whose derivatives in $0$ vanish 
at every order, in order that 
$\bR^3 \ni x \mapsto (\psi_n \cdot Y^l_{m})(x)$ are elements of  $\bC^\infty(\bR^n; \bC)$ (and therefore also 
of ${\cal S}(\bR^3)$). Now consider the three symmetric operators defined on the common dense invariant domain   ${\cal S}(\bR^3)$
$${\cal L}_k = \sum_{i,j=1}^3\epsilon_{kij} X_iP_j|_{{\cal S}(\bR^3)}$$
where $\epsilon_{ijk}$ is completely antisymmetric in $ijk$ and  $\epsilon_{123}=1$.
By direct inspection one sees that 
$$[-i {\cal L}_k, -i{\cal L}_h] =  \sum_{r=1}^3 \epsilon_{khr} (-i {\cal L}_r) $$
so that the finite real span of the operators $i{\cal  L}_k$ is  a representation of the Lie algebra of the simply connected real Lie group  $SU(2)$ (the universal covering of $SO(3)$). 
Define the Nelson operator ${\cal L}^2 := -\sum_{k=1}^3 {\cal L}_k^2$ on ${\cal S}(\bR^3)$. Obviously this is a symmetric operator. A well known computation proves that
$${\cal L}^2 \: \psi_n(r) Y^l_{m} = l(l+1) \: \psi_n(r) Y^l_{m}\:.$$
We conclude that ${\cal L}^2$ admits a Hilbertian basis of eigenvectors. Corollary \ref{CN} implies that  ${\cal L}^2$
is essentially self adjoint.  Therefore we can apply Theorem \ref{teoN} concluding that there exists a strongly continuous  unitary representation $SU(2) \ni M \mapsto U_M$ of  $SU(2)$ (actually it can be proved to be also of $SO(3)$).  The three  selfadjoint operators  $L_k := \overline{{\cal L}_k}$ are the generators of the one-parameter of rotations around the corresponding three orthogonal Cartesian axes $x_k$, $k=1,2,3$.
The one-parameter subgroup of rotations around the generic unit vector  ${\bf n}$, with components $n_k$, admits the  selfadjoint 
generator $L_{\bf n} = \overline{\sum_{k=1}^3 n_k{\cal L}_k }$.
The  observable  $L_{\bf n}$ has the physical meaning of the  ${\bf n}$-{\em component  of the angular momentum}  of the particle described in $L^2(\bR^3,d^3x)$. It turns out that, for $\psi \in L^2(\bR^3, d^3x)$,
\beq (U_{M}\psi)(x) = \psi(\pi(M)^{-1} x)\:,\quad M \in SU(2) \quad \:, x \in \bR^3 \label{rot}\eeq
where $\pi : SU(2) \to SO(3)$ is the standard covering map. (\ref{rot}) is the action of the rotation group on pure states in terms of quantum symmetries.  This 
representation is, in fact, a subrepresentation of the unitary representation of $IO(3)$ already found in (1) of example \ref{exrep}.\\
{\bf (2)} Given a quantum system, a quite general situation is the one where the quantum symmetries of the systems are described by a strongly continuous representation $V : G \ni g \mapsto V_g$ on the Hilbert space $\cH$ of the system, and  the time evolution is the representation of a one-parameter Lie subgroup with generator $\sH \in \gg$. So that 
$$V_{\exp(t\sH)}= e^{-itH} =: U_t\:.$$ This is the case, for instance, of relativistic quantum particles, where $G$ is the {\em special orthochronous Lorentz group}, $SO(1,3)_+$, (or its universal covering $SL(2,\bC)$). Describing non-relativistic quantum particles, the relevant group $G$ is an $U(1)$ central extension of the universal covering of the (connected orthochronous) Galileian group. \\ In this situation, every  element of $\gg$ determines a constant of motion. Actually there are two cases.

(i) If $\sA \in \gg$ and $\{\sH, \sA\}=0$, then the Lie subgroups $\exp(t\sH)$ and $\exp(s\sA)$ commute as, for example, follows from {\em Baker-Campbell-Hausdorff}  formula (see \cite{BaRa,moretti},  for instance). Consequently $A$ is a constant of motion because $V_{\exp(t\sH)} =e^{-itH}$ and $V_{\exp(s\sA)} = e^{-isA}$ commute as well and Theorem \ref{neother} is valid. In this case $e^{-isA}$ defines a dynamical symmetry in accordance with the afore-mentioned theorem. This picture applies in particular, referring to a free particle, to $A= J_{\bf n}$, the observable describing total angular momentum along the unit vector ${\bf n}$ computed in an inertial  reference frame.

(ii) A bit more complicated is the case of $\sA \in \gg$ with
$\{\sH, \sA\} \neq 0$. However, even in this case $\sA$ defines a constant of motion 
in terms of selfadjont  operators (observables) belonging to the representation of the Lie algebra of $G$. The difference with respect to the previous case is that, now, the constant of motion {\em parametrically depend on time}. We therefore have a class of observables $\{A(t)\}_{t\in \bR}$ in the Schr\"odinger picture, in accordance with (b) in remark \ref{remconstt}, such  that $A_t := U^{-1}_t A(t) U_t$ are the corresponding observables in the Heisenber picture. The equation stating that we have a constant of motion is therefore $A_t= A_0$. \\ Exploiting the natural action of the Lie one-parameters subgroups on $\gg$,  let us define the time parametrised class of elements of the Lie algebra  $$\sA(t) := \exp(t\sH)\sA\exp(-t\sH) \in \gg\:,\quad t \in \bR\:.$$ If $\{\sA_k\}_{k=1,\ldots,n}$ is a basis of $\gg$, it must consequently hold
\beq \sA(t) = \sum_{k=1}^n a_k(t) \sA_k\label{decgg}\eeq for some real-valued smooth functions $a_k=a_k(t)$.
By construction, the corresponding class of selfadjoint generators $A(t)$, $t\in \bR$, define a parametrically time dependent constant of motion. Indeed, since (exercise)
$$\exp(s\exp(t\sH)\sA\exp(-t\sH)) =  \exp(t\sH) \exp(s\sA)\exp(-t\sH)\:,$$ we have 
$$-iA(t)  =  \frac{d}{ds}|_{s=0}V_{\exp(t\sH)\sA\exp(-t\sH)}= \frac{d}{ds}|_{s=0}
V_{\exp(t\sH) \exp(s\sA)\exp(-t\sH)}$$ $$= \frac{d}{ds}|_{s=0} V_{\exp(t\sH)} V_{\exp(s\sA)}V_{\exp(-t\sH)} = -i U_t A U^{-1}_t$$
Therefore
$$A_t = U^{-1}_t A(t) U_t = U^{-1}_t U_t A U^{-1}_tU_t = A = A_0\:.$$
In view of Theorem \ref{teogarding}, as the map $\gg \ni A \mapsto A|_{D_G^{(V)}}$ is a Lie algebra isomorphism, we can recast (\ref{decgg}) for selfadjoint generators
\beq
A(t)|_{D_G^{(V)}} = \sum_{k=1}^n a_k(t) A_k|_{D_G^{(V)}}
\eeq
(where $D_G^{(V)}$ may be replaced by $D_N^{(V)}$ as the reader can easily establish, taking advantage of Proposition \ref{propN} and  Theorem \ref{teoN}). Since $D_G^{(V)}$ (resp. $D_N^{(V)}$) is a core 
for $A(t)$, it also hold
\beq
A(t) = \overline{\sum_{k=1}^n a_k(t) A_k|_{D_G^{(V)}}}\:,
\eeq
the bar denoting the closure of an operator as usual.
(The same is true replacing $D_G^{(V)}$ for $D_N^{(V)}$.)
An important case, both for the non-relativistic and the relativistic case is the selfadjoint generator  $K_{\bf n}(t)$
associated with the boost transformation  along the unit vector ${\bf n}\in \bR^3$, the rest space of the inertial reference frame where the boost transformation is viewed as an active transformation. In fact, referring to the Lie generators of (a $U(1)$ central extension of the universal covering of the connected orthochronous)
Galileian group, we have $\{h, k_{\bf n}\} = - p_{\bf n} \neq 0$, where $p_{\bf n}$ is the generator of spatial translations along ${\bf n}$, corresponding to the observable momentum along the same axis when passing to selfadjoint generators.
 The non-relativistic expression  of $K_{\bf n}(t)$, for a single particle,  appears in (\ref{boostF}). For a more extended discussion on the non-relativistic case see \cite{moretti}. A pretty complete discussion including the relativistic case is contained in \cite{BaRa}.
$\hfill \blacksquare$

\subsubsection{Selfadjoint version of Stone - von Neumann - Mackey Theorem} \label{SvNM} A remarkable consequence of Nelson's theorem is a selfadjoint operator version of Stone-von Neumann  theorem usually formulated in terms of unitary operators  \cite{ercolessi,moretti}, proving that the CCRs always give rise to the standard representation in $L^2(\bR^n, d^nx)$. We state and prove this version of the theorem,  adding a  last statement which is the selfadjoint version of  Mackey completion  to Stone von Neumann statement \cite{moretti}.

\begin{theorem}[Stone - von Neumann - Mackey Theorem]\label{SNM} Let $\cH$ be a complex Hilbert space and suppose that there are $2n$ selfadjoint operators in $\cH$ we indicate with $Q_1,\ldots, Q_n$ and $M_1,\ldots, M_n$ such the following requirements are valid.\\

{\bf (1)} There is a common dense invariant subspace $D\subset \cH$ where  the CCRs hold
\begin{equation}
[Q_h, M_k]\psi = i\hbar \delta_{hk} \psi \:,\quad [Q_h, Q_k]\psi =0\:, \quad  [M_h, M_k]\psi = 0\quad \psi \in D\:, \quad h,k =1,\ldots, n\:. \label{CCRd}
\end{equation}

{\bf (2)} The representation is irreducible, in the sense that there is no closed subspace $\cK \subset \cH$ such that $Q_k(\cK\cap D(Q_k))\subset \cK$ and  $M_k(\cK\cap D(M_k))\subset \cK$ for $k=1,\ldots, n$.\\

{\bf (3)}  The operator $\sum_{k=1}^n Q_k^2|_D + M_k^2|_D$ is essentially self adjoint.\\
Under these conditions, there is a Hilbert space isomorphism, that is a   surjective isometric map, $U: \cH \to L^2(\bR^n, d^nx)$ such that
\beq UQ_kU^{-1}=X_k\quad \mbox{and}\quad UM_kU^{-1}=P_k\quad k=1,\ldots, n\label{Ur}\eeq 
where  $X_k$ and $P_k$ respectively are the standard position (\ref{Xm}) and momentum (\ref{Pm}) selfadjoint operators in $L^2(\bR^n,d^nx)$. In particular $\cH$ results to be separable.

If (1), (2) and (3) are valid with the exception that the representation is not reducible, then $\cH$ decomposes into  an orthogonal Hilbertian sum $\cH = \oplus_{r\in R} \cH_k$ where $R$
is finite or countable if $\cH$ is separable,  the $\cH_r\subset \cH$ are closed subspaces with $$Q_k(\cH_r\cap D(Q_k)) \subset \cH_r\quad \mbox{and} \quad M_k(\cH_r\cap D(M_k)) \subset \cH_r$$ for all $r\in R$, $k=1, \ldots, n$ and the restrictions of all the  $Q_k$ and $M_k$    to each $\cH_r$ satisfy (\ref{Ur}) for suitable  surjective isometric maps $U_r: \cH_r \to L^2(\bR^n, d^nx)$
\end{theorem}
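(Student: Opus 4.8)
The plan is to deduce the reducible (Mackey) case from the irreducible case already established, the common bridge being Nelson's Theorem \ref{teoN}, which promotes the infinitesimal CCR data to a strongly continuous unitary representation of the Heisenberg group. First I would observe that the real span $V := \mathrm{span}_{\bR}\{-iQ_1|_D,\ldots,-iQ_n|_D,-iM_1|_D,\ldots,-iM_n|_D,-iI\}$ is, by (\ref{CCRd}), a $(2n+1)$-dimensional real Lie algebra of anti-symmetric operators on the common invariant domain $D$, isomorphic to the Heisenberg Lie algebra $\mathfrak{h}_n$, with $-iI$ spanning its one-dimensional centre (linear independence of $Q_k,M_k,I$ follows from $\hbar\neq 0$ by commuting a putative relation with the $M_j$ and $Q_j$). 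The associated Nelson operator on $D$ is $\Delta=\sum_{k=1}^n Q_k^2|_D+\sum_{k=1}^n M_k^2|_D+I$, which differs from the operator of hypothesis (3) only by the bounded summand $I$; hence, by remark \ref{remarkclosure}(b), $\overline{\Delta}=\overline{\sum_k(Q_k^2+M_k^2)|_D}+I$ is selfadjoint, so $\Delta$ is essentially selfadjoint on $D$. Theorem \ref{teoN} then furnishes a strongly continuous unitary representation of the simply connected Heisenberg group $H_n$ on $\cH$ whose selfadjoint generators are the closures $\overline{Q_k|_D},\overline{M_k|_D}$; since each $Q_k,M_k$ is already selfadjoint, $D$ is a core and these closures are the given operators. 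Exponentiating yields the Weyl operators $W(a,b):=\exp\big(-i\sum_{k=1}^n(a_kQ_k+b_kM_k)\big)$, $a,b\in\bR^n$, obeying the Weyl relations with the nontrivial central character fixed by $\hbar\neq 0$. Since the $W(a,b)$ form a group of bounded operators, a closed subspace is invariant under all of them precisely when it satisfies the invariance condition (2) for every $Q_k,M_k$, and such invariance automatically makes the subspace reducing; from here on I work with this bounded Weyl system.

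Next I would produce the orthogonal decomposition into Schrödinger copies via the classical vacuum-projection device. Set $P:=c_n\int_{\bR^{2n}} e^{-(|a|^2+|b|^2)/(2\hbar)}\,W(a,b)\,d^na\,d^nb$, with a suitable normalising constant $c_n$, the integral being defined weakly as in Definition \ref{defgarding}. Manipulating the Weyl relations shows $P=P^*=P^2$, that $P\neq 0$ in any representation carrying this central character, and that the Weyl translates of $P(\cH)$ span $\cH$. I would then fix a Hilbertian basis $\{\phi_r\}_{r\in R}$ of $P(\cH)$ and put $\cH_r:=\overline{\mathrm{span}}\{W(a,b)\phi_r\mid a,b\in\bR^n\}$. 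The Weyl relations give that each $\cH_r$ reduces the system, that $\cH_r\perp\cH_{r'}$ for $r\neq r'$, and that $\bigoplus_r\cH_r=\cH$ (a vector orthogonal to all $\cH_r$ would be annihilated by $P$ and all its Weyl translates, hence vanish). Moreover the restriction of the representation to each $\cH_r$ is irreducible, so the already proven first part of the theorem supplies a surjective isometry $U_r:\cH_r\to L^2(\bR^n,d^nx)$ implementing (\ref{Ur}). If $\cH$ is separable, then $P(\cH)$ is separable, so $R$ is at most countable.

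The hard part will be establishing the three structural properties of the auxiliary projector $P$: that it is genuinely a selfadjoint idempotent, that it is nonzero on every reducing subspace (equivalently, that its Weyl translates are complete), and that a vector of $P(\cH)$ generates under $\{W(a,b)\}$ a full irreducible Schrödinger copy rather than a proper invariant piece. All three rest on careful computation with the Gaussian-weighted integral against the Weyl relations, together with the strong continuity and joint measurability needed to legitimise the weak integral and these manipulations. Once these are in place the orthogonal decomposition, the identification of each summand, and the countability of $R$ in the separable case are immediate, and the appeal to the irreducible case is purely formal.
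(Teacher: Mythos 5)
Your first paragraph reproduces the paper's argument almost verbatim: the paper also forms the $(2n+1)$-dimensional Heisenberg Lie algebra spanned by $-iI|_D, -iQ_k|_D, -iM_k|_D$, observes that adding $I|_D^2$ to the operator of hypothesis (3) does not spoil essential selfadjointness, applies Nelson's Theorem \ref{teoN} to obtain a strongly continuous unitary representation of the Weyl--Heisenberg group whose generators are the given $Q_k, M_k$ (since $D$ is a core), and uses Stone's theorem to show that irreducibility in the sense of (2) forces irreducibility of the unitary representation. Where you diverge is in what happens next: the paper simply invokes the classical Weyl-form Stone--von Neumann theorem and Mackey's completion as known results (citing \cite{moretti}), whereas you propose to re-derive both from scratch via von Neumann's Gaussian vacuum projector $P$. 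That route is legitimate and more self-contained -- it is essentially von Neumann's original proof -- and it has the virtue of producing the decomposition $\cH = \oplus_r \cH_r$ and the identification of each summand in one stroke. The price is that all of the real mathematical content (the Weyl-relation computation giving $P=P^*=P^2$, the nonvanishing of $P$, completeness of the Weyl translates of $P(\cH)$, and irreducibility of each cyclic subspace) is concentrated in computations you explicitly defer; the paper avoids this by outsourcing exactly those computations to the cited theorem.

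There is, however, a structural gap as written: the irreducible case -- the first and main assertion of the theorem -- is never proved anywhere in your proposal. You announce that you will ``deduce the reducible case from the irreducible case already established,'' and later appeal to ``the already proven first part of the theorem'' to supply each $U_r$, but your first paragraph only constructs the Weyl system; it does not establish unitary equivalence with the Schr\"odinger representation when the system is irreducible. Either you must cite the classical unitary Stone--von Neumann theorem at that point (which is what the paper does), or you must note that your own projector construction proves it: in the irreducible case $P\neq 0$, any unit vector $\phi\in P(\cH)$ is cyclic, and the map sending $W(a,b)\phi$ to the corresponding coherent state in $L^2(\bR^n,d^nx)$ extends to the required isometry, with (\ref{Ur}) then recovered from Stone's theorem. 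As the argument stands it is circular. A smaller omission: to apply the irreducible case to each $\cH_r$ you should check that hypotheses (1)--(3) hold there -- $D\cap\cH_r$ need not be dense in $\cH_r$ -- which is done by passing to the G\r{a}rding or Nelson domain of the restricted unitary representation (Definition \ref{defgarding} and Proposition \ref{propN}); alternatively, state the inductive step purely at the level of irreducible Weyl systems and transfer back to the generators by Stone's theorem at the very end.
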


\begin{proof} If (1)  holds, the restrictions to $D$ of  the selfadjoint operators $Q_k$, $M_k$  define symmetric operators (since they are selfadjoint and $D$ is dense and  included in their domains), also their powers are symmetric since $D$ is invariant.
If also (2) is valid, in view of Nelson theorem (since evidently  the symmetric operator $I|_D^2 + \sum_{k=1}^n Q_k^2|_D + M_k^2|_D$ is essentially selfadjoint if $\sum_{k=1}^n Q_k^2|_D + M_k^2|_D$ is),
there is a strongly continuous unitary representation $W \ni g \mapsto V_g \in \gB(\cH)$ of the simply connected $2n+1$-dimensional  Lie group $W$ whose Lie algebra is defined by (\ref{CCRd})  (correspondingly re-stated  for the operators $-iI, -iQ_k, -iM_k$)
together with $[-iQ_h,-i I]=[-iM_k,-iI]=0$, where the operator $-iI$
restricted to $D$  is the remaining Lie generator. $W$ is the  {\em Weyl-Heisenberg} group \cite{moretti}. The selfadjoint generators of this representation are just the operators $Q_k$ and $P_k$ (and $I$),  since they coincide with the closure of   their restrictions  to $D$,  because they are selfadjoint (so they admit unique selfadjoint extensions) and $D$ is a core. If furthermore the Lie algebra representation is irreducible,
the unitary representation is irreducible, too: If $\cK$ were an invariant subspace for the unitary operators, Stone theorem would imply that $\cK$  be also invariant under the selfadjoint generators of the one parameter Lie subgroups associated to each $Q_k$ and $P_k$. This is impossible if the Lie algebra representation is irreducible as we are assuming. The standard version of Stone-von Neumann theorem \cite{moretti} implies that there is isometric surgective  operator $U : \cH \to L^2(\bR^n, d^nx)$ such that $W \ni g \mapsto UV_gU^{-1}\in \gB(L^2(\bR^n,d^nx))$  is the standard unitary representation of the group $W$ in $ L^2(\bR^n, d^nx)$ genernated by $X_k$ and $P_k$ (and $I$) \cite{moretti}. Again, Stone theorem immediately yields (\ref{Ur}). The last statement easily follows from the standard form of Mackey's theorem completing Stone-von Neumann result \cite{moretti}.
\end{proof}

\remark $\null$

{\bf (a)} The result {\em a posteriori} gives, in particular,   a strong justification of the requirement that the Hilbert space of an elementary quantum system, like a particle, must be separable. 

{\bf (b)} Physical Hamiltonian operators have spectrum bounded from below to avoid thermodynamical instability. This fact prevents the definition of a ``time operator'' canonically conjugated with $H$ following the standard way. This result is sometime quoted as {\bf Pauli theorem}.
As a consequence, the meaning of 
Heisenberg relations $\Delta E \Delta T \geq \hbar/2$ is different from the meaning of the analogous relations for position and momentum. It is however possible to define a sort of time osservable just 
extending the notion of PVM to the notion of POVM (positive valued operator measure) \cite{ercolessi,moretti}. POVMs are exploited to describe concrete physical phenomena related to measurement procedures, especially in quantum information theory \cite{busch,BGL}.
$\hfill \blacksquare$

\begin{corollary} If the Hamiltonian operator $\sigma(H)$ of a quantum system is bounded below, there is no selfadjoint operator (time operator) $T$ satisfying the standard CCR with $H$ and the hypotheses (1), (2), (3) of Theorem \ref{SNM}.
\end{corollary}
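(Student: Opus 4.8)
The plan is to argue by contradiction and to reduce the claim to a direct application of the Stone--von Neumann--Mackey Theorem \ref{SNM} in the case $n=1$. Suppose, contrary to the thesis, that a selfadjoint \emph{time operator} $T$ exists which satisfies the standard canonical commutation relation with $H$ on a common dense invariant domain $D\subset \cH$, say $[T,H]\psi = i\hbar\,\psi$ for every $\psi \in D$, and that the triple of hypotheses (1), (2), (3) of Theorem \ref{SNM} holds for the pair $Q_1 := T$, $M_1 := H$. Explicitly: (1) the CCR above is valid on $D$; (2) the pair $T,H$ acts irreducibly in the sense of the theorem; and (3) the operator $T^2|_D + H^2|_D$ is essentially selfadjoint.

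First I would invoke Theorem \ref{SNM} with $n=1$ verbatim. Under (1)--(3) it yields a Hilbert space isomorphism (a surjective isometry) $U : \cH \to L^2(\bR, dx)$ such that $UTU^{-1}=X$ and $UHU^{-1}=P$, where $X$ and $P$ are the standard position and momentum operators on $L^2(\bR,dx)$. In particular $H$ is unitarily equivalent to the momentum operator $P$. (The roles of $X$ and $P$ could be interchanged depending on how one labels $Q_1$ and $M_1$, but this is immaterial for the conclusion, since $\sigma(X)=\sigma(P)=\bR$.)

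The contradiction then comes from the spectrum. By the unitary invariance of the spectrum recorded in (e) of Remark \ref{remspectra}, one has $\sigma(H) = \sigma(UHU^{-1}) = \sigma(P)$, and Example \ref{specXP} gives $\sigma(P) = \sigma_c(P) = \bR$. Hence $\sigma(H) = \bR$, which is certainly not bounded below, contradicting the standing hypothesis that $\sigma(H)$ is bounded from below. This shows that no such $T$ can exist, which is the assertion.

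The argument is essentially a repackaging of Theorem \ref{SNM}, so there is no genuine analytic difficulty; the only point demanding care is the bookkeeping of hypotheses. One must check that a bona fide time operator conjugate to $H$ is exactly what is consumed by items (1)--(3) at $n=1$, namely the CCR on a common invariant core, irreducibility, and essential selfadjointness of $T^2+H^2$. I expect this verification, rather than any estimate, to be the real (and mild) obstacle. As a remark, even if irreducibility (2) were dropped, the reducible part of Theorem \ref{SNM} would still force each summand $H|_{\cH_r}$ to be unitarily equivalent to $P$, so that $\sigma(H)\supseteq \bR$ and the conclusion would survive; but since hypothesis (2) is assumed here, the irreducible case alone suffices.
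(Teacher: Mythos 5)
Your proposal is correct and follows essentially the same route as the paper: invoke Theorem \ref{SNM} with $n=1$ to get a unitary equivalence of $H$ with one of the standard operators $X,P$ on $L^2(\bR,dx)$ (or a direct sum of copies thereof), and then use unitary invariance of the spectrum together with $\sigma(X)=\sigma(P)=\bR$ to contradict boundedness below. Your closing remark about the reducible case is exactly the paper's parenthetical ``or a direct sum of such spaces.''
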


\begin{proof} The couple $H,T$ should be mapped to a corresponding couple $X,P$ in $L^2(\bR, dx)$, or a direct sum of such spaces, by means of a Hilbert space isomorphism.
In all cases the spectrum of $H$ should therefore be identical to the one of $X$, namely is $\bR$. This fact is false by hypotheses.
\end{proof}
\section{Just few words about the  Algebraic Approach}
The fundamental theorem \ref{SvNM} of Stone-von Neumann  and Mackey is stated in the jargon of theoretical physics as follows:\\
 ``{\em all irreducible representations of the CCRs  with a finite, and fixed, number of  degrees of freedom are unitarily equivalent},''.\\
The expression   {\em unitarily equivalent} refers to the existence of the Hilbert-space isomorphism  $U$, and the finite number of degrees of freedom is the dimension of the Lie algebra spanned by the generators $I, X_k,P_k$.\\
What happens then in infinite dimensions? This is the case when dealing with {\em quantum fields},  where the $2n+1$ generators $I, X_k, P_k$ ($k=1,2,\ldots, n$),  are replaced by a {\em continuum} of generators, the 
so-called  {\em field operators  at fixed time} and the {\bf conjugated momentum at fixed time}: $I, \Phi(f), \Pi(g)$ which are smeared by arbitrary functions  $f,g \in C^\infty_0(\bR^3)$. Here $\bR^3$ is the rest space of a given reference frame in the spacetime. Those field operators satisfy commutation relations similar to the ones of  $X_k$ and $P_k$ (e.g., see \cite{H,araki,IV}).
 Then  the Stone--von Neumann theorem no longer holds.  In this case, theoretical physicists would say that\\
 ``{\em there exist irreducible non-equivalent CCR representations with an infinite number of  degrees of freedom}''.\\
What happens in this situation, in practice, is that one finds two {\em isomorphic} $^*$-algebras of field  operators,
the one generated by  $\Phi(f), \Pi(g)$ in the Hilbert space $\cH$ and the other 
generated by  $\Phi'(f), \Pi'(g)$ in the Hilbert space $\cH'$
that admit  {\em no}  {\em  Hilbert space} isomorphism  $U: \sH'\to \sH$ satisfying:
 $$U\Phi'(f) \:U^{-1}  = \Phi(f)\:, \quad  U\Pi'(g) \:U^{-1}  = \Pi(g)
 \:\:\:\: \mbox{for any pair $f,g \in  C^\infty_0(\bR^3)$.}$$
Pairs of this kind are called {\em (unitarily) non-equivalent}. Jumping from the  finite-dimensional case to the infinite-dimensional one corresponds to passing from Quantum Mechanics to Quantum Field Theory (possibly relativistic, and on curved spacetime \cite{IV}).   The presence of non-equivalent representations of one single physical system  shows that a formulation in a fixed Hilbert space is fully inadequate, a least  because it insists on a fixed Hilbert space, whereas the physical system is characterized by a more abstract object: An algebra of observables which may be represented in different Hilbert spaces in terms of operators. These representations are not unitarily equivalent and none  can be considered  more fundamental than the remaining ones.
We must abandon  the structure of  Hilbert space in order to 
lay the foundations of quantum theories in broader generality.\\
This programme has been widely developed (see e.g., \cite{BrRo,strocchi,H,araki}), starting from the pioneering work of von Neumann himself, and is nowadays called  {\em algebraic formulation of quantum (field) theories}. Within this framework it was possible to formalise, for example, field theories in curves spacetime in relationship to the quantum phenomenology of black-hole thermodynamics.

\subsection{Algebraic formulation}  
The algebraic formulation prescinds, anyway, from the nature of the quantum system and may be stated for systems with finitely many  degrees of freedom  as well \cite{strocchi}. The new viewpoint relies  upon two assumptions \cite{H,araki,strocchi,strocchi2011,moretti}.\\

\noindent {\bf AA1}.  A physical system $S$ is described by its {\bf observables}, viewed now as selfadjoint elements in a certain $C^*$-algebra $\gA$ with unit $1\!\!1$ associated 
to $S$. \\

\noindent {\bf AA2}. An {\bf algebraic state} on $\gA_S$ is a 
linear functional $\omega : \gA_S \to \bC$ such that:
$$\omega(a^*a) \geq 0\quad \forall\,a \in \gA_S, \qquad \omega(1\!\!1)=1\;,$$
that is, {\em positive} and {\em normalised to $1$}.\\

\noindent We have to stress  that $\gA$ is not seen as a concrete $C^*$-algebra of  operators (a von Neumann algebra for instance) on a given  Hilbert space, but remains an abstract $C^*$-algebra. Physically, $\omega(a)$ is the {\em expectation value} of the observable  $a\in \gA$ in state $\omega$.

\begin{remark}\label{remC} $\null$

{\bf (a)} $\gA$ is usually called {\em the algebra of observables of $S$} though, properly speaking, the observables are the selfadjoint elements of $\gA$ only.

{\bf (b)} Differently form the Hilbert space formulation, the algebraic approach can be adopted to describe {\em both classical and quantum systems}. The two cases are distinguished on the base of commutativity of the algebra of observables $\gA_S$: A commutative algebra is assumed to describe a classical system whereas a non-commutative one is supposed to be associated with a quantum systems.

{\bf (c)} The notion of {\bf spectrum} of an element $a$ of a $C^*$-algebra $\gA$, with unit element $1\!\!1$, is defined analogously  to the operatorial case \cite{moretti}. $\sigma(a) := \bC \setminus \rho(a)$ where we have introduced  the {\bf resolvent} set:
$$\rho(a) := \{\lambda \in \bC \:|\: \exists (a-\lambda 1\!\!1)^{-1} \in \gA\}\:.$$
When applied to the elements of $\gB(\cH)$, this definition coincides with the one proviously discussed for operators in view of (2) in exercise \ref{enr}. It turns out that if $a^*a=aa^*$, namely $a \in \gA$ is normal, then
$$||a|| = \sup_{\lambda \in \sigma(a)} |\lambda|\:.$$
The right hand side of the above identity is called {\bf spectral radius} of $a$.
If $a$ is not normal, $a^*a$ is selfadjoint and thus normal in any cases. Therefore  the $C^*$-property of the norm $||a||^2 = ||a^*a||$
permits us to  write down $||a||$ in terms of the spectrum of $a^*a$. As the spectrum is a completely algebraic property, we conclude that it is impossible to change the norm of a $C^*$-algebra preserving the $C^*$-algebra property of the new norm. A unital $^*$-algebra admits at most one $C^*$-norm.

{\bf (d)} Unital $C^*$-algebras are very rigid structures. In particular, every $*$-homomorphism $\pi : \gA \to \gB$ (which is a pure algebraic notion)
between two unital $C^*$-algebras is necessarily \cite{moretti} norm decresing ($||\pi(a)||\leq ||a||$) thus continuous.
Its image, $\pi(\gA)$, is a $C^*$-subalgebra of $\gB$.
 Finally $\pi$ is injective if and ony if it is isometric. The spectra satisfy a certain permanence property \cite{moretti}, with obvious meaning of the symbols
$$\sigma_{\gB}(\pi(a)) = \sigma_{\pi(\gA)}(a) \subset \sigma_\gA(a)\:, \quad \forall a \in \gA\:,$$
where the last inclusion becomes and equality if $\pi$ is injective.
\hfill $\blacksquare$
\end{remark}

\noindent The most evident {\em a posteriori } justification of the algebraic approach lies in its powerfulness \cite{H}. However
 there have been a host of attempts to account for  assumptions  {\bf AA1} and {\bf AA2} and their physical meaning  in full generality (see the study of \cite{Emch}, \cite{araki} and \cite{strocchi,strocchi2011} and especially the work of I. E. Segal
\cite{SegalCS} based on so-called {\em Jordan algebras}). Yet none seems to be definitive \cite{Lost}. \\
An evident difference with respect to the standard QM, where states are measures on the lattice of elementary propositions, is that  we have now a complete   identification of the notion of state with that of expectation value.
This  identification would be  natural within the Hilbert space formulation, where the class of observables includes the elementary ones, represented by orthogonal projectors, and corresponding to ``Yes-No'' statements.  The expectation value of such an observable coincides with the probability that the outcome of the measurement is ``Yes''. The set of all those probabilities  defines, in fact,  a quantum state of the system as we know.  However, the analogues of these elementary propositions generally do not belong to the $C^*$-algebra of observables  in the algebraic formulation.
Nevertheless,  this is not an  insurmountable obstruction.  Referring to a completely general physical system and following  \cite{araki},  the most general notion of state,  $\omega$,  is  the assignment of all probabilities, $w_\omega^{(A)}(a)$, that the outcome of the measurement of the observable $A$ is $a$, for all observables $A$ and all of values $a$. 
On the other hand,  it is known   \cite{strocchi} that all experimental information on the measurement of an observable $A$ in the state $\omega$ -- the probabilities $w_\omega^{(A)}(a)$  in particular  -- is recorded in the expectation values of the polynomials of $A$.
 Here, we should think of $p(A)$ as the observable whose values are the values $p(a)$ for all values $a$ of $A$. This characterization of an observable
is theoretically supported by the various solutions to the   {\em moment problem} in probability measure theory.
To adopt this paradigm we have thus to assume that the set of observables must include at least all real polynomials $p(A)$ whenever it  contains the observable $A$. This is in agreement with the much stronger requirement  {\bf AA1}.

\subsubsection{The GNS reconstruction theorem} The set of algebraic states on $\gA_S$ is a convex subset in the dual  $\gA_S'$ of $\gA_S$: if $\omega_1$ and $\omega_2$ are positive and normalised linear functionals,  $\omega = \lambda \omega_1 + (1-\lambda)\omega_2$ is clearly still the same for any $\lambda \in [0,1]$.\\
Hence, just as we saw for the standard formulation, we can define  {\em pure algebraic states} as  extreme elements of the convex body.

\begin{definition} {\em An algebraic state $\omega : \gA \to \bC$ on the  $C^*$-algebra with unit $\gA$ is called a {\bf pure algebraic state} if it is extreme in the set of algebraic states. 
An algebraic state that is not pure is called {\bf mixed}.}\hfill $\blacksquare$
\end{definition}

\noindent Surprisingly, most of the entire abstract apparatus introduced, given by a $C^*$-algebra and a set of 
states, admits elementary Hilbert space representations when a reference algebraic state is fixed. This is by virtue of a famous procedure that Gelfand,  
Najmark and Segal came up with, and that we prepare to present \cite{H,araki,strocchi,moretti}.
\begin{theorem}[GNS reconstruction theorem] \label{TGNS} 
Let $\gA$ be a $C^*$-algebra with unit $1\!\!1$ and $\omega: \gA \to \bC$ a positive linear functional with  $\omega(1\!\!1)=1$. Then the following holds.\\
{\bf (a)} There exist a triple  $(\cH_\omega, \pi_\omega, \Psi_\omega)$, where $\cH_\omega$ is a Hilbert space, the map $\pi_\omega : \gA \to \gB(\cH_\omega)$ a $\gA$-representation over $\cH_\omega$ and 
$\Psi_\omega \in \cH_\omega$, such that:

(i) $\Psi_\omega$ is cyclic for $\pi_\omega$. In other words,  $\pi_\omega(\gA) \Psi_\omega$ is  dense in $\cH_\omega$,

(ii) $\langle\Psi_\omega|\pi_\omega(a) \Psi_\omega\rangle = \omega(a)$ for every $a\in \gA$.\\
{\bf (b)} If $(\cH, \pi, \Psi)$  satisfies (i) and (ii), there exists a unitary operator 
$U: \cH_\omega \to \cH$ such that  $\Psi = U\Psi_\omega$ and $\pi(a) = U\pi_\omega(a)U^{-1}$ for any 
$a \in \gA$.
\end{theorem}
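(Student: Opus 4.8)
The plan is to prove the GNS theorem in two stages: first construct the representation (part (a)), then establish its uniqueness up to unitary equivalence (part (b)). For the construction, I would exploit the positive linear functional $\omega$ to manufacture a Hermitian semi-inner product directly on the algebra $\gA$ viewed as a complex vector space. Specifically, I would define $\langle a | b \rangle_\omega := \omega(a^* b)$ for $a,b \in \gA$. Positivity of $\omega$ guarantees $\langle a | a \rangle_\omega = \omega(a^*a) \geq 0$, and the $^*$-algebra axioms together with antilinearity of $\omega$ in the appropriate entry make this sesquilinear and Hermitian. The obstruction to this being a genuine inner product is the possible existence of a nonzero null space $\gN := \{ a \in \gA \mid \omega(a^*a) = 0 \}$.

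The next step is to verify that $\gN$ is a closed \emph{left ideal} of $\gA$, which is the technically delicate point. That $\gN$ is a subspace follows from a Cauchy--Schwarz inequality for the semi-inner product $\langle\cdot|\cdot\rangle_\omega$ (which holds for positive semidefinite Hermitian forms regardless of degeneracy). To see $\gN$ is a left ideal, I would show $\omega((ba)^*(ba)) = \omega(a^* b^* b a) \leq ||b||^2 \omega(a^*a)$ for $a \in \gN$, $b \in \gA$; this uses the fact that $b^*b \leq ||b^*b|| 1\!\!1 = ||b||^2 1\!\!1$ as elements of $\gA$ (a $C^*$-algebraic positivity fact), combined with positivity of the functional $c \mapsto \omega(a^* c a)$. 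Once $\gN$ is a left ideal, the quotient $\gA/\gN$ carries a well-defined positive-definite inner product inherited from $\langle\cdot|\cdot\rangle_\omega$, and I take $\cH_\omega$ to be its completion. I would define $\pi_\omega(b)$ by left multiplication, $\pi_\omega(b)[a] := [ba]$, which is well-defined precisely because $\gN$ is a left ideal; the bound $||\pi_\omega(b)[a]||^2 \leq ||b||^2 ||[a]||^2$ just derived shows $\pi_\omega(b)$ is bounded and hence extends to $\gB(\cH_\omega)$. One checks $\pi_\omega$ is a $^*$-homomorphism using the identity $\langle [a] | \pi_\omega(b)[c]\rangle_\omega = \omega(a^* b c) = \langle \pi_\omega(b^*)[a] | [c]\rangle_\omega$. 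Finally I set $\Psi_\omega := [1\!\!1]$; cyclicity (i) is immediate since $\pi_\omega(\gA)\Psi_\omega = \{[a] \mid a \in \gA\}$ is dense by construction, and (ii) follows from $\langle \Psi_\omega | \pi_\omega(a) \Psi_\omega\rangle = \omega(1\!\!1^* a 1\!\!1) = \omega(a)$.

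For part (b), the uniqueness, suppose $(\cH, \pi, \Psi)$ also satisfies (i) and (ii). I would define $U$ on the dense subspace $\pi_\omega(\gA)\Psi_\omega$ by $U \pi_\omega(a)\Psi_\omega := \pi(a)\Psi$. The crucial check is that $U$ is well-defined and isometric, which reduces to a single computation: $\langle \pi(a)\Psi | \pi(b)\Psi\rangle = \langle \Psi | \pi(a^* b)\Psi\rangle = \omega(a^* b) = \langle \pi_\omega(a)\Psi_\omega | \pi_\omega(b)\Psi_\omega\rangle$, where both outer equalities use property (ii) for the respective triples. This shows $U$ preserves inner products on the dense domain, so in particular $\pi_\omega(a)\Psi_\omega = 0$ forces $\pi(a)\Psi = 0$, guaranteeing well-definedness; $U$ then extends uniquely to an isometry $\cH_\omega \to \cH$, and surjectivity follows from cyclicity of $\Psi$ for $\pi$ (the range of $U$ contains the dense set $\pi(\gA)\Psi$ and is closed). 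That $\Psi = U\Psi_\omega$ is the special case $a = 1\!\!1$, and the intertwining relation $\pi(a) = U\pi_\omega(a)U^{-1}$ follows by evaluating both sides on the dense set $\pi_\omega(\gA)\Psi_\omega$ and using $U\pi_\omega(a)\pi_\omega(b)\Psi_\omega = U\pi_\omega(ab)\Psi_\omega = \pi(ab)\Psi = \pi(a)\pi(b)\Psi = \pi(a)U\pi_\omega(b)\Psi_\omega$.

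The main obstacle I anticipate is establishing that $\gN$ is a left ideal and simultaneously extracting the operator norm bound $||\pi_\omega(b)|| \leq ||b||$; both hinge on the $C^*$-algebraic positivity inequality $b^*b \leq ||b||^2 1\!\!1$ and on the positivity of the auxiliary functional $c \mapsto \omega(a^* c a)$. Everything else is a sequence of routine sesquilinear-form manipulations and density/continuity arguments. I would be careful to invoke only the $C^*$-structure and the abstract spectral/positivity facts recalled in remark \ref{remC}, since $\gA$ is an abstract $C^*$-algebra and no concrete Hilbert space is available a priori.
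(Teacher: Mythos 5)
Your proof is correct and is precisely the standard GNS construction; the paper itself gives no proof of Theorem \ref{TGNS}, deferring to \cite{H,araki,strocchi,moretti}, where this is exactly the argument used (quotient by the left ideal $\gN$, completion, left multiplication, and the isometry $U$ determined on the cyclic dense domain). The only ingredient you leave implicit is that positivity of $\omega$ on a unital $^*$-algebra already forces hermiticity, $\omega(a^*)=\overline{\omega(a)}$ (by polarization applied to $\omega((a+\lambda 1\!\!1)^*(a+\lambda 1\!\!1))\geq 0$), which is what makes your form $\langle a|b\rangle_\omega=\omega(a^*b)$ Hermitian; together with the inequality $b^*b\leq ||b||^2 1\!\!1$ you invoke, this closes every step.
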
 

\remark The GNS representation $\pi_\omega : \gA \to \gB(\cH_\omega)$ is a $^*$-homomorphism and thus (c) in remark \ref{remC} applies. In particular $\pi_\omega$ is norm decreasing and continuous. Moreover, again referring to the same remark, if $\pi_\omega$ is faithful -- i.e., injective -- it is isometric and preserves the spactra of the elements. If $a \in \gA$ is selfadjoint $\pi_\omega(a)$ is a selfadjoint {\em operator} and its spectrum has the well-known quantum meaning. 
This meaning, in view of the property of permanence of the spectrum, can be directly attributed to the spectrum of $a \in \gA$:  {\em If $a\in \gA$ represents an abstract observable,  $\sigma(a)$ is the set of the possible values attained by $a$}. \hfill $\blacksquare$\\

\noindent As we initially said,  it turns out that different algebraic states $\omega$, $\omega'$ give generally rise to unitarily inequivalent GNS representations $(\cH_\omega, \pi_\omega, \Psi_\omega)$ and $(\cH_{\omega'}, \pi_{\omega'}, \Psi_{\omega'})$: There is no isometric surjective operator $U : \cH_{\omega'} \to \cH_{\omega}$
such that $$U \pi_{\omega'}(a) U^{-1}=\pi_{\omega}(a) \quad \forall a \in \sA\:.$$
The fact that one may  simultaneously deal with all these inequivalent representations is a representation of the power of the algebraic approach with  respect to the Hilbert space framework.\\
However one may also focus on states referred to a fixed GNS representation.
 If $\omega$ is an algebraic state on $\gA$, every statistical operator on the Hilbert space of a GNS representation of $\omega$ -- i.e. every positive, trace-class  operator with unit trace 
$T\in \gB_1(\cH_\omega)$ -- determines an algebraic state
$$\gA \ni a \mapsto tr\left(T \pi_\omega(a) \right)\:,$$
evidently.
This is true, in particular, for $\Phi \in \cH_\omega$ with  $||\Phi||_\omega =1$, in which case the above definition reduces to 
$$\gA \ni a \mapsto \langle\Phi| \pi_\omega(a) \Phi\rangle_\omega\:.$$

\begin{definition} {\em If $\omega$ is an algebraic state on the $C^*$-algebra with unit $\gA$, every algebraic state on  $\gA$ obtained either from a density operator or a unit vector, in  a GNS representation of $\omega$, is called {\bf normal state} \index{normal states of an algebraic state} of $\omega$. Their set  $Fol(\omega)$ is the {\bf folium} \index{$Fol(\omega)$}\index{folium of an algebraic state} of the algebraic state $\omega$}.\hfill $\blacksquare$
\end{definition}
\noindent Note that in order to determine $Fol(\omega)$ one can use a fixed GNS representation  of $\omega$. In fact, as the GNS representation of $\omega$ varies,  normal states do not change, as implied by part (b) of the  GNS theorem.\\

\subsubsection{Pure states and irreducible representations}   To conclude we would like to explain how pure states are characterised in the algebraic framework.
To this end we have the following simple result (e.g., see  \cite{H,araki, strocchi, moretti}.

\begin{theorem}[Characterisation of pure algebraic states]\label{TEOSTATPURI}
Let $\omega$ be an algebraic state on the $C^*$-algebra with unit $\gA$ and
 $(\cH_\omega, \pi_\omega, \Psi_\omega)$ a corresponding GNS triple. Then  $\omega$ is pure if and only if 
 $\pi_\omega$ is irreducible.
\end{theorem}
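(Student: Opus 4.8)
The plan is to route the equivalence through the commutant $\pi_\omega(\gA)'$, using two bridges: a bijection between positive linear functionals dominated by $\omega$ and positive contractions lying in $\pi_\omega(\gA)'$, and a Schur-type lemma identifying irreducibility of $\pi_\omega$ with triviality of that commutant. Since $\gA$ is $*$-closed, $\pi_\omega(\gA)'$ is a $*$-algebra, and I would first record the Schur statement: $\pi_\omega$ is irreducible if and only if $\pi_\omega(\gA)' = \{cI\}_{c\in\bC}$. One implication is that a nontrivial closed invariant subspace produces a non-scalar orthogonal projector commuting with $\pi_\omega(\gA)$ (both the subspace and its orthocomplement are invariant because $\gA$ is $*$-closed); the converse is that a non-scalar element of the commutant yields, through its selfadjoint parts and their spectral projectors (Theorem \ref{st}, applied in the commutant), a nontrivial invariant subspace.

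The technical heart is the correspondence. Given $T \in \pi_\omega(\gA)'$ with $0 \le T \le I$, the functional $\omega_T(a) := \langle \Psi_\omega | T \pi_\omega(a)\Psi_\omega\rangle$ is linear, and using that $T$ commutes with $\pi_\omega(a)$ together with $T\ge 0$ and $I-T\ge 0$ one gets $\omega_T(a^*a) = \langle \pi_\omega(a)\Psi_\omega | T\pi_\omega(a)\Psi_\omega\rangle \ge 0$ and similarly $(\omega-\omega_T)(a^*a)\ge 0$, so $0 \le \omega_T \le \omega$. Conversely, given positive $\omega'$ with $0\le\omega'\le\omega$, I would put on the dense subspace $\pi_\omega(\gA)\Psi_\omega$ the sesquilinear form sending $(\pi_\omega(a)\Psi_\omega,\pi_\omega(b)\Psi_\omega)$ to $\omega'(a^*b)$; Cauchy–Schwarz for $\omega'$ combined with $\omega'\le\omega$ and $\omega(a^*a)=\|\pi_\omega(a)\Psi_\omega\|^2$ shows the form is well defined and bounded, so Riesz's lemma (Theorem \ref{RL}) furnishes a unique bounded $T$ with $\langle\pi_\omega(a)\Psi_\omega | T\pi_\omega(b)\Psi_\omega\rangle = \omega'(a^*b)$. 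A direct computation gives $0\le T\le I$ and, comparing $\omega'(a^*cb)$ with both $\langle\pi_\omega(a)\Psi_\omega|T\pi_\omega(c)\pi_\omega(b)\Psi_\omega\rangle$ and $\langle\pi_\omega(a)\Psi_\omega|\pi_\omega(c)T\pi_\omega(b)\Psi_\omega\rangle$, shows $T\in\pi_\omega(\gA)'$. The two assignments are mutually inverse, and $\lambda\omega$ corresponds to $\lambda I$.

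It then remains to verify the extremality bookkeeping: $\omega$ is pure if and only if every positive $\omega'$ with $0\le\omega'\le\omega$ equals $\lambda\omega$ for some $\lambda\in[0,1]$. If $\omega$ splits nontrivially as $\lambda\omega_1+(1-\lambda)\omega_2$, then $\lambda\omega_1$ is a dominated functional that is not a multiple of $\omega$. Conversely, a dominated $\omega'$ not proportional to $\omega$ has $\omega'(1\!\!1)=:\lambda\in(0,1)$ (the endpoints forcing $\omega'=0$ or $\omega'=\omega$ by positivity and Cauchy–Schwarz), and $\omega = \lambda(\omega'/\lambda)+(1-\lambda)((\omega-\omega')/(1-\lambda))$ is a nontrivial convex decomposition into distinct states. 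Chaining through the correspondence yields: $\omega$ pure $\iff$ the only contractions $0\le T\le I$ in $\pi_\omega(\gA)'$ are the $\lambda I$ $\iff \pi_\omega(\gA)'=\{cI\}_{c\in\bC}$ (a non-scalar element could be shifted and rescaled via its selfadjoint parts into a non-scalar contraction) $\iff \pi_\omega$ irreducible.

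The main obstacle will be the careful construction of $T$ from $\omega'$: proving the form well defined and bounded, and above all checking membership in $\pi_\omega(\gA)'$. Once that correspondence is secured, Schur's lemma and the convexity bookkeeping are essentially routine.
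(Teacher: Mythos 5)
Your proposal is correct: the dominated-functional/commutant correspondence, Schur's lemma via spectral projectors of selfadjoint elements of the $*$-closed commutant, and the convexity bookkeeping (including the endpoint cases $\omega'(1\!\!1)\in\{0,1\}$ handled by Cauchy--Schwarz) are all sound and fit together as you describe. The paper itself does not prove Theorem \ref{TEOSTATPURI} but only cites the literature, and your argument is precisely the standard proof found in those references, so there is no divergence of approach to report.
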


\noindent The algebraic notion of pure state is in nice agreement with the Hilbert space formulation result where pure states are represented by  unit vectors (in the absence of superselection rules). Indeed we have the following proposition which make a comparison between the two notions.

\begin{proposition}
Let $\omega$ be a pure state on the $C^*$-algebra with unit $\gA$ and $\Phi \in \sH_\omega$ a unit vector. Then \\
{\bf (a)}  the functional
$$\gA \ni a \mapsto \langle\Phi| \pi_\omega(a) \Phi\rangle_\omega\:,$$
defines a pure algebraic state  and $(\sH_\omega, \pi_\omega, \Phi)$ is a  GNS triple for it.
In that case, GNS representations of  algebraic states given by non-zero vectors in $\cH_\omega$  are all unitarily equivalent.\\
{\bf (b)} Unit vectors $\Phi, \Phi' \in \sH_\omega$ give the same (pure) algebraic state if and only if $\Phi = c\Phi'$ for some $c\in \bC$, $|c|=1$, i.e. if and only if $\Phi$ and $\Phi'$ belong to the same ray. 
\end{proposition}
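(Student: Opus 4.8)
The plan is to read everything off the irreducibility of $\pi_\omega$, which is granted by Theorem \ref{TEOSTATPURI} since $\omega$ is pure. For part (a), I would first check that $\omega_\Phi(a) := \langle \Phi | \pi_\omega(a)\Phi\rangle$ is a state: normalisation is $\omega_\Phi(1\!\!1) = \langle \Phi|\Phi\rangle = 1$ because $\pi_\omega(1\!\!1)=I$, and positivity follows from $\omega_\Phi(a^*a) = \langle \Phi|\pi_\omega(a)^*\pi_\omega(a)\Phi\rangle = ||\pi_\omega(a)\Phi||^2 \geq 0$. To see that $(\cH_\omega,\pi_\omega,\Phi)$ is a GNS triple for $\omega_\Phi$, condition (ii) of Theorem \ref{TGNS} holds by definition, while condition (i), the cyclicity of $\Phi$, is exactly where irreducibility enters: the subspace $\overline{\pi_\omega(\gA)\Phi}$ is closed and $\pi_\omega(\gA)$-invariant, hence equals $\{0\}$ or $\cH_\omega$, and since it contains $\pi_\omega(1\!\!1)\Phi=\Phi\neq 0$ it must be all of $\cH_\omega$. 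Thus $(\cH_\omega,\pi_\omega,\Phi)$ is a GNS triple whose representation is (unitarily equivalent to) the irreducible $\pi_\omega$, so Theorem \ref{TEOSTATPURI}, read in the reverse direction, yields that $\omega_\Phi$ is pure. The same computation applies to any nonzero $\Psi\in\cH_\omega$ after rescaling by $||\Psi||^{-1}$, and since every such GNS representation is realised on the single space $\cH_\omega$ by the single map $\pi_\omega$, they are all mutually unitarily equivalent.

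For part (b), the implication $\Phi=c\Phi'$ with $|c|=1 \Rightarrow \omega_\Phi=\omega_{\Phi'}$ is immediate, because the factors $\bar c$ and $c$ coming from antilinearity and linearity combine to $|c|^2=1$. For the converse I would suppose $\omega_\Phi=\omega_{\Phi'}=:\sigma$. By part (a) both $(\cH_\omega,\pi_\omega,\Phi)$ and $(\cH_\omega,\pi_\omega,\Phi')$ are GNS triples for $\sigma$, so the uniqueness clause (b) of Theorem \ref{TGNS} furnishes a unitary $U:\cH_\omega\to\cH_\omega$ with $\Phi'=U\Phi$ and $\pi_\omega(a)=U\pi_\omega(a)U^{-1}$ for all $a\in\gA$. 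This last relation says precisely that $U$ lies in the commutant $\pi_\omega(\gA)'$.

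The crux is therefore to show that irreducibility forces $\pi_\omega(\gA)'=\bC I$ (Schur's lemma), so that the unitary $U$ can only be a phase. Here I would argue as follows: if $T=T^*\in\pi_\omega(\gA)'$, then each $\pi_\omega(a)$ commutes with $T$, whence by Proposition \ref{propcomm} each $\pi_\omega(a)$ commutes with every spectral projector $P^{(T)}_E$; thus $P^{(T)}_E\in\pi_\omega(\gA)'$, its range is a closed $\pi_\omega(\gA)$-invariant subspace, and irreducibility forces $P^{(T)}_E\in\{0,I\}$, so that $\sigma(T)$ is a single point and $T=\lambda I$ with $\lambda\in\bR$. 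A general $B\in\pi_\omega(\gA)'$ has $B^*\in\pi_\omega(\gA)'$ as well, since $\pi_\omega(\gA)$ is $*$-closed, so both self-adjoint components of $B$ are scalars and $B\in\bC I$. Applying this to the unitary $U$ gives $U=cI$ with $|c|=1$, and therefore $\Phi'=U\Phi=c\Phi$, which completes the argument. The main obstacle is precisely this commutant computation; everything else is bookkeeping resting on the GNS uniqueness theorem and the characterisation of purity in Theorem \ref{TEOSTATPURI}.
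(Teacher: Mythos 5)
Your argument is correct and is the standard (and evidently intended) proof: the paper states this proposition without a written proof, and your route through Theorem \ref{TEOSTATPURI}, the uniqueness clause (b) of the GNS theorem, and Schur's lemma for the commutant is exactly the expected one. One small point of precision: Proposition \ref{propcomm} as stated assumes both operators are selfadjoint, so to conclude that the spectral projectors of $T$ lie in $\pi_\omega(\gA)'$ you should first split $\pi_\omega(a)$ into the selfadjoint operators $\pi_\omega\bigl(\tfrac{1}{2}(a+a^*)\bigr)$ and $\pi_\omega\bigl(\tfrac{1}{2i}(a-a^*)\bigr)$, each of which commutes with $T$; with that one-line adjustment everything goes through.
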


\noindent The correspondence pure (algebraic) states vs. state vectors, automatic in the  standard formulation, holds in  Hilbert spaces of GNS representations of 
 pure algebraic states, but in general not for mixed algebraic states. The following exercise focusses on this apparent problem.

{\bf \exercise} {\em Consider, in the standard (not algebraic) formulation, a physical system described on 
the  Hilbert space $\cH$ and a mixed state $\rho \in \gS(\cH)$. The map  $\omega_\rho : \gB(\cH)\ni A \mapsto tr(\rho A)$ defines an algebraic state on the $C^*$-algebra $\gB(\cH)$. By the GNS theorem, there exist another Hilbert space $\cH_\rho$, a representation $\pi_\rho : \gB(\cH) \to \gB(\cH_\rho)$ an unit vector $\Psi_\rho \in \sH_\rho$ such that
$$tr(\rho A) = \langle \Psi_\rho| \pi_\rho(A) \Psi_\rho\rangle\:$$
for  $A \in \gB(\cH)$. Thus it seems that the initial mixed state has been transformed into a  pure state! How is this fact explained?}

{\bf Solution}. There is no transformtion from mixed to pure state because 
the mixed state is represented by a vector, $\Psi_\rho$,  in a different Hilbert space, $\cH_\rho$. Moreover, there is no Hilbert space isomorphism  $U : \cH \to \cH_\rho$ with  
$UAU^{-1} = \pi_\rho(A)$, so that $U^{-1}\Psi_\rho \in \cH$.
In fact, the representation $\gB(\cH) \ni A \mapsto A \in \gB(\cH)$ is irreducible,  whereas  $\pi_\rho$ cannot be irreducible (as it would be if $U$ existed), because the state $\rho$ is not an extreme point in the space of non-algebraic states, and so it cannot be extreme in the larger space of algebraic states. $\hfill \blacksquare$

\section*{Acknowledgments}
The author is grateful to S. Mazzucchi who carefully read this manuscript, and to 
G. Marmo, S. Mazzucchi, R. Picken and M. S\'anchez for useful discussions about the content of these lectures.


\begin{thebibliography}{0}

\bibitem{ercolessi}  E. Ercolessi,  {\em A short course on quantum mechanics and methods of quantization}
(Lectures held at the XXIII International Workshop on Geometry and Physics)\\
Int. J. Geom. Meth. Mod. Phys.  (2015) online version 1560008

\bibitem{ghirardi} G. Ghirardi, {\em Sneaking a Look at God's Cards: Unraveling the Mysteries of Quantum Mechanics}  Princeton University Press; Revised edition (March 25, 2007)


\bibitem{stanford} Edward N. Zalta (ed.) {\em The Stanford Encyclopedia of Philosophy} http://plato.stanford.edu/


\bibitem{Dirac11} P.A.M. Dirac, {\em The principles of Quantum Mechanics}. Oxford University Press, Oxford (1930)


\bibitem{gelfand} I. M. Gelfand and N. J. Vilenkin, {\em Generalized Functions}, vol. 4: Some Applications of Harmonic Analysis. Rigged Hilbert Spaces. Academic Press, New York, 1964.


\bibitem{moretti} V. Moretti, {\em Spectral Theory and Quantum Mechanics}, Springer, 2013
 

\bibitem{vonNeumann} J. von Neumann,   {\em Mathematische Grundlagen der Quantenmechanik}. Springer-Verlag, Berlin (1932)


\bibitem{R} W. Rudin, {\em Functional Analysis} 2nd edition, Mc Graw Hill, 1991

 

\bibitem{S} K. Schm\"udgen, {\em Unbounded Self-adjoint Operators on Hilbert Space}, Springer,  2012


\bibitem{analysisnow} G. K. Pedersen, {\em Analysis Now}, Graduate Texts in Mathematics, Vol. 118 (Springer-
Verlag, New York, 1989);


\bibitem{varadarajan} V.S. Varadarajan, {\em Geometry of Quantum Theory}, Second Edition,
Springer, Berlin (2007)

\bibitem{decoherence} Ph. Blanchard, D. Giulini D., E. Joos,
C. Kiefer, I.-O. Stamatescu (Eds.): Decoherence: Theoretical, Experimental,
and Conceptual Problems. Lecture Notes in Physics. Springer-Verlag, Berlin, (2000)



\bibitem{BrRo} O. Bratteli, D.W.  Robinson, 
 {\em Operator Algebras and Quantum Statistical Mechanics} (Vol I and II, Second Edition).
Springer, Berlin (2002) 



\bibitem{BvN} G. Birkhoff and J. von Neumann, {\em The logic of quantum mechanics}, Ann. of Math.
(2) 37(4) (1936) 823–843;


\bibitem{BC}  E.G. Beltrametti, G. Cassinelli, {\em The logic of quantum mechanics}. Encyclopedia of Mathematics and its Applications, vol. 15, Addison-Wesley, Reading, Mass., (1981) 

\bibitem{egl} K. Engesser, D. M. Gabbay, D. Lehmann (Eds), {\em Handbook of Quantum Logic and Quantum Structures}. Elsevier, Amsterdam (2009) 



\bibitem{Mackey}  G. Mackey, {\em The Mathematical Foundations of Quantum Mechanics}. Benjamin, New York (1963)


\bibitem{Piron64} C. Piron, {\em Axiomatique Quantique} Helv. Phys. Acta {\bf 37} 439-468 (1964) 

\bibitem{JP69} J.M., Jauch and C. Piron, {\em On the structure of quantal proposition system}
Helv. Phys. Acta {\bf 42}, 842 (1969)

\bibitem{Jauch}J.M., Jauch, {\em  Foundations of Quantum Mechanics}
Addison-Wesley Publishing Company, Reading USA, (1978)

\bibitem{gleasonbook} A. Dvurecenskij, {\em Gleason’s theorem and its applications}. Kluwer academic publishers,
Dordrecht (1992)


\bibitem{wightman} A. S. Wightman, {\em Superselection rules; old and new} Nuovo Cimento B 110, 751-769,  (1995)

\bibitem{Simon} B. Simon, {\em Quantum dynamics: From automorphism to Hamiltonian}.
Studies in Mathematical Physics, Essays in Honor of Valentine Bargmann (ed. E.H. Lieb, B. Simon and A.S. Wightman), Princeton University Press, Princeton,  327-349 (1976)



\bibitem{BaRa} A.O. Barut, R. Raczka, {\em Theory of group representations and applications}, World Scientific, 1984


\bibitem{strocchi} F. Strocchi, {\em An Introduction 
To The Mathematical Structure Of Quantum Mechanics: 
A Short Course For Mathematicians}, F. Strocchi.  World Scientific, Singapore (2005) 

\bibitem{H} R. Haag, {\em Local Quantum Physics} (Second Revised and Enlarged Edition).
Springer, Berlin (1996)


\bibitem{araki} H. Araki,{\em Mathematical Theory of Quantum Fields}. Oxford University Press, Oxford, (2009)

\bibitem{IV} I. Khavkine, V. Moretti, {\em Algebraic QFT in Curved Spacetime and quasifree Hadamard states: an introduction}. Advances in Algebraic Quantum Field Theory by Springer 2015  (Eds R. Brunetti, C. Dappiaggi, K. Fredenhagen, and J.Yngvason)    



\bibitem{strocchi2011} F. Strocchi, {\em The Physical Principles of Quantum
Mechanics}.  	European Physics Journal Plus {\bf 127}, 12  (2012)


\bibitem{busch} P. Busch, {\em Quantum states and generalized observables: a simple proof of Gleason's theorem}.
Physical  Review  Letters {\bf 91}, 120403 (2003)


\bibitem{BGL} P. Busch, M. Grabowski, P.J. Lahti, {\em Operational Quantum Physics}.
 Springer, Berlin (1995)


\bibitem{Emch} G.G,  Emch, {\em Algebraic Methods in Statistical Mechanics and Quantum Field Theory.}
Wiley-Interscience, New York (1972)



\bibitem{SegalCS} I. Segal, {\em Postulates for general quantum mechanics},
Annals of Mathematics (2),  {\bf 48}, 930-948 (1947)


\bibitem{Lost} R.F., Streater, {\em Lost Causes in and beyond Physics}, Springer-Verlag, Berlin (2007)



\end{thebibliography}
\end{document}